\newmdenv[topline=false, bottomline=false, skipabove=\topsep, skipbelow=\topsep]{siderules}
\tikzset{snake it/.style={decorate, decoration=snake}}
\tikzstyle arrowstyle=[scale=1]
\tikzstyle directed=[postaction={decorate,decoration={markings,mark=at position .65 with {\arrow[arrowstyle]{stealth}}}}]
\tikzstyle reverse directed=[postaction={decorate,decoration={markings,mark=at position .65 with {\arrowreversed[arrowstyle]{stealth};}}}]
\newtheorem{theorem}{Theorem}
\newtheorem{corollary}{Corollary}
\newtheorem{proposition}{Proposition}
\newtheorem{lemma}{Lemma}
\newtheorem{definition}{Definition}
\def\bC{{\mathbb C}}
\def\bN{{\mathbb N}}
\def\bZ{{\mathbb Z}}
\def\sH{\mathscr{H}}
\def\sK{\mathscr{K}}
\def\sV{\mathscr{V}}
\def\cA{{\mathfrak A}}
\def\cB{{\mathfrak B}}
\def\cM{{\mathfrak M}}
\def\cN{{\mathfrak N}}
\def\cA{{\ca A}}
\def\cB{{\ca B}}
\def\cC{{\ca C}}
\def\cD{{\ca D}}
\def\cL{{\ca L}}
\def\cM{{\ca M}}
\def\cN{{\ca N}}
\def\cT{{\ca T}}
\def\cV{{\ca V}}
\def\cZ{{\ca Z}}
\newcommand{\ca}[1]{{\cal #1}}
\newcommand{\ben}{\begin{equation}}
\newcommand{\een}{\end{equation}}
\def\bena{\begin{eqnarray}}
\def\eena{\end{eqnarray}}
\def\non{\nonumber}
\def\1{{\mathds{1}}}
\def\dim{{\mathrm{dim}}}
\def\Hom{{\mathrm{Hom}}}
\def\End{{\mathrm{End}}}
\renewcommand{\epsilon}{\varepsilon}
\renewcommand{\i}{\imath}
\newcommand{\RR}{\mathbb{R}}
\newcommand{\CC}{\mathbb{C}}
\begin{document}
\title{
Anyonic Chains --  $\alpha$-Induction -- CFT -- Defects -- Subfactors 
}

	\author{Stefan Hollands$^{1}$\thanks{\tt stefan.hollands@uni-leipzig.de}\\
	{\it $^1$ Universit\" at Leipzig, ITP and MPI-MiS Leipzig}
	}

\date{\today}
	
\maketitle

\begin{abstract}
Given a unitary fusion category, one can define the Hilbert space of 
a so-called ``anyonic spin-chain'' and nearest neighbor Hamiltonians providing a real-time evolution. There is considerable evidence 
that suitable scaling limits of such systems can lead to $1+1$-dimensional conformal field theories (CFTs), and in fact, can be used potentially to construct novel classes of 
CFTs. Besides the Hamiltonians and their densities, the spin chain is known to carry an algebra of symmetry operators commuting with the Hamiltonian, 
and these operators have an interesting representation as matrix-product-operators (MPOs). On the other hand, fusion categories are well-known 
to arise from a von Neumann algebra-subfactor pair. In this work, we investigate some interesting consequences of such structures for the 
corresponding anyonic spin-chain model. One of our main results is the construction of a novel algebra of MPOs acting on a bi-partite anyonic chain. 
We show that this algebra is precisely isomorphic to the defect algebra of $1+1$ CFTs as constructed by Fr\" ohlich et al. and Bischoff et al., even though
the model is defined on a finite lattice. We thus conjecture that its central projections are associated with the irreducible vertical (transparent) defects in the scaling limit of the model. 
Our results partly rely on the observation that MPOs are closely related to the so-called ``double triangle algebra'' arising in subfactor theory. 
In our subsequent constructions, we use insights into the structure of the double triangle algebra by B\" ockenhauer et al.
based on the braided structure of the categories and on $\alpha$-induction. The introductory section of this paper to subfactors and fusion categories has the character of a review.
\end{abstract}

\tableofcontents

\section{Introduction}

It has been known for a long time that low dimensional quantum systems can exhibit certain excitations -- so-called anyons \cite{wilczek1982quantum} -- obeying a generalization of 
the usual Bose-Fermi exchange statistics. A prominent example are unitary, rational $1+1$-dimensional conformal field theories (CFTs) and the mathematical 
structure emerging from the possibility to fuse and exchange such excitations is that of a ``modular tensor category''  \cite{rowell2009classification,bakalov2001lectures,frohlich2006correspondences,bischoff2015tensor}
which is a special type of ``fusion category'' \cite{etingof2005fusion}. 
The non-trivial exchange statistics corresponds to a braiding, which brings about a connection with low dimensional topology and topological quantum field theories. 
There are many different aspects and approaches to these structures \cite{moore1989classical, moore1990lectures, fredenhagen1989superselection, 
fredenhagen1992superselection,witten1989quantum, fuchs2002tft, fuchs2004tft, fuchs2005tft, longo1989index,longo1990index}, which is inevitably an incomplete list. 
Anyonic excitations are, however, not intrinsically tied to 
conformal symmetry or even relativistic kinematics and can be viewed more generally as states of systems exhibiting some sort of topolgicial order, such as gapped systems 
without a local order parameter. Among other things, systems of this type have been discussed as models for universal quantum computing, 
see e.g. \cite{wang2010topological} and references therein.

While modular tensor categories on the one hand can be seen as an {\em output} -- or algebraic skeleton -- of certain quantum field theories, 
they have been used more recently as an {\em input} to construct certain quantum mechanical spin systems called ``anyonic'' spin-chains, which may or 
may not have a quantum field theory as their scaling limit. Such systems were first considered by \cite{kitaev2006anyons,feiguin2007interacting} and subsequently 
studied (see e.g. \cite{trebst2008short,gils2009collective,gils2013anyonic,trebst2008collective,ardonne2011microscopic,vanhove2021critical,huang2021numerical,zini2018conformal,lootens2021matrix}) with the aim to provide relatively simple 
quantum mechanical systems exhibiting topological excitations that are practically built in from the start. In an 
ordinary spin chain, the sites correspond to certain representations of a group (e.g. $SU(2)$ or ${\mathbb Z}_2$), whereas in an anyonic chain, 
the sites are associated with certain objects of a fusion category. The possible ways of fusing $L$ of these objects by subsequent applications
of the fusion rules are in one-to-one correspondence with the basis vectors of the Hilbert space of the anyonic chain of length $L$, see fig. \ref{fig:18}. 

\begin{figure}[h!]
\begin{center}
  \includegraphics[width=0.8\textwidth,]{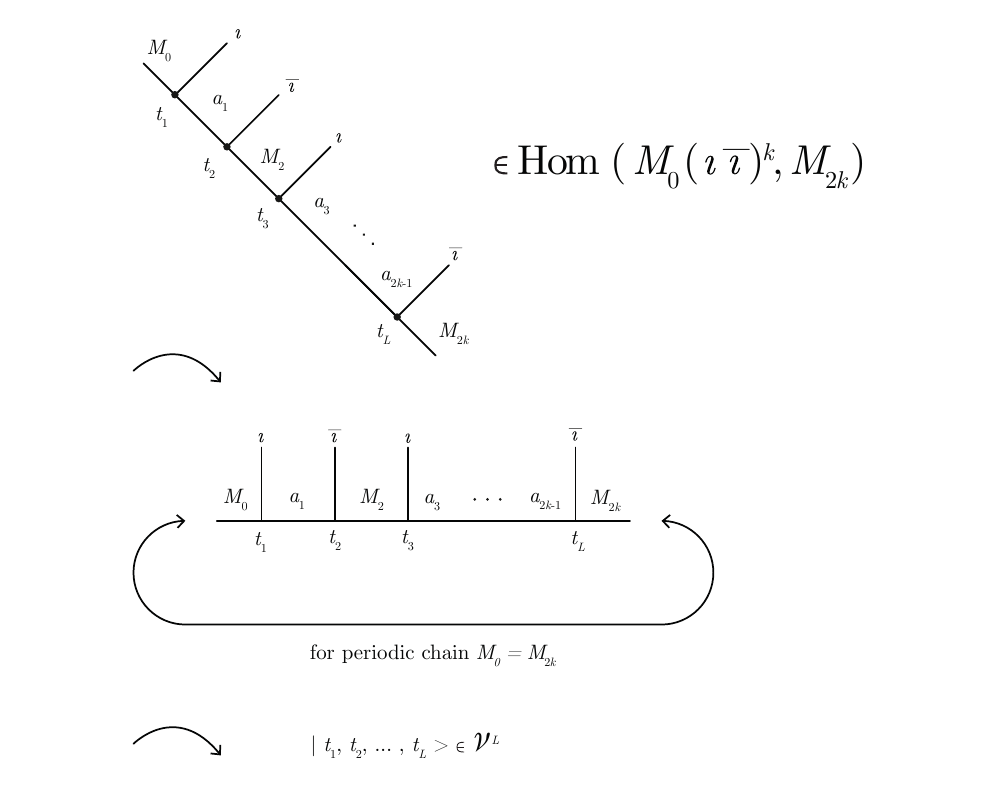}
  \end{center}
  \label{fig:18}
  \caption{Relation between intertwiner spaces and spin-chains.}
\end{figure}

The Hilbert space carries an algebra of ``symmetry operators'', for each simple object of the input fusion category. These symmetry operators 
can be seen \cite{bultinck2017anyons} as so-called ``matrix product operators'' (MPOs) which are built from the $F$-symbols of the category, and they obey an algebra 
that is isomorphic to the input fusion rules. The symmetry operators commute with each term of a very natural class of nearest neighbor interaction 
Hamiltonians and in this sense can broadly speaking be seen as being of a ``topological'' nature.

Unitary fusion categories, the input of anyonic spin chain constructions, appear prominently in the context of finite index inclusions of a factor-subfactor pair $\cN \subset \cM$ 
of von Neumann algebras, the study of which was initiated in the seminal work by Jones \cite{jones1983index} and subsequently elaborated by many authors 
using a variety of different approaches, see e.g. 
\cite{popa1990classification, bisch2000singly, ocneanu1988quantized,ocneanu1991quantum, bisch1997algebras,bisch1997bimodules,evans1998quantum, jones1999planar} as an inevitably 
incomplete list of references. 
It was realized almost from the beginning \cite{ocneanu1988quantized,ocneanu1991quantum} (see also \cite{evans1998quantum}) that the 
invariants of a factor-subfactor pair allow for the construction of certain algebraic structures such as the ``string algebra'' that in retrospect have 
a large amount of mathematical overlap with the basic constructions related to anyonic spin chains. This connection was recently 
investigated by \cite{kawahigashi2021projector,kawahigashi2020remark} who in particular eludicated the 
mathematical connection to the tube algebra \cite{izumi2000structure} (closely related also to the so-called Longo-Rehren inclusion 
\cite{longo1995nets,muger2003subfactors,muger2003subfactors}) whose role in describing anyonic excitations had been described by \cite{bultinck2017anyons}.
Since anyonic spin chains admit in certain cases continuum limits which are CFTs, we may call such constructions {\em bottom up}, i.e. from subfactors and fusion categories to QFTs. 

\subsection{Inclusions of von Neumann algebras and anyonic chains}
\label{subsec:intro1}

The passage from a factor-subfactor pair $\cN \subset \cM$ to the physical objects associated with an anyonic chain can roughly be described as follows. 
Since $\cN$ is contained in $\cM$, we have the inclusion (identity) map $\i: \cN \to \cM$. $\i$ is of course not in general invertible, but under the hypothesis of ``finite index'', 
there is a kind of inverse $\bar \i$; however $\bar \i \i$ is not the identity map of $\cN$ but instead ``unitarily equivalent'' to a ``sum'' of maps $\bar \i \i \cong id_\cN \oplus \rho_1 \oplus \dots$. 
We think of these maps $\rho_1, \dots$ as a kind of irreducible ``representation'' of $\cN$ different from the defining representation $id_\cN$ on the given Hilbert space. In a similar way, we also 
have $\i \bar \i \cong id_\cM \oplus A_1 \oplus \dots$, with irreducible ``representations'' $A_1$ of $\cM$. Technically, $\rho_1, \dots$ and $A_1, \dots$ are *-endomorphisms 
of $\cN$ and $\cM$, and we similarly also have endomorphisms $a_1, \dots$ from $\cN$ to $\cM$. These decompositions are in some ways analogous to a decomposition of a tensor product or restriction of ordinary representations (of, say, a group). They are called ``fusion''.

Now imagine iterating this process and consider 
$A_1 \i \cong a_2 \oplus \dots$, then $a_2 \bar \i \cong A_3 \oplus \dots$, etc. 
In this way we obtain a fusion tree as in fig. \ref{fig:18} where at each 
node we have to make a choice of the next sub-object to pick. Such a choice corresponds to an ``intertwiner'', 
\ben
\label{intertw}
t_1:\i  \bar \i \to A_1, \quad t_2: A_1 \i \to a_2, \quad t_3:   a_2 \bar \i \to A_3, \quad \dots .
\een
For example, intertwiner in the first case means that $t_1\i\bar \i(m) = A_1(m) t_1$ for all $m \in \cM$.
These subsequent choices of intertwiners 
are considered to define an abstract ket $|t_1, t_2, \dots, t_L\rangle$ in a Hilbert space $\sV^L$ of the chain of length $L$ (not to be confused 
with the Hilbert space on which $\cN,\cM$ act!). 

The concatenation of $t_i$'s defines an intertwiner $\underbrace{\i \bar \i \i \bar \i \dots \i \bar \i}_{L} \to id_\cM$ (taking the last object to be trivial\footnote{This corresponds in fact to a 
specific boundary condition on the chain.}), so 
$\sV^L$ may be thought of as a space intertwiners. So, if we have any 
element $m$ of $\cM$ that is also an intertwiner $\i \bar \i \i \bar \i \dots \i \bar \i \to \i \bar \i \i \bar \i \dots \i \bar \i$ then by composing 
with the intertwiner $\i \bar \i \i \bar \i \dots \i \bar \i \to id_\cM$ from $\sV^L$, we get another element from $\sV^L$, hence an 
$m$ of the sort described defines a linear operator on $\sV^L$. These $m$'s are by their very definition in the so-called ``relative commutant'',
$m \in \cM \cap (\cN_{L-1})'$, where a prime denotes the set of operators commuting with a given von Neumann algebra, and 
where $\cN_{L-1} := (\i \bar \i)^{L/2}(\cM)$, which is a von Neumann sub-algebra of $\cM$. Thus we have  
correspondences:
\ben
\begin{split}
\text{states in $\sV^L$} & \leftrightarrow \text{sequences of intertwiners $|t_1, \dots, t_L\rangle$,}\\
\text{operators on $\sV^L$} & \leftrightarrow \text{relative commutant $\cM \cap (\cN_{L-1})'$.}
\end{split}
\een
The sequence of inclusions $\cN_{L-1} \dots \subset \cN_3 \subset \cN_1 \subset \cM$ is the odd part of the ``Jones tunnel''
\cite{jones1983index} (with the even part being given by $\cN_{L}:= (\bar \i \i)^{L/2}(\cN)$, where $L$ is assumed even). 
Among many other things, Jones' work showed that the relative commutants $\cM \cap (\cN_{L-1})'$
are finite dimensional (hence isomorphic to direct sums of matrix algebras). By definition, they grow with increasing $L$, and 
each time we go down the tunnel by one unit, say from $\cN_x$ to $\cN_{x+1}$, they include a certain additional projection $e_x$ called a ``Jones projection'', 
meaning that $\cN_{x+1}$ is generated as a von Neumann algebra by $e_x$ and $\cN_x$.
The Jones projections play a crucial role in the entire theory and Jones showed that they satisfy a Temperly-Lieb algebra, 
\ben
\label{TLJ}
e_x e_{x \pm 1} e_x = d^{-1} e_x, \quad [e_x, e_y]=0 \quad \text{if $|x-y|>1$,}
\een
where $d \ge 1$ is the square root of the Jones index $[\cM:\cN]$ which he showed is quantized below $d=2$ as
\ben
\label{jonesq}
[\cM:\cN]^{1/2} = d \in \{ 2\cos[\pi / (k+2)] \ : \ \  k=1,2,3 \dots \} \cup [2,\infty].
\een
The Jones projections for $x=1, \dots, L$ are all in the relative commutant
$\cM \cap (\cN_{L-1})'$ and so can be viewed as operators on the chain Hilbert space $\sV^L$. In fact, they correspond in 
a sense to an action involving only neighboring sites $x,x+1$ on the chain, so it is natural to identify
\ben
\text{local operators on $\sV^L$}  \leftrightarrow \text{Jones projections $e_1, \dots, e_L$.}
\een
The sum of the $e_x$ is therefore a natural candidate for a local Hamiltonian,
\ben
H = J \sum_{x=1}^L e_x.
\een
In fact, the connection between the Temperly-Lieb algebra 
and anyonic Hamiltonians has been observed and used in the literature from the beginning in special models,
 see e.g. \cite{feiguin2007interacting,zini2018conformal}. There is evidence in special models that the algebra generated by the $e_x$
can be identified with a product of left-and right moving Virasoro algebras in a suitable 
conformal scaling limit of the chain, see \cite{zini2018conformal,stottmeister2022anyon} which uses ideas by \cite{koo1994representations}.

The Hilbert space $\sV^L$ of the chain also carries certain other special operators, called ``matrix product operators'' (MPO). These operators look 
schematically like the following fig. \ref{fig:MPO}. 
\begin{figure}
\centering
\begin{tikzpicture}[scale=.6]
\draw[thick] (-4,0) -- (0,0);
\draw[thick] (2,0) -- (6,0);
\draw[thick,dashed] (0,0) -- (2,0);
\draw[thick] (-3,1) -- (-3,-1);
\draw[thick] (-1,1) -- (-1,-1);
\draw[thick] (3,1) -- (3,-1);
\draw[thick] (5,1) -- (5,-1);

\draw (-8,0) node[anchor=east]{$\langle t_1',  \dots, t_L' | {\rm MPO} | t_1, \dots, t_L\rangle$};
\draw (-6,0) node[anchor=east]{$=$};
\draw (-3,-1) node[anchor=north]{$t_1$};
\draw (-1,-1) node[anchor=north]{$t_2$};
\draw (3,-1) node[anchor=north]{$t_{L-1}$};
\draw (5,-1) node[anchor=north]{$t_L$};
\draw (-3,1) node[anchor=south]{$t_1'$};
\draw (-1,1) node[anchor=south]{$t_2'$};
\draw (3,1) node[anchor=south]{$t_{L-1}'$};
\draw (5,1) node[anchor=south]{$t_L'$};
\filldraw[color=black, fill=white, thick](-3,0) circle (.5);
\filldraw[color=black, fill=white, thick](-1,0) circle (.5);
\filldraw[color=black, fill=white, thick](3,0) circle (.5);
\filldraw[color=black, fill=white, thick](5,0) circle (.5);
\end{tikzpicture}
  \caption{\label{fig:MPO} Schematic diagram for matrix element of an MPO.}
\end{figure}
The circles in this chain represent certain quantum $6j$-symbols (sometimes also called $F$-tensors or bi-unitary connections)
built from a fusion category associated with the inclusion $\cN \subset \cM$ and the legs are labelled 
by intertwiners $t_i$ \eqref{intertw}. The value of this concatenation of $6j$-symbols is thought of as the matrix element of an MPO, as described in more detail in the main text. 
For example, by closing the left and right horizontal wires, we obtain the operators $O_A^L$ on $\sV^L$ defined by \cite{bultinck2017anyons}, which are shown to satisfy 
\ben
\label{ooop}
O_A^L O_B^L = \sum_C N_{A,B}^C O_C^L
\een
with $N_{A,B}^C$ the fusion tensor (i.e. $N_{A,B}^C$ is the number of independent intertwiners $AB \to C$). 

The reason why these -- and the class of more general operators defined and investigated in this paper -- are called ``symmetry operators'' is that they 
commute with the local operators on the chain, hence in particular with the Hamiltonian, thus giving us 
global conservation laws. But commutation with even the local densities of the Hamiltonian is a much stronger property, so we may call 
them in a certain sense ``topological'', because they are invariant if we drive the evolution forward only locally. In 
summary, we have the correspondence
\ben
\text{symmetry operators on $\sV^L$}  \leftrightarrow \text{MPOs}
\een
The elaboration of this and related ideas will be the main theme of this work. 

\medskip

Jones' work was initially in the context of special von Neumann algebras of 
so-called type II$_1$, and the objects of the fusion category are in this case certain bimodules -- the natural notion of representation in this setting -- associated with the inclusion 
$\cN \subset \cM$ \cite{bisch1997algebras,bisch1997bimodules,evans1998quantum}. 
Jones work was soon generalized to inclusions of so-called type III \cite{hiai1988minimizing,kosaki1986extension}, where the invariants and fusion categories arising from the subfactor can be conveniently approached via the notion of an endomorphism \cite{longo1989index,longo1990index,izumi1991application}, a formalism which we also used in the above outline. 
The works \cite{longo1989index,longo1990index} and also \cite{fredenhagen1989superselection, fredenhagen1992superselection} brought to light in particular the close connection between the invariants of a factor-subfactor pair and the Doplicher-Haag-Roberts analysis of so-called superselection sectors in quantum field theory (QFT), see e.g. \cite{doplicher1989new,haag2012local}. From the viewpoint of QFT, an endomorphism corresponds to a representation of the observable algebras which is equivalent to the vacuum representation except in some subregion thought of the localization of the excitation. Since the localization can be translated by means of local operations, one gets a notion of exchange statistics of the excitations, which in low dimensions can be of anyonic type, thus endowing certain categories of localized endomorphisms with the structure of a so-called modular tensor category. Thus, one can say that low dimensional QFTs naturally provide as an output inclusions of von Neumann factors, and the associated objects in the corresponding fusion categories correspond to anyons. We may think of this as a {\em top down} direction because a QFT contains a lot more structure than the output fusion category due to the presence of local degrees of freedom -- they give {\em nets} of subfactors \cite{longo1995nets} in the terminology of algebraic QFT \cite{haag2012local}. 

\subsection{Main results}

In this paper, we further elaborate on the relation between the {\em top down} and {\em bottom up} connections between subfactors/fusion categories and QFTs. While our constructions do not touch the 
important analytical question of proving convergence of scaling limits of anyonic spin-chains to continuum QFTs (recently investigated in certain examples by \cite{osborne2021conformal,osborne2021quantum,morinelli2021scaling,zini2018conformal}, for an alternative program see 
\cite{Jones_2018}), we add non-trivial observations concerning the close analogy between vertical defects in anyonic spin-chains and transparent defects in $1+1$-dimensional rational CFTs. Our investigations crucially rely on harnessing the powerful machinery of 
subfactor theory, in particular $\alpha$-induction \cite{bockenhauer1998modular,bockenhauer1999modular,bockenhauer1999modular1,longo1995nets} and Ocneanu's double triangle algebra  \cite{ocneanu1988quantized,ocneanu1991quantum, bockenhauer1999alpha} in the context of anyonic spin-chains and MPOs. The main results of this paper are as follows:

\medskip
\noindent
{\bf Intertwiners, Jones tunnel, and anyonic chains.}
 We first relate the Hilbert space of an anyonic spin-chain to the Jones tunnel associated with a finite index inclusion $\cN \subset \cM$ of von Neumann factors and set up the basic connections
between the anyonic spin chain literature and subfactor theory such as intertwiners, $F$-symbols (referred to as $6j$-symbols in this paper), etc., as partly outlined in sec. \ref{subsec:intro1}. 
This part of the paper is hardly original and explained from a somewhat different perspective (type II factors) e.g. in \cite{kawahigashi2021projector,kawahigashi2020remark} which in turn partly builds on ideas of Ocneanu \cite{ocneanu1988quantized,ocneanu1991quantum}, see also \cite{evans1998quantum}. However, these observations are useful because we shall see how to import 
parts of the powerful machinery of subfactor theory to the study of anyonic spin chains. We also emphasize the connection with the formalism of endomorphisms and that, by contrast to most of the literature on anyonic spin chains, the construction that is most natural from the viewpoint of subfactor theory is to label the sites of the chains in an alternating way by the objects $\i \bar \i \i \bar \i \i \dots$, as opposed to a single object see fig. \ref{fig:18}. In simple examples of anyonic chains the objects $\i$ and $\bar \i$ typically can be identified; for example in the Ising category 
with objects $\sigma, \epsilon, id$ and fusion $\sigma^2 = id \oplus \epsilon$ etc., both $\i, \bar \i$ are identified with the self-conjugate object $\sigma$. 

From the perspective of subfactor theory, it is not natural to work with a single fusion category but with fusion categories 
$_\cN X_\cN$ (objects $\rho_1, \dots$) respectively $_\cM X_\cM$ (objects $A_1, \dots$) associated with $\cN$ respectively $\cM$, as well as the ``induction-restriction'' objects which form 
$_\cN X_\cM$ respectively ${}_\cM X_\cN$ (objects $a_1, \dots$). The objects $A$ associated with the symmetry operators $O_A^L$, which are typically discussed in the context of 
anyonic chains and related to defects \cite{buican2017anyonic} are from $_\cM X_\cM$, which is in general not a braided category. 
By contrast, we also have objects $\mu$ from $_\cN X_\cN$, which we assume to be a (non-degenerately) braided category. Thus, there is an imbalance between the 
properties of $_\cN X_\cN$ and $_\cM X_\cM$, which is mirrored by a corresponding imbalance, in general, between the category of primary fields of a 1+1 dimensional CFT and 
the category of defects, described in more detail below.

It would be interesting to connect our constructions to the discussion of string-net models in \cite{lootens2021matrix}, where modules of (different) categories appear. 
In our approach, the objects of ${}_\cN X_\cM$, ${}_\cM X_\cN$ likewise may be regarded as right/left-modules and ${}_\cM X_{\cM}$ as bimodulers 
associated to a ``Q-system'' that is closely related to ${}_\cN X_\cN$ \cite{bischoff2015tensor}, and as in our approach, the consistency of the fusion 
rules and associators enforced by the module properties plays a key role in \cite{lootens2021matrix}. However, in our constructions, detailed below, the braiding and $\alpha$-induction 
additionally play a central role and we do not see an overlap between our central results on vertical defects, outlined in the following, 
and the results by \cite{lootens2021matrix}.

\medskip
\noindent
{\bf Symmetry operators and double triangle algebra.}
The, in general very indirect, relationship between $_\cN X_\cN$, $_\cM X_\cM$, $_\cN X_\cM$, $_\cM X_\cN$ is at least partially encoded in the double triangle algebra \cite{ocneanu1988quantized,ocneanu1991quantum, bockenhauer1999alpha}. A crucial observation for the subsequent constructions in this paper is 
that there is a representation of this algebra on anyonic spin chains of arbitrary length $L$. In fact, 
the generators of the double triangle algebra are represented by MPOs built from chains of $6j$-symbols as in fig. \ref{fig:MPO}. Representers of particular members of the double triangle algebra 
yield the symmetry operators $O_A^L$ labelled by objects $A$ from $_\cM X_\cM$, discussed in the literature on anyonic spin chains and in particular \cite{bultinck2017anyons}. In fact, our proof of the representation property uses one of the main graphical ideas by \cite{bultinck2017anyons} called the ``zipper lemma'', which is a reflection of the pentagon identity for $6j$-symbols. 

The double triangle algebra contains also other special elements of interest from the perspective of anyonic spin-chains. This includes certain projections 
$q_{\mu,\lambda}$ which  are represented by MPOs $Q^L_{\mu,\lambda}$ on the spin chain of length $L$ and which are labelled by certain pairs of 
objects $\mu,\lambda$ from $_\cN X_\cN$. The structure of the 
double triangle algebra entails that we can write 
\ben
Q^L_{\mu,\nu} = \sum_A d^{-1}_A Y_{\mu,\nu, A} O^L_A
\een
in terms of the symmetry operators $O_A^L$ labelled by objects $A$ from ${}_\cM X_\cM$. The $d_A$'s are quantum dimensions of the simple objects $A$ and 
the coefficients $Y_{\mu,\nu,A}$ are defined in terms of $\alpha$-induction and relative braidings between $_\cN X_\cN$ and 
$_\cM X_\cM$, $_\cN X_\cM$, $_\cM X_\cN$. These coefficients are directly related to the 
Verlinde $S$-matrix $S_{A,B}$ in the rather special case that the fusion rules of ${}_\cM X_\cM$ are abelian (which happens e.g. in the ``Cardy case'', see below). 
But in the non-abelian case the fusion rules of ${}_\cM X_\cM$
cannot in general be diagonalized and thus the $Y_{\mu,\nu,A}$ may be seen as a generalized Verlinde tensor. 

Furthermore, the double triangle algebra also contains other, related, operators of interest for us whose representers are crucial to construct the

\medskip
\noindent
{\bf Defect algebra.}
The main result of this paper, which crucially relies on our observations related to the double triangle algebra, is the construction of certain MPOs associated with a bipartite anyonic spin chain, 
which we call $\Psi^{L_1,L_2}_{\lambda,\mu; w_1, w_2}$ and which are labelled by a pair $\lambda, \mu$ of simple objects from from $_\cN X_\cN$ and by $w_1, w_2^* \in {\rm Hom}(\alpha^-_\mu, \alpha^+_\lambda)$, where $\alpha^-_\mu, \alpha^+_\lambda$ denote the ``$\alpha$-induced'' object relative to a choice braiding in $_\cN X_\cN$ indicated by $\pm$. Thus, 
defining 
\ben
\label{Zmula}
Z_{\mu,\lambda} = \dim \Hom(\alpha^-_\mu, \alpha^+_\lambda),
\een 
there are $Z_{\mu,\lambda}^2$ generators $\Psi^{L_1,L_2}_{\lambda,\mu; w_1, w_2}$ for each pair $(\mu,\lambda)$. 
An important point is that, different from the symmetry operators $O^L_A$, the MPOs 
$\Psi^{L_1,L_2}_{\lambda,\mu; w_1, w_2}$ act on the tensor product of {\it two} chains of lengths $L_1$ respectively $L_2$, see fig. \ref{fig:220426_Fig-6b_TE} for a schematic 
drawing and see fig. \ref{fig:220413_Fig-6_TE} for a detailed definition.
\begin{figure}[h!]
\begin{center}
  \includegraphics[width=0.6\textwidth,]{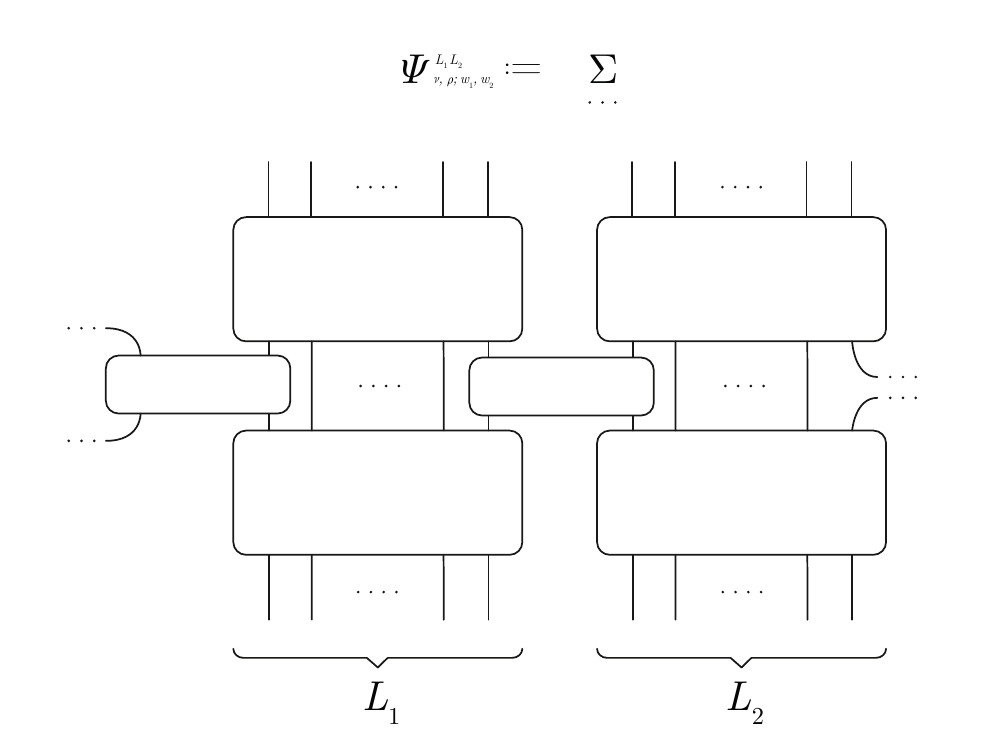}
\end{center}
  \caption{\label{fig:220426_Fig-6b_TE} Schematic form of the ``defect operators'' $\Psi^{L_1,L_2}_{\lambda,\mu; w_1, w_2}$ as MPOs on a bipartite chain of lenghts $L_1, L_2$. The large blocks 
  with $L_1,L_2$ strands are certain representers of the double triangle algebra.}
\end{figure}

We show that MPOs $\Psi^{L_1,L_2}_{\lambda,\mu; w_1, w_2}$ have several remarkable properties: (i) they commute among each other, i.e. for different $\lambda, \mu$ etc, (ii) they commute with 
the local densities of the chain Hamiltonian, (iii) they act on a ``conformal block''  (the range of the projector $Q^{L_1}_{\mu_1,\lambda_1} \otimes Q^{L_2}_{\mu_2,\lambda_2}$ on the bipartite chain) by fusing in the pair of charges $\lambda, \mu$, (iv) they generate an algebra $\cD^{L_1,L_2}$ isomorphic to the ``defect algebra'' that has been found \cite{frohlich2006correspondences,bischoff2015tensor,bischoff2016phase} in the context of $1+1$ dimensional CFTs\footnote{For a somewhat different approach to defects and fusion algebras in CFTs see \cite{bartels2019fusion}.}. 

\medskip
\noindent
The abelian nature of the defect algebra (i) shows that it is generated by commuting projections $D_A^{L_1,L_2}$, and (iv) permits us to import the
classification by \cite{frohlich2006correspondences,bischoff2015tensor,bischoff2016phase} implying that the minimal projections, 
$D_A^{L_1,L_2}$ are labelled precisely by the simple objects $A$ from $_\cM X_\cM$. By (ii), the subspace onto which a $D_A^{L_1,L_2}$ projects
 is left invariant by the local densities of the Hamiltonian on either side, so
it has a sort of ``topological character''. By (iii), the generators of the defect algebra $\cD^{L_1,L_2}$ of the bipartite chain 
of lenght $L=L_1+L_2$ are related to a decomposition 
\ben
\sV^{L_1,L_2} = \sV^{L_1} \otimes \sV^{L_2}, \quad 
\sV^{L_i} \cong \bigoplus_{\mu, \lambda: Z_{\lambda,\mu} \neq 0} \sV^{L_i}_{\lambda,\mu}
\een
of the Hilbert spaces $\sV^{L_i}$ of each of the two sub-chains $(i=1,2)$, and, on this tensor product Hilbert space the generators 
$\Psi^{L_1,L_2}_{\lambda,\mu; w_1, w_2}$ of $\cD^{L_1,L_2}$ act in a way that is precisely analogous to the ``braided product of two full centers'' in the 
CFT context \cite{bischoff2016phase}. [Each ``full center'' \cite{frohlich2006correspondences,bischoff2015tensor,bischoff2016phase} in the CFT-context corresponds 
to decomposing the Hilbert space of the CFT to the left/right
of the defect into conformal blocks, and on each side of the defect, the block associated with the primary with highest weight labels $\mu,\lambda$ appears $Z_{\mu,\lambda}$ 
times \eqref{Zmula}.] Together, (i)--(iv) suggest to us that the range of $D_A^{L_1,L_2}$ in the Hilbert space of
the bipartite anyonic chain precisely corresponds to a specific ``transparent boundary condition'' or ``defect'': This defect couples and 
sits in between the two parts of the bipartite chain, hence at ``constant position''.

It is known that the matrix $Z_{\lambda,\mu}$ \eqref{Zmula} is equal  modular invariant coupling matrix  in the CFT context appearing e.g. in the decomposition 
of the torus partition function $\sum_{\lambda, \mu} Z_{\lambda, \mu} \chi_\mu(q) \overline{\chi_\lambda(q)}$ or the Hilbert space 
$\oplus_{\mu,\lambda} Z_{\mu,\lambda} \sH^{\rm CFT}_\mu \otimes  \bar \sH^{\rm CFT}_\lambda$, which in our categorical setting is defined by \eqref{Zmula}
This suggests that, for a single chain with Hilbert space $\sV^L$, 
the subspaces $\sV^L_{\lambda,\mu}:= Q^L_{\lambda,\mu}\sV^L$ should be generated from $\sV^L_{0,0}$ by the action of certain local operators on the spin-chain which in the continuum limit 
ought to have the interpretation of conformal primary operators transforming in the representations $(\lambda,\mu)$ under the Virasoro algebra (assuming $_{\cN} X_{\cN}$ can be identified with a category of Virasoro representations to begin with). Unfortunately, we have as yet not been able to find convincing expressions for such operators. 

To summarize the discussion, our results suggest, but of course do not prove, the correspondences
\ben
\begin{split}
&\text{subspaces $Q^{L}_{\mu,\lambda} \sV^L$}  \leftrightarrow \text{CFT conformal block $(\lambda, \mu)$,}\\
&\text{subspaces $D^{L_1,L_2}_A \sV^{L_1,L_2}$}  \leftrightarrow \text{CFT states with defect $A$.}
\end{split}
\een
Here, the ``CFT'' refers to a continuum limit of corresponding anyonic spin chain, should such a limit exist. 

If the fusion algebra of ${}_\cN X_\cN$ is of so-called ``Cardy-'' or ``diagonal type'', 
\ben
Z_{\mu,\lambda} = \delta_{\mu,\lambda},
\een
corresponding in the CFT-context to a simple diagonal sum $\oplus_\mu  \sH^{\rm CFT}_\mu \otimes \bar \sH^{\rm CFT}_\mu$, 
then we show that our defect operators $\Psi^{L_1,L_2}_{\mu,\mu}$ (there are no `$w$'-labels in this case) can be labelled equivalently by the objects $A$ of the 
dual fusion category ${}_\cM X_\cM$, 
\ben
\mu \leftrightarrow A \quad \text{(diagonal models),}
\een
and our general results imply that they satisfy an algebra precisely isomorphic to \eqref{ooop}:
\ben
\label{ooop1}
\Psi_A^{L_1,L_2} \Psi_B^{L_1,L_2} = \sum_C N_{A,B}^C \Psi_C^{L_1,L_2}.
\een
 Of course, 
the operators $O^L_A$ and $\Psi^{L_1,L_2}_{A}$ are not at all the same: The first acts on a single chain of length $L$ whereas the second on a bipartite 
chain of lengths $L_1, L_2$, so one may tentatively think of the former as associated with ``horizontal defects'' (inserted at constant time) and of the latter with ``vertical defects'' (inserted at constant position). For general coupling matrices $Z_{\mu,\lambda}$ our abelian defect algebra $\cD^{L_1,L_2}$ appears to us in general different (as an algebra) from the, in general non-abelian,
algebra \eqref{ooop} generated by the symmetry operators $O_A^L$  discussed in connection with defects e.g. in \cite{buican2017anyonic,vanhove2018mapping,vanhove2021topological}, despite the fact that the central projections in our defect algebra $\cD^{L_1,L_2}$ are still labelled by the simple objects $A$ of the same category $_\cM X_\cM$. A similar remark applies to the 
defects constructed in the context of a wide class of lattice models by \cite{aasen2020topological}.
It would be interesting to understand the connection to these works better\footnote{The work \cite{buican2017anyonic} emphasizes the potential difference between the ``input category'', here $_\cM X_\cM$
and the ``output category'' of conformal defects and derives certain constraints. Our work on the other hand is concerned entirely with anyonic chains, although we discover 
new relationships between certain algebraic structures and algebraic structures appearing in CFTs. The works by \cite{vanhove2018mapping,vanhove2021topological} use a more Euclidean (imaginary time) description and the method of ``strange correlators''. The connection of that construction to ours is unclear to us.}. 

At any rate, we find it notable 
that an exact copy of the defect algebra of $1+1$-dimensional CFTs can be constructed for anyonic spin-chains before the continuum limit, especially in view of the fact that an anyonic spin chain may admit several continuum limits including non-conformal ones. It would be interesting to understand this point better. 

\subsection{Summary of notations and conventions} 

The von Neumann algebras appearing in this paper are always assumed to be infinite (type III) factors. 
$\cN \subset \cM$ always denotes a finite index inclusion of such von Neumann factors. 
Calligraphic letters $\cA, \dots, \cZ$ denote algebras, often von Neumann algebras. 
Operators from a 
von Neumann algebra will be denoted by lower case Roman letters. The adjoint operation in 
the von Neumann algebra is denoted by $x \mapsto x^*$ whereas the adjoint on the Hilbert space of the spin chain 
is denoted by $\dagger$. Operators on a spin chain of length $L$ are typically denoted by upper case letters $A^L, \Phi^L, \dots$. 
Upper case Roman indices $A,B,\dots$ also denote endomorphisms from
$\cM$ and are represented by thick solid lines in wire diagrams, Greek symbols $\mu,\nu, \dots$ denote endomorphisms from $\cN$
and are represented by dashed lines. Lower case Roman indices $a, b, \dots$ also denote endomorphisms from $\cN \to \cM$ and 
are represented by thin solid lines.

\section{Background}

\subsection{Von Neumann algebras}

{\bf General definitions:} See e.g. \cite{takesaki2003theory}. A von Neumann algebra is an ultra-weakly closed $*$-subalgebra of the algebra $\cB(\sH)$ of bounded operators 
on some Hilbert space, $\sH$. The commutant $\cN'$ of a von Neumann algebra is the von Neumann algebra of all bounded operators on 
$\sH$ commuting with each operator from $\cN$. The center of a von Neumann algebra is consequently $\cN \cap \cN'$. 
A von Neumann algebra with trivial center is said to be a factor. Factors can be classified into types I, II, III; the algebras considered in this work 
are assumed to be factors of type III. This property is used mainly to set up the calculus of endomorphisms (see below). In QFT, algebras 
are of this type \cite{haag2012local}, so the assumption is also natural if we have in mind relating the anyonic spin chain constructions back to QFT. 

An inclusion $\cN \subset \cM$ of von Neumann factors is said to be irreducible if $\cN' \cap \cM$ (also called the first relative commutant) consists of multiples 
of the identity. We will always assume that we are in this situation when considering inclusions. Associated with any inclusion of factors there is always the 
``dual inclusion'' $\cM' \subset \cN'$. For some constructions below, it is also convenient 
to assume that $\cM$ is $\sigma$-finite and in a ``standard form'', in the sense that there exists a vector $|\eta\rangle \in  \sH$ such that both $\cM |\eta\rangle$ and $\cM'|\eta\rangle$
are dense subspaces of $\sH$ (in such a case $|\eta\rangle$ is called cyclic and separating). For a von Neumann algebra in standard form 
one has an anti-linear, unitary, involutive operator $J_\cM$ exchanging $\cM$ with $\cM'$ under conjugation (i.e. $j_\cM(m) = {\rm Ad}(J_\cM) m := 
J_\cM m J_\cM \in \cM'$ iff $m \in \cM$). The existence of a cyclic and separating vector is a moderate assumption which we implicitly make throughout.

\medskip
\noindent
{\bf Conditional expectations:} See e.g.  \cite{longo1995nets, bisch1997bimodules, evans1998quantum, kosaki1986extension, hiai1988minimizing, takesaki2003theory}.
Let $\cN \subset \cM$ be an irreducible inclusion of two von Neumann factors. An ultraweakly continuous linear operator $E: \cM \to \cN$ is called 
a conditional expectation if it is positive, $E(m^*m) \ge 0$ for all $m \in \cM$, and if 
\ben
E(n_1 m n_2)= n_1 E(m) n_2
\een
for $m \in \cM, n_i \in \cN$. If there exists any conditional expectation $E$ at all, then the best constant $\lambda>0$ such that 
\ben
\label{eq:pp}
E(m^*m) \ge \lambda^{-1} m^*m \quad \text{for all $m \in \cM$}
\een
is called the index of $E$, and there exists one for which $\lambda$ is minimal. 
This $\lambda = [\cM:\cN]$ is the Jones-Kosaki index of the inclusion. In this paper we only consider 
inclusions with a {\em finite index}. In such a case one can always define a conditional expectation $E': \cN' \to \cM'$
for the dual inclusion.

\subsection{Fusion categories, endomorphisms and intertwiner calculus}
\label{sec:fusion}

See e.g. \cite{etingof2005fusion,longo1990index,bischoff2015tensor,muger2003subfactors,bockenhauer1999alpha,longo1996theory}. 
A fusion category over $\CC$ is a monoidal category with finitely many simple objects up to isomorphism and finite-dimensional $\CC$-linear morphism spaces, and such that the unit object is simple. 
Such a category is called unitary if it can be equipped with a $*$-operation such that it becomes a $C^*$-category in the sense of \cite{doplicher1989new}. Any such category can be 
realized as a category of (finite index) endomorphisms of an infinite (type III) von Neumann factor, $\cN$, and unitary fusion categories realized in this way will 
be denoted by $_\cN X_\cN$. An endomorphism $\mu \in \End(\cN)$ of a von Neumann algebra $\cN$
is an ultra-weakly continuous $*$-homomorphism such that $\mu(1)=1$. It is said to have finite index if $[\cN: \mu(\cN)]<\infty$, where the definition of index in the case of type III
von Neumann factors is as outlined above. In this paper, all endomorphisms considered are assumed to have a finite index.

\medskip
\noindent
{\bf Intertwiners ($\to$ Hom-spaces):} 
Given two endomorphisms $\mu,\nu$, one says that a linear operator $t \in \cN$  is an ``intertwiner'' if 
$t\mu(n)=\nu(n) t$ for all $n \in \cN$. The linear space of all such intertwiners is called $\Hom_\cN(\nu,\mu)$, 
but note that a given operator $t$ may belong to different Hom-spaces. The subscript `$\cN$' which indicates the algebra from which the 
intertwiners are taken is sometimes omitted where clear from the context. For the composition of two endomorphisms 
we write $\mu\nu := \mu \circ \nu$.
Two endomorphisms are called equivalent if there is a unitary 
intertwiner between them and irreducible (or simple) if there is no non-trivial self-intertwiner. If $t \in \Hom_\cN(\mu,\nu)$, then 
$t^* \in \Hom_\cN(\nu,\mu)$, 
and if $s \in \Hom_\cN(\lambda,\sigma)$, then 
\ben
\label{DHRprod}
s \times t := s \sigma(t) = \lambda(t) s \in \Hom_\cN(\lambda\mu,\sigma\nu)
\een
is called the Doplicher-Haag-Roberts (DHR) product. It is associative, satisfies $(s \times t)^* = s^* \times t^*$, and
gives the Hom-spaces the structure of a tensor category. Note that $1_\sigma \times t=\sigma(t)$ but $t \times 1_\sigma = t$ as 
operators, where $1_\sigma \in \Hom(\sigma,\sigma)$ is the trivial intertwiner (equal to the identity operator of $\cN$). 

\medskip
\noindent
{\bf Graphical calculus:} We often use a graphical calculus for manipulations involving intertwiners. In this calculus, an intertwiner is adjacent to 
a set of wires which correspond to the input (bottom) and output (top) endomorphisms $\mu, \nu, \dots$. The wires are typically drawn vertically. So for example, 
in fig. \ref{fig:220426_Fig-1_TE}, the box represents an intertwiner $t \in \Hom(\lambda_1 \dots \lambda_m,\mu_1 \dots \mu_n)$. 
\begin{figure}[h!]
\begin{center}
  \includegraphics[width=1.1\textwidth,]{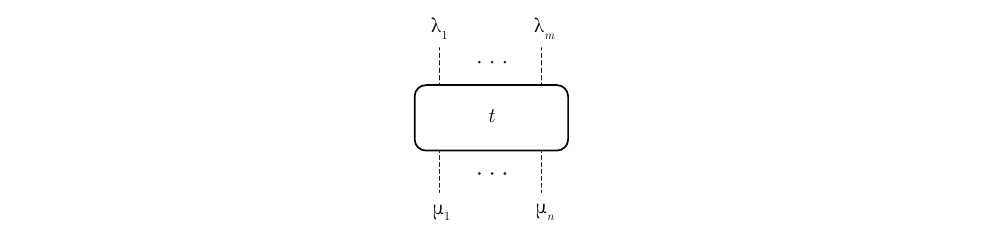}
\end{center}
  \caption{\label{fig:220426_Fig-1_TE} An intertwiner $t \in \Hom(\lambda_1 \dots \lambda_m,\mu_1 \dots \mu_n)$.}
\end{figure}
The identity intertwiner $1_\rho$ is always drawn as a vertical line. 
To represent the composition of (composable) intertwiners we stack them on top of each other and connect the wires as in fig. \ref{fig:220426_Fig-2_TE}. 
\begin{figure}[h!]
\begin{center}
  \includegraphics[width=1.1\textwidth,]{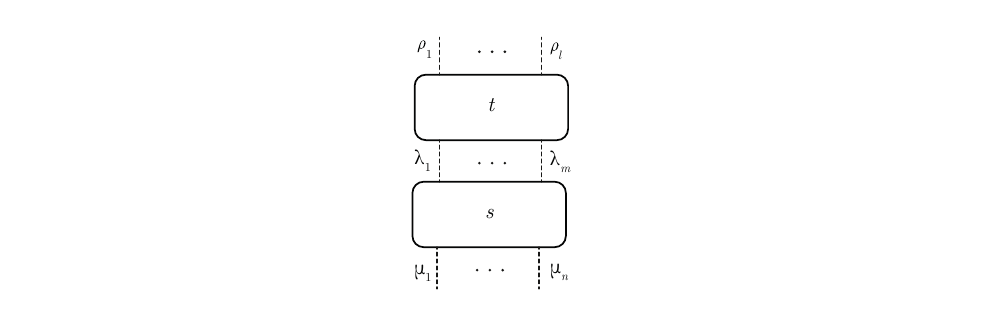}
\end{center}
  \caption{\label{fig:220426_Fig-2_TE} A product $ts$ of intertwiners. The diagram is read from bottom to top.}
\end{figure}
The DHR product \eqref{DHRprod} of two intertwiners is written by placing the two wire diagrams horizontally next to each other as in fig. \ref{fig:220426_Fig-3_TE}. 
\begin{figure}[h!]
\begin{center}
  \includegraphics[width=1.1\textwidth,]{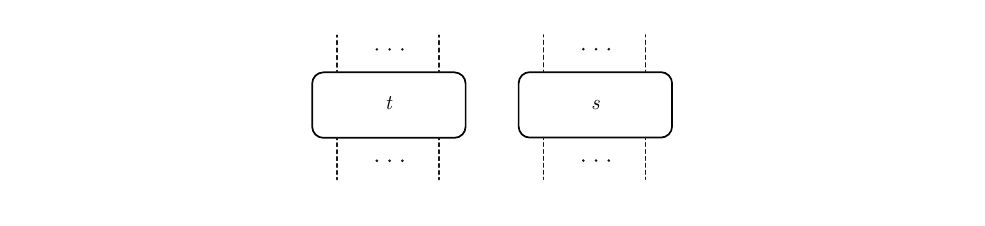}
\end{center}
  \caption{\label{fig:220426_Fig-3_TE} A DHR product $t \times s$ of intertwiners.}
\end{figure}
The diagrammatic notation (enriched by various special intertwiners in the following) is designed to automate certain identities. For example:
\begin{itemize}
\item We may slide boxes representing intertwiners in a DHR-product vertically upwards or downwards. This corresponds to the equivalent ways 
of writing the DHR product as in \eqref{DHRprod}.
\item If we have an identity between products of intertwiners represented by wired diagrams, we may place an arbitrary number of vertical wires
to the left or right (representing a DHR product with $1_{\mu_1} \times \cdots 1_{\mu_n} \times \cdots$ or $\cdots \times 1_{\mu_1} \cdots  \times 1_{\mu_n}$) to obtain 
a new identity. This follows from the homomorphism property of the $\mu_i$.
\item  Taking $*$ of an intertwiner represented by a wire diagram means reflection across the horizontal, or reading it top-to-bottom instead of bottom-to-top.
\end{itemize}
\medskip
\noindent
{\bf Decomposition:} Let $\theta \in \End(\cN)$. One writes $\theta \cong \oplus_{\mu} N_\mu \, \mu$ if there is 
a finite set of  irreducible and mutually inequivalent 
endomorphisms $\mu \in \End(\cN)$ and isometries $t_{\mu,i} \in \Hom_\cN(\theta, \mu), i = 1, \dots , N_\mu:=\dim_\CC \Hom_\cN(\theta, \mu)$ 
such that $\theta(n) = \sum_{\mu} \sum_{i=1}^{N_\mu} t_{\mu,i} \mu(n) t_{\mu,i}^*$ for all $n \in \cN$, and such that 
\ben\label{decomp}
t_{\mu,i}^* t_{\nu,j}^{} = \delta_{\mu,\nu} \delta^i_j 1, \quad 
\sum_{\mu} \sum_{i=1}^{N_\mu} t_{\mu,i}^{} t_{\mu,i}^*=1,
\een
see fig. \ref{fig:decomposition} for a wire diagram of the second identity and see fig. \ref{fig:decomposition2} for a wire diagram of the first identity. 
Note that for any pair $(\nu, \mu)$ of endomorphisms in $\End(\cN)$ with $\nu$ 
irreducible, the complex linear space $\Hom_\cN(\mu,\nu)$ is a Hilbert space with inner product 
\ben
\label{scalts}
t^* s = (t,s) 1, \quad t, s \in \Hom_\cN(\mu,\nu), 
\een
because $t^* s$ is a scalar. 
Then  \eqref{decomp} expresses that $\{t_{\mu,i} : i = 1, \dots, N_\mu\}$ is an orthonormal basis (ONB)
of $\Hom_\cN(\theta, \mu)$. 
\begin{figure}[h!]
\centering
\begin{tikzpicture}[scale=.8]
\draw (-2.5,1) node[anchor=west]{$\sum_t$};
\draw (2.5,1) node[anchor=west]{$=$};
\draw (0,.1) node[anchor=west]{$t^*$};
\draw (0,1.9) node[anchor=west]{$t$};
\draw (-1,-1) node[anchor=north]{$\mu$};
\draw (-1,3) node[anchor=south]{$\mu$};
\draw (1,-1) node[anchor=north]{$\nu$};
\draw (1,3) node[anchor=south]{$\nu$};
\draw (4,-1) node[anchor=north]{$\mu$};
\draw (5,-1) node[anchor=north]{$\nu$};
\draw[ line width=.03cm,
    dash pattern=on .09 cm off .04 cm] (0,0) -- (-1,-1);
\draw[ line width=.03cm,
    dash pattern=on .09 cm off .04 cm] (0,0) -- (1,-1);    
\draw[ line width=.03cm,
    dash pattern=on .09 cm off .04 cm] (0,0) -- (0,2);
\draw[ line width=.03cm,
    dash pattern=on .09 cm off .04 cm] (0,2) -- (-1,3);    
\draw[ line width=.03cm,
    dash pattern=on .09 cm off .04 cm] (0,2) -- (1,3);    
\draw[ line width=.03cm,
    dash pattern=on .09 cm off .04 cm] (4,-1) -- (4,3);  
\draw[ line width=.03cm,
    dash pattern=on .09 cm off .04 cm] (5,-1) -- (5,3);   
\end{tikzpicture}
  \caption{\label{fig:decomposition} Wire diagram for \eqref{decomp}. 
  }
\end{figure}
\begin{figure}[h!]
\centering
\begin{tikzpicture}[scale=.8]
\draw (2.5,1) node[anchor=west]{$= (s,t) \delta_{\lambda,\sigma}$};
\draw (0,-0.1) node[anchor=west]{$t$};
\draw (0,2.1) node[anchor=west]{$s^*$};
\draw (6,-1) node[anchor=north]{$\lambda$};
\draw (0,-1) node[anchor=north]{$\lambda$};
\draw (0,3) node[anchor=south]{$\sigma$};
\draw (1.5,1) node[anchor=east]{$\mu$};
\draw (-1.5,1) node[anchor=west]{$\nu$};
\draw[ line width=.03cm,
    dash pattern=on .09 cm off .04 cm] (0,0) .. controls (1.1,1)  .. (0,2);
\draw[ line width=.03cm,
    dash pattern=on .09 cm off .04 cm] (0,0) .. controls (-1.1,1)  .. (0,2);    
\draw[ line width=.03cm,
    dash pattern=on .09 cm off .04 cm] (0,0) -- (0,-1);
\draw[ line width=.03cm,
    dash pattern=on .09 cm off .04 cm] (0,2) -- (0,3);    
 \draw[ line width=.03cm,
    dash pattern=on .09 cm off .04 cm] (6,-1) -- (6,3);       
\end{tikzpicture}
  \caption{\label{fig:decomposition2} Wire diagram for \eqref{scalts}. 
  }
\end{figure}

\medskip
\noindent
{\bf Fusion:} Given irreducible $\mu,\nu$ from some unitary fusion category $_\cN X_\cN \subset \End(\cN)$, the decomposition into irreducible endomorphisms 
$\sigma \in _\cN X_\cN$ is a finite sum as in
\ben
\mu \nu \cong \bigoplus_\sigma N_{\mu,\nu}^\sigma \ \sigma, 
\een
and the non-negative integers $N_{\mu,\nu}^\sigma$ are called the fusion coefficients. They satisfy 
obvious associativity-type conditions resulting from the associativity of the composition of endomorphisms. 
Note that it need not be the case that $\mu\nu$ is unitarily equivalent to $\nu\mu$, so the fusion 
matrices $N_{\mu,\nu}^\sigma$ need not be symmetric in the lower indices. 

\medskip
\noindent
{\bf Conjugate and Dimension:} Let $\mu \in \End(\cN)$ be an irreducible endomorphism of the von Neumann factor $\cN$.
One calls $\bar \mu \in \End(\cN)$ a conjugate endomorphism if the fusion of $\mu \bar \mu$ and $\bar \mu \mu$ contain 
the identity endomorphism. We assume throughout that the dimension of the endomorphisms considered is finite. 
This dimension is defined (for example) via the Jones-Kosaki index as $d_\mu = [\cN, \mu(\cN)]^{1/2}$. 
When the index of $\mu$ is finite, then there exist $r_\mu,\bar r_\mu \in \cN$ such that 
$\bar r_\mu  \in \Hom_\cN(\mu \bar \mu, id),  
r_\mu \in \Hom_\cN(\bar \mu \mu, id)$ and, 
\ben
\label{conjgacyrel}
{\bar{\mu}}(r^*_\mu) \bar r_\mu = 1 = \mu(\bar r^*_\mu) r_\mu, 
\quad r^*_\mu r_\mu = d_{\mu} \cdot 1 = \bar r^*_\mu \bar r_\mu.
\een
In particular, $r_\mu,\bar r_\mu$ are multiples of isometries.
The second relation gives a relation with the dimension $d_\mu \equiv d(\mu) \ge 1$ of $\mu$. 
One has $d_{\mu} = d_{\bar \mu}$ and for $\mu,\nu,\sigma$ irreducible endomorphisms of $\cN$,
\ben
d_\mu d_\nu = \sum_\sigma N_{\mu,\nu}^\sigma d_\sigma.
\een
These formulas express the additivity/multiplicativity of the quantum dimension under decomposition/fusion
and the invariance under conjugation and are the basic justification for the usage of the term ``dimension'' 
even though $d_\mu$ in general does not have to be integer.

The graphical representation of $r_\mu, \bar r_\mu$ is given in fig. \ref{fig:r}.
\begin{figure}
\centering
\begin{tikzpicture}[scale=.6]
\draw (-2,0) node[anchor=north]{$\bar \lambda$};
\draw (0,0) node[anchor=north]{$\lambda$};
\draw (2,1) node[anchor=south]{$\bar \lambda$};
\draw (4,1) node[anchor=south]{$\lambda$};
\draw (10,0) node[anchor=north]{$\bar \lambda$};
\draw (8,0) node[anchor=north]{$\lambda$};
\draw (12,1) node[anchor=south]{$ \lambda$};
\draw (14,1) node[anchor=south]{$\bar \lambda$};
\draw[ line width=.03cm,
    dash pattern=on .09 cm off .04 cm]  (0,0) arc (0:180:1);
\draw[ line width=.03cm,
    dash pattern=on .09 cm off .04 cm]  (2,1) arc (180:359:1);
\draw[ line width=.03cm,
    dash pattern=on .09 cm off .04 cm]  (10,0) arc (0:180:1);
\draw[ line width=.03cm,
    dash pattern=on .09 cm off .04 cm]  (12,1) arc (180:359:1);
\end{tikzpicture}
  \caption{\label{fig:r} Wire diagrams for $r_\lambda^*, r_\lambda, \bar r_\lambda^*, \bar r_\lambda$.}
\end{figure}
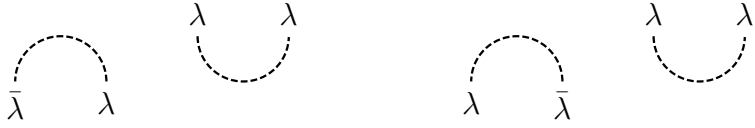
 The wire diagram for the conjugacy relation is 
depicted in fig. \ref{fig:220426_Fig-4_TE}.
\begin{figure}[h!]
\begin{center}
  \includegraphics[width=1.1\textwidth,]{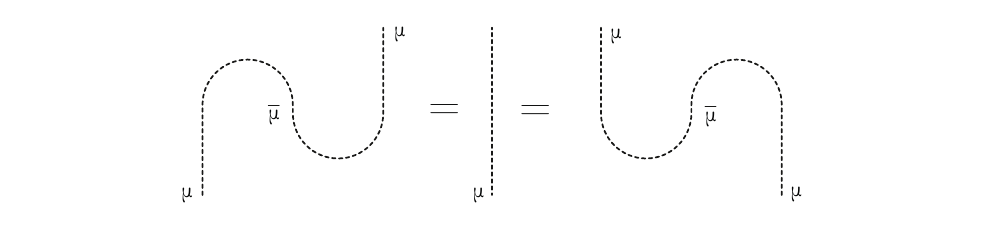}
\end{center}
  \caption{\label{fig:220426_Fig-4_TE} Topological invariance of wire diagram representing the conjugacy relation \eqref{conjgacyrel}.}
\end{figure}
The wire diagram for the normalization and isometry property of $\bar r_\lambda$ (similarly for $r_\lambda$) is depicted in fig. \ref{fig:circle}.
\begin{figure}
\centering
\begin{tikzpicture}[scale=.8]
\draw (-5,0.5) node[anchor=north]{$d_\lambda \cdot 1 = $};
\draw (-2.4,0.5) node[anchor=north]{$\lambda$};
\draw (.4,0.5) node[anchor=north]{$\bar \lambda$};
\draw[ line width=.03cm,
    dash pattern=on .09 cm off .04 cm] (0,0) arc (0:360:1);
\end{tikzpicture}
  \caption{\label{fig:circle} Wire diagram for $\bar r_\lambda^* \bar r_\lambda = d_\lambda 1$. 
  }
\end{figure}
Similar constructions can be 
made if $a$ is an endomorphism from $\cN \to \cM$, in which case $\overline{a}$ is an endomorphism $\cM \to \cN$.
In either case, one can achieve that $\overline{a \bar b} = b \bar a$. A generalization to 
reducible endomorphisms is also possible.  

\medskip
\noindent
{\bf Frobenius duality:} Let $\mu, \nu, \lambda \in \End(\cN)$ be irreducible and $t \in \Hom_\cN(\mu\lambda,\nu)$ be 
normalized to one, $t^* t = 1$. Then we define a ``Frobenius-dual'' endomorphism $\tilde t \in \Hom_\cN(\nu \bar \lambda,\mu)$, see fig. \ref{fig:frobenius}, 
\ben
\label{frobenius}
\tilde t := \left( \frac{d_\nu}{d_\mu} \right)^{-1/2} (t \times 1_{\bar \lambda})^*(1_\mu \times \bar r_\lambda).
\een
The normalization factor has been chosen so that $\tilde t^* \tilde t = 1$. One shows that 
Frobenius duality is involutive, so we get an anti-isometric identification of intertwiner spaces and corresponding identities between fusion coefficients. 
\begin{figure}
\centering
\begin{tikzpicture}[scale=.8]
\draw (-3.5,0) node[anchor=west]{$\left( \frac{d_\nu}{d_\mu} \right)^{-1/2}$};
\draw (0,0) node[anchor=west]{$t^*$};
\draw (-2,-2) node[anchor=north]{$\mu$};
\draw (0,2) node[anchor=south]{$\nu$};
\draw (2,2) node[anchor=south]{$\bar \lambda$};
\draw[ line width=.03cm,
    dash pattern=on .09 cm off .04 cm] (0,0) -- (-2,-2);
\draw[ line width=.03cm,
    dash pattern=on .09 cm off .04 cm] (0,0) .. controls (0.5,-2) and (2.3,-2) .. (2,2);
\draw[ line width=.03cm,
    dash pattern=on .09 cm off .04 cm] (0,0) -- (0,2);
\end{tikzpicture}
  \caption{\label{fig:frobenius} Wire diagram for $\tilde t$ given $t$. 
  }
\end{figure}

\medskip
\noindent
{\bf Conjugate intertwiner:} Let $\mu, \nu, \lambda \in \End(\cN)$ be irreducible and $t \in \Hom_\cN(\mu\lambda,\nu)$ be 
normalized to one, $t^* t = 1$. Then we define a ``conjugate'' endomorphism 
\ben
\bar t := 
(1_{\bar \nu} \times \bar r_\mu)^*
(1_{\bar \nu} \times \bar r_\lambda \times 1_{\bar \mu})^*
(1_{\bar \nu} \times t \times 1_{\bar \mu})^*(r_\nu \times 1_{\bar \lambda \bar \mu})  \in \Hom_\cN(\bar \lambda \bar \mu,\bar \nu).
\een
One shows that the normalization is chosen so that $\bar t^* \bar t = 1$, and that 
conjugation is involutive, so we get an anti-isometric identification 
of intertwiner spaces and corresponding identities between fusion coefficients. %

\medskip
\noindent
{\bf Braiding:} Let $_\cN X_\cN$ be a unitary fusion category. If $\mu \nu \cong \nu \mu$
for any $\mu,\nu \in \, _\cN X_\cN$ we say the system is braided if a consistent choice of the unitaries implementing the equivalence, 
called $\epsilon^\pm(\mu,\nu) \in \Hom_\cM(\mu\nu,\nu\mu)$, i.e. $\mu \nu = {\rm Ad}[\epsilon^\pm(\mu,\nu)] \nu\mu$, can be made. 
Here $\pm$ refer to over- and under-crossing which are the adjoints of each other. 
If we do not have a superscript as in $\epsilon(\lambda,\mu)$ then by convention ``$+$'' is meant. Consistency means that we have 
the so-called braiding-fusion relations (BFE) and the Yang-Baxter relations (YBE). The YBEs are 
\ben
(1_\rho \times \epsilon(\lambda,\mu)) ( \epsilon(\lambda,\rho) \times 1_\mu)(1_\lambda \times \epsilon(\mu,\rho)) =
(\epsilon(\mu,\rho) \times 1_\nu) (1_\mu \times \epsilon(\lambda,\rho))(\epsilon(\lambda,\mu) \times 1_\rho).
\label{YBE}
\een
They correspond to the wire diagram in fig. \ref{fig:220426_Fig-5_TE}.
\begin{figure}[h!]
\begin{center}
  \includegraphics[width=1.1\textwidth,]{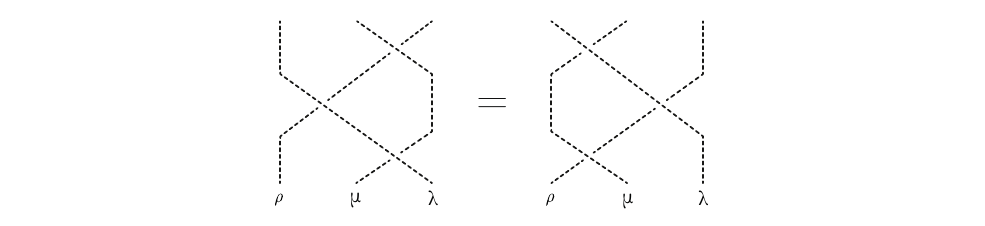}
\end{center}
  \caption{\label{fig:220426_Fig-5_TE} Topological invariance of wire diagram representing the YBE.}
\end{figure}
The BFEs are ($t \in \Hom(\mu\nu,\lambda)$)
\ben
\begin{split}
\epsilon(\rho,\lambda)
(1_\rho \times t^*) &=
(t^* \times 1_\rho)
(1_\mu \times \epsilon(\rho,\nu))
(\epsilon(\rho,\mu) \times 1_\nu)\\
\epsilon(\lambda, \rho)
(t^* \times 1_\rho) &=
(1_\rho \times t^*)
(\epsilon(\mu,\rho) \times 1_\nu)
(1_\mu \times \epsilon(\nu,\rho))
\end{split}
\label{BFE}
\een
the first of which corresponds to the wire diagram in fig. \ref{fig:220426_Fig-6_TE}.
\begin{figure}[h!]
\begin{center}
  \includegraphics[width=1.1\textwidth,]{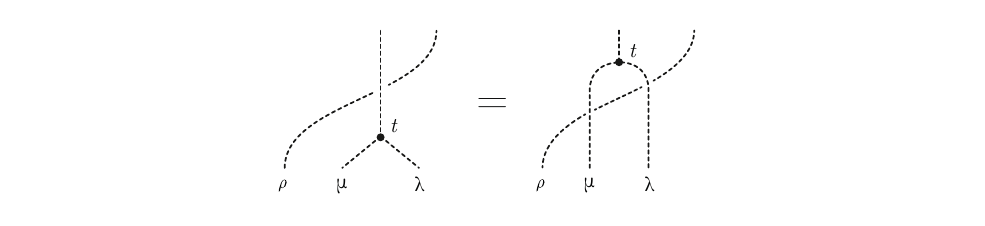}
\end{center}
  \caption{\label{fig:220426_Fig-6_TE} Topological invariance of wire diagram representing the BFE. There is a similar diagram with a crossing from right to left and 
  there are similar relations with over-crossings associated with $\epsilon^-(\mu,\nu)$.}
\end{figure}
We will assume in the following that the braiding is non-degenerate, i.e. $\epsilon^\pm(\lambda, \mu)$ are equal for $\pm$ and all $\mu \in \, _\cN X_\cN$ 
if and only if $\lambda = id$.
A braiding can be generalized in a natural way to 
any endomorphism that is decomposable into irreducible endomorphisms from $_\cN X_\cN$. 
The braiding implies an obvious symmetry of the fusion coefficients. 

One has ($\mu \in \, _\cN X_\cN$): 
\ben
(1_\mu \times \bar r_\mu)^* (\epsilon(\mu,\mu) \times 1_{\bar \mu})(1_\mu \times \bar r_\mu) = e^{2\pi i h(\mu)} ,
\een
where $h(\mu) \in \RR$ is called the statistics phase. For $t \in \Hom_\cM(\sigma,\mu\nu), \mu,\nu,\sigma \in \, _\cN X_\cN$, one can show using 
the BFE and YBE that
\ben
t\epsilon(\nu,\mu) \epsilon(\mu,\nu) = e^{2\pi i(h(\mu) + h(\nu) - h(\sigma))} t.
\een 
The Rehren matrix, see fig. \ref{fig:Y1}
\ben
\label{YRehren}
Y_{\mu,\nu} := d_\mu (1_\mu \times \bar r_\nu)^* (\epsilon(\nu,\mu) \epsilon(\mu,\nu) \times 1_{\bar \nu})(1_\mu \times \bar r_\nu)
\een
is shown to satisfy $Y_{\mu,\nu} = \sum_\lambda d_\lambda e^{2\pi i(h(\mu) + h(\nu) - h(\lambda))} N_{\mu,\nu}^\lambda$. If the braiding is nondegenerate, 
$_\cN X_\cN$ is said to be ``modular''. In such a case, $Y_{\mu,\nu}$ it is equal
up to a prefactor to the Verlinde matrix $S_{\mu,\nu}$ which diagonalizes the fusion coefficients. 

\subsection{Q-systems and subfactors}\label{sec:Qsys}

\noindent
{\bf Q-systems:} See \cite{longo1994duality,longo1995nets,bischoff2015tensor}.
A Q-system is a way to encode an inclusion of properly infinite von Neumann factors $\cN \subset \cM$
possessing a minimal conditional expectation $E:\cM \to \cN$ such that the index, denoted here by $d^2$, is finite. 
An important point is that the data in the Q-system only refer to the smaller factor, $\cN$.

\begin{definition}\label{Qsysdef}
A Q-system is a triple $(\theta, x, w)$ where: $\theta \cong \oplus_i \rho_i$ is an endomorphism of $\cN$, 
$w \in \Hom(\theta, id) \subset \cN$ and $x \in \Hom(\theta^2, \theta) \subset \cN$ such that
\ben\label{Q1}
w^* x = \theta(w^*)x = 1, \quad 
x^2 = \theta(x)x, \quad
\theta(x^*)x = xx^* = x^* \theta(x), 
\een
see figs. \ref{fig:X1}, \ref{fig:X3}, \ref{fig:X4} as well as 
\ben\label{Q2} 
w^* w = d \cdot 1, \quad 
x^* x =  d \cdot 1,
\een
see fig. \ref{fig:X2}.
\end{definition}

\begin{figure}[h!]
\centering
\begin{tikzpicture}[scale=.6]
\draw (-2,-2) node[anchor=north]{$\theta$};
\draw (2,2) node[anchor=south]{$\theta$};
\draw (-1,2) node[anchor=south]{$\theta$};
\draw (1,-2) node[anchor=north]{$\theta$};
\draw (-1,1) node[anchor=north]{$x^*$};
\draw (1,-1) node[anchor=south]{$x$};
\draw (6,2) node[anchor=south]{$\theta$};
\draw (4,2) node[anchor=south]{$\theta$};
\draw (6,-2) node[anchor=north]{$\theta$};
\draw (4,-2) node[anchor=north]{$\theta$};
\draw (5,.7) node[anchor=west]{$x^*$};
\draw (5,-.7) node[anchor=west]{$x$};
\draw (9,-2) node[anchor=north]{$\theta$};
\draw (8,2) node[anchor=south]{$\theta$};
\draw (11,2) node[anchor=south]{$\theta$};
\draw (12,-2) node[anchor=north]{$\theta$};
\draw (11,1) node[anchor=north]{$x^*$};
\draw (9,-1) node[anchor=south]{$x$};
\draw (3,0) node{$=$};
\draw (7,0) node{$=$};
\draw[ line width=.03cm,
    dash pattern=on .09 cm off .04 cm]  (0,0) arc (0:180:1);
\draw[ line width=.03cm,
    dash pattern=on .09 cm off .04 cm]  (0,0) arc (180:359:1);
\draw[ line width=.03cm,
    dash pattern=on .09 cm off .04 cm]  (6,-2) arc (0:180:1);
\draw[ line width=.03cm,
    dash pattern=on .09 cm off .04 cm]  (6,2) arc (0:-180:1);
\draw[ line width=.03cm,
    dash pattern=on .09 cm off .04 cm]  (-2,0) -- (-2,-2);
\draw[ line width=.03cm,
    dash pattern=on .09 cm off .04 cm]  (2,0) -- (2,2);
\draw[ line width=.03cm,
    dash pattern=on .09 cm off .04 cm]  (1,-1) -- (1,-2);
\draw[ line width=.03cm,
    dash pattern=on .09 cm off .04 cm]  (-1,1) -- (-1,2);
 \draw[ line width=.03cm,
    dash pattern=on .09 cm off .04 cm]  (5,-1) -- (5,1);
\draw[ line width=.03cm,
    dash pattern=on .09 cm off .04 cm]  (8,0) arc (180:359:1);
\draw[ line width=.03cm,
    dash pattern=on .09 cm off .04 cm]  (12,0) arc (0:180:1);    
\draw[ line width=.03cm,
    dash pattern=on .09 cm off .04 cm]  (9,-1) -- (9,-2);
\draw[ line width=.03cm,
    dash pattern=on .09 cm off .04 cm]  (8,0) -- (8,2);
\draw[ line width=.03cm,
    dash pattern=on .09 cm off .04 cm]  (11,1) -- (11,2);
\draw[ line width=.03cm,
    dash pattern=on .09 cm off .04 cm]  (12,0) -- (12,-2);
\end{tikzpicture}
  \caption{\label{fig:X1} Wire diagrams for $x^* \theta(x)  = xx^* = \theta(x^*)x$, 
  which is equivalent to $(x^* \times 1_\theta)(1_\theta \times x) = xx^* = (1_\theta \times x^*)(x \times 1_\theta)$ in DHR notation.}
\end{figure}
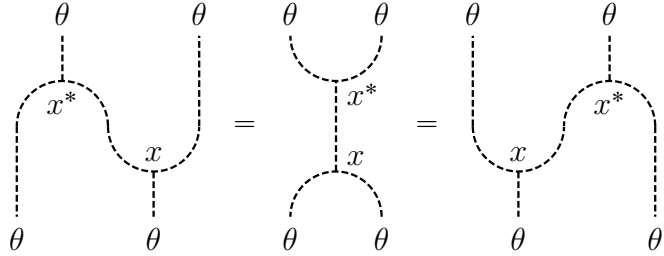
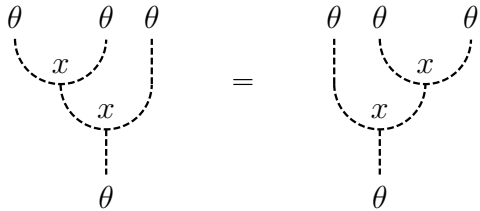
\begin{figure}[h!]
\centering
\begin{tikzpicture}[scale=.6]
\draw (-7,1) node[anchor=south]{$\theta$};
\draw (-5,1) node[anchor=south]{$\theta$};
\draw (-5,-1) node[anchor=south]{$x$};
\draw (-6,0) node[anchor=south]{$x$};
\draw (-4,1) node[anchor=south]{$\theta$};
\draw (-5,-2) node[anchor=north]{$\theta$};
\draw (-2,0) node{$=$};
\draw[ line width=.03cm,
    dash pattern=on .09 cm off .04 cm]  (-6,0) arc (180:359:1);
\draw[ line width=.03cm,
    dash pattern=on .09 cm off .04 cm]  (-7,1) arc (180:359:1);    
\draw[ line width=.03cm,
    dash pattern=on .09 cm off .04 cm]  (-5,-1) -- (-5,-2);
\draw[ line width=.03cm,
    dash pattern=on .09 cm off .04 cm]  (-4,0) -- (-4,1);   
\draw (0,1) node[anchor=south]{$\theta$};
\draw (1,1) node[anchor=south]{$\theta$};
\draw (1,-1) node[anchor=south]{$x$};
\draw (2,0) node[anchor=south]{$x$};
\draw (3,1) node[anchor=south]{$\theta$};
\draw (1,-2) node[anchor=north]{$\theta$};
\draw[ line width=.03cm,
    dash pattern=on .09 cm off .04 cm]  (0,0) arc (180:359:1);
\draw[ line width=.03cm,
    dash pattern=on .09 cm off .04 cm]  (1,1) arc (180:359:1);    
    
\draw[ line width=.03cm,
    dash pattern=on .09 cm off .04 cm]  (1,-1) -- (1,-2);
\draw[ line width=.03cm,
    dash pattern=on .09 cm off .04 cm]  (0,0) -- (0,1);    
\end{tikzpicture}
  \caption{\label{fig:X3} Wire diagrams for $x^2 = \theta(x)x$, which is equivalent to 
  $(x \times 1_\theta)x = (1_\theta \times x)x$ in DHR-notation.}
\end{figure}
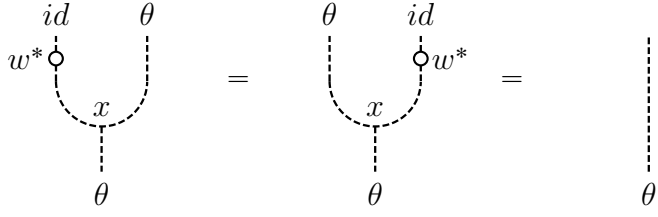
\begin{figure}[h!]
\centering
\begin{tikzpicture}[scale=.6]
\draw (-5,-2) node[anchor=north]{$\theta$};
\draw (-5,-1) node[anchor=south]{$x$};
\draw (-4,1) node[anchor=south]{$\theta$};
\draw (-6,1) node[anchor=south]{$id$};
\draw (-6,.5) node[anchor=east]{$w^*$};
\draw (-2,0) node{$=$};
\draw (4,0) node{$=$};
\draw[ line width=.03cm,
    dash pattern=on .09 cm off .04 cm]  (-6,0) arc (180:359:1); 
\draw[ line width=.03cm,
    dash pattern=on .09 cm off .04 cm]  (-5,-1) -- (-5,-2);
\draw[ line width=.03cm,
    dash pattern=on .09 cm off .04 cm]  (-4,0) -- (-4,1);   
 \draw[ line width=.03cm,
    dash pattern=on .09 cm off .04 cm]  (-6,0) -- (-6,1);  
 \filldraw[color=black, fill=white, thick](-6,0.5) circle (.15);      
\draw (0,1) node[anchor=south]{$\theta$};
\draw (1,-1) node[anchor=south]{$x$};
\draw (1,-2) node[anchor=north]{$\theta$};
\draw (7,-2) node[anchor=north]{$\theta$};
\draw (2,1) node[anchor=south]{$id$};
\draw (2,.5) node[anchor=west]{$w^*$};
\draw[ line width=.03cm,
    dash pattern=on .09 cm off .04 cm]  (0,0) arc (180:359:1);   
\draw[ line width=.03cm,
    dash pattern=on .09 cm off .04 cm]  (1,-1) -- (1,-2);
\draw[ line width=.03cm,
    dash pattern=on .09 cm off .04 cm]  (0,0) -- (0,1);   
\draw[ line width=.03cm,
    dash pattern=on .09 cm off .04 cm]  (2,0) -- (2,1);     
\draw[ line width=.03cm,
    dash pattern=on .09 cm off .04 cm]  (7,-2) -- (7,1);     
 \filldraw[color=black, fill=white, thick](2,0.5) circle (.15);    
\end{tikzpicture}
  \caption{\label{fig:X4} Wire diagrams for $w^*x= \theta(w^*)x=1$, which is equivalent to 
  $(w^* \times 1_\theta)x = (1_\theta \times w^*)x$ in DHR notation.}
\end{figure}
\begin{figure}[h!]
\centering
\begin{tikzpicture}[scale=.8]
\draw (2.5,1) node[anchor=west]{$= d$};
\draw (0,-0.1) node[anchor=west]{$x$};
\draw (0,2.2) node[anchor=west]{$x^*$};
\draw (4,-1) node[anchor=north]{$\theta$};
\draw (0,-1) node[anchor=north]{$\theta$};
\draw (0,3) node[anchor=south]{$\theta$};
\draw (1.5,1) node[anchor=east]{$\theta$};
\draw (-1.5,1) node[anchor=west]{$\theta$};
\draw[ line width=.03cm,
    dash pattern=on .09 cm off .04 cm] (0,0) .. controls (1.1,1)  .. (0,2);
\draw[ line width=.03cm,
    dash pattern=on .09 cm off .04 cm] (0,0) .. controls (-1.1,1)  .. (0,2);    
\draw[ line width=.03cm,
    dash pattern=on .09 cm off .04 cm] (0,0) -- (0,-1);
\draw[ line width=.03cm,
    dash pattern=on .09 cm off .04 cm] (0,2) -- (0,3);    
 \draw[ line width=.03cm,
    dash pattern=on .09 cm off .04 cm] (4,-1) -- (4,3);       
\end{tikzpicture}
  \caption{\label{fig:X2} Wire diagram for $x^* x = d1$. 
  }
\end{figure}
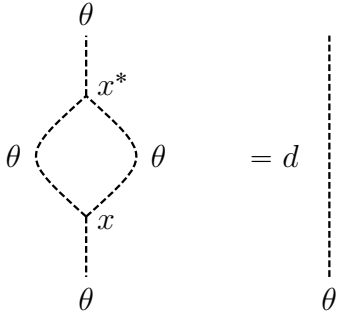

Given a Q-system, one defines an extension $\cM$  as follows. As a set, $\cM$ consists of 
all symbols of the form $nv$, where $n \in \cN$ with the product, $*$-operation, and unit defined by, respectively
\ben
\label{qsystem1}
n_1 v n_2 v = n_1 \theta(n_2) xv, \quad (nv)^* = w^* x^* \theta(n^*) v, \quad 1 = w^* v. 
\een
Associativity and consistency with the $*$-operation follow from the defining relations. 
The conditional expectation is related to the data by $E(nv)=d^{-1} nw$ and is used to induce the operator norm on $\cM$.
Conversely, given an inclusion of infinite (type III) factors $\cN \subset \cM$, the data of the Q-system and $v \in \cM$ can be found
by a canonical procedure as follows. 
If $\i: \cN \to \cM$ is the inclusion map, we can define a conjugate $\bar \i: \cM \to \cN$ where  $r \in \Hom_\cM(\bar \i\i, id)$ and 
$\bar r \in \Hom_\cM(\i \bar \i, id)$ such that \eqref{conjgacyrel} holds with $d=d_{\i} = [\cM:\cN]^{1/2}$, namely
\ben
(\bar r^* \times 1_\i)(1_\i \times r) = 1_\i, \quad ( r^* \times 1_{\bar \i})(1_{\bar \i} \times \bar r) = 1_{\bar \i}, \quad r^* r = d 1 = \bar r^* \bar r
\label{rconjugacy}
\een 
using the notation of the DHR product \eqref{DHRprod}. Here, and throughout these notes, we use the notation
 \ben
 d=[\cM:\cN]^{1/2} \quad (=d_\imath=d_\theta^{1/2})
 \een
 for the square-root of the Jones index.
In terms of $r, \bar r$, the Q-system for $\cN \subset \cM$ is now
\ben
\label{canQsys}
\theta = \bar \i \i, \quad 
x=1_{\bar \i} \times \bar r \times 1_\i = \bar \i(\bar r), \quad w=r.  
\een
The relations for this Q-system follow from the relations \eqref{rconjugacy}.
We work out ``explicitly'' the very well-known case of an inclusion associated with the action of a finite group $G$ in the appendix
as an illustration for the interested reader in app. \ref{appB}.

The defining relations of $\cM$ can also be written in a more suggestive way resembling the operator product expansion (OPE) in QFT. Let 
$\theta \cong \oplus_i \rho_i$ be the decomposition of $\theta$ into irreducible objects, where a given object may appear multiple times. This means that there are isometries $w_i \in \Hom(\rho_i,\theta)$ in $\cN$
such that
\ben
\theta(n) = \sum_i w_i \rho_i(n) w_i^*.
\een
Next, define $\phi_i := w_i^* v \in \cM$. The defining relations of 
a Q-system imply 
\ben
\phi_i \phi_j = \sum_k c_{ij}^k \, \phi_k, \quad
c_{ij}^k := (w_i^* \times w^*_j) x w_k \in \Hom(\rho_i\rho_j,\rho_k) \cap \cN.
\een

\medskip
\noindent
{\bf Braided product of Q-systems.} See \cite{bischoff2015tensor}.
Assume that we have a braiding between the sub-objects $\rho_i$ of $\theta$, we can define a braiding operator of $\theta^2$ as
\ben
\epsilon^\pm(\theta,\theta) := \sum_{i,j} (w_j \times w_i)^* \epsilon^\pm(\rho_i,\rho_j) (w_i \times w_j) \in \Hom(\theta^2, \theta^2). 
\een
A Q-system for a von Neumann algebra $\cA$ is called {\em commutative} iff $\epsilon^\pm(\theta,\theta)x = x$. Given two Q-systems $Q_1 = (\theta_1, x_1, w_1), Q_2=(\theta_2, x_2, w_2)$ and a braiding between the subobjects of $\theta_1, \theta_2$, we can define the {\em braided product Q-system}, which has the data 
\bena
\label{qsystem2}
& (\theta, x^\pm, w) = Q_1 \times^\pm Q_2 \quad
\Longleftrightarrow \\ 
&\theta = \theta_1 \theta_2, \quad x^\pm := 
(1_{\theta_1} \times \epsilon^\pm (\theta_1, \theta_2) \times 1_{\theta_2})
(x_1 \times x_2), \quad
w = w_1 \times w_2. 
\eena 
The braided product Q-system is commutative if $Q_1$ and $Q_2$ both are commutative. The braided product Q-systems define two extension $\cA \subset \cB_{12}^\pm$, which can more 
explicitly be described as follows: $\cB_{12}^\pm$ is generated by $\cA$ and two isometries $v_1, v_2$ satisfying the relations
\ben
v_2 v_1 = \i(\epsilon^\pm(\theta_1,\theta_2)) v_1 v_2, 
\een
in addition to the relations for $v_1, v_2$ analogous to \eqref{qsystem1}. It follows that the linear space $\cB_i^\pm:=\cA v_i, i=1,2$ define von Neumann algebras intermediate 
to the braided product extension in the sense that $\cA \subset \cB_1^\pm, \cB_2^\pm \subset \cB_{12}^\pm$. The braided product extension plays a role in CFT in the context of defects.

\medskip
\noindent
{\bf Conditional expectations and canonical endomorphism.} 
See \cite{longo1995nets}. The minimal conditional expectation $E$ and its dual $E'$ can be described more explicitly using Q-systems. 
In terms of the Q-systems, $\cM$ is generated by $\cN$ together with a single operator, $v$, 
and $\cN'$ is generated by $\cM'$ together with a single operator, $v'$. The operator $v'$ can be defined as follows. 
Let $|\Omega\rangle$ be a cyclic and separating vector for both $\cN, \cM$ (which exists for type III), and let $|\eta\rangle$ be a 
vector such that $\omega_\eta = \omega_\Omega \circ E$. Then $v': n |\Omega\rangle \mapsto n|\eta\rangle$ is 
seen to be an isometry in $\cN'$, and the dual construction is made for $v \in \cM$. 

The operators $w=j_\cN(v') \in \cN$, $w'=j_\cM(v) \in \cM'$ and the ``canonical'' endomorphisms
\ben
\gamma=j_\cN j_\cM : \cM \to \cN, \quad \gamma' = j_\cM j_\cN: \cN' \to \cM'
\een
can be defined, where $j_\cN(n)=J_\cN n J_\cN$ and $J_\cN$ is the modular conjugation\footnote{With respect to a
fixed natural cone ${\mathscr P}^\sharp_\cN$ defined by $|\Omega\rangle$.} of $\cN$, etc. The expectations $E,E'$ are then given by
\ben
E(m)=\frac{1}{d} w^* \gamma(m) w, \quad E'(n')= \frac{1}{d}  w^{\prime *} \gamma'(n') w^\prime.
\een
They have the property that $J_\cM v'= v' J_\cN, J_\cM v= v J_\cN$. The restricted 
canonical endomorphism $\theta$ \eqref{canQsys} (and similarly for the dual inclusion) is given by
\ben
\theta = \gamma |_{\cN} \in {\rm End}(\cN), \quad 
\theta' = \gamma' |_{\cM'} \in {\rm End}(\cM'), 
\een
for a suitable choice of $\bar \i$ (and $\bar \i'$), and for such a choice
\ben
\theta = \bar \i \i, \quad \gamma = \i \bar \i.
\een

\medskip
\noindent
{\bf $\alpha$-induction:}
Let $\cN \subset \cM$ be an irreducible inclusion of subfactors with 
finite index and associated canonical endomorphism $\theta \in \End(\cN)$.
Given a braided unitary fusion category ${}_\cN X_\cN$ and an irreducible endomorphism
$\mu \in {}_\cN X_\cN$ of $\cN$, we can define the $\alpha$-induced endomorphisms (of $\cM$)
\ben
\label{alphadef}
\alpha^\pm_\mu := \overline{\imath}^{-1} \circ {\rm Ad}(\epsilon^\pm(\mu,\theta)) \circ \lambda \circ \overline{\imath}, 
\een
which is an in general reducible endomorphism of $\cM$ (even though $\mu$ is by definition irreducible). If we describe the inclusion $\imath: \cN \to \cM$ by a Q-system
$(\theta=\bar \imath \imath, x, w)$, see sec. \ref{sec:Qsys}, then $\cM$ is spanned linearly by 
elements of the form $nv$ where $n \in \cN$ and where $v$ is the generator with the relations recalled in sec. \ref{sec:Qsys}. 
Then we can write
\ben
\alpha^\pm_\lambda(nv) = \lambda(n) \epsilon^\pm(\lambda,\theta)^* v.
\een

One 
can derive the following naturality/functorial relations:
\begin{enumerate}
    \item (Conjugate) $\alpha^\pm_{\overline \lambda} = \overline{\alpha^\pm_{\lambda}}$,
    \item (Dimension) $d_{\alpha^\pm_{\lambda}} = d_\lambda$, 
    \item (Composition) $\alpha^\pm_{\lambda} \alpha^\pm_{\mu} = \alpha^\pm_{\lambda \mu}$,
    \item (Functoriality 1) If $t \in \Hom(\lambda\mu,\nu)$, then $\imath(t) \in \Hom(\alpha^\pm_{\lambda\mu},\alpha_\nu^\pm)$,
    \item (Braiding) Even though $\cM$ is in general not braided, we have ( as endomorphisms of $\cM$) $\alpha_\mu^\pm \alpha^\pm_\lambda = {\rm Ad}(\imath[ \epsilon^\pm(\lambda,\mu)]) \alpha_\mu^\pm \alpha^\pm_\lambda$,
    \item (Functoriality 2) if $v \in \Hom(\alpha^-_\rho, \alpha^+_\nu), 
    v' \in \Hom(\alpha^-_{\rho'}, \alpha^+_{\nu'})$, then 
    \ben
    \label{funct2}
    \imath[ \epsilon^-(\rho',\rho)] (v' \times v) = 
     (v \times v') \imath[ \epsilon^+(\nu,\nu')].
    \een
\end{enumerate}
The matrix $Z_{\lambda,\mu} := \dim \Hom(\alpha^-_\lambda, \alpha^+_\mu)$
commutes with the matrix $Y_{\mu,\nu}$ and 
\ben
\label{hleftright}
Z_{\lambda,\mu} \neq 0 \quad \Longrightarrow \quad h(\lambda) - h(\mu) \in \bZ.
\een

\subsection{Relative braiding and systems of endomorphisms}
\label{relative}

See \cite{bockenhauer1999alpha,bockenhauer1998modular, bockenhauer1999modular,bockenhauer1999modular1,izumi2000structure}.
When studying a finite index inclusion $\cN \subset \cM$ of von Neumann factors, special systems of endomorphisms often arise. We let 
$\imath: \cN \to \cM$ be the embedding and 
$\overline{\imath}: \cM \to \cN$ be a conjugate endomorphism. We
consider finite sets 
\bena
{}_\cN X_\cN &\subset& \End(\cN,\cN)\, \\
{}_\cM X_\cN &\subset& \End(\cM,\cN)\, \\
{}_\cN X_\cM &\subset& \End(\cN,\cM)\, \\
{}_\cM X_\cM &\subset& \End(\cM,\cM)\,
\eena
of equivalence classes of endomorphisms with the following properties:

\begin{itemize}
\item Any two members of any of the sets ${}_\cN X_\cM, {}_\cM X_\cN, {}_\cM X_\cM, {}_\cN X_\cN$ are mutually inequivalent as endomorphisms, irreducible, 
and have finite index. [The index of $a \in \End(\cM, \cN)$ is defined as $d_a = [\cM: a(\cN)]^{1/2}$.] 

\item ${}_\cN X_\cN$ is a unitary fusion category and so in particular is closed under 
fusion and taking conjugates, and so in particular has a unit, 
the identity endomorphism of $\cN$. Additionally, it is assumed to be non-degenerately braided, so the fusion in $\cN$ is in particular commutative.
Each endomorphism appearing in the decomposition of the canonical endomorphism $\theta = \gamma |_\cN = \bar \i \i$ is required to be 
contained in ${}_\cN X_\cM$, and $\epsilon^\pm(\theta, \theta)x=x$, so the Q-system $(\theta, x, w)$ corresponding to $\cN \subset \cM$ (see sec. \ref{sec:Qsys}) 
is commutative in the terminology introduced above. 
Irreducible objects of $_\cN X_\cN$ will be denoted by lower case Greek letters such as $\mu, \nu, \lambda, \dots$.

\item ${}_\cN X_\cM$ consists of all irreducible endomorphisms $\bar b$ (without multiplicities) appearing in the decomposition of $\lambda{\overline{\imath}}$, where $\lambda
\in{} _\cN X_\cN$.

\item ${}_\cM X_\cN$ consists of all irreducible endomorphisms $a$ (without multiplicities) appearing in the decomposition of $\imath \lambda$, where $\lambda \in {}_\cN X_\cN$.

\item ${}_\cM X_\cM$ consists of all irreducible endomorphisms $B$ (without multiplicities) appearing in the decomposition of $\i \lambda {\overline{\imath}}$, where $\lambda \in {}_\cN X_\cN$. 
Note that by the other assumptions, ${}_\cM X_\cM$ is by itself a unitary fusion category. But it need not have have a braiding, for example, so the assumptions on 
${}_\cM X_\cM$ respectively ${}_\cN X_\cN$ are not symmetrical. 
\end{itemize}

Even though the fusion of general endomorphisms of $\cM$ may not be commutative (so in particular not braided), we 
can define a kind of relative braiding between endomorphisms from the sets $\, {}_\cN X_\cM, \, {}_\cM X_\cN, \, {}_\cN X_\cN$ with the 
alpha-induced endomorphisms in $\, {}_\cM X_\cM$. These 
braiding operators are denoted by
\ben
\label{epsrel}
\begin{split}
\epsilon^\pm(\lambda,\bar b) &\in \Hom(\bar b \alpha_\lambda^\pm, \lambda \bar b)\\
\epsilon^\pm(a,\lambda) &\in \Hom(\alpha_\lambda^\pm a, a\lambda), 
\end{split}
\een
see \cite{bockenhauer1999alpha} sec. 3.3 for the definitions and proofs. Recall that we have defined braiding operators for the alpha induced endomorphisms $\alpha^\pm_\lambda$ above in 
\eqref{alphadef}. Together with \eqref{epsrel}, these satisfy the expected braiding-fusion (BF) relations, used throughout later parts of this paper, often implicitly when manipulating diagrams.
The wire diagrams for the relative braiding intertwiners \eqref{epsrel} are depicted in fig. \ref{fig:220426_Fig-7_TE}. Our conventions for the wires are, see fig. \ref{fig:lines},
\begin{figure}
\centering
\begin{tikzpicture}[scale=.6]
\draw (0,1) node[anchor=west]{$\beta$};
\draw (4,1) node[anchor=west]{$b$};
\draw (8,1) node[anchor=west]{$B$};
\draw[ line width=.03cm,
    dash pattern=on .09 cm off .04 cm]  (0,0) -- (0,2);
\draw[ thin ]  (4,0) -- (4,2);
\draw[ very thick]  (8,0) -- (8,2) ;
\end{tikzpicture}
  \caption{\label{fig:lines} Types of lines.}
\end{figure}
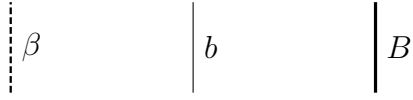
\begin{itemize}
\item Thick solid lines: Endomorphisms $A,B,\dots$ from ${}_\cM X_\cM$.
\item Thin solid lines: Endomorphisms $a,b, \dots$ or $\bar a, \bar b, \dots$ from ${}_\cM X_\cN$ or ${}_\cN X_\cM$.
\item Dashed lines: Endomorphisms $\mu, \lambda, \dots$ from ${}_\cN X_\cN$.
\end{itemize}
\begin{figure}[h!]
\begin{center}
  \includegraphics[width=1.1\textwidth,]{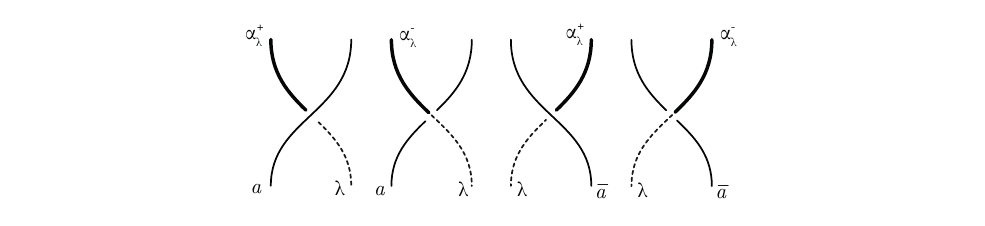}
\end{center}
  \caption{\label{fig:220426_Fig-7_TE} Wire diagrams for $\epsilon^+(a,\lambda), \epsilon^-(a,\lambda), \epsilon^+(\lambda,\bar a), \epsilon^-(\lambda,\bar a)$ from left to right.}
\end{figure}
Using these conventions, the braiding fusion (BF) relatins are depicted in fig. \ref{fig:220426_Fig-8_TE}.
\begin{figure}[h!]
\begin{center}
  \includegraphics[width=1.1\textwidth,]{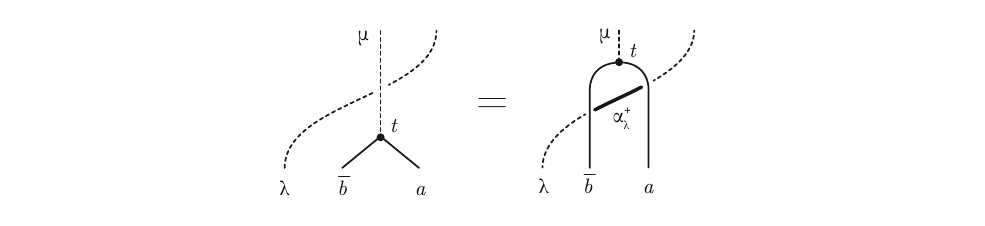}
\end{center}
  \caption{\label{fig:220426_Fig-8_TE} Topological invariance of wire diagram representing a BF relation with relative braiding. There is a similar BF relation 
  with crossing from right to left.}
\end{figure}

\subsection{Jones basic construction and Jones projections}

See e.g. \cite{kosaki1986extension, hiai1988minimizing, longo1990index, longo1994duality, longo1995nets} for the extension of Jones' theory 
\cite{jones1983index} to type III algebras. 
Consider a finite index inclusion $\cN \subset \cM$ of type III von Neumann factors in standard form with conditional expectation $E:\cM \to \cN$.
Let $\omega$ be a faithful normal state with cyclic and separating vector $|\Omega\rangle$ for both $\cN$ and $\cM$
implementing $\omega$. Then $\omega \circ E$ is invariant under $E$, with corresponding vector representative $|\eta\rangle$. 
Note that $|\eta\rangle$ is not cyclic. $e_1 := e_{\cN}:= [\cN \eta] \in \cN'$ is called the Jones projection associated with the inclusion and 
one defines $\cM_1 := \cM \vee \{ e_{\cN} \}$, leading to a new inclusion $\cM \subset \cM_1$. This process is iterated setting $e_2 := e_{\cM} = 
[\cM \eta] \in \cM'$ and $\cM_2 := \cM_1 \vee \{ e_{\cM} \}$, then $e_3 := e_{\cM_1} = 
[\cM_1 \eta] \in \cM_1'$ and $\cM_3 := \cM_2 \vee \{ e_{\cM_1} \}$ etc. This gives the Jones tower
\ben
\cN \subset \cM \subset_{e_1} \cM_1 \subset_{e_2} \cM_2 \subset_{e_3} \cM_3 \subset \dots .
\een
One also defines the corresponding Jones tunnel, e.g. by setting
\ben
\cN_1 = j_\cN j_\cM(\cM), \quad \cN_2 = j_{\cN_1} j_\cN(\cN), \quad \cN_3 = j_{\cN_2} j_{\cN_1}(\cN_1), \dots 
\een
which gives
\ben
\dots \cN_3 \subset_{e_{-3}} \cN_2 \subset_{e_{-2}} \cN_1 \subset_{e_{-1}} \cN \subset_{e_0} \cM.
\een
Alternatively, one can construct the Jones tunnel by forming the commutant of the Jones tower for the dual inclusion $\cM' \subset \cN'$.
$e_0 = e_{\cN_1}$ is the projection extending $\cN$ to $\cM$, $e_{-1} = e_{\cN_2}$ is that extending $\cN_1$ to $\cN$, 
$e_{-2} = e_{\cN_3}$ is that extending $\cN_2$ to $\cN_1$ etc.  The maps $j_\cN j_\cM = j_{\cN_1} j_\cN = j_{\cN_2} j_{\cN_1} = \dots$ correspond to a 2-shift of the tunnel to the left, 
so establish that the inclusions $\cN \subset \cM, \cN_2 \subset \cN_1, \dots$ are all isomorphic. The same applies to the inclusions 
$\cN_1 \subset \cN_2, \cN_3 \subset \cN_4, \dots$. One thereby sees that the endomorphism $\theta = j_\cN j_\cM$ is a leftwards 
2-shift\footnote{A further linear map corresponding to a 1-shift is given by the quantum Fourier transform which is relevant also in the 
context of the double triangle algebra but not discussed here, see \cite{bisch1997bimodules}.} of the 
even part of the tunnel, giving $\theta^k(\cN) = \cN_{2k}$. Likewise $\gamma = j_{\cM} j_{\cM_1}$ is a leftwards 2-shift of the odd part of the tunnel, 
giving $\gamma^k(\cM) = \cN_{2k-1}$.  

Remembering that the conjugate endomorphism ${\overline{\imath}}$ of the embedding $\imath$ may be chosen such that 
$\gamma =\imath  {\overline{\imath}}$, $\theta = \overline{\imath} \imath$, we therefore get 
\ben
\begin{split}
\cN_1 =& \imath {\overline{\imath}}  (\cM), \\
\cN_2 =& \imath {\overline{\imath}} \imath (\cN), \\
\cN_3 =& \imath {\overline{\imath}} \imath{\overline{\imath}} (\cM),\\
\cN_4 =& \imath {\overline{\imath}} \imath{\overline{\imath}} \i (\cN),\\
&\dots,
\end{split}
\een
and so on. The inclusion $\cN \subset \cM$ is said to be of finite depth if in the subsequent decompositions of $\imath {\overline{\imath}} \i {\overline{\imath}} \i {\overline{\imath}} \cdots$
into (equivalence classes of) irreducible endomorphisms $\mu \in \End(\cM)$, no new irreducible endomorphisms 
appear after some ``depth'' $k$, and this is implied by our 
standing assumptions formulated in sec. \ref{relative}. (This condition is 
independent of the condition $d<\infty$ of finite index.) 

It is possible to obtain more ``concrete'' expressions for the Jones projections of the tunnel in terms $r \in \Hom_\cM(\bar \i\i, id)$ and 
$\bar r \in \Hom_\cM(\i \bar \i, id)$ appearing in the the conjugacy relations \eqref{rconjugacy} associated with $\i, \bar \i$ as follows. 
First, one can derive the dual identities 
\ben
e_{\cN_1} = d^{-1} \, vv^* = e_{\cM'}, \quad 
e_{\cM_1'} = d^{-1} \, v' v^{\prime *} = e_{\cN}.
\een
Now, $e_{\cN_1} = e_0$ whereas $j_\cM j_{\cM_1}(e_{\cM_1'}) = j_\cN j_{\cM}(e_{\cM_1'}) = j_\cN(e_{\cM_1'}) = e_{-1}$. 
On the other hand, we can show that $v = \bar r$ (by using \eqref{rconjugacy}), and we have already mentioned that 
$j_\cN(v') = w$, where $v,w$ refer to the Q-system for the extension $\cN \subset \cM$, and $v',w'$ to the dual 
extension $\cM' \subset \cN'$. Finally $w = \i(r)$, and together 
this gives the first two  of the following formulas. The remaining formulas follow by the above observation that 
$\theta = \bar \i \i$ ($=$ DHR left mutliplication by $1_{\bar \i} \times 1_\i \times \dots$) respectively $\gamma=\i \bar \i$ (($=$ DHR left mutliplication by $1_{\i} \times 1_{\bar \i} \times \dots$))
represent a leftward 2-shift of the Jones tunnel:
\ben
\begin{split}
e_0 &= d^{-1} \, \bar r \bar r^* \times 1_{\i} \times 1_{\bar \i} \times \cdots \in \cM \cap \cN_1' \\ 
e_{-1} &= d^{-1} \, 1_{\i} \times rr^* \times 1_{\bar \i} \times 1_{\i} \times \cdots  \in \cN \cap \cN_2' \\
e_{-2} &= d^{-1} \, 1_{\i} \times 1_{\bar \i } \times \bar r \bar r^* \times 1_{\i} \times 1_{\bar \i} \times \cdots \in \cN_1 \cap \cN_3' \\
e_{-3} &=  d^{-1} \, 1_{\i} \times 1_{\bar \i } \times 1_\i \times rr^* \times 1_{\bar \i} \times 1_{\i} \times \cdots \in \cN_2 \cap \cN_4',\\
&\dots ,
\end{split}
\een
and so on, where $\times$ is the DHR product, see fig. \ref{fig:ej}. Note the alternating pattern of $\bar \i, \i$ and $\bar r, r$. 
The identity  intertwiners $1_\i$ and $1_{\bar \i}$ are inserted to the right in the DHR products to match the vertical lines to the right of the cup-cap pairs in 
the wire diagram fig. \ref{fig:ej}, but they do not affect the actual value of $e_{-j}$.
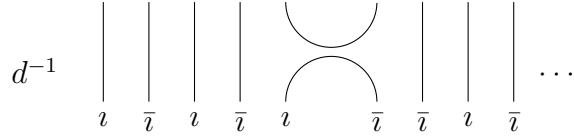
\begin{figure}
\centering
\begin{tikzpicture}[scale=.6]
\draw (-2,0) node[anchor=north]{$ \i$};
\draw (0,0) node[anchor=north]{$\bar \i$};
\draw[thin] (0,0) arc (0:180:1);
\draw[thin] (-2,2.2) arc (180:359:1);
\draw[thin] (-3,0) -- (-3,2.2);
\draw[thin] (-4,0) -- (-4,2.2);
\draw[thin] (-5,0) -- (-5,2.2);
\draw[thin] (-6,0) -- (-6,2.2);
\draw (-3,0) node[anchor=north]{$\bar \i$};
\draw (-4,0) node[anchor=north]{$\i$};
\draw (-5,0) node[anchor=north]{$\bar \i$};
\draw (-6,0) node[anchor=north]{$\i$};
\draw[thin] (1,0) -- (1,2.2);
\draw[thin] (2,0) -- (2,2.2);
\draw[thin] (3,0) -- (3,2.2);
\draw (4,1) node[anchor=north]{$\cdots$};
\draw (-7.5,1.2) node[anchor=north]{$d^{-1}$};
\draw (1,0) node[anchor=north]{$\bar \i$};
\draw (2,0) node[anchor=north]{$ \i$};
\draw (3,0) node[anchor=north]{$\bar \i$};
\end{tikzpicture}
  \caption{\label{fig:ej} Wire diagram for $e_{-4}$ using the wire diagram for $\bar r, \bar r^*$ as in fig. \ref{fig:r} with $\lambda \to \i$.}
\end{figure}
Either from \eqref{rconjugacy}, or using the diagrams 
fig. \ref{fig:220426_Fig-4_TE} (applied to $\i, \bar \i$ instead of $\lambda, \bar \lambda$) and the value $d=d_\i = d_{\bar \i}$ for the circle, fig. \ref{fig:circle}, 
one then gets the Temperly-Lieb-Jones relations \eqref{TLJ}.  

\subsection{Higher relative commutants and paths}

See e.g. \cite{bisch1997bimodules,evans1998quantum} who mainly consider type II case and use bimodule language.
The intersections $\cM \cap \cN_k'$ and $\cN \cap \cN'_{k+1}$ are called relative commutants. 
For $k=0$ they are trivial for an irreducible inclusion $\cN \subset \cM$ of factors, and for $k>0$ they are finite-dimensional matrix algebras if $d<\infty$. 
The latter can be seen by giving the relative commutants an ``explicit'' description in terms of intertwining operators. 

We first consider the first non-trivial relative commutant $\cN \cap \cN_2'$. 
Consider the decomposition of the endomorphism $\theta = \bar \i \i \in \End(\cN)$ into irreducibles $\mu$ as in \eqref{decomp}
using an ONB of intertwiners $t_{\mu,j} \in \Hom_\cM(\theta, \mu)$. 
Let $n_{\mu j}^i := t_{\mu,i} (t_{\mu,j})^*$. These are matrix units
\ben
n_{\mu j}^i n_{\nu k}^l = \delta_j^l \delta_{\mu,\nu} n_{\mu k}^i.
\een
By construction $n_{\nu j}^i \in \cN \cap \cN_2'$ because they are elements of $\cN$ commuting with $\cN_2 = \theta(\cN)$ since they are in 
$\Hom_\cN(\theta,\theta)$. In fact these matrix units generate $\cN \cap \cN_2'$, so 
\ben
\cN \cap \cN_2' \cong \bigoplus_{\mu \subset \theta} M_{N_\mu}(\CC).
\een
We may apply the analogous reasoning to the relative commutant $\cM \cap \cN_1'$ using $\cN_1 = \gamma(\cM)$ 
and a decomposition of $\gamma$ into irreducibles $M$ with intertwiners $t_{M,j} \in \Hom(\gamma, M)$. This gives
\ben
\cM \cap \cN_1' \cong \bigoplus_{M \subset \gamma} M_{N_{M}}(\CC).
\een
Bases for the higher relative commutants $\cM \cap \cN_{2k+1}'$ resp. $\cN \cap \cN_{2k}'$, etc. are obtained by considering 
the ONBs of intertwiners appearing in subsequent decompositions of $\i {\overline{\imath}} \i {\overline{\imath}} \cdots \i$ resp. 
${\overline{\imath}} \i {\overline{\imath}} \i \cdots {\overline{\imath}}$, etc. We describe the relative commutants $\cM \cap \cN_{2k-1}'$, the other cases are similar. 

By definition, an element of $\cM \cap \cN_{2k-1}'$ is an element of $\cM$ that is an intertwiner in the space $\Hom((\i \bar \i)^k, (\i \bar \i)^k)$.
We produce such intertwiners as follows. First, we decompose $\i \bar \i$ into 
irreducibles $M_1$ an ONB of intertwiners $\{t_1\} \subset \Hom(\i {\overline{\imath}},M_1)$ (more generally we could start with 
$a_0 \in \,  _\cM X_\cN$ and decompose $a_0 \bar \i$).
Next we multiply by $\i$ from the right, and similarly consider an ONB of intertwiners $\{t_2\} \subset \Hom(M_1 \i, a_2)$, 
after which we multiply by ${\overline{\imath}}$ from the right, and consider an orthonormal set of intertwiners $\{t_3\} \subset \Hom(a_2 {\overline{\imath}},M_3)$, and so on
until $\{t_{2k-1}\} \subset \Hom(a_{2k-2} \bar \i,M_{2k-1})$. We denote the space of such sequences of 
isometric intertwiners $(t_1, t_2, \dots, t_{2k-1})$ by ${\rm Path}^{2k-1}_{a_0,M_{2k-1}}$. [The subscript $(a_0,M_{2k-1})$ means 
that we start with the object $a_0$ and end with $M_{2k-1}$.]
Then
\ben
\label{tP}
t_P:=(t_1 \times 1_{\i} \times 1_{\overline{\imath}} \times \cdots 1_{\overline{\imath}})
\cdots
(t_{2k-3} \times 1_{\i} \times 1_{\bar \i})
(t_{2k-2} \times 1_{\bar \i})
t_{2k-1} \in \Hom(a_0 (\bar \i \i)^{k-1} \bar \i, M_{2k-1} ),
\een
where $P$ can be thought of as a ``path label'' denoting a compatible sequence of 
orthonormal intertwiners with suitable source and target endomorphisms. 

By construction we have (generalizing \eqref{decomp})
\ben
t_P^* t_{P'}^{} = \delta_{P,P'}1, \quad \sum_P t_P^{} t_P^* = 1. 
\een
The desired basis of $\cM \cap \cN_{2k-11}'$ is then 
$\{t_{P'} t_P^* : P,P' \in 
{\rm Path}^{2k-1}_{\imath,M_{2k-1}} \}$, i.e. the matrix units are labelled by 
pairs of paths $P,P' \in {\rm Path}^{2k-1}_{\imath,M_{2k-1}}$ 
with the same final object, $M_{2k-1} = M_{2k-1}'$. 

By fairly obvious variations of the above construction, we could have ended instead with an  $a_{2k} \in {}_\cM X_\cN$ after $2k$ 
decompositions, or we could have started with $M_0 \in {}_\cM X_\cM$, or both. The corresponding path spaces will be denoted 
accordingly, and this would be related to the other higher relative commutants.  

We obviously have a freedom in which order we perform the subsequent decompositions of $\i {\overline{\imath}} \i {\overline{\imath}} \cdots \i$, and a different 
order of the decomposition gives a different basis, e.g. of $\cM \cap \cN_{2k-1}'$. As in the classical case of group representation, 
we can pass back and fourth between these bases via $6j$-symbols, which are described below. 

\subsection{$6j$-symbols}
See e.g. \cite{evans1998quantum} (ch. 10, 11) or \cite{kawahigashi2020remark,kawahigashi2021projector} for the type II case. 
These references use bimodule language, which can be translated to sectors
as outlined in \cite{evans1998quantum}, sec. 10.8., but with not many details given.
We now consider such $6j$-symbols (also called quantum $6j$-symbols, bi-unitary connections or F-symbols depending on the literature) 
and discuss some of their properties needed in the sequel. We consider sets 
of endomorphism ${}_\cN X_\cN, {}_\cN X_\cM, {}_\cM X_\cN, {}_\cM X_\cM$ as in sec. \ref{relative}.
Let $B, M_1, M_2 \in \, {}_\cM X_\cM$, $a_1, a_2 \in {}_\cM X_\cN$. 
Then we consider the following two ways of decomposing $B a_1 {\overline{\imath}}$ as in $(B a_1) {\overline{\imath}}$ or $B (a_1 {\overline{\imath}})$. 
First, we pick an ONB of intertwiners $t_1 \in \Hom_\cM(a_1 {\overline{\imath}}, M_1)$ and subsequently an ONB of intertwiners
$t_2 \in \Hom_\cM(BM_1 , M_2)$. We get an intertwiner
\ben
(1_B \times t_1)
t_2 \in \Hom_\cM(B a_1 {\overline{\imath}}, M_2).
\een
Second, we pick an ONB of intertwiners $t_3 \in \Hom_\cM(B a_1,a_2)$ and subsequently an ONB of intertwiners
 $t_4 \in \Hom_\cM(a_2 {\overline{\imath}}, M_2)$. We get an intertwiner
\ben
(t_3 \times 1_{\overline{\imath}})
t_4 \in \Hom_\cM(B a_1 {\overline{\imath}}, M_2).
\een
The intertwiner
\ben
U_B
\begin{pmatrix}
& t_1 & \\
t_3 & & t_2 \\
& t_4 &
\end{pmatrix}
:=
[(t_3 \times 1_{\overline{\imath}})t_4]^*
(1_B \times t_1)t_2 \in \Hom_\cM(M_3, M_3) = \CC 1
\een
is a multiple of the identity and identified with a scalar $\CC$. It is called a $6j$-symbol, for a wire diagram see fig. \ref{fig:1}

\begin{figure}
\begin{center}
\hspace*{-1cm}
  \includegraphics[width=1.2\textwidth,]{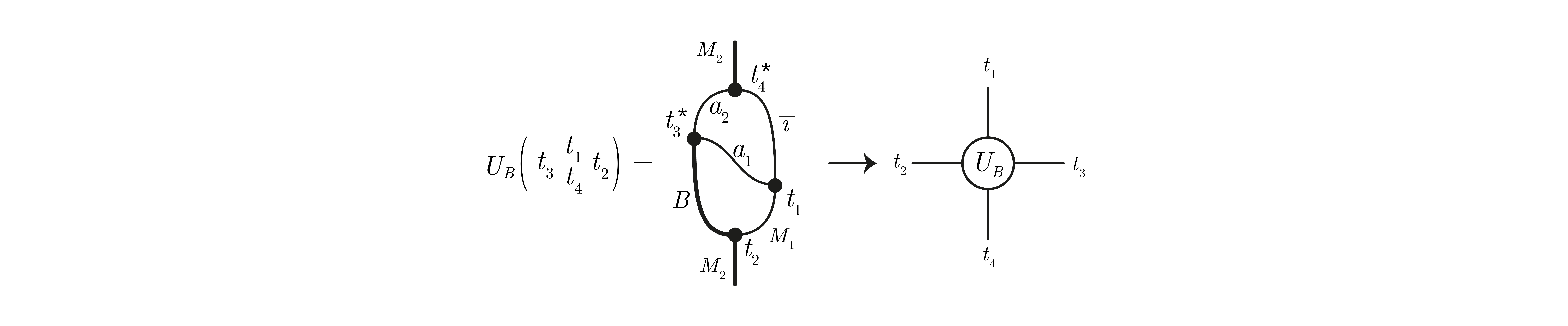}
  \end{center}
  \caption{\label{fig:1} The $6j$-symbol $U_B$.}
\end{figure}

Similarly, let $t_1 \in \Hom(M_1 \imath, a_1), t_2 \in \Hom(BM_1, M_2), 
t_3 \in \Hom(Ba_1, a_2), t_4 \in \Hom(M_2 \imath, a_2)$. Then we set
\ben
\overline U_B
\begin{pmatrix}
& t_1 & \\
t_2 & & t_3 \\
& t_4 &
\end{pmatrix}
:=
[(t_2 \times 1_{\i})t_4]^*
(1_B \times t_1)t_3 \in \Hom_\cM(a_2, a_2) = \CC 1. 
\een
\begin{lemma}\label{lem:6j}
These $6j$-symbols have a number of properties:
\begin{enumerate}
\item (Unit): Writing $0=id$ for the identity endomorphism, we have
\ben
U_0
\begin{pmatrix}
& t_1 & \\
t_3 & & t_2 \\
& t_4 &
\end{pmatrix}
= \delta_{t_2,1} \delta_{t_3,1} \delta_{t_1,t_4}.
\een
\item (Unitarity): We have
\ben
\sum_{t_1, t_2}
U_B
\begin{pmatrix}
& t_1 & \\
t_3 & & t_2 \\
& t_4 &
\end{pmatrix}
 \overline {U_B
\begin{pmatrix}
& t_1 & \\
t_3' & & t_2 \\
& t_4' &
\end{pmatrix}}
= \delta_{t_4,t_4'} \delta_{t_3,t_3'}
\een
as well as 
\ben
\sum_{t_3, t_4}
U_B
\begin{pmatrix}
& t_1 & \\
t_3 & & t_2 \\
& t_4 &
\end{pmatrix}
 \overline {U_B
\begin{pmatrix}
& t_1' & \\
t_3 & & t_2' \\
& t_4 &
\end{pmatrix}}
= \delta_{t_1,t_1'} \delta_{t_2,t_2'}
\een
where the sums are over an ONB of intertwiners with the appropriate source and target endomorphisms. 
The contragredient $6j$-symbol is also unitary.
\item (Conjugate): We have
\ben\label{conjugate}
U_{\bar B}
\begin{pmatrix}
& t_4 & \\
\tilde t_3 & & \tilde t_2 \\
& t_1 &
\end{pmatrix}
=
\left[ \frac{d(a_1) d(M_2)}{d(a_2) d(M_1)} \right]^{1/2} \overline {U_B
\begin{pmatrix}
& t_1 & \\
t_3 & & t_2 \\
& t_4 &
\end{pmatrix}},
\een
and similarly
\ben
\overline U_{\bar B}
\begin{pmatrix}
& t_4 & \\
\tilde t_3 & & \tilde t_2 \\
& t_1 &
\end{pmatrix}
=
\left[ \frac{d(M_1) d(a_2)}{d(M_2) d(a_1)} \right]^{1/2} \overline {
\overline U_B
\begin{pmatrix}
& t_1 & \\
t_3 & & t_2 \\
& t_4 &
\end{pmatrix}}.
\een
\end{enumerate}
\end{lemma}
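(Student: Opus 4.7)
The plan is to handle the three items in sequence, each reducing to a short graphical or algebraic manipulation.

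For part (1), I would simply substitute $B=id$. Then $t_2\in\Hom(M_1,M_2)$ and $t_3\in\Hom(a_1,a_2)$, and by irreducibility of the objects involved these spaces are $\delta_{M_1,M_2}\,\CC 1$ and $\delta_{a_1,a_2}\,\CC 1$ respectively, so the only ONB vectors are $t_2=1_{M_1}$ and $t_3=1_{a_1}$. Because $1_{id}\times t_1=t_1$ and $1_{a_1}\times 1_{\bar\imath}=1_{a_1\bar\imath}$, the $6j$-symbol collapses to $t_4^*t_1$, which equals $\delta_{t_1,t_4}$ by the inner-product relation \eqref{scalts} on the ONB chosen inside $\Hom(a_1\bar\imath,M_1)$.

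For part (2), the strategy is to recognise both $\{(1_B\times t_1)t_2\}_{M_1,t_1,t_2}$ and $\{(t_3\times 1_{\bar\imath})t_4\}_{a_2,t_3,t_4}$ as orthonormal bases of the single intertwiner space $\Hom_\cM(Ba_1\bar\imath,M_2)$, obtained by two different orders of decomposition of $Ba_1\bar\imath$. Orthonormality on either side is a one-line computation: $[(1_B\times t_1')t_2']^*(1_B\times t_1)t_2=t_2^{\prime*}(1_B\times t_1^{\prime*}t_1)t_2$, which is $\delta_{t_1,t_1'}\delta_{t_2,t_2'}$ by the homomorphism property together with \eqref{scalts}. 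Completeness follows by matching dimensions via the associativity identity $\sum_{M_1}N_{a_1,\bar\imath}^{M_1}N_{B,M_1}^{M_2}=\sum_{a_2}N_{B,a_1}^{a_2}N_{a_2,\bar\imath}^{M_2}$ for the fusion coefficients. The $6j$-symbol is then precisely the transition matrix between these two ONBs, and the two stated orthogonality identities are the two square relations $UU^*=U^*U=1$ that any unitary change of basis must satisfy. The argument for $\overline U_B$ is verbatim, decomposing $BM_1\imath$ two ways.

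For part (3), I would work entirely in the graphical calculus. Taking the complex conjugate of a $6j$-symbol corresponds to reflecting the wire diagram across the horizontal axis (the $*$-rule of the calculus). Replacing $B,a_i,M_i$ by their conjugates $\bar B,\bar a_i,\bar M_i$ together with reading the reflected diagram bottom-to-top yields the same planar configuration, provided each intertwiner box is replaced by its Frobenius dual, which by \eqref{frobenius} is exactly what rotating the box by $180^{\circ}$ produces up to normalization. Collecting the $[d_\nu/d_\mu]^{-1/2}$ prefactors from the two Frobenius duals $\tilde t_2,\tilde t_3$ (the other two legs $t_1,t_4$ require no dualization, since they already appear with swapped source/target after the reflection) produces the quantum-dimension ratio $[d(a_1)d(M_2)/(d(a_2)d(M_1))]^{1/2}$ on the right-hand side. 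The formula for $\overline U_{\bar B}$ proceeds identically, with $\imath$ in place of $\bar\imath$ and the roles of the $(a_i,M_j)$ Hom-spaces swapped accordingly.

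The main obstacle I anticipate is in part (3): keeping precise track of which pair of intertwiners must be dualised and ensuring that the quantum-dimension prefactors combine exactly to the stated ratio rather than to an off-by-one power. This is purely bookkeeping, but the asymmetry between the $a_i$- and $M_j$-labels makes it easy to misplace a factor of $d$.
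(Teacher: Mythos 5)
Your treatment of items (1) and (2) is correct and is in fact more explicit than the paper's own proof, which disposes of both in one line each (``ONB properties of the intertwiners'' plus irreducibility). In particular your framing of (2) --- that $\{(1_B\times t_1)t_2\}$ and $\{(t_3\times 1_{\bar\imath})t_4\}$ are two orthonormal bases of the single space $\Hom_\cM(Ba_1\bar\imath,M_2)$, with orthonormality from \eqref{scalts} and completeness secured by the associativity identity for the fusion multiplicities, so that $U_B$ is the corresponding change-of-basis matrix --- is exactly the intended argument and is airtight.

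For item (3) you have identified the same mechanism the paper uses (complex conjugation of the diagram combined with a $180^{\circ}$ rotation of the two vertices $t_2,t_3$ implemented by Frobenius duality, with $t_1,t_4$ merely exchanging top and bottom positions), but you stop precisely at the step that constitutes the entire content of the item: verifying that the normalization comes out to $[d(a_1)d(M_2)/(d(a_2)d(M_1))]^{1/2}$. The paper's proof is nothing but this verification, carried out by writing $U_{\bar B}$ with the duals inserted as in \eqref{duality1}, resolving the identity twice over ONBs of $\Hom(BM_1,M_3)$ and their conjugates, and collapsing the resulting sums. Note also that your stated source of the prefactor --- ``collecting the $[d_\nu/d_\mu]^{-1/2}$ prefactors from the two Frobenius duals'' --- is not quite right: by \eqref{frobenius} the duals are normalized to be isometries, so those prefactors are already absorbed into $\tilde t_2,\tilde t_3$, and the dimension ratio in \eqref{conjugate} instead emerges from the interplay between these normalizations and the loop closures $r_{M}^{*}(\,\cdot\,)r_{M}=d_{M}(\,\cdot\,)$ that appear when the rotated diagram is straightened (visible in the last line of the paper's computation, where a bare factor $d(M_2)$ is produced by $r_{M_2}^*(1_{\bar M_2}\times t_2^* x)r_{M_2}$). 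So the strategy is the right one, but as written the proposal would not reliably yield the correct power of each dimension; the explicit algebraic computation, or an equally careful graphical one tracking every cup and cap, is still required.
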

\begin{proof}
(Unit) This follows from the ONB properties of the intertwiners and the irreducibility of $M_i, a_i, B, {\overline{\imath}}, \i$. 

(Unitarity) This follows from the ONB properties of the intertwiners, and their Frobenius duals for the 
contragredient $6j$-symbols. 

(Conjugate) The reader is invited to carry out the following steps in 
a graphical manner. 
We begin with the definition of $U_{\bar B}$ and the Frobenius 
dual intertwiners $\tilde t_2, \tilde t_3$, which allows us to write
\ben\label{duality1}
\begin{split}
&U_{\bar B}
\begin{pmatrix}
& t_4 & \\
\tilde t_3 & & \tilde t_2 \\
& t_1 &
\end{pmatrix}\\
=& \left[ \frac{d(a_1)}{d(a_2) d(M_1) d(M_2)} \right]^{1/2}
r^*_{M_1}(1_{\bar M_1} \times r_B^* \times 1_{M_1})x
(1_{\bar M_1 \bar B} \times t_2^*)(1_{\bar M_1} \times r_B \times 1_{M_1})r_{M_1}. 
\end{split}
\een
Here we defined $x \in \Hom(BM_1, M_2)$ as 
\ben
x:= 
(1_B \times t_1^*)
(t_3 \times 1_{\bar \imath})
t_4.
\een
Next we insert a summation over an ONB $s$ of 
$\Hom(BM_1,M_3)$ so that the conjugate intertwiner $\bar s$
runs over an ONB of $\Hom(\bar M_1 \bar B, \bar M_3)$. This gives us 
\ben
\left[ \frac{d(a_1)}{d(a_2) d(M_1) d(M_2)} \right]^{1/2}
\sum_{s,M_3} r^*_{M_1}(1_{\bar M_1} \times r_B^* \times 1_{M_1})x
(\bar s \times 1_{M_2})
(\bar s^* \times t_2^*)(1_{\bar M_1} \times r_B \times 1_{M_1})r_{M_1}. 
\een
The last three factors in parenthesis yield a $\delta_{s,t_2}$ using 
the definition and isometric property of the conjugate intertwiner, so 
the summation collapses to $[\dots]^{1/2} 
r^*_{M_1}(1_{\bar M_1} \times r_B^* \times 1_{M_1})(\bar t_2 \times x)r_{M_2}$. Next we insert again a summation over an ONB $s$ of 
$\Hom(BM_1,M_3)$, turning this into
\ben
\left[ \frac{d(a_1)}{d(a_2) d(M_1) d(M_2)} \right]^{1/2}
\sum_{s,M_3} r^*_{M_1}(1_{\bar M_1} \times r_B^* \times 1_{M_1})
(\bar t_2 \times s)(1_{\bar M_2} \times s^* x) r_{M_2}.
\een
We get a $\delta_{s,t_2}$ for the same reason as before, 
so the summation reduces to
\ben
\left[ \frac{d(a_1)}{d(a_2) d(M_1) d(M_2)} \right]^{1/2}
r_{M_2}^*(1_{\bar M_2} \times t_2^* x)r_{M_2}
=
\left[ \frac{d(a_1) d(M_2)}{d(a_2) d(M_1)} \right]^{1/2}
\overline{
U_{B}
\begin{pmatrix}
& t_1 & \\
t_3 & & t_2 \\
& t_4 &
\end{pmatrix}
}. 
\een
The relation for $\overline U_B$ is demonstrated in the same manner. Note that the intertwiners are not from the same spaces here as in the case of 
$U_B$ and consequently we get a different prefactor.
\end{proof}

\subsection{Double triangle algebra}

See \cite{bockenhauer1999alpha}, which is partly based on ideas by Ocneanu \cite{ocneanu1991quantum, ocneanu1988quantized}.
 Let $\cN \subset \cM$ 
be an inclusion of infinite (type III) factors with finite index and finite depth. We consider finite sets ${}_\cN X_\cM, {}_\cM X_\cN, {}_\cM X_\cM, {}_\cN X_\cN$ 
of endomorphisms with the  properties described in 
sec. \ref{relative}, where $\i$ is the embedding endomorphism from $\cN \to \cM$ and $\bar \i$ a conjugate endomorphism from $\cM \to \cN$.
Note that $a \in {}_\cM X_\cN$ implies that $\bar a \in {}_\cN X_\cM$.
\begin{definition}
As a finite dimensional vector space, the double triangle algebra is defined by 
\ben
\lozenge = \bigoplus_{a,b,c,d \in {}_\cM X_\cN} \Hom_\cM(c \bar d, a \bar b).
\een
\end{definition}
We note that a given intertwiner might appear in multiple spaces and is considered as different in such a case. It follows from the definition that 
a basis of $\lozenge$ is given by the elements
\ben
e_{B; t,b,a}^{s,d,c} :=  (d_a d_b d_c d_d)^{1/4} st^*, \
\quad 
s \in \Hom_\cM(c \bar d, B), t \in \Hom_\cM(a \bar b, B)
\een
where $s,t$ run through an ONB of intertwiners. Our conventions for graphically representing such generators are described in fig. \ref{fig:dta}.

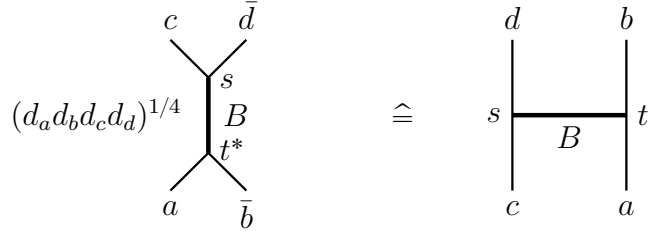
\begin{figure}[h!]
\centering
\begin{tikzpicture}[scale=.5]
\draw (-5.5,1) node[anchor=west]{$(d_a d_b d_c d_d)^{1/4}$};
\draw (4.5,1) node[anchor=west]{$\widehat{=}$};
\draw (0,.1) node[anchor=west]{$t^*$};
\draw (0,1.9) node[anchor=west]{$s$};
\draw (-1,-1) node[anchor=north]{$a$};
\draw (-1,3) node[anchor=south]{$c$};
\draw (1,-1) node[anchor=north]{$\bar b$};
\draw (1,3) node[anchor=south]{$\bar d$};
\draw (.1,1) node[anchor=west]{$B$};
\draw (8,-1) node[anchor=north]{$c$};
\draw (8,3) node[anchor=south]{$d$};
\draw (11,-1) node[anchor=north]{$a$};
\draw (11,3) node[anchor=south]{$b$};
\draw (9.5,1) node[anchor=north]{$B$};

\draw (8,1) node[anchor=east]{$s$};
\draw (11,1) node[anchor=west]{$t$};
\draw[ line width=.03cm] (0,0) -- (-1,-1);
\draw[ line width=.03cm] (0,0) -- (1,-1);    
\draw[ line width=.06cm] (0,0) -- (0,2);
\draw[ line width=.03cm] (0,2) -- (-1,3);    
\draw[ line width=.03cm] (0,2) -- (1,3);    
\draw[ line width=.03cm] (8,-1) -- (8,3);
\draw[ line width=.03cm] (11,-1) -- (11,3);
\draw[ line width=.06cm] (8,1) -- (11,1);
\end{tikzpicture}
  \caption{\label{fig:dta}
Left side: wire diagram for a basis element in $\lozenge$, read from bottom to top. Right side: compared to the left diagram, it is rotated by 90 degrees in agreement with 
the conventions of \cite{bockenhauer1999alpha} and stripped off the prefactor. Their conventions 
are adapted in the following in connection with the double triangle algebras 
and match the conventions for MPOs and spin chains, which are typically drawn horizontally.
  }
\end{figure}
The product structure $\star$ of $\lozenge$ is defined by:
\ben
e_{B; t,b,a}^{s,b',a'} \star e_{B'; t',d,c}^{s',d',c'} := \delta_{b',c'} \delta_{b,c}  
(1_{a'} \times r_{b'}^* \times 1_{ \bar d'})
(e_{B; t,b,a}^{s,b',a'} \times e_{B'; t',a,c}^{s',a',c'})
(1_{a} \times r_b \times 1_{\bar d}).
\een
Here $r_a \in \Hom_\cN(\bar aa, id)$ are solutions to the conjugacy relations normalized so that $r_a^* r_a = d_a 1$. The product structure will be depicted
graphically by wire diagrams such as fig. \ref{prod3}. 
%
%
\begin{figure}[h!]
\centering
\begin{tikzpicture}[scale=.5]
\draw (0,-1) node[anchor=north]{$c$};
\draw (0,3) node[anchor=south]{$d$};
\draw (3,-1) node[anchor=north]{$a$};
\draw (3,3) node[anchor=south]{$b$};
\draw (1.5,1) node[anchor=north]{$B$};
\draw[ line width=.03cm] (0,-1) -- (0,3);
\draw[ line width=.03cm] (3,-1) -- (3,3);
\draw[ line width=.06cm] (0,1) -- (3,1);
\draw (3.7,1) node[anchor=west]{$\star$};
\draw (5,-1) node[anchor=north]{$c'$};
\draw (5,3) node[anchor=south]{$d'$};
\draw (8,-1) node[anchor=north]{$a'$};
\draw (8,3) node[anchor=south]{$b'$};
\draw (6.5,1) node[anchor=north]{$B'$};
\draw[ line width=.03cm] (5,-1) -- (5,3);
\draw[ line width=.03cm] (8,-1) -- (8,3);
\draw[ line width=.06cm] (5,1) -- (8,1);
\draw (8.7,1) node[anchor=west]{$= \, \, \delta_{d,c'} \delta_{a',b}$};
\draw (13,-1) node[anchor=north]{$c$};
\draw (13,3) node[anchor=south]{$d'$};
\draw (16,-1) node[anchor=north]{$a$};
\draw (16,3) node[anchor=south]{$b'$};
\draw (14.5,0) node[anchor=north]{$B$};
\draw (14.5,2) node[anchor=south]{$B'$};
\draw (13,1) node[anchor=west]{$d$};
\draw (16,1) node[anchor=west]{$b$};
\draw[ line width=.03cm] (13,-1) -- (13,3);
\draw[ line width=.03cm] (16,-1) -- (16,3);
\draw[ line width=.06cm] (13,0) -- (16,0);
\draw[ line width=.06cm] (13,2) -- (16,2);
\end{tikzpicture}
\caption{The (vertical) product $\star$ in $\lozenge$.}
\label{prod3}
\end{figure}

The unit of $\lozenge$ with respect to the above product structure is given by $\oplus_{a} \bar r_a^{} \bar r_a^*$,
and the structure constants of the double triangle algebra may be obtained by expanding the right side in the the given basis using the intertwiner calculus. 

In the literature $\star$ is called the ``vertical'' product. A ``horizontal'' product $\cdot$
may be defined by simply using the product structure on (compatible) intertwiners
induced by the algebra structure of $\cM$, i.e. by 
$
e_{B; t,b,a}^{s,b',a'} \cdot e_{B'; t',d,c}^{s',d',c'} := \delta_{a,c'} \delta_{b,d'}  e_{B; t,b,a}^{s,b',a'} e_{B'; t',a,c}^{s',a',c'}.
$
Although this will not be used in this work, we mention that 
the horizontal and vertical products are related by the ``quantum Fourier transform'' \cite{bisch1997bimodules} in a similar way as the pointwise product and convolution 
of ordinary functions are related by the standard Fourier transfrom. 
\begin{definition}
$\cZ_h$ is the center of $\lozenge$ with respect to the horizontal product.
\end{definition}
\cite{bockenhauer1999alpha} have analyzed $\cZ_h$ in terms of the braiding and fusion relations in 
$_\cN X_\cN$. As a first result we quote the following. We define
\ben
\label{eBdef}
e_B := \sum_{t,a,b} e^{t,b,a}_{B;t,b,a} =  \bigoplus_{a,b \in {}_\cM X_\cN} \sqrt{d_a d_b} \sum_{t \in \Hom(a\bar b,A)}  tt^*, 
\een
where the second expression emphasizes the sum is understood as an orthogonal sum as in the orthogonal sum of intertwiner spaces defining $\lozenge$.
Then $\cZ_h = \{e_B : B \in _\cM X_\cM\}$ and it is shown (\cite{bockenhauer1999alpha}, thm. 4.4) that 
\ben
\label{84}
e_A \star e_B = \sum_{C} \frac{d_A d_B}{d_C} N_{A,B}^C e_C
\een
where $A,B,C \in _\cM X_\cM$ and $N_{A,B}^C$ the fusion coefficients. 
Thus, $\cZ_h$ is a representation of the fusion ring of $_\cM X_\cM$ under the vertical product. 

If ${}_\cN X_\cN$ is braided as we are assuming, then $\cZ_h \subset \lozenge$ also contains representations of the fusion rules for 
${}_\cN X_\cN$, as discussed in \cite{bockenhauer1999alpha}. Recall from sec. \ref{sec:fusion} that 
$\alpha^\pm_\lambda(m) = \bar \imath^{-1} \circ \epsilon^\pm(\lambda, \theta) \circ \lambda \circ \bar \imath(m), \lambda \in {}_\cN X_\cN, m \in \cM$ 
are the alpha-induced endomorphisms of $\cM$ and $\epsilon^\pm(\lambda, \theta)$ the braiding operators of ${}_\cN X_\cN$. We define the shorthand
\ben
\langle \alpha^\pm_\lambda, B \rangle := \dim \Hom_\cM(\alpha_\lambda^\pm, B), \quad B \in {}_\cM X_\cM, 
\een
and then 
\ben
\label{pdef}
p_\lambda^\pm := d_\lambda \sum_{B \in _\cM X_\cM} d_B^{-1} \langle \alpha^\pm_\lambda, B \rangle e_B \in \lozenge.
\een
Then clearly $p_\lambda^\pm \in \cZ_h$ and it is shown (\cite{bockenhauer1999alpha} thm. 5.3 and cor. 5.4) that 
\ben\label{86}
p_\mu^\pm \star p_\nu^\pm = \sum_{\lambda} \frac{d_\mu d_\nu}{d_\lambda} N_{\mu,\nu}^\lambda p_\lambda^\pm
\een
where $\mu,\nu,\lambda \in {}_\cN X_\cN$ and $N_{\mu,\nu}^\lambda$ the fusion coefficients for ${}_\cN X_\cN$, i.e. for endomorphisms of $\cN$. 
Thus, $\cZ_h$ contains two ($\pm$) copies of the fusion ring of ${}_\cN X_\cN$.

Now assume that ${}_\cN X_\cN$ is in addition non-degenerately braided. Following \cite{bockenhauer1999alpha}, we define an element $q_{\mu,\nu} \in \lozenge, 
\mu,\nu \in {}_\cN X_\cN$ by the following expression\footnote{Our definition is seen to be equivalent to that given in \cite{bockenhauer1999alpha} if we use lem. 6.2
of that reference.} 
(see fig. \ref{fig:220413_Fig-19_TE})
 \ben
 \label{qdef}
 \begin{split}
 q_{\mu,\lambda} :=& \sqrt{d_\mu d_\lambda} D_X^{-1} \bigoplus_{a,b \in {}_\cN X_\cM} \sqrt{d_a d_b}
 \sum_{v \in \Hom(\alpha^-_\mu, \alpha_\lambda^+)}
 (1_a \times r_\mu^* \times 1_{\bar b})
 (\epsilon^-(a, \bar \mu)^* \times \epsilon^-(\mu,\bar b)^*)\\
 & \hspace{2cm} (\bar v \times 1_{a\bar b} \times v)
 (\epsilon^+(a,\bar \lambda) \times \epsilon^+(\lambda, \bar b))
 (1_a \times r_\lambda \times 1_{\bar b}) \\
 =& 
\bigoplus_{a,b \in {}_\cN X_\cM} \sqrt{d_a d_b}
 \sum_{v \in \Hom(\alpha^-_\mu, \alpha_\lambda^+)} q_{\mu,\lambda,v,a,b}
\end{split}
\een
where $v \in \Hom(\alpha^-_\mu, \alpha_\lambda^+)$ run through an 
ONB in the sense that for two $v,v'$ we have $\i(r_\mu^*)(\bar v \times v') \i(r_\lambda) = \delta_{v,v'}$.
Here $\bar v = 
(\i(r_\lambda^*) \times 1_{\bar \alpha^-_\mu})
(1_{\bar \alpha_\lambda^+} \times v^* \times 1_{\bar \alpha^-_\mu})
(1_{\bar \alpha_\lambda^+} \times \i(\bar r_\mu))
\in \Hom(\bar \alpha^-_\mu, \bar \alpha_\lambda^+)$ is the conjugate intertwiner.
Furthermore, e defined the global index as 
\ben
D_X:= \sum_{\mu \in _{\cN} X_\cN} d_\mu^2 = \sum_{B \in _{\cM} X_\cN} d_B^2.
\een

\begin{figure}[h!]
\begin{center}
  \includegraphics[width=0.8\textwidth,]{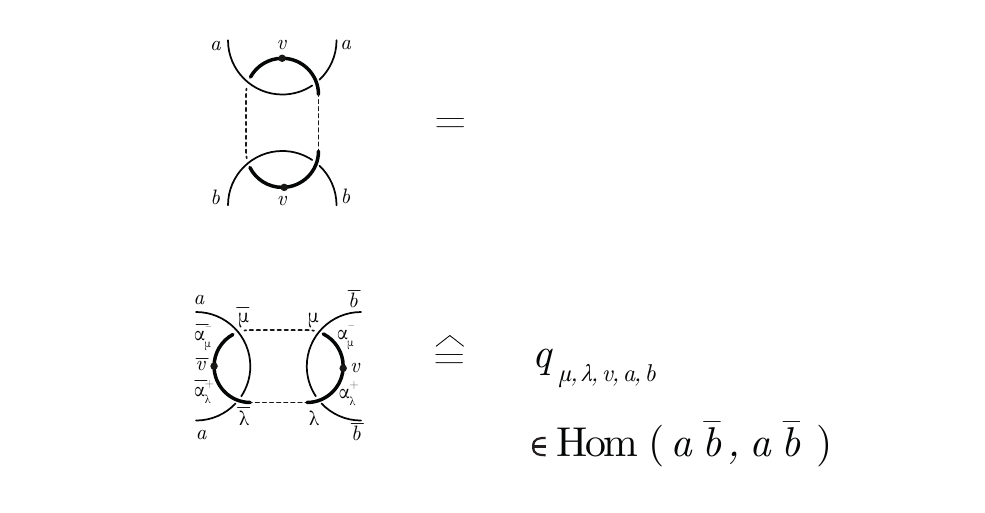}
  \end{center}
  \caption{  \label{fig:220413_Fig-19_TE} Graphical expression of $q_{\lambda,\mu,v,a,b}$ stripped of the numerical prefactors. In the upper panel, we draw the symbol for 
  this double triangle algebra element rotated by 90 degrees following the conventions of \cite{bockenhauer1999alpha}. In the lower 
  panel we draw the normal wire diagram for the intertwiners in $q_{\lambda,\mu,v,a,b}$, to be read from bottom to top.}
\end{figure}

One of the main results of \cite{bockenhauer1999alpha} (thm. 6.9) is that the element $q_{\mu,\nu} \in \lozenge, \mu,\nu \in _\cN X_\cN$ are 
mutually commuting idempotents which coincide precisely with the minimal central projections of $\cZ_h$. 
Furthermore, identifying $\cZ_h$ with an orthogonal sum of full matrix algebras as in 
\ben
\cZ_h \cong \bigoplus_{\mu,\nu \in _\cN X_\cN} M_{Z_{\mu,\nu}}(\CC), 
\een
the size of each block is precisely $Z_{\mu,\nu} = \langle \alpha^+_\mu, \alpha^-_\nu \rangle$. Thus, we have, in particular, 
\ben\label{qrel}
[q_{\mu,\nu}, p^\pm_\lambda]_\star = 0, \quad q_{\mu,\nu} \star q_{\mu',\nu'} = \delta_{\mu,\mu'} \delta_{\nu,\nu'} q_{\mu,\nu}.
\een
Since $q_{\lambda,\mu} \in \cZ_h$ which is spanned linearly by the $e_A$, there must exist complex coefficients $Y_{\lambda,\mu,A}$ such that
\ben
\label{YlmA}
q_{\lambda,\mu} = \sum_A d^{-1}_A Y_{\lambda,\mu,A} e_A.
\een
By \cite{bockenhauer1999alpha}, thm. 6.9, the fusion ring $_{\cM} X_{\cM}$ is abelian (i.e. $AB$ is unitarily equivalent to $BA$ for all $A,B \in {}_{\cM} X_{\cM}$)
if and only if $Z_{\lambda, \mu} \in \{0,1\}$, and in such a case, the simple objects $A \in {}_{\cM} X_{\cM}$ are in one-to-one correspondence 
with the pairs of simple objects $(\lambda,\mu)$ from the fusion ring $_{\cN} X_{\cN}$. 

We record some properties of $Y_{\lambda,\mu,A}$ in the following lemma for later.
\begin{lemma}\label{Ylem}
The coefficients $Y_{\lambda,\mu,A} \in \CC$ as in \eqref{YlmA} satisfy:
\begin{enumerate}
\item We have, see fig. \ref{fig:Y},
\ben
\label{eq:Ydef}
Y_{\lambda,\mu,A} = D_X^{-1} \sum_{v,a,b,t} d_a d_b \, t^* q_{\mu,\lambda,v,a,b} t,
\een
 where $v$ is running through an ONB of 
$\Hom(\alpha^+_\lambda, \alpha^-_\mu)$, and for fixed $a,b \in {}_\cM X_\cN$, $t$ is running over an ONB of 
$\Hom(a\bar b,A)$. $q_{\mu,\lambda,v,a,b}$ is as in \eqref{qdef}.
\item $\bar Y_{\lambda,\mu,A} = Y_{\lambda,\mu, \bar A}$.
\item If the fusion ring $_{\cM} X_{\cM}$ is abelian, then the matrix $Y_{\lambda,\mu,A}$ labelled by 
simple objects $A \in _{\cM} X_{\cM}$ and pairs of simple objects $(\lambda,\mu) \in {}_{\cN} X_{\cN}^2$ such that $Z_{\lambda, \mu} \neq 0$
is invertible and unitary up to normalization 
and their inverse  diagonalize the fusion rules of $_{\cM} X_{\cM}$.
\end{enumerate}
\end{lemma}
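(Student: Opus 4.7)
The plan is to treat the three parts in order, with part 1 producing the explicit formula that feeds the subsequent analysis.

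For part 1, I will extract $Y_{\lambda,\mu,A}$ from the defining expansion $q_{\lambda,\mu} = \sum_A d_A^{-1} Y_{\lambda,\mu,A} e_A$ by projecting block-by-block under the orthogonal decomposition $\lozenge = \bigoplus_{a,b,c,d} \Hom_\cM(c\bar d, a\bar b)$. Both sides are supported in ``diagonal'' blocks $(a,b,a,b)$, with $q_{\lambda,\mu}|_{(a,b)} = \sqrt{d_a d_b}\sum_v q_{\mu,\lambda,v,a,b}$ and $e_A|_{(a,b)} = \sqrt{d_a d_b}\sum_{t \in \Hom(a\bar b,A)} tt^*$. Sandwiching both sides with a fixed ONB vector $t \in \Hom(a\bar b, A)$ and using $t'^* t = \delta_{A,A'}\delta_{t,t'}$ collapses the right-hand side and yields $\sum_v t^* q_{\mu,\lambda,v,a,b} t = d_A^{-1} Y_{\lambda,\mu,A}$. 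Summing this identity over an ONB $\{t\}$ contributes a factor $N^A_{a,\bar b}$, and summing over $a,b$ with weight $d_a d_b$ produces $d_A D_X$ via the standard subfactor identity $\sum_{a,b} N^A_{a,\bar b}\, d_a d_b = d_A \sum_b d_b^2 = d_A D_X$ (which itself follows from decomposing $Ab$ into ${}_\cM X_\cN$-endomorphisms to get $\sum_a N^A_{a,\bar b} d_a = d_A d_b$, together with the equality of global indices of the four systems). Rearrangement gives \eqref{eq:Ydef}.

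For part 2, I take complex conjugation of the scalar expression from part 1. Because $t^* q_{\mu,\lambda,v,a,b} t \in \Hom_\cM(A,A) \cong \CC$, its complex conjugate equals its operator $*$-adjoint $t^* q_{\mu,\lambda,v,a,b}^* t$. Taking $*$ of the intertwiner expression for $q_{\mu,\lambda,v,a,b}$ reverses the order of composition, swaps $\epsilon^\pm \leftrightarrow \epsilon^\mp$, interchanges $r_\lambda \leftrightarrow r_\lambda^*$ (and similarly for $\mu$), and replaces $(v,\bar v) \leftrightarrow (v^*, \bar v^*)$; diagrammatically this is a horizontal reflection of the wire diagram. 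I then apply Frobenius duality / conjugation of intertwiners to reinterpret each pair $(t,t^*) \in \Hom(a\bar b, A)$ as its conjugate $(\bar t, \bar t^{\,*}) \in \Hom(b\bar a, \bar A)$, using $d_A = d_{\bar A}$ to absorb the dimensional prefactors. After relabeling the summation $a \leftrightarrow b$ and reindexing the sum over $v^* \in \Hom(\alpha^+_\lambda,\alpha^-_\mu)$ back to a sum over an ONB of $\Hom(\alpha^-_\mu,\alpha^+_\lambda)$, the reassembled expression is precisely the defining formula for $Y_{\lambda,\mu,\bar A}$.

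For part 3, assume ${}_\cM X_\cM$ is abelian. By \cite{bockenhauer1999alpha} thm.~6.9 the minimal idempotents of $(\cZ_h,\star)$ are the $\{q_{\lambda,\mu} : Z_{\lambda,\mu}=1\}$, and their number equals the number of simple objects of ${}_\cM X_\cM$. Setting $f_A := d_A^{-1} e_A$, formula \eqref{84} rewrites as $f_A \star f_B = \sum_C N^C_{A,B} f_C$, so $\{f_A\}$ is another basis of $\cZ_h$ realizing the regular representation of the fusion ring. Thus $q_{\lambda,\mu} = \sum_A Y_{\lambda,\mu,A} f_A$ is a change of basis in a finite-dimensional commutative algebra, proving that $Y$ is invertible. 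Writing $W = Y^{-1}$ and using orthogonality $q_{\lambda,\mu}\star q_{\lambda',\mu'} = \delta_{\lambda\lambda'}\delta_{\mu\mu'}q_{\lambda,\mu}$, one obtains $f_A \star q_{\lambda,\mu} = W_{A,\lambda,\mu}\, q_{\lambda,\mu}$, i.e.\ $W_{A,\lambda,\mu}$ is the eigenvalue of (left-$\star$-multiplication by) $f_A$ on the simple summand labeled $(\lambda,\mu)$. Inserting this eigenvalue relation into $f_A \star f_B = \sum_C N^C_{A,B} f_C$ gives $W_{A,\lambda,\mu} W_{B,\lambda,\mu} = \sum_C N^C_{A,B}\, W_{C,\lambda,\mu}$, showing that each column $(W_{A,\lambda,\mu})_A$ is a one-dimensional representation of the fusion ring, so $W = Y^{-1}$ simultaneously diagonalizes the fusion matrices $(N_A)_{BC} = N^C_{A,B}$. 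For unitarity up to normalization, I introduce the natural trace $\tau$ on $(\cZ_h,\star)$ descending from the Markov trace of $\cN \subset \cM$; both $\{q_{\lambda,\mu}\}$ and $\{f_A\}$ are $\tau$-orthogonal with explicit norms, and rescaling each basis by its norm converts $Y$ into a unitary matrix in the Verlinde sense.

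The main technical obstacle is in part 2, where one must carefully organize the graphical reflection of the $q_{\mu,\lambda,v,a,b}$ diagram together with the Frobenius/conjugate-intertwiner change of variables in such a way that the result exactly reproduces the prefactor structure of the original formula for $Y_{\lambda,\mu,\bar A}$. Part 1 is essentially a bookkeeping exercise in quantum dimensions, and part 3 is a direct consequence of the commutative semisimple structure once the normalization $f_A = d_A^{-1} e_A$ is adopted.
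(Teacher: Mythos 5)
Your proposal is correct and follows essentially the same route as the paper: part 1 is the identical projection-and-dimension-counting argument (including the identity $\sum_{a,b} N^A_{a,\bar b}\, d_a d_b = d_A D_X$ via Frobenius reciprocity), and part 2 uses the same strategy — identifying the complex conjugate of the scalar $t^* q_{\mu,\lambda,v,a,b}\, t$ with its operator adjoint and then converting, via conjugate intertwiners and braiding--fusion moves, to $\bar t^{\,*} q_{\mu,\lambda,v,b,a}\, \bar t$ with $a\leftrightarrow b$ swapped — although you only sketch the chain of BF/conjugacy manipulations that the paper carries out explicitly over roughly a page. For part 3 the paper simply cites B\"ockenhauer et al., thm.~6.9, whereas your argument via the renormalized basis $f_A = d_A^{-1} e_A$, the change-of-basis observation, and the eigenvalue relation $f_A \star q_{\lambda,\mu} = (Y^{-1})_{A,(\lambda,\mu)}\, q_{\lambda,\mu}$ is a correct, self-contained unpacking of that citation.
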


{\bf Remark:} $Y_{\lambda,\mu,A}$ should be thought of as a generalization of Rehren's matrix $Y_{\rho,\nu}$ \eqref{YRehren}, under the correspondences
$\rho \leftrightarrow (\lambda, \mu), \nu \leftrightarrow A$, as
one can appreciate by comparing figs. \ref{fig:Y} and \ref{fig:Y1}.
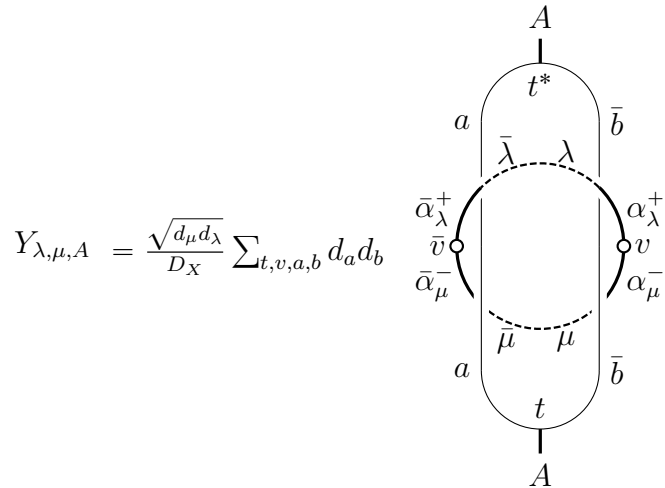
\begin{figure}[h!]
\centering
\begin{tikzpicture}[scale=.55]
\draw (-12.5,-0) node[anchor=east]{$Y_{\lambda,\mu,A}$};
\draw (-9,0) node{$=\frac{\sqrt{d_\mu d_\lambda}}{D_X}\sum_{t,v,a,b} d_a d_b$};
\draw[thin]  (-.59,-3) -- (-.59,3);
\draw[thin]  (-3.41,-3) -- (-3.41,3);   
\draw[line width=.2cm,
    color = white]  (0,0) arc (0:180:2); 
\draw[very thick]  (0,0) arc (0:360:2); 
\draw[ line width=.2cm,
    color = white]  (-.59,1.41) arc (45:135:2); 
\draw[ line width=.03cm,
    dash pattern=on .09 cm off .04 cm]  (-.59,1.41) arc (45:135:2);    
\draw[ line width=.2cm,
    color = white]  (-3.41,-1.41) arc (225:315:2); 
\draw[ line width=.03cm,
    dash pattern=on .09 cm off .04 cm]  (-3.41,-1.41) arc (225:315:2);     
 \draw[line width=.2cm,
    color = white]  (-.59,-3) -- (-.59,0);
\draw[line width=.2cm,
    color = white]  (-3.41,-3) -- (-3.41,0);    
 \draw[thin]  (-.59,-3) -- (-.59,0);
\draw[thin]  (-3.41,-3) -- (-3.41,0); 
\draw[thin]  (-3.41,-3) arc (180:360:1.41); 
\draw[thin]  (-3.41,3) arc (180:0:1.41); 
 \draw[very thick]  (-2,-5) -- (-2,-4.41);
\draw[very thick]  (-2,5) -- (-2,4.41); 
 \filldraw[color=black, fill=white, thick](0,0) circle (.15);  
  \filldraw[color=black, fill=white, thick](-4,0) circle (.15);  
\draw (-2,-4.41) node[anchor=south]{$t$};
\draw (-2,4.41) node[anchor=north]{$t^*$};
\draw (-2,-5) node[anchor=north]{$A$};
\draw (-2,5) node[anchor=south]{$A$};
\draw (-3.41,-3) node[anchor=east]{$a$};
\draw (-0.59,-3) node[anchor=west]{$\bar b$};
\draw (-3.41,3) node[anchor=east]{$a$};
\draw (-0.59,3) node[anchor=west]{$\bar b$};
\draw (-4,0) node[anchor=east]{$\bar v$};
\draw (-0,0) node[anchor=west]{$v$};
\draw (-3.8,0.9) node[anchor=east]{$\bar \alpha_\lambda^+$};
\draw (-0.2,.9) node[anchor=west]{$\alpha^+_\lambda$};
\draw (-3.8,-0.9) node[anchor=east]{$\bar \alpha_\mu^-$};
\draw (-0.2,-.9) node[anchor=west]{$\alpha^-_\mu$};

\draw (-3.3,2.3) node[anchor=west]{$\bar \lambda$};
\draw (-.9,2.3) node[anchor=east]{$\lambda$};
\draw (-3.3,-2.3) node[anchor=west]{$\bar \mu$};
\draw (-.9,-2.3) node[anchor=east]{$\mu$};
\end{tikzpicture}
  \caption{\label{fig:Y} Wire diagram for $Y_{\lambda, \mu,A}$, read from from bottom to top.}
\end{figure}
\begin{figure}[h!]
\centering
\begin{tikzpicture}[scale=.55]
\draw (-17,-0) node[anchor=east]{$Y_{\lambda,\mu}= d_\mu$};
\draw[ line width=.03cm,
    dash pattern=on .09 cm off .04 cm]  (-16,0) arc (180:360:2);    
\draw[line width=.4cm,
    color = white]  (-14,0) -- (-14,-3);   
\draw[line width=.03cm,
    dash pattern=on .09 cm off .04 cm]  (-14,-3) -- (-14,3);

\draw[ line width=.4cm,
   color=white]  (-12,0) arc (0:180:2);  
   \draw[ line width=.03cm,
    dash pattern=on .09 cm off .04 cm ]  (-12,0) arc (0:180:2);      
\draw (-16,0) node[anchor=east]{$\bar \lambda$};
\draw (-12,0) node[anchor=west]{$\lambda$};
\draw (-14,-3) node[anchor=north]{$\mu$};
\draw (-14,3) node[anchor=south]{$\mu$};
\draw (-7.5,0) node{$=\frac{1}{D_X}\sum_{t,\alpha,\beta} d_\alpha d_\beta$};
\draw[ line width=.03cm,
    dash pattern=on .09 cm off .04 cm]  (-.59,-3) -- (-.59,3);
\draw[ line width=.03cm,
    dash pattern=on .09 cm off .04 cm]  (-3.41,-3) -- (-3.41,3);   
\draw[line width=.2cm,
    color = white]  (0,0) arc (0:180:2); 
\draw[ line width=.03cm,
    dash pattern=on .09 cm off .04 cm]  (0,0) arc (0:360:2); 
\draw[ line width=.2cm,
    color = white]  (-.59,1.41) arc (45:135:2); 
\draw[ line width=.03cm,
    dash pattern=on .09 cm off .04 cm]  (-.59,1.41) arc (45:135:2);    
\draw[ line width=.2cm,
    color = white]  (-3.41,-1.41) arc (225:315:2); 
\draw[ line width=.03cm,
    dash pattern=on .09 cm off .04 cm]  (-3.41,-1.41) arc (225:315:2);     
 \draw[line width=.2cm,
    color = white]  (-.59,-3) -- (-.59,0);
\draw[line width=.2cm,
    color = white]  (-3.41,-3) -- (-3.41,0);    
 \draw[ line width=.03cm,
    dash pattern=on .09 cm off .04 cm]  (-.59,-3) -- (-.59,0);
\draw[ line width=.03cm,
    dash pattern=on .09 cm off .04 cm]  (-3.41,-3) -- (-3.41,0); 
\draw[ line width=.03cm,
    dash pattern=on .09 cm off .04 cm]  (-3.41,-3) arc (180:360:1.41); 
\draw[ line width=.03cm,
    dash pattern=on .09 cm off .04 cm]  (-3.41,3) arc (180:0:1.41); 
 \draw[ line width=.03cm,
    dash pattern=on .09 cm off .04 cm]  (-2,-5) -- (-2,-4.41);
\draw[ line width=.03cm,
    dash pattern=on .09 cm off .04 cm]  (-2,5) -- (-2,4.41); 
%
%
\draw (-2,-4.41) node[anchor=south]{$t$};
\draw (-2,4.41) node[anchor=north]{$t^*$};
\draw (-2,-5) node[anchor=north]{$\mu$};
\draw (-2,5) node[anchor=south]{$\mu$};
\draw (-3.41,-3) node[anchor=east]{$\alpha$};
\draw (-0.59,-3) node[anchor=west]{$\bar \beta$};
\draw (-3.41,3) node[anchor=east]{$\alpha$};
\draw (-0.59,3) node[anchor=west]{$\bar \beta$};

\draw (-3.3,2.3) node[anchor=west]{$\bar \lambda$};
\draw (-.9,2.3) node[anchor=east]{$\lambda$};
\draw (-3.3,-2.3) node[anchor=west]{$\bar \lambda$};
\draw (-.9,-2.3) node[anchor=east]{$\lambda$};
\end{tikzpicture}
  \caption{\label{fig:Y1} Wire diagram for Rehren's matrix $Y_{\lambda, \mu}$, read from from bottom to top. To go to the rightmost diagram 
  intended to indicate the analogy to fig. \ref{fig:Y} for $Y_{\lambda,\nu,A}$,
  we have used the BF relations.}
\end{figure}
\begin{proof}
1) It follows by combining \eqref{YlmA}, \eqref{qdef}, and \eqref{eBdef} after multiplying \eqref{YlmA} from the left with 
any non-trivial isometry $t \in \Hom(a\bar b,A)$ that $d_A^{-1} Y_{\lambda,\mu,A} =   \sum_{v} t^* q_{\mu,\lambda,v,a,b} t$. 
Now multiply by $d_a d_b/D_X$ and take a sum over an ONB of $t \in \Hom(a\bar b,A)$ and then a sum over $a,b$. Then we 
get $(\sum_{a,b} N_{a,\bar b}^A d_a d_{b}) d_A^{-1} D_X^{-1} Y_{\lambda,\mu,A}$ on the left and the claimed formula on the right. 
The result then follows from $\sum_{a,b} N_{a,\bar b}^A d_a d_b= \sum_{a,b} N_{b, A}^a d_a d_b=d_A \sum_b d_b^2 = d_A D_X$, 
using the multiplicative property of the dimension and Frobenius reciprocity. 

2) Item 1) shows that $\bar Y_{\lambda,\mu,A} =$ the sum over $v$ of the following expression
(in the following we us the shorthand $D=D_X$):
\ben
\begin{split}
&d_A t^* q_{\mu,\lambda,v,a,b}^* t\\
=&\sqrt{d_\mu d_\lambda} D^{-1} d_A \, t^* 
(1_a \times r_\lambda^* \times 1_{\bar b})
 (\epsilon^+(a, \bar \lambda)^* \times \epsilon^+(\lambda,\bar b)^*)\\
 & (\bar v^* \times 1_{a\bar b} \times v^*)
 (\epsilon^-(a,\bar \mu) \times \epsilon^-(\mu, \bar b))
 (1_a \times r_\mu \times 1_{\bar b})t\\
 =&\sqrt{d_\mu d_\lambda} D^{-1} \, \bar r_a^*(1_a \times r_b^* \times 1_{\bar a})[\\
 &(1_a \times r_\lambda^* \times 1_{\bar b})
  (\epsilon^+(a, \bar \lambda)^* \times \epsilon^+(\lambda,\bar b)^*)
  (\bar v^* \times 1_{a\bar b} \times v^*)\\
  &(\epsilon^-(a,\bar \mu) \times \epsilon^-(\mu, \bar b))
 (1_a \times r_\mu \times 1_{\bar b})t \times \bar t
 ]\bar r_A\\
 =&\sqrt{d_\mu d_\lambda} D^{-1} \, 
 \i(r_\lambda^*) 
 (1_{\bar\alpha^+_\lambda} \times \bar r_a^* \times 1_{\alpha^+_\lambda})
 (1_{\bar\alpha^+_\lambda a} \times r_b^* 1_{\bar a\alpha^+_\lambda})\\
 &
 (1_{\bar\alpha^+_\lambda a \bar b \alpha^+_\lambda} \times \epsilon^+(\lambda, \bar a)^* \times \epsilon^+(b, \bar \lambda)^*)
 (\bar v^* \times 1_{a\bar b} \times v^* \times 1_{b\bar a})\\
 &[(\epsilon^-(a,\bar \mu) \times \epsilon^-(\mu, \bar b))
 (1_a \times r_\mu \times 1_{\bar b}) \times 1_{b\bar a}](t \times \bar t) \bar r_A\\
 =&\sqrt{d_\mu d_\lambda} D^{-1} \, 
 \i(r_\mu^*) 
 (1_{\bar\alpha^-_\mu} \times \bar r_a^* \times v)
 (1_{\bar\alpha^+_\lambda a} \times r_b^* 1_{\bar a\alpha^+_\lambda})\\
 &
 (1_{\bar\alpha^+_\lambda a \bar b \alpha^+_\lambda} \times \epsilon^+(\lambda, \bar a)^* \times \epsilon^+(b, \bar \lambda)^*)
 (1_{\bar \alpha^-_\mu a\bar b} \times v^* \times 1_{b\bar a})\\
 &[(\epsilon^-(a,\bar \mu) \times \epsilon^-(\mu, \bar b))
 (1_a \times r_\mu \times 1_{\bar b}) \times 1_{b\bar a}](t \times \bar t) \bar r_A\\
 =&\sqrt{d_\mu d_\lambda} D^{-1} \, 
 \i(r_\mu^*) 
 (1_{\bar\alpha^-_\mu} \times \bar r_a^* \times 1_{\alpha^-_\mu})
 (\epsilon^-(a,\bar \mu)^* \times r_b^* 1_{\bar a} \times v)\\
 &
 (1_{a\bar \mu} \times \epsilon^-(\bar \mu,\bar b)^* \times 1_{b\bar a \alpha^+_\lambda})
 (1_{a\bar b \bar \alpha^-_\mu} \times \bar v \times 1_{b\bar a\alpha^+_\lambda})\\
 &[1_{a\bar b} \times (\epsilon^+(b,\bar \lambda) \times \epsilon^+(\lambda, \bar a))
 (1_b \times r_\lambda \times 1_{\bar a})](t \times \i(r_\lambda) \times \bar t) \bar r_A\\
 =&\sqrt{d_\mu d_\lambda} D^{-1} \, \bar r_a^*(1_a \times r_b^* \times 1_{\bar a})[\\
 &t\times (1_b \times r_\mu^* \times 1_{\bar a})
  (\epsilon^-(b, \bar \mu)^* \times \epsilon^-(\mu,\bar a)^*)
  (\bar v \times 1_{b\bar a} \times v)\\
  &(\epsilon^+(b,\bar \lambda) \times \epsilon^+(\lambda, \bar a))
 (1_b \times r_\lambda \times 1_{\bar a})\bar t 
 ]\bar r_A\\
 =&\sqrt{d_\mu d_\lambda} D^{-1} d_{\bar A} \, \bar t^*(1_b \times r_\mu^* \times 1_{\bar a})
 (\epsilon^-(b, \bar \mu)^* \times \epsilon^-(\mu,\bar a)^*)\\
 &(\bar v \times 1_{b\bar a} \times v)
 (\epsilon^+(b,\bar \lambda) \times \epsilon^+(\lambda, \bar a))
 (1_b \times r_\lambda \times 1_{\bar a}) \bar t \\
 =&d_A \bar t^* q_{\mu,\lambda,v,b,a} \bar t. 
\end{split}
\een
Here we have used repeatedly the BF relations for the relative braiding operators as 
described in \cite{bockenhauer1999alpha}, and the definition of the conjugate intertwiner
and the conjugacy relations. The proof becomes more transparent using graphical notation starting from fig. \ref{fig:Y}, and we will in the following frequently use such 
a graphical calculus, see e.g. \cite{bockenhauer1999alpha}. Now we take the sum over an ONB of $v$'s and then the right side gives exactly $Y_{\lambda,\mu,\bar A}$
by item 1) and we are done.

3) This is obvious by \cite{bockenhauer1999alpha}, thm. 6.9.
\end{proof}

\subsection{Subfactors and CFTs on $1+1$ Minkowski spacetime}

See \cite{haag2012local,longo1995nets, longo1990index, longo1989index, fredenhagen1992superselection, fredenhagen1989superselection, rehren2000canonical}.
In the Haag-Kastler approach to QFT, one can define a $1+1$ CFT by starting from a pair of left-and right-moving copies of a chiral CFT, each given by 
the Virasoro net $\{ {\mathcal V}_c(I)\}$ where $I$ runs through the open intervals of $S^1$. $\{ {\mathcal V}_c(I)\}$ in turn is generated by 
``quantized diffeomorphisms'' of $S^1$ which act non-trivially only within 
the given interval $I$. 

More precisely \cite{fewster2005quantum}, each $\{ {\mathcal V}_c(I)\}$
is a von Neumann algebra acting on a common Hilbert space, $\sH$, 
which is generated by unitary operators $U(f), f \in {\rm Diff}(S^1)$
subject to the relations $U(f) U(f') = e^{icB(f,f')} U(f\circ f')$, 
$U(id) = 1$, where $f,f'$ act non-trivially only within $I$ and where 
$B: {\rm Diff}(S^1) \times {\rm Diff}(S^1) \to \RR$ is the Bott cocycle.
It is customary to identify the circle $S^1 \setminus \{-1\}$ minus 
a point with $\RR$ via the Caley transform, and then we can consider the
algebras $\cV_c(I)$ as labelled by open intervals $I \subset \RR$.
The product of the left and right moving chiral CFTs is labelled by diamonds 
$D:=\{ (t,x) \in \RR^2 \mid t-x \in I_-, t+x \in I_+\} \cong I_+ \times I_-$ where $I_\pm$ are open intervals of $\RR$, 
and the algebra of observables of the combined left- and right chiral observables is (see fig. \ref{fig:diamond})
\ben
\label{extension1}
\cA(D) := \cV_c(I_+) \otimes \cV_c(I_-)^{\rm opp}, 
\quad D \cong I_+ \times I_-,
\een
with the appropriate notion of tensor product for von Neumann algebras. Here ``opp'' means the opposite algebra, 
which is identical as a vector space but has the opposite product structure
$n_1 \cdot^{\rm opp} n_2 := n_2 n_1$ and same $*$-structure.
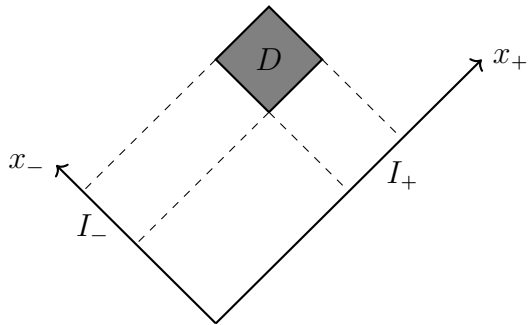
\begin{figure}[h!]
\centering
\begin{tikzpicture}[scale=.7]
\filldraw[color=black, fill=gray, thick](0,0) -- (1,1) -- (0,2) -- (-1,1) -- (0,0);
\draw[->, thick] (-1,-4) -- (-4,-1);
\draw[->,thick] (-1,-4) -- (4,1);
\draw[dashed] (-1,1) -- (-3.5,-1.5);
\draw[dashed] (0,0) -- (-2.5,-2.5);
\draw[dashed] (1,1) -- (2.5,-.5);
\draw[dashed] (0,0) -- (1.5,-1.5);
\draw (-4,-1) node[anchor=east]{$x_-$};
\draw (4,1) node[anchor=west]{$x_+$};
\draw (0,1) node{$D$};
\draw (-2.8,-2.2) node[anchor=east]{$I_-$};
\draw (2,-1.2) node[anchor=west]{$I_+$};
\end{tikzpicture}
  \caption{\label{fig:diamond} The causal diamond $D=I_+ \times I_-$. 
  }
\end{figure}
These algebras are mutually local, i.e. $[\cA(D_1), \cA(D_2)] = \{0\}$
when $D_1$ and $D_2$ are space- or timelike related diamonds.

For each open interval $I \subset \RR$, we have a system $\Delta_{\rm DHR}(I)$ of irreducible, endomorphisms  $\mu$ of ${\mathcal V}_c(I)$ which 
are in 1-1 correspondence with irreducible, unitary, highest weight representations of the Virasoro algebra for 
the central charge $c$. The endomorphisms in $\Delta_{\rm DHR}(I)$
are localized within $I$, i.e. they can be extended to the quasi-local 
$C^*$-algebra generated by all open intervals $J \subset \RR$
so as to act as the identity on each $\cV_c(J)$ for any $J \cap I = \emptyset$. It is known that $\Delta_{\rm DHR}(I)$
forms a modular tensor category. 

In the Haag-Kastler approach, a {\em rational CFT} in 1+1 Minkowski spacetime is a finite index extension of 
the net $\{\cA(D)\}$, i.e. collection of von Neumann factors 
\ben
\label{extension2}
\cC(D) \supset \cA(D), 
\een
which are again supposed to be mutually local, meaning that $[\cC(D_1), \cC(D_2)] = 0$ for spacelike $D_1, D_2$.  
It is understood how to construct such an extension from the modular tensor category of transportable endomorphisms $\mu$ of ${\mathcal V}_c(I)$. First, 
one shows that due to the transportability and conformal invariance, this extension problem is solved once it is solved for a {\em single}, arbitrary reference 
diamond $D_0 = I \times I$, that is we only need $\cC(D_0) \supset \cA(D_0)$. The intuitive reason for this is that this 
diamond can be appropriately moved to any other 
diamond by means of elements from the group
${\rm Mob} \times {\rm Mob}$, where
${\rm Mob} = \widetilde{{\rm PSL_2}(\RR)}/\{\pm 1\}$,
for which one automatically has a strongly continuous unitary
representation on $\sH \otimes \sH$ which acts geometrically on the original net $\{\cA(D)\}$ and therefore also on $\{ \cC(D) \}$.

Accordingly, we set $\cN:= \cV_c(I)$, and consider
$\cN \otimes \cN^{\rm opp} (= \cA(D_0))$ where $\cN^{\rm opp}$ is the opposite 
algebra. Then we consider the chosen endomorphisms $\mu$ of $\cN$ as defining the modular tensor category $_\cN X_\cN:= \Delta_{\rm DHR}(I)$. Accordingly, we assume that this contains a canonical endomorphism (i.e. a $Q$-system $(\theta, x, w)$) and call $\imath:\cN \to \cM$ the corresponding extension of $\cN$ with embedding $\imath$.
The desired extension 
\ben
\cC \supset \cN \otimes \cN^{\rm opp}
\een
is defined by the following $Q$-system $Z[X]:=(\theta_{\rm R},x_{\rm R},w_{\rm R})$ \cite{rehren2000canonical}.
First, the endomorphism $\theta_{\rm R}$ of
$\cN \otimes \cN^{\rm opp}$ is given by
\ben
\label{Thetadef}
\theta_{\rm R} := \bigoplus_{\lambda, \mu \, _\cN X_\cN} Z_{\lambda,\mu} \, \lambda \otimes \mu^{\rm opp} \ \ .
\een
 The direct sum is 
understood as follows. For each triple $l=(\lambda,\mu,v)$
with $v$ running through an ONB of $\Hom(\alpha^-_\mu, \alpha^+_\lambda)$, 
we pick an isometry $T_l \in \cN \otimes \cN^{\rm opp}$ such that the relations of a Cuntz algebra 
are fulfilled,
\ben
T_l^* T_{l'} = \delta_{l,l'}1, \quad \sum_l T_l T_l^* = 1.
\een
Then we set $\theta_{\rm R}(n \otimes n^{\rm opp}) = \sum_l T_l(\lambda(n) \otimes \mu^{\rm opp}(n^{\rm opp}))T_l^*$.
(Note that the numbers 
$Z_{\mu,\lambda} \in \bN_0$ are equal to the number of basis elements $v \in \Hom(\alpha^+_\mu,\alpha^-_\lambda)$.) 
The operator $x_{\rm R} \in \Hom(\theta_{\rm R},\theta_{\rm R}^2)$ is
\ben
x_{\rm R} := \sum_{l,l',l''} \sum_{e,f}
\eta_{e,f;v,v''}^{v'} (T_l \times T_{l'})[\i(e) \otimes \i^{\rm opp}(f^*)] T_{l''}^*
\een
where $e$ respectively $f$
run through ONBs of $\Hom(\lambda \lambda',\lambda'')$ respectively 
$\Hom(\mu \mu',\mu'')$, so that the
embeddings $\i: \cN \to \cM, \i^{\rm opp}: \cN^{\rm opp} \to \cM^{\rm opp}$
produce intertwiners in $\Hom(\alpha^-_\lambda \alpha_{\lambda'}^-, \alpha^-_{\lambda''})$ respectively 
$\Hom(\alpha^+_\mu \alpha^+_{\mu'},\alpha^+_{\mu''})$. 
$v'$ runs through an ONB of 
$\Hom(\alpha^-_{\lambda'}, \alpha^+_{\mu'})$ (similarly for $v''$), 
$l'=(\lambda', \mu', v')$ (similarly for $l''$) and 
Rehren's structure constants are defined as
\ben
\label{rehren}
\eta_{e,f;v,v''}^{v'} := 
\left[\frac{d(\mu) d(\mu')}{d(\theta) d(\mu'')} \right]^{1/2}
E [\i(e)^* (v \times v')\i(f) (v'')^*] , 
\een
with $E:\cM \to \cN$ the minimal conditional expectation. 
The properties of $E$ and $\alpha$-induction imply that 
$\eta_{e,f;v,v''}^{v'}$ is an intertwiner in $\Hom_\cN(\lambda,\lambda)$, hence a scalar. The isometry $w_{\rm R}$ is defined to be $T_{id}$. The formula
\ben
\epsilon_{\rm R}(\theta_{\rm R},\theta_{\rm R}) := 
\sum_{l,l'} (T_l \times T_{l'}) [\i(\epsilon^+(\mu,\mu')) \times \i^{\rm opp}(\epsilon^-(\lambda,\lambda')^*)] (T_{l'} \times T_{l})^*
\een
defines a braiding such that $\epsilon_{\rm R}(\theta_{\rm R},\theta_{\rm R}) x_{\rm R} = x_{\rm R}$, i.e. the Q-system $Z[X]:=(\theta_{\rm R},x_{\rm R},w_{\rm R})$ is commutative, and it is in a sense also the maximal such Q-system. It is called the ``full center'' for this reason and there exists a more abstract, equivalent categorical definition 
\cite{bischoff2014characterization}.
The summands in the canonical endomorphism $\theta_{\rm R}$ correspond to the ``primary fields'' by which $\cN \otimes \cN^{\rm opp}$ 
has been extended, see sec. \ref{sec:Qsys}. 

We now set $\cC(D_0) := \cC$, and by applying appropriate 
representers of the conformal group, these can be transported from 
the given diamond $D \subset \RR^2$ to any other diamond of any shape.
Locality of the net $\{ \cC(D) \}$ is then shown to follow from the commutativity of the Q-system $Z[X]$. 
Since the numbers $Z_{\mu,\lambda} \in \bN_0$ are equal to the number of basis elements 
$v \in \Hom(\alpha^+_\mu,\alpha^-_\lambda),$ the independent primary fields are labelled by $(\mu,\lambda,v)$. 

\subsection{Defects for CFTs on $1+1$ Minkowski spacetime}
\label{sec:transparent}

See \cite{bischoff2016phase,bischoff2015tensor,frohlich2006correspondences}.
A variant of the above full center Q-system construction can also 
be employed to construct a $1+1$ CFT in the presence of a transparent defect. 
In the language of conformal nets this situation is encoded in the following way. 
We have {\em two} finite index 
nets $\{ \cD_l(D) \}$ and $\{\cD_r(D)\}$ labeled by causal diamonds $D$ extending the Virasoro-net $\{ \cA(D) \}$, 
as above in \eqref{extension2}, \eqref{extension1}. However, different from above we require Einstein causality only in the restricted sense
\ben
[\cD_l(D_1), \cD_r(D_2)] = \{0\} \quad \text{if $D_1$ is left-local to $D_2$.}
\een
Here, ``left-local'' means that $D_1$ is not only in the causal complement of $D_2$ but also to the left, see fig. \ref{fig:diamond1}. 
\begin{figure}[h!]
\centering
\begin{tikzpicture}[scale=.7]
\filldraw[color=black, fill=lightgray, thin] (1,1) -- (-2,4) -- (-2,-2) -- (1,1);
\filldraw[color=black, fill=lightgray, thin](4,2) -- (6,4) -- (6,0) -- (4,2);
\filldraw[color=black, fill=gray, thick](0,0) -- (1,1) -- (0,2) -- (-1,1) -- (0,0);
\filldraw[color=black, fill=gray, thick](4,2) -- (4.5,2.5) -- (5.5,1.5) -- (5,1) -- (4,2);
\draw[->, thick, dashed] (-2,-3) -- (6,5);
\draw[->,thick, dashed] (6,-3) -- (-2,5);
\draw (-2,5) node[anchor=east]{$x_-$};
\draw (6,5) node[anchor=west]{$x_+$};
\draw (0,1) node{$D_1$};
\draw (4.8,1.8) node{$D_2$};
%
\end{tikzpicture}
  \caption{\label{fig:diamond1} $D_1$ is left-local to $D_2$. Operators in $\cD_l(D_1)$ can be thought of as having a ``left shadow'', 
  and operators in $\cD_r(D_2)$ a ``right shadow''. These shadows must not overlap for commutativity.
  }
\end{figure}
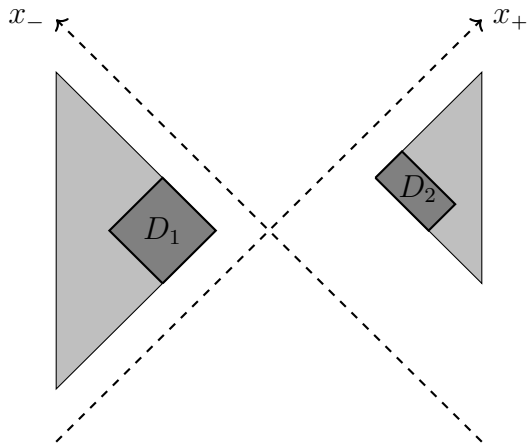
The restricted locality encodes that there is somehow a division of the system into a left and a right part and this division is caused by a defect. The defect is topological (so does not have a precise location) and furthermore invisible from the viewpoint of the underlying Virasoro net $\{\cA(D)\}$ which remains local in the unrestricted sense.

It turns out that the classification of defects, i.e. finite-index extensions of the Virasoro net subject to the above restricted locality property is achieved by the ``braided product''\footnote{Alternatively, one could take $Z[X] \times^+ Z[X]$.}. First, we fix a diamond $D_0 = I \times I$, where $I$ is some open interval, and we set $\cN = \cV_c(I)$, with $\cV_c$ the Virasoro net, so that $\cA(D_0)$ can 
be identified with $\cN \otimes \cN^{\rm opp}$, see sec. \ref{sec:Qsys}. In $\cN \otimes \cN^{\rm opp}$, we have the full center Q-system, $Z[X]:=(\theta_{\rm R},x_{\rm R},w_{\rm R})$. Then we take the, say $+$, braided product
$Z[X] \times^+ Z[X]$ of two full center Q-systems, see sec. \ref{sec:Qsys}. The braided product is a new (commutative) Q-system and thus corresponds to an extension $\cD_{12}^+$ of $\cN \otimes \cN^{\rm opp}$. Additionally, we have the intermediate extensions $\cD_1^+$ and $\cD_2^+$, see sec. \ref{sec:Qsys}. 
By construction, $\cN \otimes \cN^{\rm opp}$ is (isomorphic to) $\cA(D_0)$ for a fixed diamond $D_0$ and we set $\cD_l(D_0):=\cD_1^+$ and $\cD_r(D_0):=\cD_2^+$, 
as well as $\cD(D_0):= \cD^+_{12}$. This produces only a single pair of inclusions of von Neumann algebras, $\cA(D_0) \subset \cD_{l,r}(D_0) \subset \cD(D_0)$, 
and not a net of inclusions. But this can again be constructed since the endomorphisms of ${}_\cN X_\cN$ are transportable. An irreducible defect then corresponds to a central projection in the von Neumann algebra $\cD_{12}^+$; more precisely, that part of the common Hilbert space $\sH \otimes \sH$ on which all the algebras $\cD_{l,r}(D), \cA(D)$ are acting which is the range of the central 
projection, corresponds to quantum states with a specific ``irreducible'' topological defect.

A full characterization of the center $\cZ(\cD):=\cD_{12}^+ \cap (\cD_{12}^+)'$ has been given in \cite{bischoff2016phase,bischoff2015tensor}; in particular 
these authors have given a presentation of this center as an algebra in terms of generators and relations which we use below. 
These generators, $B_{\lambda,\mu,w_1,w_2}$,
are labelled by a pair of simple objects $\lambda, \mu \in \, _\cN X_\cN$
and a pair $w_i \in \Hom(\alpha^-_\mu, \alpha^+_\lambda)$. Thus, the dimension of $\cZ(\cD)$ is 
$\sum_{\mu,\lambda \in \, _\cN X_\cN} Z_{\mu,\lambda}^2$ since $\dim \Hom(\alpha^-_\mu, \alpha^+_\lambda) =
Z_{\mu,\lambda}$, and, because this is also equal to the number of simple objects in $\, _\cM X_\cM$ \cite{bockenhauer1999alpha}, 
it suggests that the minimal projections in 
$\cZ(\cD)$ (i.e. irreducible defects) are constructible from the simple
objects $A \in \, {}_\cM X_\cM$. That construction has been given in 
\cite{bischoff2016phase,bischoff2015tensor} (thm. 5.5 resp. thm. 4.44), building partly on work by \cite{frohlich2006correspondences}. 

The details of that construction
are broadly as follows. Let $A \in \, _\cM X_\cM$, so $\beta:=\bar \i B \i
\in \,
{}_\cN X_\cN$ can be defined which defines $m \in \Hom(\theta \beta \theta, \beta)$ as $m:= 1_{\bar \i} \times \bar r \times 1_B \times \bar r \times 1_\i$, where $\bar r \in \Hom(\i\bar \i,id)$ is a solution to the conjugacy relations and where $\theta = \bar \i \i$ is the canonical endomorphism 
associated with $\cN \subset \cM$. Then one can define
\ben
p_B := r_\beta^* \bar \beta\left(
\epsilon^+(\theta,\beta) \theta\beta(r_B^*)m
\right) r_\beta
\een
which is a projection in $\Hom(\theta,\theta)$ and the $p_B$'s realize a copy of the $_\cM X_\cM$-fusion ring. As is shown in more detail in 
\cite{bischoff2016phase,bischoff2015tensor} (thm. 5.5 resp. thm. 4.44), one can obtain from 
the $p_B$ the minimal projections in $\cZ(\cD)$ by a canonical procedure based on the observation that $\cZ(\cD) \cong \Hom(\Theta_{\rm R}, \Theta_{\rm R})$. 

These minimal projections are not actually used in the present paper but we will show that the for an anyonic spin chain based on the non-degenerately braided fusion category ${}_\cN X_\cN$ associated with an inclusion of von Neumann factors $\cN \subset \cM$, we can construct matrix product operators on a bi-partite anyonic spin chain obeying the very same algebraic relations as the generators $B_{\lambda,\mu,w_1,w_2}$ of 
$\cZ(\cD)$. The corresponding projections are then also MPOs, and they determine an orthogonal decomposition of the Hilbert space of the bipartite spin chain. Each summand is invariant under the local operators built from the energy densities, and based on these analogies, we consider the summands, labelled by the simple objects $A \in \, _\cM X_\cM$, as corresponding to a transparent defect at the level of the chain. 

\section{MPOs}

\subsection{$6j$-symbols and MPOs}

This subsection follows \cite{bultinck2017anyons} and \cite{kawahigashi2021projector, kawahigashi2020remark}, which is translated to sector language. 
We consider sets of endomorphism ${}_\cN X_\cN, {}_\cN X_\cM,  {}_\cM X_\cN, {}_\cM X_\cM$ as in sec. \ref{relative}.

Let $A, B, M_{i} \in {}_\cM X_\cM$, $a,b, a_{i} \in {}_\cM X_\cN$. 
Similar to our discussion of relative commutants, 
we first decompose $M_0 \i$ into 
irreducibles $a_1$ using an ONB of intertwiners $\{t_1\} \subset \Hom_\cM(M_0\i,a_1)$. 
Next we multiply by ${\overline{\imath}}$ from the right, and similarly consider an ONB of intertwiners $\{t_2\} \subset \Hom_\cM(a_1 {\overline{\imath}}, M_2)$, 
after which we multiply by $\i$ from the right, and consider an ONB of intertwiners $\{t_3\} \subset \Hom_\cM(M_2 {\overline{\imath}},a_3)$, and so on until $\{t_{2k}\} \subset \Hom_\cM(a_{2k-1} {\overline{\imath}},M_{2k})$, see fig. \ref{fig:18}. We denote the space of such sequences of isometric intertwiners $(t_1, t_2, \dots, t_{2k})$ 
with the property $M_0 = A, M_{2k} = B$ by ${\rm Path}^{2k}_{A,B}$. 
We also set 
\ben
\label{eq:clopa}
{\rm Path}^{2k}_{c} := \bigcup_{A \in \, _\cM X_\cM} {\rm Path}^{2k}_{A,A}
\een 
(here ``$c$'' indicates that we have ``closed'' paths in a sense). 
In a similar way, we may start by decomposing $a_0 \bar \imath$ 
into irreducible objects $M_1$ with an ONB $t_1 \subset \Hom_\cM(a_0 \bar \imath, M_1)$, then we decompose $M_1 \imath$ into irreducible objects 
$a_2$ with an ONB $t_1 \subset \Hom_\cM(M_1 \imath, a_2)$, and so on until $\{t_{2k}\} \subset \Hom_\cM(M_{2k-1} \imath,a_{2k})$. We denote the space of such sequences of isometric intertwiners $(t_1, t_2, \dots, t_{2k})$ 
with the property $a_0 = a, a_{2k} = b$ by ${\rm Path}^{2k}_{a,b}$, 
and we again define the set of closed paths as in \eqref{eq:clopa}.

Fixing $k$ and $L=2k$, we consider the following matrices labelled by closed paths $P=(t_1, t_2, \dots, t_{L}), P'=(t_1', t_2', \dots, t_{L}') \in {\rm Path}^{L}_c$ and $A \in {}_\cM X_\cM$:
{\footnotesize
\ben\label{eq:Odef}
\begin{split}
& \hspace{3cm} (O^L_A)^P_{P'} := \\
& 
\sum_{s_1,\dots,s_{2L}} 
\overline U_A
\begin{pmatrix}
& t_1 & \\
s_1 & & s_2 \\
& t_1' &
\end{pmatrix}
 U_A
\begin{pmatrix}
& t_2 & \\
s_2 & & s_3 \\
& t_2' &
\end{pmatrix}
\cdots 
\overline U_A
\begin{pmatrix}
& t_{L-1} & \\
s_{2L-1} & & s_{2L} \\
& t_{L-1}' &
\end{pmatrix}
U_A
\begin{pmatrix}
& t_{L} & \\
s_{2L} & & s_1 \\
& t_{L}' &
\end{pmatrix}.
\end{split}
\een
}
Here, the $s_i, i=1, \dots, L/2$ run through ONBs of intertwiners with suitable source and target space. 
These matrices are identified with linear operators on 
a finite-dimensional vector space $\sV^L$:

\begin{definition} (see fig. \ref{fig:18})
$\sV^L$ is the vector space whose basis elements $|t_1, \dots, t_{L}\rangle =: |P\rangle$ are labelled by $P \in {\rm Path}^{L}_{c}$. $\sV^L_{\rm open}$ is the is the vector space whose basis elements are labelled by paths $P \in {\rm Path}^{L}_{A,B}$ with $A,B \in \, _\cM X_\cM$ or by 
paths $P \in {\rm Path}^{L}_{a,b}$ with $a,b \in \, _\cM X_\cN$, depending on the context.
\end{definition}

Thus, we have for example
\ben
O^L_A |P\rangle := \sum_{P'} (O^L_A)^P_{P'} |P'\rangle,
\een
where $P,P' \in {\rm Path}^{L}_{c}$.
We think of the states $|t_1, \dots, t_{L}\rangle =: |P\rangle 
\in \sV^L_{\rm open}$ as describing the configurations of a 
spin-chain of length $L$, with the intertwiner $t_x$ corresponding the ``spin'' at ``site'' $x$. States in the subspace $\sV^L$ have ``periodic
boundary conditions'', hence describe states on closed chain, see fig. \ref{fig:18}. The MPOs
$O_A$ thus naturally act on $\sV^L$ as indicated in fig. \ref{fig:MPO2}, 
but can be extended to $\sV^L_{\rm open}$ simply by defining their matrix elements to be zero for any non-closed path. 
\begin{figure}[h!]
\centering
\begin{tikzpicture}[scale=.6]
\draw (-7.5,0) node[anchor=east]{$O_A^{L} = $};

\draw[thick]  (0.16,0) arc (0:359:3.16);

\draw[thick] (-1,1) -- (1,1);
\filldraw[color=black, fill=white, thick](0,1) circle (.6);
\draw (0,1) node[rotate=90]{$U_A$};

\draw[thick] (-1,-1) -- (1,-1);
\filldraw[color=black, fill=white, thick](0,-1) circle (.6);
\draw (0,-1) node[rotate=90]{$\bar U_A$};

\draw[thick] (-2,-4) -- (-2,-2);
\filldraw[color=black, fill=white, thick](-2,-3) circle (.6);
\draw (-2,-3) node[rotate=0]{$U_A$};

\draw[thick] (-4,-4) -- (-4,-2);
\filldraw[color=black, fill=white, thick](-4,-3) circle (.6);
\draw (-4,-3) node[rotate=0]{$\bar U_A$};

\draw[thick] (-7,-1) -- (-5,-1);
\filldraw[color=black, fill=white, thick](-6,-1) circle (.6);
\draw (-6,-1) node[rotate=270]{$U_A$};

\draw[thick] (-7,1) -- (-5,1);
\filldraw[color=black, fill=white, thick](-6,1) circle (.6);
\draw (-6,1) node[rotate=270]{$\bar U_A$};

\draw[thick] (-4,4) -- (-4,2);
\filldraw[color=black, fill=white, thick](-4,3) circle (.6);
\draw (-4,3) node[rotate=180]{$U_A$};

\draw[thick] (-2,4) -- (-2,2);
\filldraw[color=black, fill=white, thick](-2,3) circle (.6);
\draw (-2,3) node[rotate=180]{$\bar U_A$};
\end{tikzpicture}
  \caption{\label{fig:MPO2} Schematic diagram for the MPO $O_A$ representing a map from the outside legs to the inside legs. Here $L=8$.}
\end{figure}
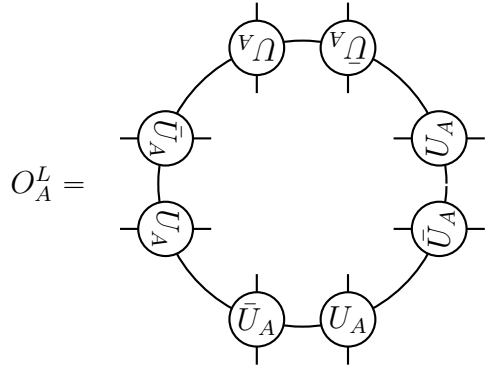

Note that if $P \in {\rm Path}^{L}_{A,B}$ for some $A,B \in \, _\cM X_\cM$, the formula \eqref{tP} gives an intertwiner 
$t_P \in \Hom_\cM(A(\i {\overline{\imath}})^k, B)$. Hence, for $P,P' \in {\rm Path}^{L}_{A,B}$, the element $t_P^* t_{P'}$ is 
a scalar which we define as the inner product $t_P^* t_{P'} = \langle P,P' \rangle 1$ on $\sV^L_{\rm open}$ (or on its subspace $\sV^L$), turning it into a Hilbert space.
The adjoint of a linear operator with respect to this inner product is denoted by a $\dagger$. In view of $\langle P | P' \rangle = \delta_{P,P'}$
using the ONB properties of the intertwiners, we may simply state this as 
\ben
(A^\dagger)^P_{P'} = \overline{A^{P'}_{P}}.
\een
The main properties of these matrix product operators (MPOs)
are:
\begin{theorem}\label{thm1}
The MPOs satisfy:
\begin{enumerate}
\item (Identity) $O^L_0 = I^L$, where ``$0$'' indicates the identity endomorphism and $I^L$ is the identity on $\sV^L$.
\item (Fusion) For $A,B \in {}_\cM X_\cM$ and any $k \ge 1, L=2k$:
\ben
\label{Ofusion}
O^L_A O^L_B = \sum_{C \in {}_\cM X_\cM } N_{A,B}^C O^L_C.
\een
\item (Conjugate) For $A \in {}_\cM X_\cM$ and any $k \ge 1, L=2k$:
\ben
(O^L_A)^\dagger = O^L_{\bar A}. 
\een
\item (Projector) Let $D_X$ be the global dimension of $X$. Then for any $k \ge 1, L=2k$, the matrix 
\ben
P^L := \sum_{A \in _\cM X_\cM} \frac{d_A}{D_X} O^L_A
\een
is an orthogonal projector, $P^L = (P^L)^\dagger, (P^L)^2 = P^L$.
\end{enumerate}
\end{theorem}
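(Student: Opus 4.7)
\medskip
\noindent
\textbf{Proof proposal.} The four items will be addressed in the order stated, with items (1),(3),(4) being essentially corollaries of (2) together with the previously established properties of $6j$-symbols in Lemma \ref{lem:6j}.

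For (1), I would substitute $A=0=id$ directly into the definition \eqref{eq:Odef}. By Lemma \ref{lem:6j}(Unit), each factor $U_0(\cdots)$ and $\bar U_0(\cdots)$ forces the two ``horizontal'' intertwiners $s_i,s_{i+1}$ to equal $1$ (and in particular forces adjacent simple objects along the horizontal rim of the MPO to coincide), and enforces $t_x = t_x'$. After the resulting collapses, the sum over $s_1,\dots,s_{2L}$ leaves exactly $\prod_x \delta_{t_x,t_x'} = \delta_{P,P'}$, which is the matrix element of $I^L$.

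The main work is (2), the fusion identity. This is where I would adapt the ``zipper lemma'' of \cite{bultinck2017anyons}, which at the algebraic level is a repeated application of the pentagon identity for $6j$-symbols. Concretely, in the matrix element $(O_A^L O_B^L)^P_{P'} = \sum_{P''} (O_A^L)^P_{P''} (O_B^L)^{P''}_{P'}$, the intermediate path $P''$ couples an $A$-labelled ring and a $B$-labelled ring through a common row of intertwiners $t_x''$. At each site $x$, two adjacent $6j$-symbols (one for $A$, one for $B$) sharing the intertwiner $t_x''$ can be re-coupled: the sum over $t_x''$ together with one of the horizontal sums collapses them into a single $6j$-symbol for the composite endomorphism $AB$, with a new horizontal intertwiner living in $\Hom(AB\,a_x, a_x')$ or $\Hom(AB\,M_x, M_x')$. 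This is precisely the pentagon/unitarity relation expressing associativity of the triple composition $A\otimes B\otimes (\imath\bar\imath\cdots)$. After performing this zipping at every site $x=1,\dots,L$, the left-hand side becomes an MPO of the same form as \eqref{eq:Odef} but built from $6j$-symbols labelled by $AB$ instead of $A$. Finally, choosing an ONB of isometric intertwiners $v \in \Hom(AB, C)$ for each simple $C \in {}_\cM X_\cM$, the orthogonality and completeness relations $v^*v' = \delta_{v,v'}$ and $\sum_{C,v} vv^* = 1_{AB}$ split the horizontal sums into $\sum_C N_{A,B}^C$ independent copies of the MPO $O_C^L$, proving \eqref{Ofusion}. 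The main obstacle here is purely diagrammatic bookkeeping: keeping track of which horizontal sums belong to the zipping step versus the $(A,B)\to C$ decomposition, and verifying that the alternating pattern of $U$ and $\bar U$ around the ring survives the rearrangement. This is most cleanly carried out in graphical calculus.

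For (3), I would apply Lemma \ref{lem:6j}(Conjugate): each $U_A$ is sent to $\overline{U_{\bar A}}$ (with Frobenius-dual intertwiners) up to a factor $[d(a_1)d(M_2)/d(a_2)d(M_1)]^{1/2}$, and similarly for $\bar U_A$ with the reciprocal factor. Because the matrix element $(O_A^L)^P_{P'}$ is a product alternating around a \emph{closed} ring and the dimension factors associated with $U$ and $\bar U$ are reciprocal, they telescope to $1$. The Frobenius duals of the $s_i$'s then run precisely over an ONB of the dual intertwiner spaces, so after relabelling, the complex conjugate matrix element $\overline{(O_A^L)^P_{P'}}$ becomes $(O_{\bar A}^L)^{P'}_P$, which is the definition of $(O_A^L)^\dagger$.

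For (4), both properties follow from (1)--(3). Self-adjointness: $(P^L)^\dagger = \sum_A (d_A/D_X) O_{\bar A}^L = P^L$ upon relabelling $A \mapsto \bar A$ and using $d_A = d_{\bar A}$. Idempotency: expand
\[
(P^L)^2 = \frac{1}{D_X^2} \sum_{A,B,C} d_A d_B\, N_{A,B}^C\, O_C^L
\]
using (2), and apply the standard identities $\sum_B N_{A,B}^C d_B = d_A d_C$ (which follows from Frobenius reciprocity $N_{A,B}^C = N_{\bar A, C}^B$ and the fusion rule for dimensions) and $\sum_A d_A^2 = D_X$ to reduce the coefficient of $O_C^L$ to $d_C/D_X$, giving $P^L$.
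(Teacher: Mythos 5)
The proposal is correct and follows essentially the same route as the paper: item (2) is the zipper-lemma/pentagon argument with the final multiplicity $N_{A,B}^C$ coming from the ONB of $\Hom(AB,C)$, item (3) uses Lemma \ref{lem:6j}(Conjugate) with the dimension prefactors telescoping around the closed ring, and item (4) is the same Frobenius-reciprocity computation. The only cosmetic difference is that you zip site-by-site into $AB$-labelled $6j$-symbols before decomposing into simples, whereas the paper inserts a zipper-tensor resolution of identity at one point and transports it around the ring; these are the same calculation organized differently.
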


\begin{proof}
(Fusion) Let  $A,B, C,M_{i}, M_i' \in {}_\cM X_\cM, a_{i} \in {}_\cM X_\cN$ and consider ONBs of isometric intertwiners
{\footnotesize
\ben
\begin{split}\label{eq:ONB1}
t_0 \in & \Hom_\cM(C,B A)\\
t_1 \in & \Hom_\cM(M_1,a_1 {\overline{\imath}}),\\
t_2 \in & \Hom_\cM(M_2,A M_1),\\
t_3 \in & \Hom_\cM(a_2,A a_1),\\
t_4 \in & \Hom_\cM(M_2,a_2 {\overline{\imath}}),\\
t_5 \in & \Hom_\cM(a_3,B a_2),\\
t_6 \in & \Hom_\cM(M_2,a_3 {\overline{\imath}}),\\
t_7 \in & \Hom_\cM(M_3,B M_2),\\
t_8 \in & \Hom_\cM(M_3,C M_1),\\
t_1' \in & \Hom_\cM(M_3',a_1 {\overline{\imath}}),\\
t_3' \in & \Hom_\cM(M_1',A M_3'),\\
t_4' \in & \Hom_\cM(M_2',a_2 {\overline{\imath}}),\\
t_5' \in & \Hom_\cM(M_2',B M_1'),\\
t_6' \in & \Hom_\cM(M_2',a_3 {\overline{\imath}}),\\
t_7' \in & \Hom_\cM(a_3,C a_1),\\
t_8' \in & \Hom_\cM(a_3,C a_1),\\
t_8'' \in & \Hom_\cM(M_2',C M_3').
\end{split}
\een
}
To prove the fusion relation, we need the following ``zipper lemma'', which is one of the main graphical observations of \cite{bultinck2017anyons}
and algebraically a reinterpretation of the pentagon identity for $6j$-symbols which by definition holds in any unitary fusion category. 
The ``zipper tensors'' are defined  as 
\ben
Y_{A,B}^C
\begin{pmatrix}
& t_2 & \\
t_0; & & t_8 \\
& t_7 &
\end{pmatrix}
:=
t_7^*(1_B \times t_2^*)(t_0 \times 1_{M_1})t_8
, \quad
\overline Y_{A,B}^C
\begin{pmatrix}
& t_3 & \\
t_0; & & t_8' \\
& t_5 &
\end{pmatrix}
:=
t_5^*(1_B \times t_3^*)(t_0 \times 1_{a_1})t_8'
\een
and are course also $6j$-symbols (note that the arguments of the 
unbarred and barred zipper tensor are from different intertwiner spaces).
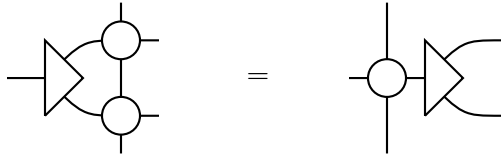
\begin{figure}[h!]
\centering
\begin{tikzpicture}[scale=.5]
\draw[thick] (-2,0) -- (0,0);
\draw[thick] (1,-2) -- (1,2);
\draw[thick] (2,1) -- (1,1);
\draw[thick] (2,-1) -- (1,-1);
\draw[thick] (-1,0) .. controls (0,1)  .. (1,1);
\draw[thick] (-1,0) .. controls (0,-1)  .. (1,-1);
\filldraw[color=black, fill=white, thick](-1,-1) -- (-1,1) -- (0,0) -- (-1,-1) ;
\filldraw[color=black, fill=white, thick](1,1) circle (.5);
\filldraw[color=black, fill=white, thick](1,-1) circle (.5);
\draw[thick] (7,0) -- (10,0);
\draw[thick] (8,2) -- (8,-2);
\draw[thick] (9,0) .. controls (10,1)  .. (11,1);
\draw[thick] (9,0) .. controls (10,-1)  .. (11,-1);
\filldraw[color=black, fill=white, thick](9,-1) -- (9,1) -- (10,0) -- (9,-1) ;
\filldraw[color=black, fill=white, thick](8,0) circle (.5);
\draw (4,0) node[anchor=west]{$=$};
\end{tikzpicture}
  \caption{\label{fig:zipper} Schematic diagram for the zipper lemma. Circles represent $6j$-symbols $U$ or $\overline U$, whereas the triangle 
  represents the tensors $\overline Y$ or $Y$.}
\end{figure}
\begin{lemma} (Zipper lemma, see fig. \ref{fig:zipper}.)
With the intertwiners from the spaces above, we have
\ben
\begin{split}
& \sum_{t_2,t_4,t_7}
U_A
\begin{pmatrix}
& t_{1} & \\
t_3 & & t_2 \\
& t_4 &
\end{pmatrix}
U_B
\begin{pmatrix}
& t_{4} & \\
t_5 & & t_7 \\
& t_6 &
\end{pmatrix}
Y_{A,B}^C
\begin{pmatrix}
& t_2 & \\
t_0; & & t_8 \\
& t_7 &
\end{pmatrix} \\
&\hspace{1cm} = \ \ 
\sum_{t_8'}
\overline Y_{A,B}^C
\begin{pmatrix}
& t_3 & \\
t_0; & & t_8' \\
& t_5 &
\end{pmatrix}
U_C
\begin{pmatrix}
& t_{1} & \\
t_8' & & t_8 \\
& t_6 &
\end{pmatrix}
\end{split}
\een
and
\ben
\begin{split}
& \sum_{t_4',t_3,t_5}
\overline U_A
\begin{pmatrix}
& t_{1}' & \\
t_3' & & t_3 \\
& t_4' &
\end{pmatrix}
\overline U_B
\begin{pmatrix}
& t_{4}' & \\
t_5' & & t_5 \\
& t_6' &
\end{pmatrix}
\overline Y_{A,B}^C
\begin{pmatrix}
& t_3 & \\
t_0; & & t_8' \\
& t_5 &
\end{pmatrix} \\
& \hspace{1cm} = \ \ 
\sum_{t_8''}
Y_{A,B}^C
\begin{pmatrix}
& t_3' & \\
t_0; & & t_8'' \\
& t_5' &
\end{pmatrix}
\overline U_C
\begin{pmatrix}
& t_{1}' & \\
t_8'' & & t_8' \\
& t_6' &
\end{pmatrix}
\end{split}
\een
where the sums are over an ONB of the respective space.
\end{lemma}
\begin{proof}
The zipper lemma is a version of the pentagon relation. However, we go through the details to show how this works explicitly once, 
because we will repeat similar manipulations without showing the complete calculations in several places later on.
We prove the first relation and begin with, noting that the term in $[\dots]$ already is a scalar, 
\ben
\begin{split}
\sum_{t_1,t_2,t_4,t_7}
& t_6^*(t_5^* \times 1_{\overline{\imath}})(1_B \times t_4)t_7t^*_7 B[t_4^* (t_3^* \times 1_{\overline{\imath}})(1_A \times t_1)t_2] B(t_2^* (1_A \times t_1^*)) \\
=& \ 
t_6^*(t_5^* \times 1_{\overline{\imath}})(1_B \times t_3^* \times 1_{\overline{\imath}})
\end{split}
\een
which follows from the ONB properties of the intertwiners and which can be rewritten as 
\ben
\sum_{t_1, t_2,t_4,t_7}
U_A
\begin{pmatrix}
& t_{1} & \\
t_3 & & t_2 \\
& t_4 &
\end{pmatrix}
U_B
\begin{pmatrix}
& t_{4} & \\
t_5 & & t_7 \\
& t_6 &
\end{pmatrix}
t_7^*(1_B \times t_2^*)(1_B \times 1_A \times t_1^*)
=
t_6^*(t_5^* \times 1_{\overline{\imath}})(1_B \times t_3^* \times 1_{\overline{\imath}}).
\een
Now we multiply with $t_0$ from the right on both sides. On the left, we 
use:
\ben
t_7^*(1_B \times t_2^*)(1_B \times 1_A \times t_1^*)(t_0 \times 1_1 \times 1_{\overline{\imath}}) =
\sum_{t_8} Y_{A,B}^C
\begin{pmatrix}
& t_2 & \\
t_0; & & t_8 \\
& t_7 &
\end{pmatrix}
t_8^* (1_C \times t_1^*).
\een
On the right side we use:
\ben
t_6^*(t_5^* \times 1_{\overline{\imath}})(1_B \times t_3^* \times 1_{\overline{\imath}})(t_0 \times 1_1 \times 1_{\overline{\imath}}) =
\sum_{t_8'} Y_{A,B}^C
\begin{pmatrix}
& t_5 & \\
t_0; & & t_8' \\
& t_3 &
\end{pmatrix}
t_6^* (t_8'{}^* \times 1_{\overline{\imath}}).
\een
On the term on the right we next use:
\ben
t_6^* (t_8'{}^* \times 1_{\overline{\imath}}) = \sum_{t_1,t_8}
U_C
\begin{pmatrix}
& t_{1} & \\
t_8' & & t_8 \\
& t_6 &
\end{pmatrix}
t_8^* (1_C \times t_1^*).
\een
Putting these identities together yields the result. The second identity is proven in a similar way.
\end{proof}

To finish the proof of (fusion), we consider the following orthogonality relations
\ben
\sum_C \sum_{t_8,t_0}
Y_{A,B}^C
\begin{pmatrix}
& t_2 & \\
t_0; & & t_8 \\
& t_7 &
\end{pmatrix}
\overline{
Y_{A,B}^C
\begin{pmatrix}
& t_2' & \\
t_0; & & t_8 \\
& t_7' &
\end{pmatrix}
} = \delta_{t_2,t_2'} \delta_{t_7,t_7'},
\een
and
\ben
\sum_{t_2,t_7}
Y_{A,B}^C
\begin{pmatrix}
& t_2 & \\
t_0; & & t_8 \\
& t_7 &
\end{pmatrix}
\overline{
Y_{A,B}^C
\begin{pmatrix}
& t_2 & \\
t_0; & & t_8' \\
& t_7 &
\end{pmatrix}
} = \delta_{t_8,t_8'},
\een
which follow from the ONB properties of the intertwiners. Similar relations hold for the barred zipper tensor. 
Then we look at a matrix element of the product $O^L_A O^L_B$, 
and insert the first orthogonality relation for the zipper tensor. The matrix element is 
labelled by closed paths $P=(t_1, t_2, \dots, t_{L}), P''=(t_1'', t_2', \dots, t_{L}'') \in {\rm Path}^{L}_c$
{\footnotesize
\ben
\begin{split}
(O^L_A O_B^L)^P_{P''} = & \  
\sum_{s_i, s_j', t_k'}  \sum_C \  \sum_{u,t_0} \ 
Y_{A,B}^C
\begin{pmatrix}
& s_1 & \\
t_0; & & u \\
& s_1' &
\end{pmatrix}
\overline{
Y_{A,B}^C
\begin{pmatrix}
& s_{L} & \\
t_0; & & u \\
& s'_{L} &
\end{pmatrix}
}
\\
& \cdot \ 
\prod_{i=0}^{L/2}
\overline U_A
\begin{pmatrix}
& t_{2i-1} & \\
s_{2i-1}  & & s_{2i} \\
& t_{2i-1}' &
\end{pmatrix}
\overline U_B
\begin{pmatrix}
& t_{2i-1}' & \\
s_{2i-1}' & & s_{2i}' \\
& t_{2i-1}'' &
\end{pmatrix} \\
& \cdot \
\prod_{j=0}^{k}
U_A
\begin{pmatrix}
& t_{2j} \ \ & \\
s_{2j} \ \ & & s_{2j+1} \\
& t_{2j}' \ \  &
\end{pmatrix}
U_B
\begin{pmatrix}
& t_{2j}' \ \ & \\
s_{2j}' \ \ & & \ s_{2j+1}' \\
& t_{2j}'' \ \ &
\end{pmatrix}
\end{split}
\een
}
On this expression, we apply the zipper lemma $L$ times, and
then we use the second orthogonality relation of the zipper tensor, giving an expression which no longer 
depends on $t_0$. 
The sum over $t_0 \in \Hom_\cM(AB,C)$ then yields the prefactor 
$N_{A,B}^C = \dim_\CC \Hom_\cM(AB,C)$.

(Conjugate): This follows from the property (conjugate) of thm. \ref{thm1}. Note that we must use the precise form of the quartic root prefactors.

(Projector): $P^L = (P^L)^\dagger$ follows from (conjugate). We have 
\ben
(P^L)^2 = \sum_{A,B,C \in _\cM X_\cM} N_{A,B}^C \frac{d_A d_B}{D_X^2} O_C^L
\een
using (fusion). Frobenius reciprocity gives $N_{A,B}^C = N_{\bar A,C}^B$ and hence
$\sum_B N_{A,B}^C d_B = d_{\bar A} d_C = d_A d_C$. So
$\sum_{A,B} N_{A,B}^C d_B  d_A= D_X d_C$ and $(P^L)^2 = P^L$ follows.
\end{proof}

\subsection{MPOs and double triangle algebra}

In this subsection, we define an extended class of MPOs which giving a representation of the double triangle algebra and of the 
fusion rings $_\cM X_\cM$ {\it and} $_\cN X_\cN$, and not just of $_\cM X_\cM$ as in the previous subsection. 
As in that subsection, we consider the finite sets $_\cN X_\cM, _\cM X_\cN, _\cM X_\cM, _\cN X_\cN$ 
of endomorphisms 
with the properties listed in sec. \ref{relative}, and paths 
$P = (t_1, \dots, t_L) \in {\rm Path}_{a',a}$ and 
$P' = (t_1', \dots, t_L') \in {\rm Path}_{b',b}.$

Fixing an even number $L$, we consider the linear operators on $\sV^L_{\rm open}$
defined by the following matrix elements, see fig. \ref{fig:6}:
\begin{figure}[h!]
\begin{center}
\hspace*{-1cm}
  \includegraphics[width=1.2\textwidth,]{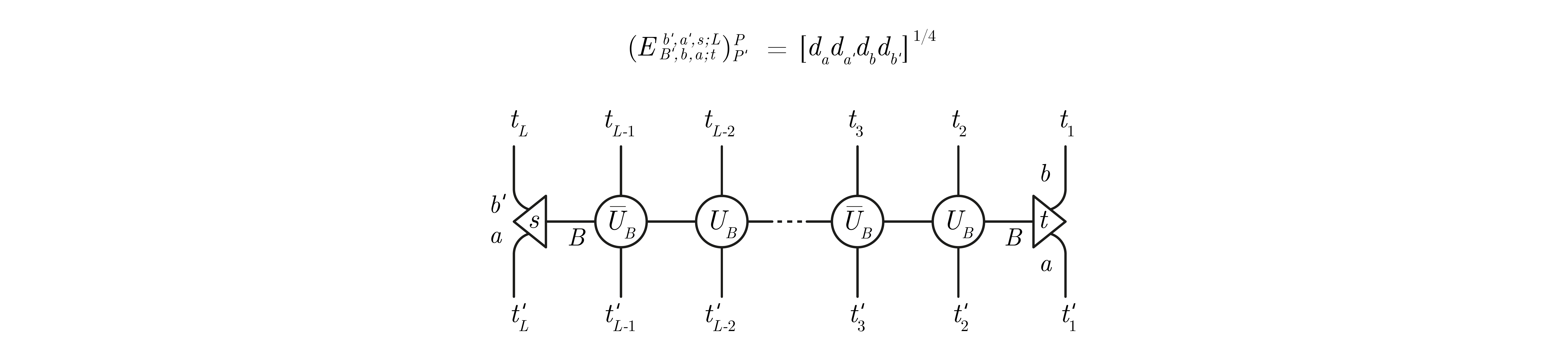}
\end{center}
  \caption{\label{fig:6} The matrix element $\langle t_L, \dots, t_1 | E^{b',a',s;L}_{B;b,a,t} |t_L', \dots, t_1' \rangle$ of the symmetry operator (MPO). Here `$P$'
  indicates the ``path'' of intertwiners $t_L, \dots, t_1$.}
\end{figure}
{\footnotesize
\ben
\begin{split}
& (E^{b',a',s;L}_{B;b,a,t})^P_{P'} := [d(a) d(b) d(a') d(b') ]^{1/4} 
\sum_{s_1, \dots, s_{L+1}}  \cdot \\
& \cdot \ 
Z_{a',\bar b'}^B
\begin{pmatrix}
& t_1& \\
s; & & s_1 \\
& t_1' &
\end{pmatrix}
\prod_{j=1}^k \left\{
\overline U_B
\begin{pmatrix}
& t_{2j-1} & \\
s_{2j-1} & & s_{2j} \\
& t_{2j-1}' &
\end{pmatrix}
 U_B
\begin{pmatrix}
& t_{2j} & \\
s_{2j} & & s_{2j+1} \\
& t_{2j}' &
\end{pmatrix}
\right\}
Z_{a,\bar b}^B
\begin{pmatrix}
&  & t_L\\
t; & s_{L+1} & \\
&  & t_L'
\end{pmatrix}
.\non
\end{split}
\een
}
Here the labels $E^{b',a',s;L}_{B;b,a,t}$ correspond to those on $e^{b',a',s}_{B;b,a,t}$ in the 
definition of the double triangle algebra, and the matrix 
elements are set to zero for all other paths whose 
beginning and endpoints do not match with $a,a',b,b'$ as above. 
$Z_{a,\bar b}^B$ is another type of $6j$-symbol defined 
as follows (see fig. \ref{fig:5}).

\begin{figure}[h!]
\begin{center}
\hspace*{-2cm}
  \includegraphics[width=1.3\textwidth,]{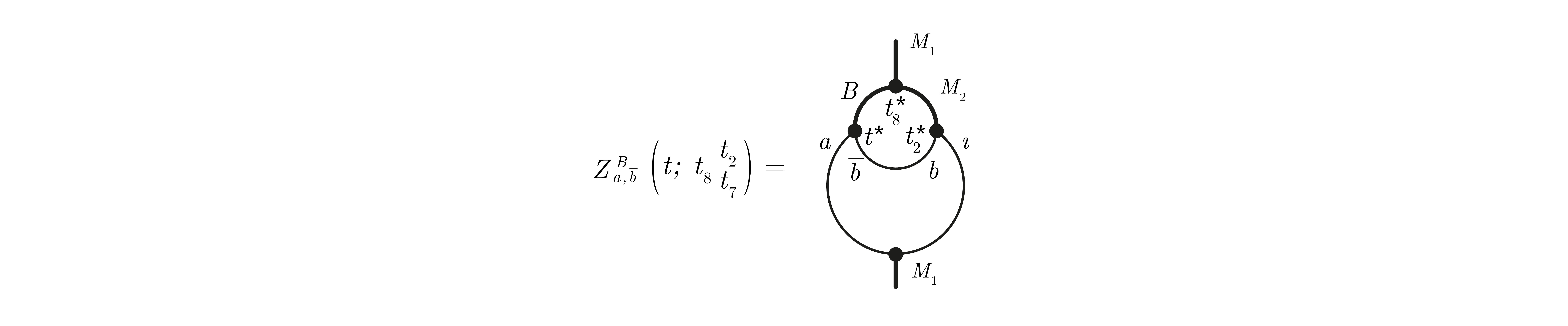}
 \end{center}
  \caption{\label{fig:5} The $6j$-symbol $Z_{a,\bar b}^B$.}
\end{figure}

$A_1, A_2, B \in \ _\cM X_\cM,  a,b,c \in \ _\cM X_\cN$ and 
$u_1 \in \Hom(a\overline{\imath},A_1), u_2 \in \Hom(B A_2,A_1), 
u_3 \in \Hom(b\overline{\imath},A_2), t \in \Hom(a\bar b,B)$. Then we define 
\ben
Z_{a,\bar b}^B
\begin{pmatrix}
&  & u_3\\
t; & u_2 & \\
&  & u_1
\end{pmatrix}
=
u_2^*
(t^* \times u_3^*)
(1_a \times r_b \times 1_{\bar \imath})
u_1
\een
which is up to a prefactor equal to the complex conjugate of the $6j$-symbol $U_B\begin{pmatrix}
   & u_3 & \\
t &  & u_2\\
& u_1 & 
\end{pmatrix}
$
as can be seen from Frobenius duality. The corresponding $Z$-type $6j$ symbol with the rearranged 
pattern of the indices is its complex conjugate (with the corresponding 
substitution of the intertwiners). 
The sum runs over a complete ONB of intertwiners in Hom-spaces 
compatible with the source and target spaces of $t_i, t_i'$ 
implicit in the definitions of the path spaces.
Our main observation in this section is the following:

\begin{theorem}
\label{thm:2}
For any even $L=2k$, the map $\pi^L: \lozenge \to \End(\sV^L_{\rm open}), e^{b',a',s}_{B;b,a,t} \mapsto E^{b',a',s;L}_{B;b,a,t}$ is a representation 
of $(\lozenge, \star)$. 
\end{theorem}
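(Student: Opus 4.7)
The goal is to verify, for any two basis elements $e^{b',a',s}_{B;b,a,t}$ and $e^{d',c',s'}_{B';d,c,t'}$ of $\lozenge$, the identity
\[
\pi^L(e^{b',a',s}_{B;b,a,t})\,\pi^L(e^{d',c',s'}_{B';d,c,t'}) \;=\; \pi^L\!\left(e^{b',a',s}_{B;b,a,t}\star e^{d',c',s'}_{B';d,c,t'}\right).
\]
The overall strategy is a direct adaptation of the argument in thm.~\ref{thm1}: compute a generic matrix element of the left-hand side on an ONB of $\sV^L_{\rm open}$, and then collapse the resulting double row of $6j$-symbols to a single row using the zipper lemma and the ONB orthogonality relations for the zipper tensors.

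First I would expand $(\pi^L(e^{b',a',s}_{B;b,a,t})\pi^L(e^{d',c',s'}_{B';d,c,t'}))^P_{P'}$ as a sum over an intermediate path $P''$. Because a matrix element of a single $E$-representer is nonzero only when the bra and ket paths have matching source/target labels, this sum forces the intermediate path $P''$ to have left-right endpoints $(b',b)$ from the first factor and $(c',c)$ from the second, so the composition vanishes unless $b=c$ and $b'=c'$; this is precisely the $\delta_{b',c'}\delta_{b,c}$ in the definition of $\star$. In the bulk of the chain (sites $2,\dots,L-1$) I would then apply the zipper lemma exactly as in the proof of part~(fusion) of thm.~\ref{thm1}, $L-2$ times, to replace each vertical pair of $U,\overline U$ symbols built from $B$ and $B'$ by a single row of $U_C,\overline U_C$ symbols with auxiliary zipper tensors, summed over $C\in{}_\cM X_\cM$ with multiplicity $t_0\in\Hom(BB',C)$.

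The crux, and the step I expect to be the main obstacle, is the treatment of the two endpoints. At each end of the chain the collapse no longer produces a plain zipper tensor but couples the bulk zipper to the boundary $Z$-type $6j$-symbols $Z^{B}_{a',\bar b'}$, $Z^{B'}_{c',\bar d'}$ (and their conjugate siblings at the right end). I would rewrite those boundary $6j$-symbols in graphical form, stack them vertically, and use the conjugacy relation $r_b^*r_b=d_b\cdot 1$ (with $b=c$) to close off the inner strand; the prefactors $[d(a)d(b)d(a')d(b')d(c)d(d)d(c')d(d')]^{1/4}$ combine with the $d$-factor from the loop and the $(d_a d_b d_c d_d)^{1/4}$ in the basis convention for $\lozenge$ to yield exactly the coefficients appearing after inserting the resolution of unity $(1_{a'}\times r_{b'}^*\times 1_{\bar d'})(\,\cdot\,)(1_a\times r_b\times 1_{\bar d})$ in the definition of $\star$. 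Concretely, after using Frobenius duality to interchange $Z$-type and $U$-type symbols, the closed $b$-loop at each boundary reproduces the intertwiner contractions appearing in the bulk formula for the $\star$-product of $e$'s.

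Finally I would apply the ONB orthogonality of the zipper tensors (as in thm.~\ref{thm1}) to absorb the summation over the bulk intermediate labels, leaving a single MPO of type $E^{\cdot;L}_{C;\cdot}$ summed over $C\in{}_\cM X_\cM$ and $t_0\in\Hom(BB',C)$. Expanding the definition of $\pi^L$ on the right-hand side of the $\star$-product, and using that the linear combination of basis vectors of $\Hom(c\bar d,a\bar{d'})$ produced by $e^{b',a',s}_{B;b,a,t}\star e^{d',c',s'}_{B';d,c,t'}$ expressed in an ONB coming from an intermediate $C$ gives the same coefficients, finishes the identification. The argument is essentially a boundary-aware version of the bulk fusion proof: the novelty compared to thm.~\ref{thm1} lies entirely in showing that the $Z$-symbols at the endpoints glue together into precisely the conjugacy-loop contractions dictated by the definition of $\star$.
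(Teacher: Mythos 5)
Your proposal is correct and follows essentially the same route as the paper: the paper likewise applies the zipper lemma along the bulk of the chain and isolates, as the key new step, an endpoint identity (eq.~\eqref{prev}, fig.~\ref{fig:7}) in which the two boundary $Z$-type symbols and the residual zipper tensor are glued via the $r_b$-contraction into a single $Z$-symbol multiplied by exactly the structure constant of the $\star$-product. Your identification of the boundary gluing as the crux, and of the $\delta_{b,c}\delta_{b',c'}$ as coming from path-label matching, matches the paper's argument.
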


\begin{proof}
We let $B,B', B'',M_1,M_2,M_1',M_2' \in {}_\cM X_\cM$, $a,b,c,d \in {}_\cM X_\cN$, 
and consider the following intertwiners: 
{\footnotesize
\ben
\begin{split}
&t \in \Hom_\cM(a \bar b,B),\\
&t' \in \Hom_\cM(c \bar a, B'),\\
&t'' \in \Hom_\cM(B'B, B''),\\
&t''' \in \Hom_\cM(c\bar b, B''),\\
&s_1 \in \Hom_\cM(a \overline{\imath},M_1),\\
&s_2 \in \Hom_\cM(b \overline{\imath}, M_2),\\
&s_3 \in \Hom_\cM(B M_2, M_1),\\
&s_2' \in \Hom_\cM(c \overline{\imath}, M_1'), \\
&s_3' \in \Hom_\cM(B' M_1, M_1'), \\
&s_3'' \in \Hom_\cM(B'' M_2', M_1'), \\
\end{split}
\een
}
Taking a sum over an ONB of intertwiners $s_3$ gives
\ben
(t^* \times s_2^*)(1_a \times r_b \times 1_{\bar d})s_1 = \sum_{s_3} 
Z_{a,\bar b}^B
\begin{pmatrix}
&  & s_2\\
t; & s_3 & \\
&  & s_1
\end{pmatrix} \ s_3.
\een
Multiplication of two copies of this identity, using the ONB property, and standard topological moves give
\ben
(t'^* \times t^* \times s_2^*)(1_c \times r_a \times r_b \times 1_{\bar d})s_2' =
\sum_{s_1, s_3, s_3'}
Z_{a,\bar b}^B
\begin{pmatrix}
&  & s_2\\
t; & s_3 & \\
&  & s_1
\end{pmatrix}
Z_{c,\bar a}^{B'}
\begin{pmatrix}
&  & s_1\\
t'; & s_3' & \\
&  & s_2'
\end{pmatrix}
(1_{B'} \times s_3)s_3'.
\een
Employing the definition of the zipper tensor entails
\ben\label{prev}
\begin{split}
& \hspace{1cm}
\sum_{t''',s_3''}
Z_{c,\bar b}^{B''}
\begin{pmatrix}
&  & s_2\\
t'''; & s_3'' & \\
&  & s_2'
\end{pmatrix}
[t''^*(t'^* \times t^*)(1_c \times r_a \times 1_{\bar b})t'''] s_3''
\\
=& \ 
\sum_{s_1, s_3, s_3',s_3''}
Z_{a,\bar b}^B
\begin{pmatrix}
&  & s_2\\
t; & s_3 & \\
&  & s_1
\end{pmatrix}
Z_{c,\bar a}^{B'}
\begin{pmatrix}
&  & s_1\\
t'; & s_3' & \\
&  & s_2'
\end{pmatrix}
Y_{B,B'}^{B''}
\begin{pmatrix}
&  & s_3\\
t''; & s_3'' & \\
&  & s_3'
\end{pmatrix} s_3'',
\end{split}
\een
so we can cancel the sum over the ONB $s_3''$. The resulting identity is graphically represented in fig. \ref{fig:7}.

\begin{figure}[h!]
\begin{center}
\hspace*{-.8cm}
  \includegraphics[width=1.2\textwidth,]{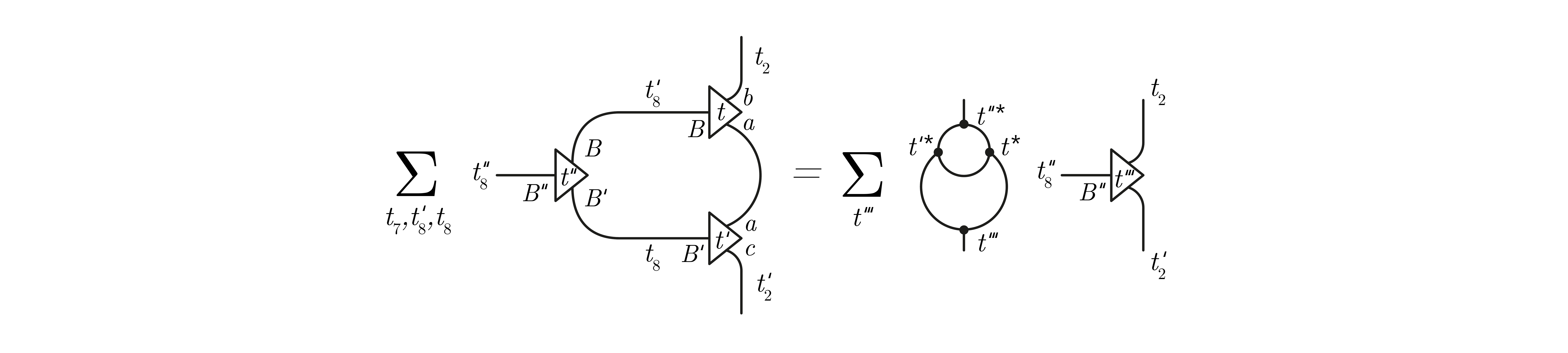}
\end{center}
  \caption{\label{fig:7} The identity following from eq. \eqref{prev}.}
\end{figure}
 
 Now we represent the product $E^{b',a',s;L}_{B;b,a,t} E^{a',c',s';L}_{B';a,c,t'}$ as the left side of fig. \ref{fig:8}. 

\begin{figure}
\begin{center}
\hspace*{-1cm}
  \includegraphics[width=1.1\textwidth,]{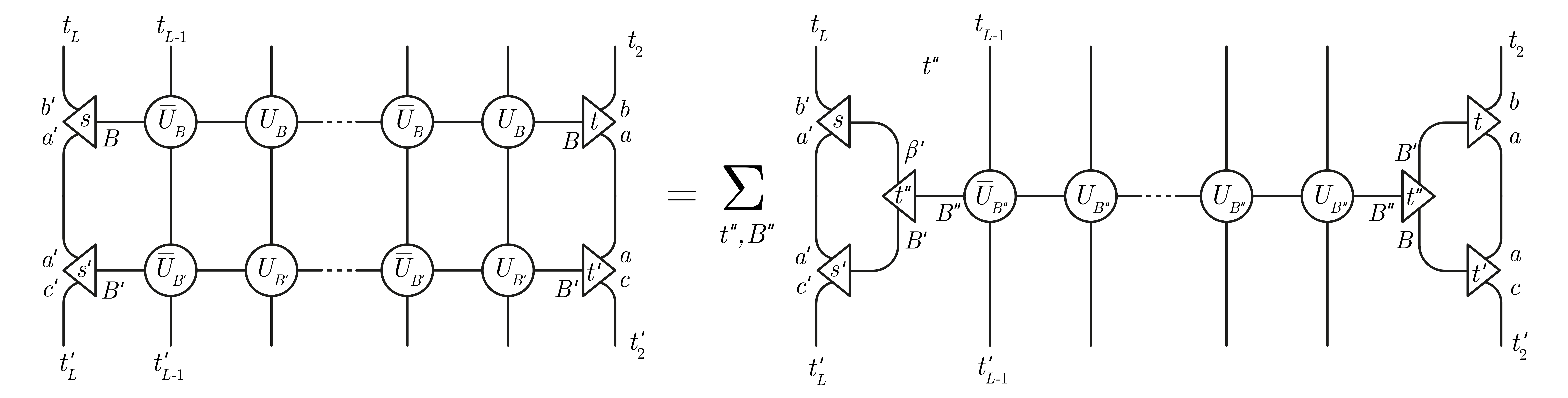}
 \end{center}
  \caption{\label{fig:8} The product $E^{b',a',s;L}_{B;b,a,t} E^{a',c',s';L}_{B';a,c,t'}$.}
\end{figure}

Applying the zipper lemma $L$ times we get the picture on the right side of fig. \ref{fig:8}. Then we apply \eqref{prev} (i.e. fig. \ref{fig:7}) 
at both ends of that picture, and apply the ONB property to the sum over $u$, we get 
\ben\label{eq:EE}
\begin{split}
&E^{b',a',s;L}_{B;b,a,t} E^{a',c',s';L}_{B';a,c,t'} \\
= & \ 
\sum_{s'',t''} (d_a d_{a'})^{1/2} 
[s''^*(1_{c'} \times r_{a'}^* \times 1_{\bar b'})(s' \times s)(t'^* \times t^*)(1_c \times r_a \times 1_{\bar b}) t''] E^{c', b', s''; L}_{B''; c,b,t''},
\end{split}
\een
as described graphically in fig. \ref{fig:9}. 

\begin{figure}[h!]
\begin{center}
\hspace*{-1.5cm}
  \includegraphics[width=1.2\textwidth,]{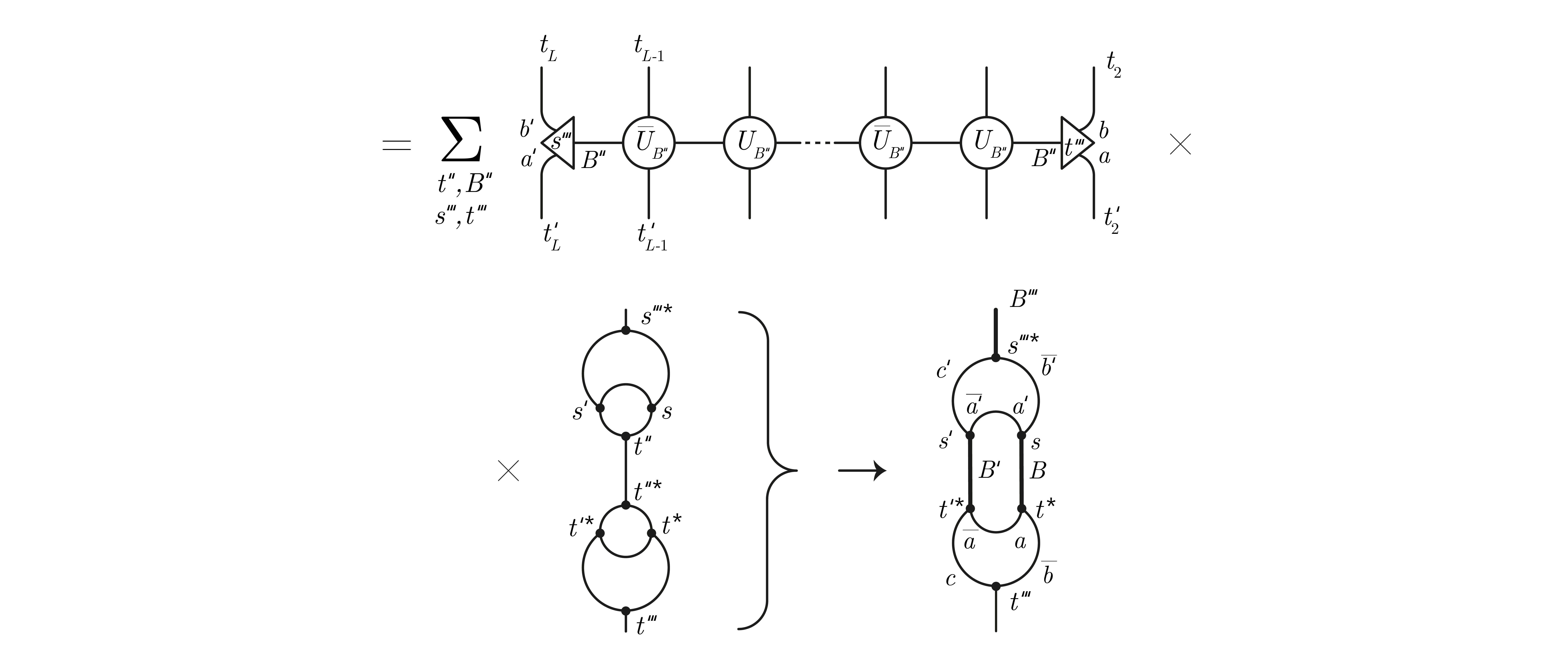}
  \end{center}
  \caption{\label{fig:9} The right side of fig. \ref{fig:8}.}
\end{figure}

This is precisely the multiplication law in the double triangle algebra ``when the indices match'', because the 
scalar $C_{s,t;s',t'}^{s'',t''} := (d_a d_{a'})^{1/2} [ \dots ]$ on the 
right side in \eqref{eq:EE} is 
the structure constant arising when projecting 
$e^{b',a',s}_{B;b,a,t} \star e^{a',c',s'}_{B';a,c,t'}$
onto the $e^{c', b', s''}_{B''; c,b,t''}$ component, see bottom panel in fig. \ref{fig:8} and fig. \ref{fig:9}. 
\begin{figure}[h!]
\begin{center}
\hspace*{-2.5cm}
  \includegraphics[width=1.3\textwidth,]{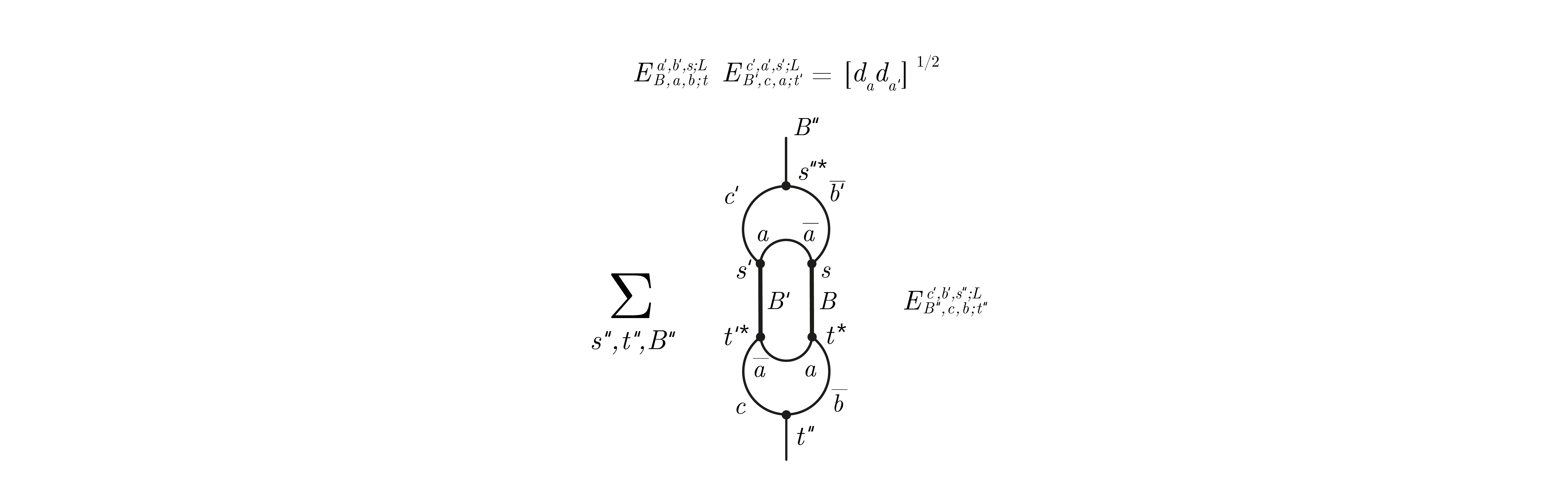}
  \end{center}
  \caption{\label{fig:10} The formula for the product $E^{b',a',s;L}_{B;b,a,t} E^{a',c',s';L}_{B;a,c,t'}$
  following from fig. \ref{fig:9}.}
  \end{figure}
When the indices do not match, we get zero 
simply because the intertwiner spaces do not match either. 
\end{proof}

Recall the definition of  $p^\mp_\mu, q_{\mu,\nu} \in \cZ_h$ [see \eqref{pdef}, \eqref{qdef}], where $\mu,\nu \in {}_\cN X_\cN$. 
The above representation $\pi^L$ gives linear operators
\ben
Q^L_{\mu,\nu} := \pi^L(q_{\mu,\nu})
\een
for any length $L$ of the spin chain. These operators leave 
the subspace $\sV^L \subset \sV^L_{\rm open}$ of periodic spin chain configurations invariant because for elements in 
$\cZ_h$, the source and target objects in the corresponding paths $P = (t_1, \dots, t_L) \in {\rm Path}_{a',a}$ and 
$P' = (t_1', \dots, t_L') \in {\rm Path}_{b',b}$ are the same, $a=a',b=b'$.
We have the following

\begin{corollary}
\label{cor:triangle}
The following holds for all $L = 2k \ge 0$:
\begin{enumerate}
\item For $B \in \ _\cM X_\cM$ we have 
\ben
\pi^L(e_B) = d_B \cdot O_B^L,
\een
where $O_B^L$ coincides with the MPOs defined earlier in \eqref{eq:Odef}.

\item For $\nu \in \ _\cN X_\cN$ we have
\ben
\pi^L(p^\pm_\nu) = d_\nu \cdot O_{\alpha_\nu^\pm}^L, 
\een
where $O_A^L$ coincides with the MPOs defined earlier in \eqref{eq:Odef}
and $A=\alpha^\pm_\nu$ are the alpha-induced endomorphisms of $\cM$.

\item There holds
\ben
O_{\alpha^\pm_\lambda} = \sum_{B \in \ _\cM X_\cM} 
\langle 
\alpha_\lambda^\pm, B
\rangle \, O_B
\een 
and we have the fusion
\ben
O_{\alpha_\mu^\pm}^L O_{\alpha_\nu^\pm}^L = \sum_{\lambda \in 
\ _\cN X_\cN} N_{\mu,\nu}^\lambda O_{\alpha_\lambda^\pm}^L,
\een
for all $\lambda,\mu,\nu \in \ _\cN X_\cN$. 

\item 
We have 
\ben
\label{eq:qmn}
Q_{\lambda,\mu}^L = \sum_A Y_{\lambda,\mu,A} \, O_A^L,
\een
with $Y_{\lambda,\mu,A}$ as in fig. \ref{fig:Y} and \eqref{eq:Ydef}.
$Q^L_{\mu,\nu} = 0$ iff $Z_{\mu,\nu} = \langle \alpha^+_\mu, \alpha^-_\nu\rangle = 0$, $Q^L_{\mu,\nu} = (Q^L_{\mu,\nu})^\dagger$, $Q_{\mu,\nu}^L Q^L_{\mu',\nu'} = \delta_{\mu,\mu'} \delta_{\nu,\nu'}
Q^L_{\mu,\nu}$, 
$[O_{\alpha^\pm_\lambda}^{L}, Q^L_{\mu,\nu} ] = 
[O_{B}^{L}, Q^L_{\mu,\nu} ] = 0$ for all $\mu, \nu, \lambda \in \ _\cN X_\cN, B \in \ _\cM X_\cM$.

\item $\sum_{\mu,\nu \in _\cN X_\cN} Q_{\mu,\nu}^L = I^L$, where $I^L$ is the identity operator (matrix elements $(I^L)^P_{P'} = \delta_{P,P'}$) 
on the spin chain.

\item $Q_{0,0}^L = P^L$, with $P^L$ the orthogonal projector given by thm. \ref{thm1}.

\item We have
\ben
\sum_{\mu, \lambda \in \, _\cN X_\cN} \frac{d_\lambda d_\mu}{D_X} O_{\alpha_{\lambda}^+}^L O_{\alpha_{\mu}^-}^L = P^L
\een
with $P^L$ the orthogonal projector given by thm. \ref{thm1}
and $D_X = \sum_{\mu \in \, _\cN X_\cN} d_\mu^2$.
\end{enumerate}
\end{corollary}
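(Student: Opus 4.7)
The plan is to derive each of the seven items from Theorem \ref{thm:2} applied to identities already established for the elements $e_B$, $p^\pm_\lambda$, $q_{\mu,\nu}\in\lozenge$.

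For item (1), I specialise the representer of Figure \ref{fig:6} by setting $c=a$, $d=b$, $s=t$, sum with the prefactor $(d_ad_b)^{1/2}$ coming from the expansion \eqref{eBdef} of $e_B$, and observe that the two boundary $Z^B_{a,\bar b}$-symbols combine (using Frobenius duality \eqref{conjugate} of Lemma \ref{lem:6j}) with the adjacent $\bar U_B,U_B$-symbols to produce the closed diagram of $O_B^L$ in Figure \ref{fig:MPO2}; the completeness relation $\sum_t tt^*=1$ over an ONB of $\Hom(a\bar b, B)$ then supplies the overall factor $d_B$. Item (2) follows at once from definition \eqref{pdef} of $p^\pm_\nu$ together with (1) and linearity. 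For item (3), the first equation is the additivity of MPOs under the decomposition $\alpha^\pm_\lambda\cong\bigoplus_B\langle\alpha^\pm_\lambda,B\rangle\,B$, while the fusion rule is obtained by applying $\pi^L$ to \eqref{86} and cancelling a common factor of $d_\mu d_\nu$ via (2).

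For item (4), equation \eqref{eq:qmn} follows by applying $\pi^L$ to \eqref{YlmA} and invoking (1). Vanishing of $Q^L_{\mu,\nu}$ when $Z_{\mu,\nu}=0$ is automatic because the $v$-sum in \eqref{qdef} is then empty; the converse holds because for sufficiently long $L$, $\pi^L$ is non-trivial on each minimal central projection $q_{\mu,\nu}\neq 0$ of $(\cZ_h,\star)$, which one may verify by exhibiting a closed path $P$ on which a diagonal matrix element is non-zero. Self-adjointness of $Q^L_{\mu,\nu}$ combines Lemma \ref{Ylem}(2) with $(O_A^L)^\dagger=O_{\bar A}^L$ from Theorem \ref{thm1}(3); the orthogonality $Q^L_{\mu,\nu}Q^L_{\mu',\nu'}=\delta_{\mu,\mu'}\delta_{\nu,\nu'}Q^L_{\mu,\nu}$ is the image under $\pi^L$ of $q_{\mu,\nu}\star q_{\mu',\nu'}=\delta_{\mu,\mu'}\delta_{\nu,\nu'}q_{\mu,\nu}$; and the commutation relations follow from \eqref{qrel} (via (2)) together with the fact that the $q_{\mu,\nu}$, being central in $(\cZ_h,\star)$, commute with every $e_B$ (via (1)). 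For item (5) I note that $\sum_{\mu,\nu}q_{\mu,\nu}$ is the unit of the unital algebra $(\cZ_h,\star)$; a short check using \eqref{84} identifies this unit with $e_0$, and then $\pi^L(e_0)=d_0 O_0^L=I^L$ by (1) and Theorem \ref{thm1}(1). Item (6) is obtained by specialising \eqref{qdef} to $\mu=\lambda=0$: all braiding operators collapse to identities, $r_0=\bar r_0=1$, the single $v\in\Hom(\mathrm{id},\mathrm{id})=\CC$ contracts away, and a completeness decomposition $1_{a\bar b}=\sum_A\sum_t tt^*$ over ONBs of $\Hom(a\bar b, A)$ yields $q_{0,0}=D_X^{-1}\sum_A e_A$; applying $\pi^L$ and using (1) then gives $Q^L_{0,0}=D_X^{-1}\sum_A d_A O_A^L=P^L$.

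The main obstacle is item (7). By (2), its left-hand side equals $D_X^{-1}\pi^L\bigl(\sum_{\mu,\lambda}p^+_\lambda\star p^-_\mu\bigr)$, so the claim reduces to the identity $\sum_{\mu,\lambda}p^+_\lambda\star p^-_\mu=\sum_A e_A\,(=D_X q_{0,0})$ in $\lozenge$, after which (6) closes the argument. I would establish this by expanding both $p^+_\lambda$ and $p^-_\mu$ in the central basis $\{q_{\mu',\nu'}\}$ of $(\cZ_h,\star)$ using the eigenvalues obtainable from $\alpha$-induction as in \cite{bockenhauer1999alpha}, and then applying the Verlinde orthogonality of the modular $S$-matrix of ${}_\cN X_\cN$ to collapse the double sum onto the $(0,0)$-block. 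The delicate point is the correct bookkeeping of the various global-index normalisations---the quantum index $d=d_\i$, $D_X=\sum_\mu d_\mu^2=\sum_{a\in{}_\cM X_\cN}d_a^2$, and the (a priori independent) sum $\sum_{B\in{}_\cM X_\cM}d_B^2$---which must cancel to reproduce exactly the stated prefactor $d_\mu d_\lambda/D_X$.
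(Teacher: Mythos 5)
Your overall strategy --- push everything through the representation $\pi^L$ of $(\lozenge,\star)$ and the known identities for $e_B$, $p^\pm_\lambda$, $q_{\mu,\nu}$ --- is the same as the paper's, and items (1), (4), (5), (6) are handled essentially as in the paper's proof (your derivation of $q_{0,0}=D_X^{-1}\sum_A e_A$ for item (6) is in fact more explicit than the paper's). However, there is a genuine gap at the technical heart of the corollary, namely items (2) and the first identity of (3). You dispose of (2) with ``follows at once from \eqref{pdef}, (1) and linearity'' and of $O^L_{\alpha^\pm_\lambda}=\sum_B\langle\alpha^\pm_\lambda,B\rangle O^L_B$ with ``is the additivity of MPOs under the decomposition.'' But applying $\pi^L$ to \eqref{pdef} only yields $\pi^L(p^\pm_\nu)=d_\nu\sum_B\langle\alpha^\pm_\nu,B\rangle O^L_B$; identifying this with $d_\nu\,O^L_{\alpha^\pm_\nu}$, where $O^L_{\alpha^\pm_\nu}$ is the MPO literally defined by \eqref{eq:Odef} for the \emph{reducible} endomorphism $\alpha^\pm_\nu$, is exactly the non-trivial claim, and your argument for (2) presupposes the very additivity you merely assert in (3). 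Two things must be checked: (i) that the bulk $6j$-symbols $U_{\alpha^\pm_\lambda}$, $\bar U_{\alpha^\pm_\lambda}$ factorize through ONBs $u_{v}=(v\times 1)u$ with $v\in\Hom(\alpha^\pm_\lambda,B)$ and that the cross terms between different $v_j$ vanish (the $\delta_{v_j,v_{j+1}}$ of the paper's eq. \eqref{rep2}); and (ii) that the boundary $6j$-symbols appearing in $\pi^L(p^\pm_\lambda)$, which carry the relative braiding $\epsilon^\pm(\lambda,\bar b)$, match the plain $Z^B_{a,\bar b}$-symbols of $\pi^L(e_B)$ only after the unitary change of basis $s'=\epsilon^\pm(b,\lambda)s$ in $\Hom(\alpha^\pm_\lambda b,a)$ (the paper's eq. \eqref{rep1} and Lemma \ref{lem:4}). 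Neither step is ``linearity''; this is precisely the content of the paper's Lemma \ref{lem:4}, which your proposal silently assumes.

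Two secondary points. First, for item (7) you correctly reduce to the identity $\sum_{\lambda,\mu}p^+_\lambda\star p^-_\mu=\sum_A e_A$ in $\lozenge$, but your proposed derivation via expansion in the $q$-basis and Verlinde orthogonality is left as a plan with the normalization bookkeeping unresolved; the paper simply invokes \cite{bockenhauer1999alpha}, thm.\ 5.10 at this point, which is the cleaner route. Second, your argument that $Q^L_{\mu,\nu}\neq 0$ when $Z_{\mu,\nu}\neq 0$ is only claimed ``for sufficiently long $L$,'' whereas the corollary asserts it for all $L=2k\ge 0$; as stated your argument does not cover the short chains.
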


\begin{proof} 
1) The is similar to 2) but simpler and therefore omitted. Note that 
the claim is consistent with the fusion algebra \eqref{84} for $e_B$
in the double triangle algebra and theorem \ref{thm1}.

2) For $a,b \in {}_\cM X_\cN, 
M_i \in {}_\cM X_\cM, \lambda,\mu \in {}_\cN X_\cN,$ let $s \in \Hom_\cM(b\lambda, a), 
t_1 \in \Hom_\cM(\alpha^\pm_\lambda M_2,M_1), t_2 \in \Hom_\cM(b \overline{\imath}, M_2), t_3 \in \Hom_\cM(a \overline{\imath},M_1)$ be members of 
ONBs of intertwiners. Now
we define
\ben
^\pm Z_{a,\bar b}^\lambda
\begin{pmatrix}
&  & t_2\\
s; & t_1 & \\
&  & t_3
\end{pmatrix} := 
t_1^*
(\bar r_b^* \times 1_{\alpha_\lambda^\pm} \times 1_{M_2})(1_b \times \epsilon^\pm(\lambda, \bar b) \times t_2^*)
(s \times r_b \times 1_{\bar \imath})
t_3,
\een
see fig. \ref{fig:+Z}. We also define the $Z$ with the rearranged indices used below by complex conjugation of this expression. A simple BF 
move and the definition of the $6j$-symbols shows that 
\ben
^\pm Z_{a,\bar b}^\lambda
\begin{pmatrix}
&   t_2 & \\
s; & & t_1\\
&   t_3 &
\end{pmatrix}
= U_{\alpha^\pm_\lambda}
\begin{pmatrix}
&   t_2 & \\
s'; & & t_1\\
&   t_3 &
\end{pmatrix}
\een 
where $s'= \epsilon^\pm(b,\lambda)s$, which defines an intertwiner 
in $\Hom(\alpha^\pm_\lambda b,a)$. The claim then follows from the definition \eqref{eq:Odef} of the MPOs $O^L_{\alpha^\pm_\lambda}$ and the following lem. \ref{lem:4}.

3) These formulas follow from \eqref{pdef}, \eqref{86} combined with 1), 2)
and the fact that $\pi^L$ is a representation of the double triangle 
algebra. 

4) The relations for $Q^L_{\mu,\nu}$ except the formula for the adjoint follows immediately from the previous theorem and 
\eqref{qrel}. Applying $\pi^L$ to \eqref{YlmA} and using \eqref{YlmA} and item 1) of lem. \ref{Ylem}  gives \eqref{eq:qlm}. 
Then taking the adjoint and using $O_A^\dagger = O_A$ (see thm. \ref{thm1}) as well as item 2) of lem. \ref{Ylem} gives the claim.
The commutation relations follow 
from 1), 2), the fact that the triangle algebra elements 
$e_B, p^\pm_\mu$ are in the horizontal center, and the fact that 
the $q_{\mu,\nu}$ are the central projections.

5) By \cite{bockenhauer1999alpha}, thm. 6.8, we have $\sum_{\mu,\nu} q_{\mu,\nu} = e_0$, where $e_0$ is the projector $e_B$ of the double triangle algebra with $B=$
identity morphism. Applying $\pi^L$ gives 
$\sum_{\mu,\nu} Q_{\mu,\nu}^L = \pi^L(e_0)$.  Therefore, $\pi^L(e_0)$ is equal to $I^L$ since $e_0$ is the unit of the double triangle algebra. (One can also see using the explicit definition of $\pi^L(e_0)$ in terms of 
$6j$-symbols.)

6) Evaluating the definition of $q_{\mu,\nu}$ for $\mu=\nu=0=$ identity morphism in the triangle algebra 
\eqref{qdef} shows that it is equal to $\sum_{B \in \ _\cM X_\cM} e_B=f_0$, which is the ``horizontal unit'' in the double triangle algebra. Applying $\pi^L$ and using the definition of $P^L$ in theorem \ref{thm1} gives the result.

7)  This follows from 1), 2), 3) together with \cite{bockenhauer1999alpha}, thm. 5.10.  
\end{proof}

\begin{lemma}\label{lem:4}
The double triangle elements $p^\pm_\lambda$ have the MPO representations (where  $P,P' \in {\rm Path}^L_c$ and $L=2k$):
{\footnotesize
\ben
\begin{split}
& \hspace{3.5cm} \pi^L(p_\lambda^\pm)^P_{P'} :=  d_\lambda  
\sum_{a,b} \
\sum_{s,s_1, \dots, s_{L+1}}  \cdot \\
& \cdot \ 
{}^\pm Z_{a,\bar b}^\lambda
\begin{pmatrix}
& t_1 & \\
s; & & s_1 \\
& t_1' &
\end{pmatrix}
\prod_{j=1}^k \left\{
\overline U_{\alpha^\pm_\lambda}
\begin{pmatrix}
& t_{2j-1} & \\
s_{2j-1} & & s_{2j} \\
& t_{2j-1}' &
\end{pmatrix}
 U_{\alpha^\pm_\lambda}
\begin{pmatrix}
& t_{2j} & \\
s_{2j} & & s_{2j+1} \\
& t_{2j}' &
\end{pmatrix}
\right\}
{}^\pm Z_{a,\bar b}^\lambda
\begin{pmatrix}
&  & t_L\\
s; & s_{L+1} & \\
&  & t_L'
\end{pmatrix}
.\non
\end{split}
\een
}
\end{lemma}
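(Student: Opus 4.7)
My plan is to start from the explicit definition \eqref{pdef} of $p_\lambda^\pm \in \lozenge$ as a linear combination of horizontal-center idempotents, namely $p_\lambda^\pm = d_\lambda \sum_{B \in {}_\cM X_\cM} d_B^{-1} \langle \alpha_\lambda^\pm, B\rangle \, e_B$, and to apply the representation $\pi^L$ from thm.\ \ref{thm:2} together with the already-established identity $\pi^L(e_B) = d_B \cdot O_B^L$ from cor.\ \ref{cor:triangle}(1). This immediately gives
\[
\pi^L(p_\lambda^\pm) \;=\; d_\lambda \sum_{B \in {}_\cM X_\cM} \langle \alpha_\lambda^\pm, B\rangle \, O_B^L,
\]
and the task reduces to recognizing the right-hand side as the MPO-like expression stated in the lemma.

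For this, I fix for each $B \in {}_\cM X_\cM$ an ONB $\{u_{B,k}\}$ of $\Hom(\alpha_\lambda^\pm, B)$, so that $\sum_{B,k} u_{B,k}^{} u_{B,k}^* = 1_{\alpha_\lambda^\pm}$ realizes the isometric decomposition $\alpha_\lambda^\pm \cong \bigoplus_B \langle \alpha_\lambda^\pm, B\rangle \, B$. Inserting this resolution of the identity into the definition \eqref{eq:Odef} of each $O_B^L$ and using the functoriality of the $6j$-symbols under isometric decompositions, I rewrite $\sum_B \langle \alpha_\lambda^\pm, B\rangle \, O_B^L$ as a single MPO $O_{\alpha_\lambda^\pm}^L$ of the same shape as \eqref{eq:Odef}, but built from the $6j$-symbols $U_{\alpha_\lambda^\pm}, \overline{U}_{\alpha_\lambda^\pm}$ associated with the (generally reducible) endomorphism $\alpha_\lambda^\pm$, with the ``cap'' index that closes the loop ranging over an ONB of $\Hom(\alpha_\lambda^\pm b, a)$ for $a,b \in {}_\cM X_\cN$.

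The final step is to convert this into the exact form stated in the lemma by applying the key identity already used in the proof of cor.\ \ref{cor:triangle}(2),
\[
{}^\pm Z_{a,\bar b}^\lambda \begin{pmatrix} & t_2 & \\ s; & & t_1 \\ & t_3 & \end{pmatrix} \;=\; U_{\alpha_\lambda^\pm}\begin{pmatrix} & t_2 & \\ s'; & & t_1 \\ & t_3 & \end{pmatrix},
\]
where $s' = s\,\epsilon^\pm(b,\lambda) \in \Hom(\alpha_\lambda^\pm b, a)$ corresponds to $s \in \Hom(b\lambda, a)$ through the unitary relative braiding intertwiner $\epsilon^\pm(b,\lambda) \in \Hom(\alpha_\lambda^\pm b, b\lambda)$ of sec.\ \ref{relative}. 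Since $s \mapsto s'$ is an isometric bijection between the Hilbert spaces $\Hom(b\lambda, a)$ and $\Hom(\alpha_\lambda^\pm b, a)$, an ONB of one indexes an ONB of the other. Applying this substitution at the two cap positions that close the loop of $O_{\alpha_\lambda^\pm}^L$ then produces precisely the formula stated, the shared intertwiner $s \in \Hom(b\lambda, a)$ playing the role of the loop-closing cap index.

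The only genuinely non-trivial step, and where I expect to spend most of the effort, is the verification of the key identity ${}^\pm Z_{a,\bar b}^\lambda = U_{\alpha_\lambda^\pm}$; this is a short diagrammatic manipulation using a single BF relation for $\epsilon^\pm(\lambda,\bar b)$ together with the definition \eqref{alphadef} of $\alpha_\lambda^\pm$ as conjugation of $\lambda$ by the braiding. The remainder of the proof is bookkeeping: ensuring that the prefactors $\sqrt{d_a d_b}$ arising from the definition of $E^{\,\cdots;L}_{B;\,\cdots}$ combine correctly with the $d_B^{-1}$ in \eqref{pdef} and with the $d_B$ from $\pi^L(e_B) = d_B O^L_B$, so that only the overall factor $d_\lambda$ survives, and checking that the path-space summations are mutually compatible. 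I do not anticipate any conceptual obstacle beyond this prefactor accounting.
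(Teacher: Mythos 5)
Your route is genuinely different from the paper's. The paper never passes through the trace-form operators $O_B^L$: it works throughout with the capped MPOs $\pi^L(e_B)$, establishes the two ONB correspondences ${}^\pm Z^\lambda_{a,\bar b}(\dots s_{v,t,B}\dots) \propto \delta_{v,v'}\, Z^B_{a,\bar b}(\dots t\dots)$ and $U_{\alpha^\pm_\lambda}(\dots u_{j,v_j}, u_{j+1,v_{j+1}}\dots)=\delta_{v_j,v_{j+1}} U_B(\dots)$, inserts a chain of Kronecker deltas in the multiplicity labels $v_j$, and then sums over $v, B, a, b$ with the weights $d_B^{-1}\langle\alpha^\pm_\lambda,B\rangle$ from \eqref{pdef} to land directly on the claimed formula. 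You instead go capped $\to$ trace (via cor.~\ref{cor:triangle}(1)) $\to$ trace for the reducible object ($O^L_{\alpha^\pm_\lambda}=\sum_B\langle\alpha^\pm_\lambda,B\rangle O_B^L$, which is the paper's \eqref{rep2} plus closing the loop) $\to$ capped again. This reverses the paper's logical order: you are in effect proving cor.~\ref{cor:triangle}(2) first and deriving the lemma from it. That is legitimate — cor.~\ref{cor:triangle}(1) does not depend on lem.~\ref{lem:4} — but note that the paper deliberately avoids this detour because the lemma is needed as input for cor.~\ref{cor:triangle}(2).

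The step you should not underestimate is the last one. The identity ${}^\pm Z^\lambda_{a,\bar b}=U_{\alpha^\pm_\lambda}$ (with $s'=\epsilon^\pm(b,\lambda)$ composed with $s$) only converts individual $6j$-symbols; it does not by itself turn the trace form of $O^L_{\alpha^\pm_\lambda}$ — where the loop is closed by a single shared index $s_1\in\Hom(\alpha^\pm_\lambda\,\cdot,\cdot)$ — into the capped form of the lemma, where there are two independent indices $s_1, s_{L+1}$ plus a shared cap intertwiner $s\in\Hom(b\lambda,a)$ summed over $a,b$. Passing between the two requires the analogue of cor.~\ref{cor:triangle}(1) for the reducible object $\alpha^\pm_\lambda$ with relative-braiding-twisted caps: a completeness/unitarity argument in the $s$-sum together with the dimension bookkeeping $\sum_{a,b}\dim\Hom(b\lambda,a)\,d_a d_b = d_\lambda D_X$ and the $\sqrt{d_b}$ factors hidden in $r_b,\bar r_b$ inside ${}^\pm Z^\lambda_{a,\bar b}$, so that the overall prefactor comes out to exactly $d_\lambda$. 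This is new content not covered by cor.~\ref{cor:triangle}(1) (which concerns irreducible $B$ and untwisted $Z^B$ caps), and it is where most of the work of the paper's \eqref{rep1} actually sits; your proposal identifies the ${}^\pm Z = U_{\alpha}$ identity as "the only genuinely non-trivial step" and relegates the rest to prefactor accounting, which misplaces the difficulty. The gap is fillable with the same Frobenius-duality and unitarity techniques, but as written the final conversion is asserted rather than proved.
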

\begin{proof}
We consider the ``$+$'' case and denote by $\{ v \}$ an ONB of partial isometries of $\Hom_\cM(\alpha^+_\lambda,B)$, 
where $B \in {}_\cM X_\cM, \lambda \in {}_\cN X_\cN$. Thus, $v^* v = 1$ and $\sum_{v} vv^* = 1$ in an obvious notation. 
By definition, this ONB has $\langle B, \alpha_\lambda^+ \rangle$ elements. For $t \in \Hom_\cM(a\bar b,B)$, $a,b \in {}_\cN X_\cM$, 
we write
\ben
s_{v,t,B} := \left[ \frac{d_a}{d_B} \right]^{1/2} \ \epsilon^+(\lambda, b) (v \times 1_b)(t^* \times 1_b)(1_a \times r_b), 
\een
with $r_b$ as usual a solution to the conjugacy relations. It follows that as $t,v$ run through ONBs of intertwiners, 
$s_{v,t,B}$ runs through an ONB of intertwiners in $\Hom_\cM(b\lambda,a)$. Letting $v,v'$ be two isometries from our 
ONB of $\Hom_\cM(\alpha^+_\lambda,B)$ for fixed $B$, the definitions give
\ben\label{rep1}
\left[ \frac{d_a}{d_B} \right]^{-1/2} \ {}^+ Z_{a,\bar b}^\lambda
\begin{pmatrix}
&  & t_0\\
s_{v,t,B}; & u_{v'} & \\
&  & t_0'
\end{pmatrix}
=
\delta_{v,v'} \ Z_{a,\bar b}^B
\begin{pmatrix}
&  & t_0\\
t; & u & \\
&  & t_0'
\end{pmatrix}
\een
where $u \in \Hom_\cM(B M_3,M_1), t_0 \in \Hom_\cM(b \overline{\imath}, M_2), t_0' \in \Hom_\cM(a \overline{\imath},M_1)$ members of 
ONBs of intertwiners and $M_i \in {}_\cM X_\cM$. See fig. \ref{fig:+Z} for the graphical illustration of these formulas. 
\begin{figure}[h!]
\begin{center}
\hspace*{-2.1cm}
  \includegraphics[width=1.2\textwidth,]{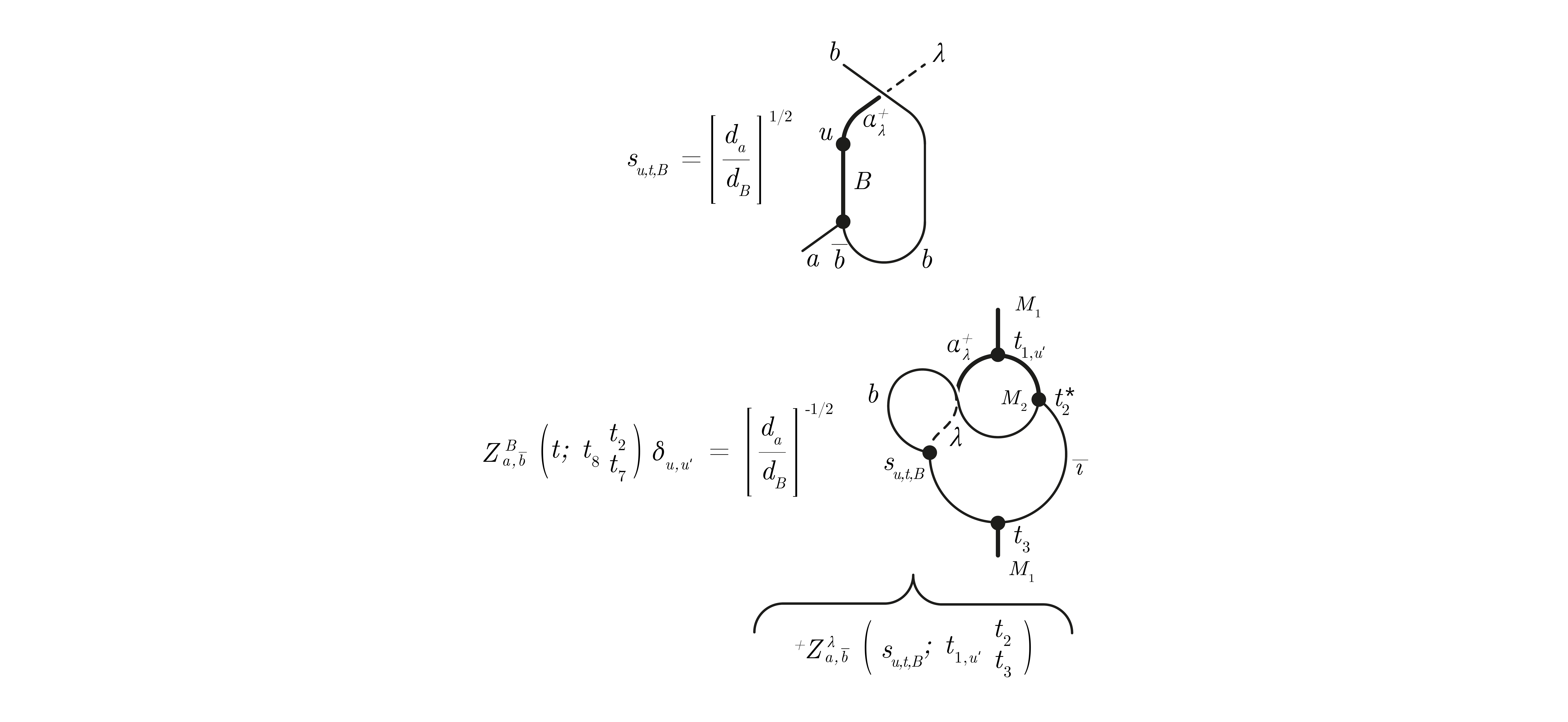}
  \end{center}
  \caption{\label{fig:+Z} Upper panel: the intertwiner $s_{u,t,B}$. Lower panel: 
  Relationship between $^\pm Z_{a,\bar b}^\lambda$ and $Z^B_{a,\bar b}$.}
\end{figure}
Here, $u_{v} := (v \times 1_{M_3})u$, which runs through an ONB of intertwiners in $\Hom_{\cM}(\alpha^+_\lambda M_3, M_1)$
as $v,u$ run through ONBs. We also observe that by construction
\ben\label{rep2}
U_{\alpha^+_\lambda}
\begin{pmatrix}
& t_{j} & \\
u_{j,v_{j}} & & u_{j+1,v_{j+1}} \\
& t_{j}' &
\end{pmatrix}
=
\delta_{v_j,v_{j+1}} \ U_{B}
\begin{pmatrix}
& t_{j} & \\
u_{j} & & u_{j+1} \\
& t_{j}' &
\end{pmatrix}
\een
for partial isometries $v_j$ from our 
ONB of $\Hom_\cM(\alpha^+_\lambda,B)$ for fixed $B$. A similar relation holds for the conjugate $6j$-symbol. 
We make the replacements \eqref{rep1}, \eqref{rep2} in each factor the on the right side of the following expression using an ONB of 
partial isometries $v_j \in \Hom_\cM(\alpha^+_\lambda,B)$ for each $j$
in the defining relation for $\pi^L$ multiplied by a Kronecker delta: 
{\footnotesize
\ben
\begin{split}
& \delta_{v,v'} \pi^L (e_{B; t', b',a'}^{t,b,a})^P_{P'} = \, 
[d(a) d(b) d(a') d(b') ]^{1/4} 
\sum_{u_1,\dots,u_{L+1}} \sum_{v_1,\dots,v_{L+1}} \delta_{v,v_1} \delta_{v_1,v_2} \cdots 
\delta_{v_{L},v_{L+1}} \delta_{v_{L+1},v'} \cdot \\
& \, 
Z_{a',\bar b'}^B
\begin{pmatrix}
& t_1 & \\
t'; & & u_{1} \\
& t_1' &
\end{pmatrix} 
\prod_{j=1}^{k} \left\{
\overline U_{B}
\begin{pmatrix}
& t_{2j-1} & \\
u_{2j-1} & & u_{2j} \\
& t_{2j-1}' &
\end{pmatrix}
 U_{B}
\begin{pmatrix}
& t_{2j} & \\
u_{2j} & & u_{2j+1} \\
& t_{2j}' &
\end{pmatrix}
\right\} \, 
Z_{a,\bar b}^B
\begin{pmatrix}
&  & t_{L}\\
t; & u_{L+1} & \\
&  & t_{L}'
\end{pmatrix}
\non
\end{split}
\een
}
and note that the delta-function implies that we can insert an additional summation over the $v_j$ in the resulting expression, so that we can 
subsequently use \eqref{rep2} and \eqref{rep1} 
on the terms in the product. 
Then setting $s:=s_{v,t,B}, s'=s_{v',t',B}$ resp. $s_j=u_{j,v_{j}}$ we obtain an equivalent sum over an ONB of intertwiners in 
$\Hom_\cM(b\lambda,a)$ resp $\Hom_{\cM}(\alpha^+_\lambda M_j, M_j')$.
In the resulting equality we set $v=v', a=a',b=b',s=s'$. The left side of the equality no longer depends on $v$, so if we perform a 
summation over $v \in \Hom_\cM(\alpha^+_\lambda,B)$ for fixed $B$, we obtain a factor of $\langle B, \alpha_\lambda^+ \rangle$ on 
that side. Then we divide by $d_B$ and take the sum over $B$ and then $a,b$. This gives among other things a summation over $s,s_j$ on the right side. 
Using the defining equation $e_B=\sum_{t,a,b} e_{B; t, b,a}^{t,b,a}$
results in
{\footnotesize
\ben
\begin{split}
& \hspace{1.5cm} \sum_B d_B^{-1} \langle B, \alpha_\lambda^+ \rangle \pi^{L}(e_B)^P_{P'} = 
\sum_{a,b} 
\sum_{s,s_1, \dots, s_{L+1}}  \cdot \\
& \cdot \ 
{}^+ Z_{a,\bar b}^\lambda
\begin{pmatrix}
& t_1 & \\
s; & & s_1 \\
& t_1' &
\end{pmatrix}
\prod_{j=1}^k \left\{
\overline U_{\alpha^+_\lambda}
\begin{pmatrix}
& t_{2j-1} & \\
s_{2j-1} & & s_{2j} \\
& t_{2j-1}' &
\end{pmatrix}
 U_{\alpha^+_\lambda}
\begin{pmatrix}
& t_{2j} & \\
s_{2j} & & s_{2j-1} \\
& t_{2j}' &
\end{pmatrix}
\right\}
{}^+ Z_{a,\bar b}^\lambda
\begin{pmatrix}
&  & t_L\\
s; & s_{L+1} & \\
&  & t_L'
\end{pmatrix}
.\non
\end{split}
\een
}
This is equivalent to the claimed statement using the definition of 
$\pi^L$ and of $p^\pm_\lambda$ \eqref{pdef}. 
\end{proof}

\subsection{Chain Hamiltonians, local operators}
\label{sec:local}

Consider a vector $|t_1, \dots, t_L\rangle \in \sV^L_{\rm open}$. 
According to the description of such vectors as sequences of compatible intertwiners, this may be interpreted as a specific intertwiner $\in \Hom(B (\i \bar \i)^k, C)$, where $L=2k$. On $\Hom(B (\i \bar \i)^k, C)$ the Jones projections $B(e_0), B(e_{-1}), \dots, B(e_{-(L-1)})$ act by left multiplication, so we have an action of the algebra $\cT\cL_{[0,L-1]}$ generated by these projections on $\sV^L_{\rm open}$. We call $E^L_x, x=1, \dots, L$ the operator induced by the Jones projection $e_{-(x-1)}$ on 
$\sV^L_{\rm open}$. Its matrix elements are
\ben
\label{exdef}
\langle t_1, \dots, t_L| E^L_x |t_1', \dots, t_L'\rangle
=
\begin{cases}
[(t_{x} \times 1_\i)t_{x+1}]^* e_0(t_{x}' \times 1_\i)t_{x+1}' & 
\text{$x$ even,}\\
[(t_x \times 1_{\bar \i})t_{x+1}]^* e_{-1}(t_x' \times 1_{\bar \i})t_{x+1}' &
\text{$x$ odd,}
\end{cases}
\een
noting that the expressions are scalars. Using the expressions $e_0 = d^{-1} \bar r \bar r^*, e_{-1} = d^{-1} \bar \i(rr^*)$ one easily 
finds the numerical value of the matrix elements to be 
\ben
\label{exdef1}
\begin{split}
&\langle t_1, \dots, t_L| E^L_x |t_1', \dots, t_L'\rangle \\
= &d^{-1} (t_{x+1}',\tilde t_x')(\tilde t_{x+1}, t_x)
\begin{cases}
\left( 
\frac{d(M_x') d(M_x)}{d(a_x') d(a_x)} 
\right)^{1/2} 
\delta_{a_x, a_{x+1}} \delta_{a_x',a_{x+1}'} 
& 
\text{$x$ even,}\\
\left( 
\frac{d(a_x') d(a_x)}{d(M_x') d(M_x)} 
\right)^{1/2} 
\delta_{M_x, M_{x+1}} \delta_{M_x',M_{x+1}'} 
&
\text{$x$ odd,}
\end{cases}
\end{split}
\een
where the intertwiners are supposed to be from the spaces $t_x \in \Hom(a_x \bar \i, M_x), t_{x+1} \in \Hom(M_x \i, a_{x+1})$ (and similarly $t_{x}',t_{x+1}'$) when $x$ is even 
and from $t_x \in \Hom(M_x  \i, a_x), t_{x+1} \in \Hom(a_x \bar \i, M_{x+1})$ (and similarly $t_{x}',t_{x+1}'$) when $x$ is odd. The inner products 
$(\tilde t_{x+1}, t_x)$ or $(t_{x+1}',\tilde t_x')$ are as in \eqref{scalts} and ``tilde'' means the Frobenius dual of an intertwiner, see \eqref{frobenius}.

By construction, the 
$E^L_x$ are projections. We denote by
\ben
\cT\cL_{[x_1,x_2]}^L := \text{algebra generated by $E_x^L, x_1<x<x_2-1$,}
\een
the algebra of local operators acting on the sites $x_1, x_1+1, \dots, x_2$. 
By construction, these algebras are isomorphic to Temperly-Lieb-Jones algebras with $x_2-x_1-1$ generators and parameter $d$. We have the following Lemma. 

\begin{lemma}
\label{lem:5}
Any element $A \in \cT\cL_{[2,L-1]}^L$ commutes with any MPO in $\pi^L(\lozenge)$, i.e. with the image of the double triangle algebra 
under the representation $\pi^L$ on $\sV^L_{\rm open}$.
\end{lemma}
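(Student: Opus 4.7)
The plan is to reduce the commutation to a purely local two-site identity, and then verify that identity by a ``cap-sliding'' argument in the graphical calculus, exactly as in the proof of thm.~\ref{thm:2}.

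Since $\cT\cL^L_{[2,L-1]}$ is generated as an algebra by the individual projections $E^L_x$, and $\pi^L(\lozenge)$ is linearly spanned by the basis MPOs $E^{b',a',s;L}_{B;b,a,t}$, it suffices to prove $[E^L_x,\, E^{b',a',s;L}_{B;b,a,t}]=0$ for each interior $x$ and each basis MPO. Expanding both products by inserting the matrix elements \eqref{exdef1} of $E^L_x$ and the definition of the MPO, a matrix element on paths $P, P'$ becomes a sum over the internal horizontal $B$-line indices $s_1,\dots,s_{L+1}$ of a product of $6j$-symbols $U_B, \overline U_B$, one per site, bordered by the boundary $Z$-tensors at sites $1$ and $L$. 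Since $x\in[2,L-1]$, the cup-cap of the Jones projection affects only the factors at sites $x$ and $x+1$ and leaves the boundary $Z$-tensors and all other site factors untouched. Consequently the two orderings $E^L_x\cdot\mathrm{MPO}$ and $\mathrm{MPO}\cdot E^L_x$ already agree factor-by-factor at every site $y\notin\{x,x+1\}$, and the commutator collapses to a purely local identity between the two-site $6j$-contribution contracted with the cap on the upper vertical wires versus the same contribution contracted with the cap on the lower vertical wires.

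The required local identity is that the ``snake'' formed from $\bar r_\imath,\bar r_\imath^*$ (for even $x$, where $e_0=d^{-1}\bar r\bar r^{*}$) or from $r_\imath,r_\imath^*$ (for odd $x$, where $e_{-1}=d^{-1}\bar \imath(rr^{*})$) attached to the vertical wires at sites $x,x+1$ can be slid from above to below the horizontal $B$-wire of the MPO. Algebraically this is the naturality of $r_\imath,\bar r_\imath$ with respect to the action of $B$ implicit in $U_B,\overline U_B$, i.e.\ the categorical fact already used to set up Frobenius duality in \eqref{frobenius}; equivalently, it is the identity \eqref{conjugate} for the conjugate $6j$-symbol applied with $B$ unchanged. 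After performing the sum over the internal $B$-line index between sites $x$ and $x+1$, the snake closes into a bubble which, by the conjugacy relations \eqref{conjgacyrel} and the ONB property of the intertwiners, evaluates to a factor of $d$; this exactly cancels the $d^{-1}$ in the definition of the Jones projection, while the Frobenius normalizations $(d_M d_a)^{\pm 1/2}$ in \eqref{exdef1} cancel against the corresponding quartic roots $(d_a d_b d_{a'} d_{b'})^{1/4}$ in the MPO by multiplicativity of the quantum dimension.

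\textbf{Main obstacle.} The hard part is not the idea, which is topological and essentially forced, but the bookkeeping. One must treat the even-$x$ and odd-$x$ cases separately, keep track of the alternating $\imath,\bar\imath$ pattern (hence of which of $U_B$ and $\overline U_B$ sits on each side of the pinch and which of $e_0,e_{-1}$ appears), and verify that all $d^{1/2}$-prefactors conspire to give a clean equality. Rendering the argument entirely in wire-diagram form --- as was done for the zipper lemma and the proof of thm.~\ref{thm:2} --- makes the topological content transparent and reduces the algebraic verification to a single application of the conjugacy relations for $\imath,\bar \imath$ together with the ONB property of the intertwiners; this is analogous to (indeed a special case of) the way the symmetry operators $O_A^L$ of thm.~\ref{thm1} commute with each Temperley--Lieb generator $E^L_x$ at an interior site.
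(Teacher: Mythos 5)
Your proposal is correct and follows essentially the same route as the paper: reduce the commutator to a local two-site identity at sites $x,x+1$ (the paper's eq.~\eqref{commutatorUU}) and verify it using the conjugacy relations for $\imath,\bar\imath$, the ONB properties of the intertwiners, and the cancellation of the dimension prefactors. The paper even notes the graphical ``pull-through via a vertical zipper lemma'' argument you describe before opting to spell out the two-site computation explicitly, so the two presentations differ only in whether the local identity is checked diagrammatically or by direct evaluation.
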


\begin{proof}
We may write the projectors $E_x^L$ in terms of $6j$-symbols as in \cite{trebst2008short}. Then we can see that we can move 
$E_x^L$ at any point through the chain of $U_B, \bar U_B$'s as in fig.
\ref{fig:6} using a ``vertical'' version of the zipper lemma, which is proven in exactly the same way as the zipper lemma itself. 

With a certain amount of tedium, one can also show this directly without a graphical notation which we do here since we have not demonstrated the vertical zipper lemma. 
For definiteness, take $x$ even (the other case is similar).
At the level of matrix elements, the proof boils down to the statement that 
\ben
\label{commutatorUU}
\begin{split}
&\sum_{t,t'_x,t'_{x+1}}
U_B \left(
\begin{matrix}
& t'_x & \\
s_1 & & t \\
& t_x & 
\end{matrix}
\right)
\bar U_B \left(
\begin{matrix}
& t'_{x+1} & \\
t & & s_2 \\
& t_{x+1} & 
\end{matrix}
\right)
\langle \dots t''_x, t''_{x+1} \dots | E^L_x | \dots t'_x, t'_{x+1} \dots
\rangle\\
=
&\sum_{t,t'_x,t'_{x+1}}
U_B \left(
\begin{matrix}
& t''_x & \\
s_1 & & t \\
& t'_x & 
\end{matrix}
\right)
\bar U_B \left(
\begin{matrix}
& t''_{x+1} & \\
t & & s_2 \\
& t'_{x+1} & 
\end{matrix}
\right)
\langle \dots t'_x, t'_{x+1} \dots | E^L_x | \dots t_x, t_{x+1} \dots
\rangle
\end{split}
\een
where the intertwiners are supposed to be from the spaces 
$t_x \in \Hom(a_x \bar \i, M_x), t_{x+1} \in \Hom(M_x \i, a_{x+1}), 
t_x' \in \Hom(a_x' \bar \i, M_x'), t_{x+1}' \in \Hom(M_x' \i, a_{x+1}')
t_x \in \Hom(a_x'' \bar \i, M_x''), t_{x+1}'' \in \Hom(M_x'' \i, a_{x+1}'')$, 
as well as $s_1 \in \Hom(Ba_x',a_x), s_2 \in \Hom(Ba'_{x+1}, a_{x+1}), t \in 
\Hom(BM_x',M_x)$. Only two $6j$-symbols are involved because $E_x^L$ only 
acts non-trivially on the sites $x,x+1$. We now evaluate the left side of 
\eqref{commutatorUU} making use of the intertwiner calculus, the definitions of the 
$6j$-symbols, of \eqref{exdef}, and of $e_0 = d^{-1} \bar r \bar r^*$:
\ben
\begin{split}
=& d^{-1} (\tilde t_{x+1}'',t_x'')
\left( 
\frac{d(M_x'')}{d(a_x'')} 
\right)^{1/2} 
\delta_{a_x'',a_{x+1}''} 
\sum_{t,t'_x,t'_{x+1}} \\
&t^{\prime *}_x(1_{\bar \i} \times t^{\prime *}_{x+1})(\bar r \times 1_{a'_{x+1}}) \ t_x^*(s_1^* \times 1_{\bar \i})(1_B \times t_x')t \ t_{x+1}^* (t^* \times 1_\i)(1_B \times t'_{x+1}) s_2\\
=& d^{-1} (\tilde t_{x+1}'',t_x'')
\left( 
\frac{d(M_x'')}{d(a_x'')} 
\right)^{1/2} 
\delta_{a_x'',a_{x+1}''} 
\sum_{t'_x,t'_{x+1}} \delta_{a_x',a_{x+1}'}  \\
&t^{\prime *}_x(1_{\bar \i} \times t^{\prime *}_{x+1})(\bar r \times 1_{a'_{x}}) \ 
t_{x+1}^*(t_x^* \times 1_{\i})(s_1^* \times 1_{\bar \i \i})
(1_B \times t_x' \times 1_\i)(1_B \times t'_{x+1}) s_2\\
=& d^{-1} (\tilde t_{x+1}'',t_x'')
\left( 
\frac{d(M_x'')}{d(a_x'')} 
\right)^{1/2} 
\delta_{a_x'',a_{x+1}''} 
\sum_{t'_x} \delta_{a_x',a_{x+1}'} \\
&t^{\prime *}_x(1_{\bar \i} \times t^{\prime *}_{x+1})
(s_1^* \times 1_{\bar \i \i})
(1_B \times t_x' \times 1_\i)(1_B \times t_x^{\prime *} \times 1_\i)(s_2 \times \bar r)\\
=& d^{-1} (\tilde t_{x+1}'',t_x'') (\tilde t_{x+1},t_x)
\left( 
\frac{d(M_x'')d(M_x)}{d(a_x'')d(a_x)} 
\right)^{1/2} 
\delta_{a_x'',a_{x+1}''} \delta_{a_x,a_{x+1}} \delta_{s_1,s_2}
\end{split}
\een
We proceed in a similar manner evaluating the right side of 
\eqref{commutatorUU},
\ben
\begin{split}
=& d^{-1} (\tilde t_{x+1},t_x)
\left( 
\frac{d(M_x)}{d(a_x)} 
\right)^{1/2} 
\delta_{a_x,a_{x+1}} 
\sum_{t,t'_x,t'_{x+1}} \\
&
(1_{a_x'} \times \bar r^*)(t_x' \times 1_\i)t_{x+1}' \
t^{\prime *}_x 
(s_1^* \times 1_{\i})
(1_B \times t_x^{\prime\prime})t \
t^{\prime *}_{x+1}(t^* \times 1_\i)(1_B \times t_{x+1}'')s_2\\
\\
=& d^{-1} (\tilde t_{x+1},t_x)
\left( 
\frac{d(M_x)}{d(a_x)} 
\right)^{1/2} 
\delta_{a_x,a_{x+1}} 
\sum_{t'_x,t'_{x+1}} \delta_{a_x',a_{x+1}'}  \\
&t^{\prime *}_{x+1}(t^{\prime *}_x \times 1_\i)(s_1^* \times 1_{\bar \i \i}) (1_B \times t_x'' \times 1_\i)(1_B \times t_{x+1}'')s_2 \
(1_{a_x'} \times \bar r^*)(t_x' \times 1_\i)t'_{x+1}\\
=& d^{-1} (\tilde t_{x+1},t_x)
\left( 
\frac{d(M_x)}{d(a_x)} 
\right)^{1/2} 
\delta_{a_x,a_{x+1}} 
(s_1^* \times \bar r^*)(1_B \times t_x'' \times 1_\i)(1_B \times t_{x+1}'')s_2\\
=& d^{-1} (\tilde t_{x+1}'',t_x'') (\tilde t_{x+1},t_x)
\left( 
\frac{d(M_x'')d(M_x)}{d(a_x'')d(a_x)} 
\right)^{1/2} 
\delta_{a_x'',a_{x+1}''} \delta_{a_x,a_{x+1}} \delta_{s_1,s_2}
\end{split}
\een
which is the same as before and thus concludes the proof.
\end{proof}

We mention that for a closed spin chain (i.e. a state in $\sV^L$), we may also construct a Temperly-Lieb projection $E_x^L$ for $x=L$ 
which we think of as involving the sites $1$ and $L$ at least in certain cases\footnote{Alexander Stottmeister, private communication.}. In the literature, the resulting algebra is called the ``annular Temperly-Lieb algebra'' \cite{jones2006hilbert}. 
Assume that the Jones index $d^2 = [\cM:\cN]<4$. If $\omega:=i e^{i \pi / (2 (k+2))}, k=1,2,3 \dots$, then 
$d = -(\omega^2 + \omega^{-2})$ are precisely the possible values of the  square root of the Jones index below $2$ \eqref{jonesq}. 
It is standard to show that $B_x^L := \omega I^L + d\omega^{-1} E_x^L, x=1, \dots, L-1$ give a unitary representation of the braid 
group on $L$ strands. Now define the projection (see fig. \ref{fig:ATL})
\ben
E_L^L =
\left( \prod_{x=L-1}^{1} (B_x^L)^{-1} \right)
E_{L-1}^L
\left( \prod_{x=1}^{L-1} B_x^L \right)
\een
on the closed spin chain. Then $E^L_L$ is a projection and we have 
the additional relations $E_1^L E_L^L E_1^L = d^{-1} E_1^L, E_{L-1}^L E_L^L E_{L-1}^L = d^{-1} E_{L-1}^L, 
E_L^L E_1^L E_L^L = d^{-1} E^L_L, [E^L_x, E^L_L] = 0 \, (|x-L|>1)$ of the annular Temperly-Lieb algebra (using the opposite braiding $(B_x^L)^{-1}$ gives another representation.) 

\begin{figure}
\centering
\begin{tikzpicture}[scale=.3]
\draw[thick] (2,-2) -- (2,10);
\draw[thick] (4,-2) -- (4,10);
\draw[thick] (8,-2) -- (8,10);
\draw[thick] (10,-2) -- (10,10);
\draw[ line width=.3cm, color=white] (0,-2) .. controls (2,1) and (8,2) .. (9,3);
\draw[thick] (0,-2) .. controls (2,1) and (8,2) .. (9,3);
\draw[ line width=.3cm, color=white] (9,5) .. controls (8,6) and (2,7) .. (0,10);
\draw[thick] (9,5) .. controls (8,6) and (2,7) .. (0,10);
\filldraw[color=white] (8.5,3) -- (10.5,3) -- (10.5,5) -- (8.5,5) -- (8.5,3);
\draw[thick] (8.5,3) -- (10.5,3) -- (10.5,5) -- (8.5,5) -- (8.5,3);
\draw (8.5,4) node[anchor=west]{{\tiny $E^L_{L-1}$}};
\draw (-1,4) node[anchor=east]{$E^L_{L}:$};
\draw (7.5,4) node[anchor=east]{$\cdots$};
\end{tikzpicture}
  \caption{\label{fig:ATL} Wire diagram for the MPO $E^L_L$. The crossings represent $B_x^L := \omega I^L + d\omega^{-1} E_x^L$.
  }
\end{figure}
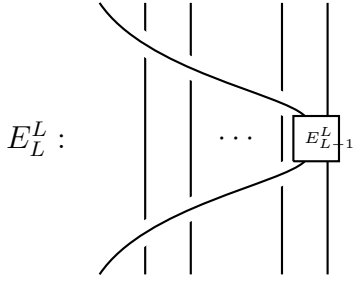

For the open (closed) spin chain, we define the Hamiltonian as 
\ben
H := J \sum_{x=1}^{L-1} E_x^L \quad (+J E^L_L),
\een
where $J$ is a coupling constant. By the previous lemma and cor. \ref{cor:triangle}, we have 
$[O_A^L, E^L_x] = [Q_{\lambda,\mu}^L, E^L_x] = 0$ for all $x \in [1,L]$ for the closed chain or all $x \in [1,L-1]$ for the open chain.
In either case we get, noting that $Q_{\lambda,\mu}^L$ leave the Hilbert space $\sV^L$ of the closed chain invariant:

\begin{corollary}
The local densities $E_x^L$ of the Hamiltonian  leave the ``sectors'' $\sV_{\lambda,\mu}^L:=Q^L_{\lambda,\mu} \sV^L \subset \sV^L$ invariant.
The same holds for the open chain.
\end{corollary}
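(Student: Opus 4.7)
The plan is that this corollary follows almost immediately from the commutation relations noted in the paragraph just preceding it, together with the projection property of $Q^L_{\lambda,\mu}$ established in Corollary \ref{cor:triangle}. The strategy is to observe that the invariance of the range of a projection under any operator commuting with that projection is a purely algebraic fact, so all the real work has already been done in Lemma \ref{lem:5} and the preceding corollary.

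Concretely, first I would invoke Lemma \ref{lem:5} to get $[E^L_x, Q^L_{\lambda,\mu}] = 0$ for all interior $x$, since $Q^L_{\lambda,\mu} = \pi^L(q_{\lambda,\mu})$ lies in the image of the double triangle algebra under the representation $\pi^L$. Combined with the extended commutation $[E^L_x, Q^L_{\lambda,\mu}]=0$ for $x\in[1,L]$ (closed chain) or $x\in[1,L-1]$ (open chain) recorded immediately above the corollary, this covers every local density appearing in the Hamiltonian. Second, I would invoke item 4 of Corollary \ref{cor:triangle}, which asserts $(Q^L_{\lambda,\mu})^2=Q^L_{\lambda,\mu}$, so that $\sV^L_{\lambda,\mu}$ is literally the range of the projection $Q^L_{\lambda,\mu}$. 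Third, for any $v\in\sV^L_{\lambda,\mu}$ we may write $v=Q^L_{\lambda,\mu}v$, whence
\begin{equation*}
E^L_x v \;=\; E^L_x Q^L_{\lambda,\mu} v \;=\; Q^L_{\lambda,\mu}E^L_x v \;\in\; Q^L_{\lambda,\mu}\sV^L \;=\; \sV^L_{\lambda,\mu},
\end{equation*}
which establishes invariance. For the closed-chain case, the $x=L$ density $E^L_L$ built from the annular Temperley-Lieb construction is treated in exactly the same way, using that $E^L_L$ is a polynomial in the $E^L_y$ (via the braids $B^L_y$) and products of $E^L_y$'s; since each factor commutes with $Q^L_{\lambda,\mu}$, so does the product.

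There is no real obstacle here because the heavy lifting—the commutation $[E^L_x,\pi^L(z)]=0$ for every $z\in\lozenge$, and the idempotence/centrality of the $q_{\lambda,\mu}$—has already been carried out. The only minor point worth checking is the open-chain statement, where one must note that the definition of $\pi^L$ already makes sense on $\sV^L_{\rm open}$, so $Q^L_{\lambda,\mu}$ is defined there and the identical algebraic argument applies. I would therefore present this corollary as a one-line consequence: commuting projections have invariant ranges.
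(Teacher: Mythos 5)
Your proposal is correct and follows exactly the route the paper intends: the paper states the corollary as an immediate consequence of the commutation relations $[Q^L_{\lambda,\mu},E^L_x]=0$ (from Lemma \ref{lem:5} and Corollary \ref{cor:triangle}) together with the idempotence of $Q^L_{\lambda,\mu}$, which is precisely the ``commuting projections have invariant ranges'' argument you spell out. Your additional remark that the closed-chain generator $E^L_L$ is built as a product of the $B^L_y$ (hence of the $E^L_y$) and therefore also commutes with $Q^L_{\lambda,\mu}$ is a correct and worthwhile elaboration of a point the paper leaves implicit.
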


Because the $Q^L_{\lambda,\mu}$ are non-zero mutually orthogonal projections whenever $Z_{\lambda,\mu} \neq 0$ (cor. \ref{cor:triangle}), 
it follows that we have a corresponding direct sum decomposition of the closed chain Hilbert space 
$\sV^L$ into the sectors $\sV_{\lambda,\mu}^L$, each of which is invariant under 
time evolution and local densities of the Hamiltonian. This is analogous to the case of a rational $1+1$-dimensional CFT, where 
the sectors of the total Hilbert space splits into sectors labelled by pairs of simple objects $(\lambda,\mu)$ -- which are Virasoro-irreps in that case and correspond to the 
decomposition of the partition function into $\sum_{\lambda,\mu} Z_{\lambda,\mu} \chi_\lambda(q) \overline{\chi_\mu(q)}$ -- 
each of which is invariant under the densities of the Hamiltonian (Virasoro algebra). 

\section{Defects}

\subsection{Defect algebra}
In this section, we come to the main results of this paper which 
concern defects. A defect is roughly a kind of ``boundary condition'' 
for a bi-partite closed spin-chain. In our approach, 
each defect corresponds to an appropriately defined orthogonal summand of the Hilbert space for the full bipartite chain, 
and is invariant under any local operator as defined in sec. \ref{sec:local} which is not acting on the boundary points separating the sub-chains. 
As we will see, our ``boundary conditions'' are very similar to the classification of transparent boundaries in 1+1 CFTs, see sec. \ref{sec:transparent}, 
which is in fact the prime motivation for our construction. The subspaces corresponding to the defects are the eigenspaces of an abelian algebra, 
$\cD^{L_1,L_2}$, generated by certain operators $\Psi^{L_1,L_2}_{\nu,\rho; w_1, w_2}$ soldering the two subchains of lengths $L_1, L_2$ together.\footnote{Throughout, 
$L$ and $L_1,L_2$ are even numbers as before, due to the fact that we alternate between $\imath$ and $\bar \imath$
in the construction of the chain Hilbert space $\sV^L$. of the subchain.}  

Since the defects ``sit'' where the two sub-chains meet at some time zero snapshot of the chain, we think of them as ``vertical'' and they would remain in a fixed position if the chain 
was evolved with a Hamiltonian $H^{L_1} + H^{L_2}$ which is a sum of a Hamiltonian made from local operators of subchain 1 and 2 which do not touch the endpoints of the subchains.
The operators $O_A^L$ associated with a single chain (see thm. \ref{thm1}) are also often discussed in the literature in connection with defects. We 
tend to think of them as ``horizontal''. The algebra of vertical and horizontal defects turns out to be isomorphic to the fusion algebra ${}_\cM X_{\cM}$ with structure constants $N_{A,B}^C$
in the case of diagonal theories, (see sec. \ref{sec:specialcases}) although they of course correspond to entirely different representations of that algebra. In the general case, the
algebras seem to be non-isomorphic leaving the need for further investigations. Defects from the point of view of PEPs are discussed in app. \ref{PEPs}, but also here the 
precise connection needs clarification.  

After these preliminaries, we now come the definition of the MPOs $\Psi^{L_1,L_2}_{\nu,\rho; w_1, w_2}$ and their structural properties.
The operators $\Psi^{L_1,L_2}_{\nu,\rho; w_1, w_2}$  
act on the tensor product of two open chains, i.e. the 
Hilbert space 
\ben
\sV^{L_1,L_2}:=\sV^{L_1}_{\rm open} \otimes \sV^{L_2}_{\rm open}. 
\een
However, they 
couple the two tensor copies together at the end-points 
in each of the chains and therefore connect the two systems.
These operators generate the algebra $\cD^{L_1,L_2}$ and 
will be labelled by a pair $\nu, \rho \in \, _{\cN} X_\cN$ and a pair $w_1
\in \Hom(\alpha_\nu^-, \alpha_\rho^+), w_2 \in \Hom(\alpha_\rho^+, \alpha_\nu^-)$, 
so that both $w_1, w_2$ may be thought of as running between $1, \dots, Z_{\rho,\nu}$. 
The operators will only be defined for pairs $(\nu,\rho)$ such that $Z_{\rho,\nu} \neq 0$, so 
there are 
\ben
\sum_{\nu,\rho \in \, _\cN X_\cN} Z_{\nu,\rho}^2 = | \, _\cM X_\cM \, |
\een
such operators, using \cite{bockenhauer1999alpha}, cor. 6.10 in the equality sign. 
This suggests that the joint eigenspaces of the 
$\Psi^{L_1,L_2}_{\nu,\rho; w_1, w_2}$ and the central projections of
$\cD^{L_1,L_2}$ are labelled by the objects $A$ of $\, _\cM X_\cM$, 
as we will confirm below.

To start, we consider the projection operators 
$Q^{L_1}_{\mu_1,\lambda_1}$ and $Q^{L_2}_{\mu_2,\lambda_2}$ 
and according to our cor. 1, item 5), we can 
decompose the Hilbert space $\sV^{L_1,L_2}$ of the full bipartite chain 
into an orthogonal direct sum with summands given by 
\ben
\sV_{\mu_1, \lambda_1; \mu_2, \lambda_2}^{L_1,L_2} 
:= Q^{L_1}_{\mu_1,\lambda_1} \sV^{L_1}_{\rm open} 
\otimes Q^{L_2}_{\mu_1,\lambda_2} \sV^{L_2}_{\rm open} . 
\een
The operators $\Psi^{L_1,L_2}_{\nu,\rho; w_1, w_2}$ 
will be engineered to 
have the following properties: 
\begin{enumerate}
    \item (Identity) For $(\nu,\rho)=(id,id)$ 
    the identity endomorphisms, we have 
    \ben
    \Psi^{L_1,L_2}_{id,id;1,1} = I^{L_1} \otimes I^{L_2}, 
    \een
    where $1$ is the only unitary intertwiner in $\Hom(\alpha^+_{id},\alpha^-_{id})=\Hom(id,id) = \bC 1$ and $I^{L_i}$ is the identity operator on the closed
    chain $\sV^{L_i}$ (viewed as a projection operator on the open chain Hilbert space).
    \item (Conjugate) We have
    \ben
    (\Psi^{L_1,L_2}_{\nu,\rho; w_1,w_2})^\dagger 
    = \Psi^{L_1,L_2}_{\nu,\rho; w_2^*, w_1^*}
    .
    \een
    \item (Commutativity) 
    We have
    \ben
    \label{Com1}
    [\Psi^{L_1,L_2}_{\nu,\rho; w_1,w_2}, 
    \Psi^{L_1,L_2}_{\nu',\rho'; w_1', w_2'}] = 0.
    \een
    If $A \in \cT\cL^{L_1}_{[2,L_1-1]} \otimes \cT\cL^{L_2}_{[2,L_2-1]}$ 
    is any local operator on the doubled chain Hilbert space 
    $\sV^{L_1,L_2}=\sV^{L_1}_{\rm open} \otimes \sV^{L_1}_{\rm open}$ acting 
    on any sites of chain 1 or chain 2 except for the endpoints $\{1,L_i\}$ of either chain, then
    \ben
     \label{Com2}
    [\Psi^{L_1,L_2}_{\nu,\rho;w_1,w_2}, A] = 0.
    \een
    \item (OPE) We have 
    \ben
    \label{OPE}
    \Psi^{L_1,L_2}_{\nu,\rho;w_1,w_2} 
    \Psi^{L_1,L_2}_{\nu',\rho';w_1',w_2'} = 
    \sum_{\nu'',\rho'',w_1'',w_2''} c^{\nu'',\rho'',w_1'',w_2''}_{\nu',\rho',w_1',w_2'; \nu,\rho;w_1,w_2}
    \Psi^{L_1,L_2}_{\nu'',\rho'';w_1'',w_2''}
    \een
    where only such terms occur in the sum for which $\nu'' \subset \nu \nu', \rho'' \subset \rho \rho'$, i.e. which are compatible with fusion and. The numerical 
    coefficients are defined as
    \ben
    \label{cdef}
    c^{\nu'',\rho'',w_1'',w_2''}_{\nu',\rho',w_1',w_2'; \nu,\rho;w_1,w_2}
    := \sum_{e,f} \eta^{w_1''}_{e,f;w_1, w_1'}
    \eta^{w_2''}_{f^*,e^*;w_2, w_2'}
    \een
    where $e,f$ run over ONBs of $\Hom(\nu \nu',\nu'')$ respectively 
    $\Hom(\rho \rho',\rho'')$, and where $\eta^{w''}_{e,f;w, w'}$ are 
    Rehren's structure constants \eqref{rehren}. 
    
    \item (Fusion)  
    \ben
    \Psi^{L_1,L_2}_{\nu,\rho;w_1,w_2} 
    \sV^{L_1,L_2}_{\lambda_1,\mu_1;\lambda_2,\mu_2} \subset
    \bigoplus_{\lambda_1',\mu_1';\lambda_2',\mu_2'} 
    \sV^{L_1,L_2}_{\lambda_1',\mu_1';\lambda_2',\mu_2'} ,
    \een
    where only such terms occur in the sum which have $\mu_i' \subset \nu \mu_i, \lambda_i' \subset \rho \lambda_i$, i.e. which are compatible with fusion. 
\end{enumerate}

We now define the operators $\Psi^{L_1,L_2}_{\nu,\rho; w_1, w_2}$. We 
denote by $e_i$ respectively $f_i$ ONBs of 
\bena
\label{eifidef}
e_1 &\in \Hom(\mu_1 \nu, \mu_1')\\
f_1 &\in \Hom(\rho \lambda_1, \lambda_1')\\
e_2 &\in \Hom(\mu_2 \rho, \mu_2')\\
f_2 &\in \Hom(\nu \lambda_2, \lambda_2'),
\eena
where here and in the rest of this section, $i=1,2$ refers to the different subsystems of the bipartite periodic chain. Then we define MPOs $X_{e_1,f_2; s,t}^{s',t'}$ and $X_{e_2,f_1; s,t}^{s',t'}$
acting on two adjacent sites in the chain as in fig. \ref{fig:220413_Fig-1_TE}, where the down and right pointing triangles denote $6j$-symbols 
defined by figs. \ref{fig:220413_Fig-2_TE} and \ref{fig:220413_Fig-3_TE}.

\begin{figure}
\begin{center}
  \includegraphics[width=0.9\textwidth,]{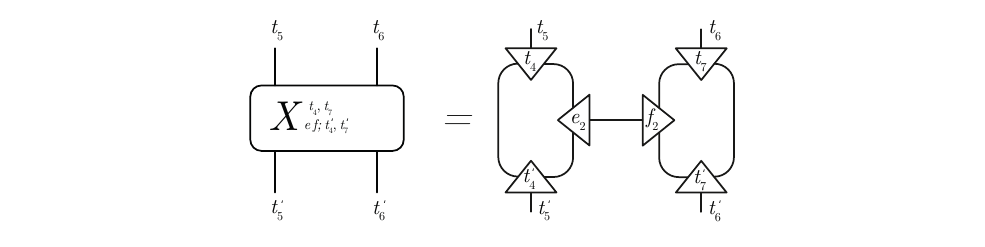}
  \end{center}
  \caption{  \label{fig:220413_Fig-1_TE} The definition of $X_{e_1,f_2; s,t}^{s',t'}$.}
\end{figure}

\begin{figure}
\begin{center}
  \includegraphics[width=0.9\textwidth,]{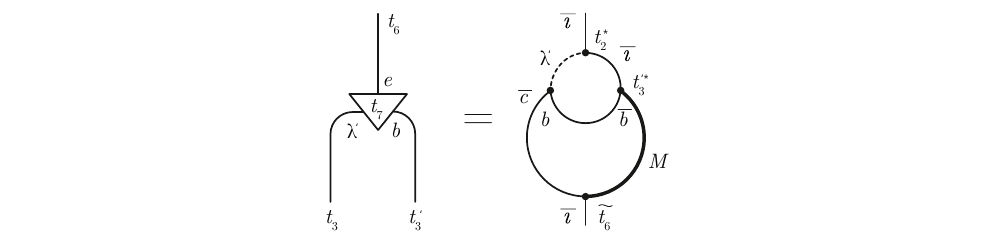}
  \end{center}
  \caption{  \label{fig:220413_Fig-2_TE} Downwards-pointing triangles.}
\end{figure}

\begin{figure}
\begin{center}
  \includegraphics[width=0.9\textwidth,]{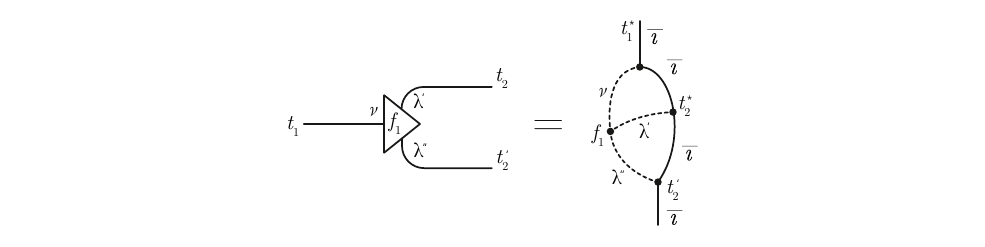}
  \end{center}
  \caption{  \label{fig:220413_Fig-3_TE} Leftwards-pointing triangles.}
\end{figure}

These figures also implicitly define the intertwiner spaces associated with the indices of these MPOs. The up- and left-pointing triangles are defined as their complex conjugates.
Next, we let $a,b,c,d \in \, _\cM X_\cN$, $\mu, \lambda \in \, _\cN X_\cN$, $t \in \Hom(\lambda c, b), 
s \in \Hom(c\mu, d), v \in \Hom(\alpha^-_\mu, \alpha^+_\lambda)$ and put 
\ben
\label{xbdef}
\bar x_{t,s,v,a} := (d_\mu d_\lambda)^{1/4} \left( \frac{d_a}{d_c D_X} \right)^{1/2} \, 
(s^* \times 1_a)
(1_c \times \epsilon^-(\mu, \bar a)^*)
(1_{c\bar a} \times v)
(1_c \times \epsilon^+(\lambda, \bar a))
(t \times 1_{\bar a}) \quad \in \lozenge, 
\een
and then 
\ben
\label{Phi}
\Phi_{t,s,v}^L := \sum_{a \in \, _\cN X_\cN} \pi^L(\bar x_{t,s,v,a}). 
\een
This definition is illustrated in the upper box of fig. \ref{fig:220413_Fig-4_TE}
which also shows in the lower box a graphical illustration of the adjoint. Note that 
the intertwiner $x_{t,s,v,a} \in \lozenge$ is depicted following the 
conventions by \cite{bockenhauer1999alpha}, i.e. its graphical representation as a wire diagram is 
rotated by 90 degrees relative to our normal conventions for pictures. 
The formula for the adjoint is 
\ben
\label{bPhi}
(\Phi_{t,s,v}^L)^\dagger := \sum_{a \in \, _\cN X_\cN} \pi^L(x_{t,s,v,a}), 
\een
wherein
\ben
\label{xdef}
x_{t,s,v,a} := (d_\mu d_\lambda)^{1/4} \left( \frac{d_a}{d_c D_X} \right)^{1/2} \, 
(1_a \times \bar s^*)
(\epsilon^-(a, \bar \mu)^* \times 1_c)
(\bar v \times 1_{a \bar c})
(\epsilon^+(a, \bar \lambda) \times 1_{\bar c})
(1_a \times \bar t) \quad \in \lozenge.
\een

\begin{figure}[h!]
\begin{center}
  \includegraphics[width=0.9\textwidth,]{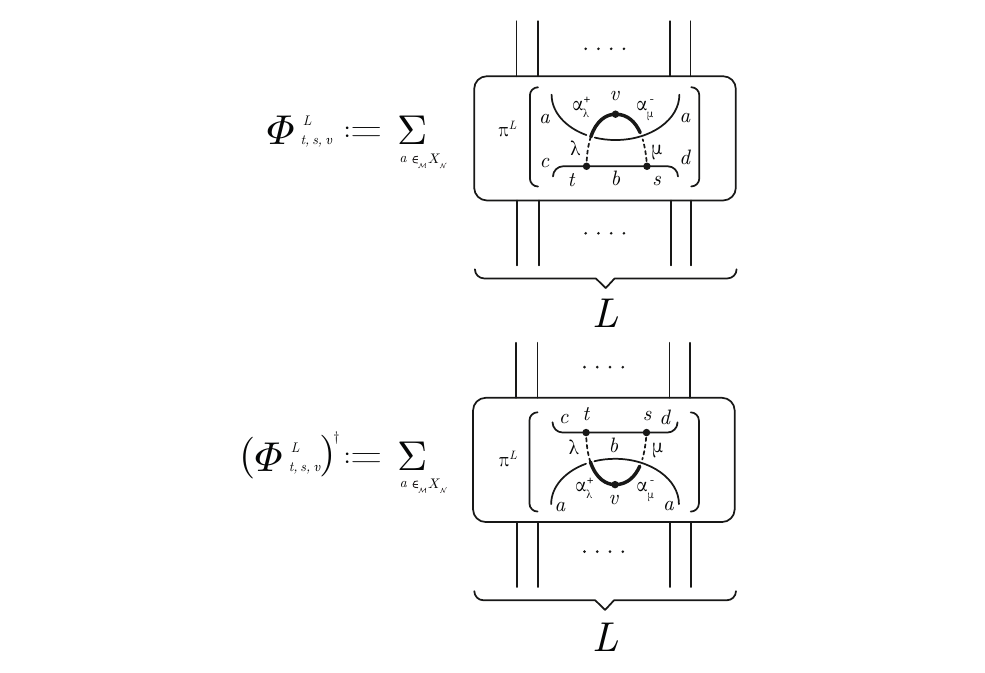}
  \end{center}
  \caption{  \label{fig:220413_Fig-4_TE} The definition of $\Phi_{t,s,v}^L$.}
\end{figure}

Next, we set
\ben
\begin{split}
\zeta_{e_1, f_1; w_1 v_1}^{v_1'} &:= 
\left[ \frac{d(\mu_1) d(\nu)}{d(\mu_1') d(\theta)} \right]^{1/2}
E[
e^*_1 \epsilon^-(\mu_1,\nu)^* (w_1 \times v_1) f_1(v_1')^*
]\\
\zeta_{e_2, f_2; w_2 v_2}^{v_2'} &:= 
\left[\frac{d(\mu_2) d(\rho)}{d(\mu_2') d(\theta)} \right]^{1/2}
E[
e^*_2 \epsilon^-(\mu_2,\rho)^* (w_2 \times v_2) f_2(v_2')^*
]
\label{Rehrenz}
\end{split}
\een
with $v_i \in \Hom(\alpha^-_{\mu_i}, \alpha^+_{\lambda_i}), v_i \in \Hom(\alpha^-_{\mu_i'}, \alpha^+_{\lambda_i'})$ and $w_i$ as above, 
see fig. \ref{fig:220413_Fig-5_TE}. Note the similarity to Rehren's 
structure constants \eqref{rehren} $\eta_{e_2, f_2; w_2, v_2}^{v_2'}$
which differ only by the braiding operators. 

\begin{figure}[h!]
\begin{center}
  \includegraphics[width=0.9\textwidth,]{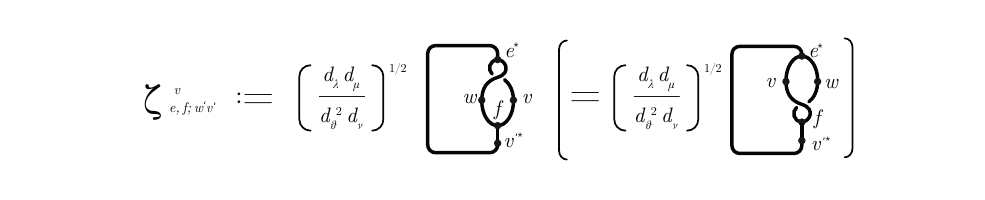}
  \end{center}
  \caption{  \label{fig:220413_Fig-5_TE} Wire diagram for the $\zeta$-structure constants \eqref{Rehrenz}.}
\end{figure}

\begin{definition}\label{def:psi}
We define the following MPO's on $\sV^{L_1,L_2}$:
\ben
\begin{split}
\label{Psidef}
   & \Psi^{L_1,L_2}_{\nu,\rho; w_1, w_2} := 
   \sum_{t_i, t_i', s_i, s_i', e_i, f_i, v_i, v_i'} 
   \zeta_{e_1, f_1; w_1, v_1}^{v_1'}
   \zeta_{e_2, f_2; w_2, v_2}^{v_2'} \cdot \\
   &(\Phi^{L_1}_{t_1,s_1,v_1} \otimes \Phi^{L_2}_{t_2,s_2,v_2})
   (X^{s_2,t_1}_{e_2, f_1;s_2',t_1'} \otimes I^{L_1-2} \otimes
   X^{s_1,t_2}_{e_1, f_2;s_1',t_2'} \otimes I^{L_2-2})
   (\Phi^{L_1}_{t_1',s_1',v_1'} \otimes \Phi^{L_2}_{t_2',s_2',v_2'})^\dagger, 
\end{split}
\een
This definition is illustrated in fig. \ref{fig:220413_Fig-6_TE}.

\begin{figure}[h!]
\begin{center}
  \includegraphics[width=0.9\textwidth,]{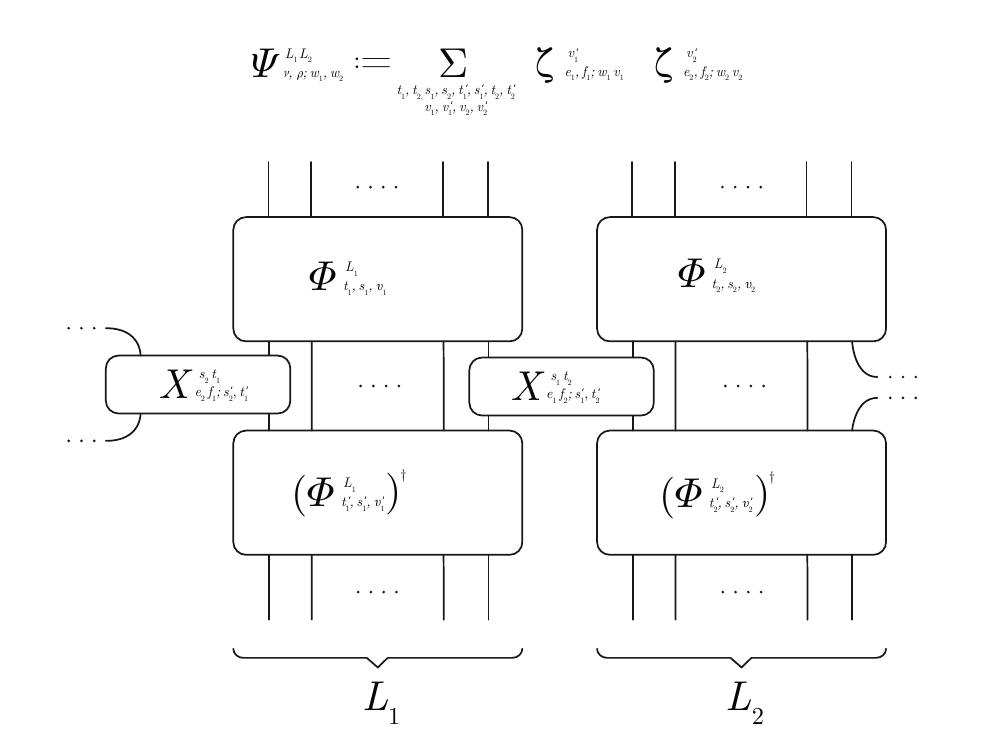}
  \end{center}
  \caption{  \label{fig:220413_Fig-6_TE} Definition of the MPOs $\Psi^{L_1,L_2}_{\nu,\rho; w_1, w_2}$.}
\end{figure}

The sums run over ONBs of intertwiners $e_i, f_i$ as in \eqref{eifidef} 
for fixed $\nu, \rho$ and $w_i$ and over ONBs
\begin{equation}
\begin{split}
\label{tisidef}
&t_i \in \Hom(\lambda_i \bar b_i, \bar c_i), \quad 
s_i \in \Hom(\mu_i \bar b_i, \bar d_i), \\ 
&t_i' \in \Hom(\lambda_i' \bar b_i', \bar c_i'), \quad 
s_i' \in \Hom(\mu_i' \bar b_i', \bar d_i').
\end{split}  
\end{equation}
\end{definition}

\begin{definition}
We define the {\em defect algebra} $\cD^{L_1,L_2}$ to be the $\dagger-$ algebra generated by the 
operators $\Psi^{L_1,L_2}_{\nu,\rho; w_1, w_2}$. 
\end{definition}

The following is one of the main results of this paper and 
clarifies the nature of the defect algebra. 

\begin{theorem}
\label{thm:3}
The MPOs $\Psi_{\nu,\rho; w_1, w_2}^{L_1,L_2}$ as in def. \ref{def:psi}
fulfill properties 1)--5). 
\end{theorem}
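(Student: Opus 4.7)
The plan is to verify the five properties separately, exploiting in each case a different piece of the machinery developed so far. Throughout we use the graphical calculus systematically, and we suppress the length labels when no confusion can arise.

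\textbf{Property 1 (Identity).} Specializing to $\nu=\rho=id$ and $w_1=w_2=1$, the $\alpha$-induced objects $\alpha^\pm_{id}$ reduce to the identity endomorphism of $\cM$, so the spaces $\Hom(\alpha^-_{\mu_i},\alpha^+_{\lambda_i})$ force $\mu_i=\lambda_i$, and the intertwiners $v_i, v_i'$ collapse to members of an ONB. The structure constants $\zeta_{e_i,f_i;1,v_i}^{v_i'}$ reduce to $\delta_{v_i,v_i'}$ times an obvious normalization by the definition \eqref{Rehrenz} and the fact that the braiding $\epsilon^{\pm}(\mu_i,id)=1_{\mu_i}$. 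The $X$-tensors degenerate to Kronecker deltas along the boundary sites, and the remaining sum over $t_i,s_i,v_i,\ldots$ reorganizes into $\sum_{\mu_i,\lambda_i}Q^{L_i}_{\mu_i,\lambda_i}\otimes Q^{L_j}_{\mu_j,\lambda_j}$, which by cor. \ref{cor:triangle}, item 5, equals $I^{L_1}\otimes I^{L_2}$.

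\textbf{Property 2 (Conjugate).} Taking the $\dagger$ of \eqref{Psidef} reverses the three horizontal blocks. By construction $(\Phi_{t,s,v}^L)^\dagger$ is given by \eqref{bPhi}, which is itself of the form $\Phi_{t,s,v}^L$ but with the ``barred'' intertwiner $x_{t,s,v,a}$ in place of $\bar x_{t,s,v,a}$; the adjoints of $X^{s,t}_{e,f;s',t'}$ are the up- and right-pointing triangles corresponding to $e^*,f^*$. The coefficients $\zeta_{e_i,f_i;w_i,v_i}^{v_i'}$ conjugate, by inspection of fig.~\ref{fig:220413_Fig-5_TE} and a graphical rotation using the conjugacy relations \eqref{conjgacyrel} and the BF relations \eqref{BFE}, to $\zeta_{f_i^*,e_i^*;w_i^*,v_i'}^{v_i}$ up to the interchange $(w_1,w_2)\mapsto(w_2^*,w_1^*)$. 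Reassembling gives exactly $\Psi^{L_1,L_2}_{\nu,\rho;w_2^*,w_1^*}$.

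\textbf{Property 3 (Commutativity with local operators, and mutual commutativity).} Commutativity with any $A\in\cT\cL^{L_1}_{[2,L_1-1]}\otimes \cT\cL^{L_2}_{[2,L_2-1]}$ follows because the operators $\Phi^{L_i}_{t_i,s_i,v_i}$ lie in $\pi^{L_i}(\lozenge)$ by thm.~\ref{thm:2}, and by lem.~\ref{lem:5} any such MPO commutes with all local TLJ generators on the interior sites. The $X$-tensors act only on the boundary sites $\{1,L_i\}$ that are excluded from the support of $A$, so commutativity follows. For mutual commutativity \eqref{Com1}, we compute $\Psi\Psi'$ in two different orders and use: (i) that the outer $\Phi$ blocks, being elements of $\pi^{L_i}(\lozenge)$, interact via the double triangle algebra product and therefore only depend on the ``horizontal centre'' projections $q_{\mu_i,\lambda_i}$, which mutually commute by \eqref{qrel}; and (ii) the ``functoriality 2'' identity \eqref{funct2} for $\alpha$-induction applied to the pair $w_1, w_1'$ (resp.~$w_2,w_2'$) — this is what allows the two ``defect loops'' to pass through each other at the boundary sites without changing the result. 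The upshot is that the order of the two $\Psi$'s can be swapped at the price of an application of the BF relations, which in the appropriate combination cancel out. This is the technically most delicate step.

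\textbf{Properties 4 and 5 (OPE and Fusion).} For the OPE, we expand $\Psi_{\nu,\rho;w_1,w_2}\Psi_{\nu',\rho';w_1',w_2'}$ by inserting between the two products a complete set of intermediate intertwiners, then use thm.~\ref{thm:2} to evaluate products of $\Phi$-blocks as double triangle products, and we use the standard decomposition of $\nu\nu'$ (resp.~$\rho\rho'$) into irreducibles via ONBs $e\in\Hom(\nu\nu',\nu''), f\in\Hom(\rho\rho',\rho'')$. The remaining $X$-tensors and the $\zeta$-factors combine, via the definition \eqref{Rehrenz} and the conditional-expectation identity $E(\i(x)y)=E(y)\bar\i^{-1}(x)$ used to peel off the intertwiners, into Rehren's structure constants \eqref{rehren}. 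This yields exactly the coefficients $c$ defined in \eqref{cdef} and the decomposition \eqref{OPE}. Finally, for the fusion property 5, we observe that the $\Phi^{L_i}_{t_i',s_i',v_i'}{}^\dagger$ on the right of \eqref{Psidef} projects onto the sector $\sV^{L_i}_{\lambda_i,\mu_i}=Q^{L_i}_{\lambda_i,\mu_i}\sV^{L_i}_{\rm open}$ (since summing over an ONB of $v_i'$ builds the central projection $q_{\mu_i,\lambda_i}$), while the leftmost $\Phi^{L_i}_{t_i,s_i,v_i}$ together with the fusion intertwiners $e_i,f_i$ deposits a pair of charges $(\lambda_i',\mu_i')$ with $\mu_i'\subset\nu\mu_i$ and $\lambda_i'\subset\rho\lambda_i$.

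The main obstacle is the mutual commutativity statement in property 3: it requires a careful rewriting of the boundary region using the braiding $\epsilon^\pm(\lambda,\bar b)$ and $\epsilon^\pm(a,\lambda)$ together with functoriality 2 of $\alpha$-induction, and bookkeeping of the associated $6j$-symbols is nontrivial. The remaining properties are either direct verifications or standard applications of the representation $\pi^L$ together with the structure of Rehren's full-centre Q-system.
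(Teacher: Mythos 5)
Your overall plan for properties 1), 2), 4) and 5) matches the paper's: identity via collapse of the $X$- and $\zeta$-tensors to Kronecker deltas and $\sum_{\mu_i,\lambda_i}Q^{L_i}_{\mu_i,\lambda_i}=I^{L_i}$; conjugation by inspection; the OPE by reducing products of $\Phi$-blocks to double-triangle products and identifying the boundary data with Rehren's structure constants; fusion by sandwiching with $Q^{L_1}_{\mu_1,\lambda_1}\otimes Q^{L_2}_{\mu_2,\lambda_2}$. The one genuine structural difference is property 3): you treat mutual commutativity \eqref{Com1} as a separate, direct computation (``the technically most delicate step''), swapping the two defect loops at the boundary using BF moves and functoriality 2. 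The paper instead derives \eqref{Com1} as a one-line corollary of property 4): once the structure constants \eqref{cdef} are known, a unitary change of ONB $e\mapsto\epsilon^-(\nu,\nu')e'$, $f\mapsto\epsilon^+(\rho,\rho')f'$ combined with functoriality 2 inside Rehren's coefficients \eqref{rehren} shows they are symmetric under exchange of the two labels, hence the algebra is abelian. Your route uses the same ingredients but is harder and, as sketched, does not actually specify how the BF relations ``cancel out''; since you must prove the OPE anyway, you should adopt the paper's ordering and get commutativity for free.

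Two smaller points. First, your claim in property 1) that $\Hom(\alpha^-_{\mu_i},\alpha^+_{\lambda_i})$ forces $\mu_i=\lambda_i$ is false in general (that space has dimension $Z_{\lambda_i,\mu_i}$, which need not be diagonal); fortunately it is also unnecessary, since $\sum_{\mu_i,\lambda_i}Q^{L_i}_{\mu_i,\lambda_i}=I^{L_i}$ holds without it. Second, be aware that in the paper the real content of the theorem is the chain of graphical manipulations establishing property 4) — inserting the summation over $A$ and $t$, eliminating the interior projectors $P^{L_i-2}$, switching ONBs of the form $(1_{\rho'}\times f_1)f_1'$ to $(f_1\times 1_{\lambda_1'})f_1'$ by unitarity of the $6j$-symbols, extracting the deltas $\delta_{e_1,f_2}\delta_{e_2,f_1}$, and finally trading the $\zeta$-structure constants for the $\eta$-ones by a unitary change of basis. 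Your sketch names the endpoints of this computation but none of the intermediate steps, and the ``conditional-expectation identity'' you invoke is not the bimodule property $E(n_1mn_2)=n_1E(m)n_2$ in a usable form; as written it does not typecheck. So the proposal is a correct outline of the right strategy, but the substance of the argument for 4) still has to be supplied.
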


\begin{proof}
1) In this case $f_i = e_i = 1$ and consequently 
$X^{s_2,t_1}_{e_2, f_1;s_2',t_1'} 
= \delta^{s_2}_{s_2'} \delta^{t_1}_{t_1'}$ 
and
$X^{s_1,t_2}_{e_1, f_2;s_1',t_2'} 
= \delta^{s_1}_{s_1'} \delta^{t_2}_{t_2'}$, 
as well as 
$\zeta_{e_1, f_1; 1, v_1}^{v_1'} = \delta^{v_1}_{v_1'}$ 
and
$\zeta_{e_2, f_2; 1, v_2}^{v_2'} = \delta^{v_2}_{v_2'}$.
Therefore, we obtain 
\ben
\Psi^{L_1,L_2}_{id,id; 1, 1} = 
   \sum_{t_i, s_i, v_i} 
   (\Phi^{L_1}_{t_1,s_1,v_1} \otimes \Phi^{L_2}_{t_2,s_2,v_2})
   (\Phi^{L_1}_{t_1,s_1,v_1} \otimes \Phi^{L_2}_{t_2,s_2,v_2})^\dagger.
\een
Using now the representations \eqref{Phi}, \eqref{bPhi} of $\Phi_{t,s,v}^L, (\Phi_{t,s,v}^L)^\dagger$ 
in terms of elements of the double triangle algebra $\lozenge$, see fig. \ref{fig:220413_Fig-4_TE}, 
the representation property of $\pi^L$ (thm. \ref{thm:2}), 
the relations in $\lozenge$ and the expression for $q_{\mu,\nu}$ in the double 
triangle algebra, fig. \ref{fig:220413_Fig-19_TE} and eq. \eqref{qdef}, the expressions \eqref{xdef}, \eqref{xbdef},
and the product $\star$ in the double triangle algebra, 
we see that
\ben
\label{phiphi}
\begin{split}
\sum_{v_i, t_i, s_i} \Phi^{L_i}_{t_i,s_i,v_i} (\Phi^{L_1}_{t_i,s_i,v_i})^\dagger
&=\sum_{v_i, t_i, s_i,a,b} \pi^{L_i}(\bar x_{t_i,s_i,v_i,a} \star x_{t_i,s_i,v_i,b}) \\
&=\sum_{\mu_i,\lambda_i} \pi^{L_i}(q_{\mu_i,\lambda_i})\\
&= \sum_{\mu_i,\lambda_i} Q^{L_i}_{\mu_i,\lambda_i}, 
\end{split}
\een
where the sum is over ONBs of intertwiners $t_i, s_i$ as in \eqref{tisidef}
for fixed $\mu_i, \lambda_i$ and then (implicitly) over $\mu_i, \lambda_i$. The statement now follows form 
cor. \ref{cor:triangle}, item 5), because $\sum_{\mu_i,\lambda_i} Q^{L_i}_{\mu_i,\lambda_i} = I^{L_i}$
acts as the identity on the closed chain, and as a projector onto states satisfying periodic boundary conditions 
on the open chain.

\medskip 
2) This is a straightforward consequence of the definition.

\medskip 
3) The first statement \eqref{Com1} follows from 4) as follows. In 
\eqref{cdef}, we can replace the summation over $e,f$ by a summation 
over $\epsilon^-(\nu,\nu')e',\epsilon^+(\rho,\rho')f'$ where 
$e',f'$ run over ONBs of $\Hom(\nu' \nu,\nu'')$ respectively 
    $\Hom(\rho' \rho,\rho'')$, because the braiding operators are unitary so this corresponds to a unitary change of bases in these Hom-spaces. 
    Then we can use the functoriality 2) property of the braiding operators 
    in the $\alpha$-induction construction inside Rehren's structure constants
    \eqref{rehren} to cancel the braiding operators and thereby see that the structure constants 
    are symmetric under an exchange of $(\nu,\rho,w_1,w_2) \leftrightarrow
    (\nu',\rho',w_1',w_2')$, i.e. the operator algebra \eqref{OPE} is abelian. 
    
    The second statement \eqref{Com2} follows because the local operators 
    $A$ as in the statement commute with the MPOs $\Phi^{L_i}_{t_i,s_i,v_i}$ by lem. \ref{lem:5}. 
    Since the sites on which $A$ act are distinct from the sites on which the operators 
    $X$ in \eqref{Psidef} act, $A$ therefore commutes with 
    \eqref{Psidef}. 
    
    \medskip 
4) This part of the proof is the most involved. It is carried out largely in terms of pictures where several computations involving $6j$-symbols work similarly as in the proof of thm. \ref{thm:2} and are therefore not spelled out to the last detail. 
First we write out the product in \eqref{OPE}. 
Focussing on the right system $L_2$ and using \eqref{phiphi} we obtain 
fig. \ref{fig:220413_Fig-7_TE} using the representation property of $\pi^{L_2}$, the relations in the double triangle algebra $\lozenge$,
and \cite{bockenhauer1999alpha}, lem. 6.2 (where the normalization factor on the right side of fig. 60 is $\delta_{v_2', v_2''}$) to obtain the middle box. 

\begin{figure}[h!]
\begin{center}
  \includegraphics[width=0.9\textwidth,]{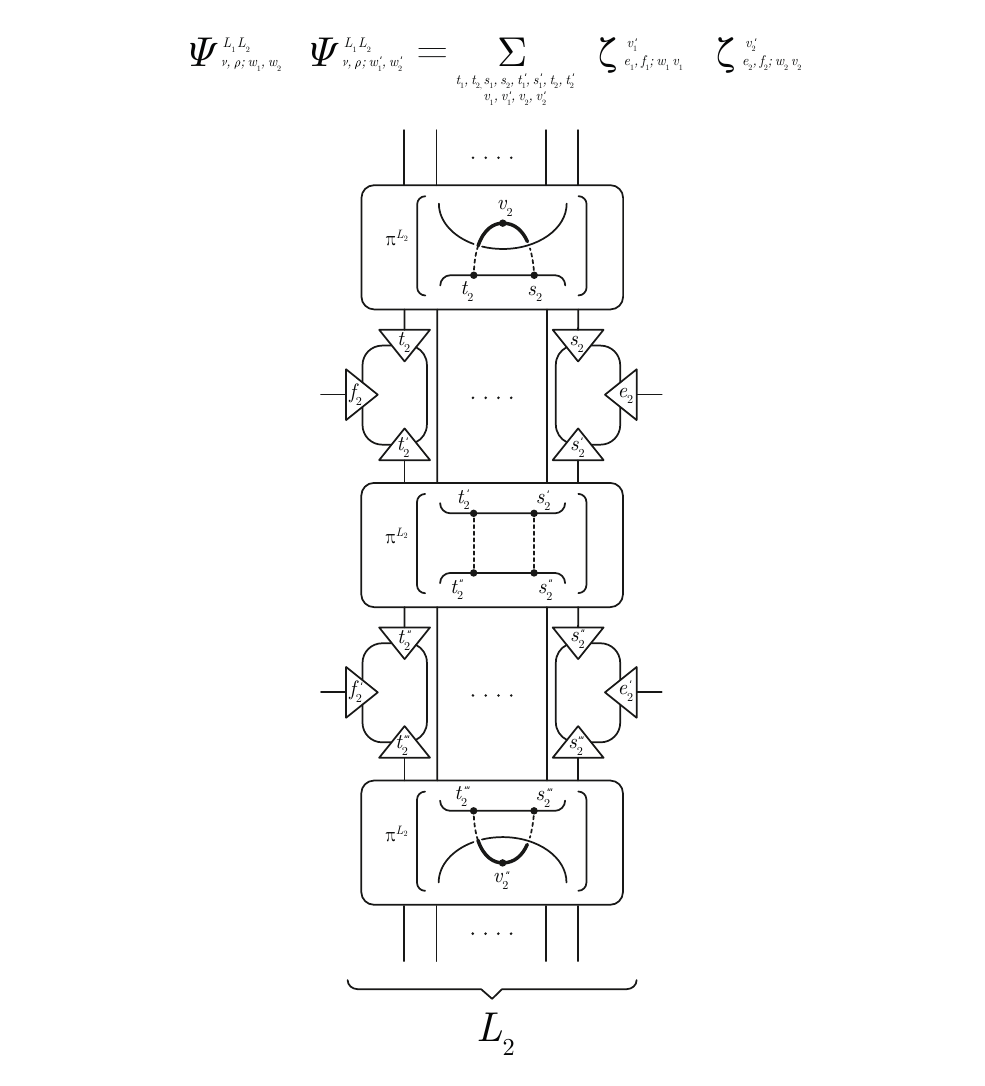}
  \end{center}
  \caption{  \label{fig:220413_Fig-7_TE} Right half of the graphical representation of the left side of
  \eqref{OPE}.}
\end{figure}

Zooming in onto the middle part of this figure,
we first insert a thick line and a summation over $t,A$ as in the 
first step in fig. \ref{fig:220413_Fig-8_TE}. 

\begin{figure}[h!]
\begin{center}
  \includegraphics[width=0.9\textwidth,]{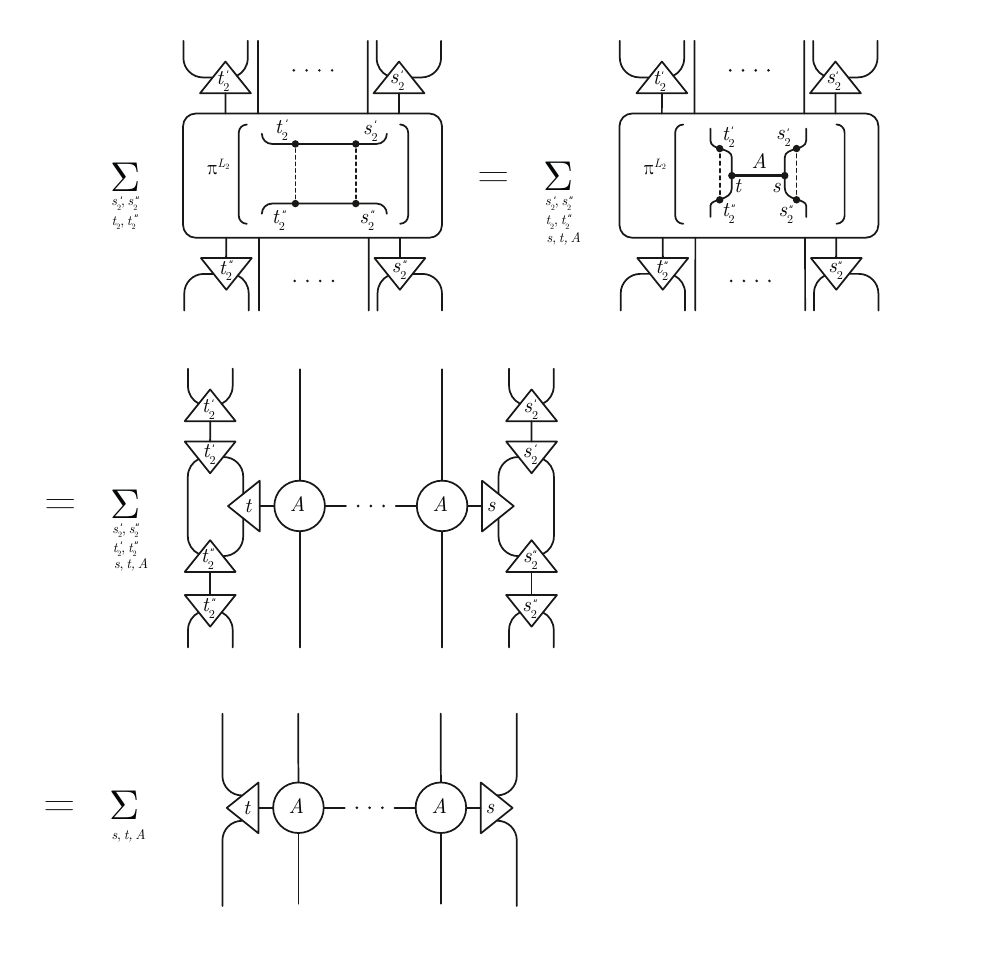}
  \end{center}
  \caption{  \label{fig:220413_Fig-8_TE} Manipulation of the middle part of fig. \ref{fig:220413_Fig-7_TE}.}
\end{figure}

Then we write out the 
definition of our representation $\pi^{L_2}$ of $\lozenge$, corresponding 
to the second step in fig. \ref{fig:220413_Fig-8_TE}. The summations over
$t_2, t_2', s_2, s_2'$ can now be carried out yielding the final 
diagram in fig. \ref{fig:220413_Fig-8_TE}. Inserting the identity of 
fig. \ref{fig:220413_Fig-8_TE} into fig. \ref{fig:220413_Fig-7_TE}, we obtain  fig. \ref{fig:220413_Fig-9_TE}, using also the relation for $P^{L_2}$
in cor. 1, item 6). 

\begin{figure}[h!]
\begin{center}
  \includegraphics[width=0.9\textwidth,]{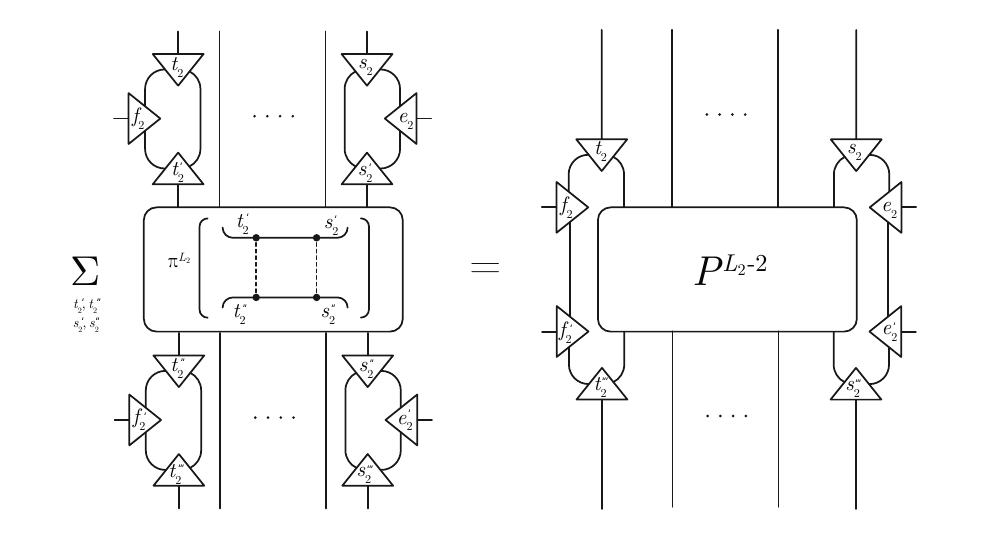}
  \end{center}
  \caption{  \label{fig:220413_Fig-9_TE} Result of manipulations on fig. \ref{fig:220413_Fig-8_TE}.}
\end{figure}

Considering the identity of fig. \ref{fig:220413_Fig-9_TE} applied to both subchains $L_1$ and $L_2$, we obtain fig. \ref{fig:220413_Fig-10_TE}. 

\begin{figure}[h!]
\begin{center}
  \includegraphics[width=0.9\textwidth,]{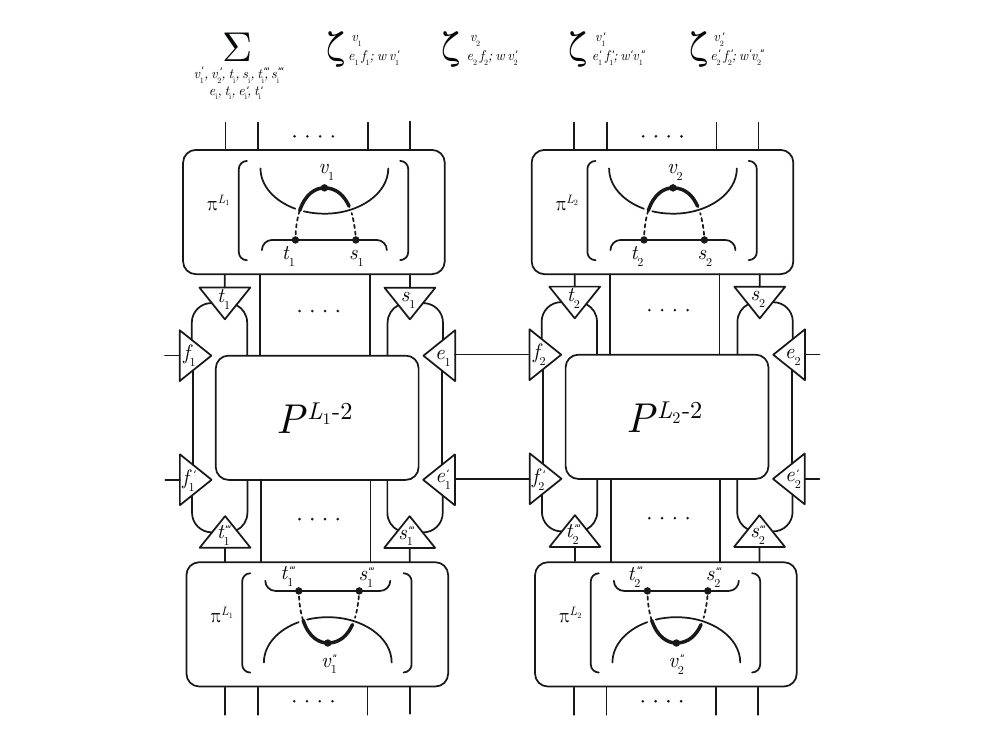}
  \end{center}
  \caption{  \label{fig:220413_Fig-10_TE} Result of manipulations on fig. \ref{fig:220413_Fig-9_TE}.}
\end{figure}

We next focus on the middle parts of this figure involving the 
MPOs $P^{L_1-2}$ and $P^{L_2-2}$. 

\begin{figure}[h!]
\begin{center}
  \includegraphics[width=0.9\textwidth,]{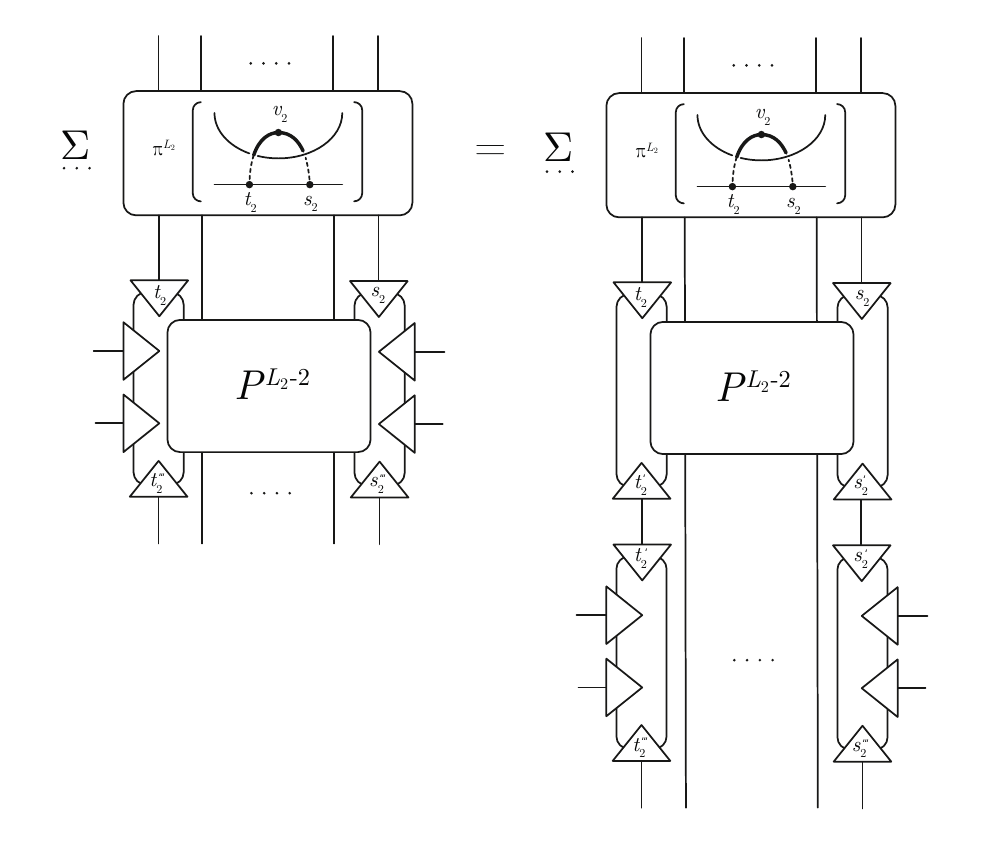}
  \end{center}
  \caption{  \label{fig:220413_Fig-4a_TE} }
\end{figure}

\begin{figure}[h!]
\begin{center}
  \includegraphics[width=0.9\textwidth,]{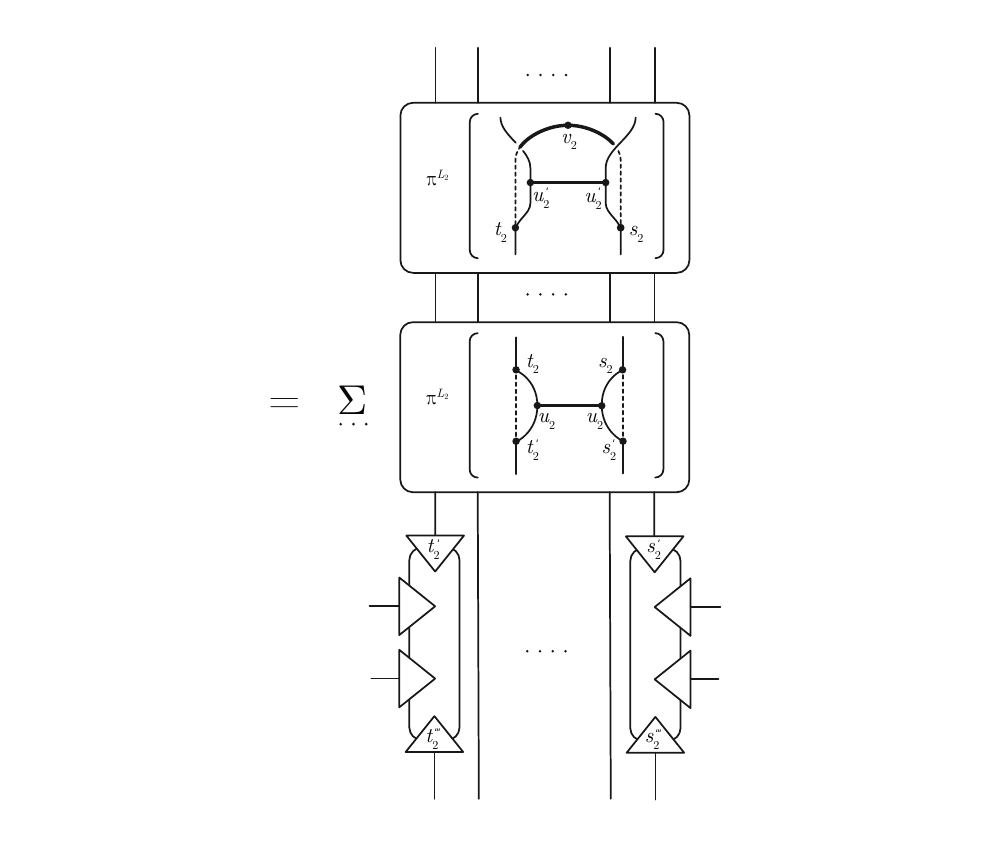}
  \end{center}
  \caption{  \label{fig:220413_Fig-4b_TE} }
\end{figure}

\begin{figure}[h!]
\begin{center}
  \includegraphics[width=0.9\textwidth,]{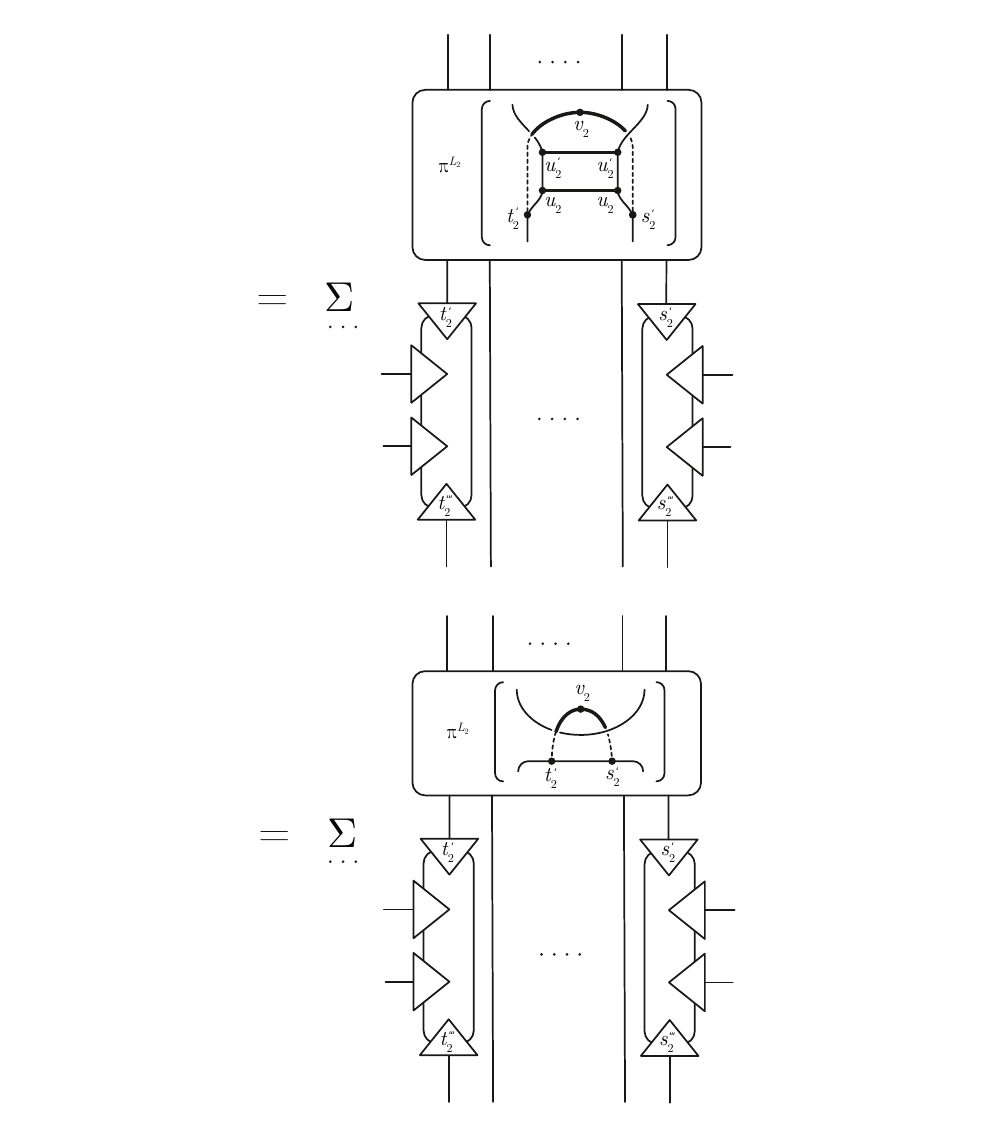}
  \end{center}
  \caption{  \label{fig:220413_Fig-4c_TE}}
\end{figure}

In fig. \ref{fig:220413_Fig-4a_TE}, we first 
introduce summations over $t_2, t_2', s_2, s_2'$ to go to the second panel. Then we insert the 
expression of $P^{L_2-2}$ as an MPO, cor. 1, item 6), and perform simple manipulations with $6j$-symbols very similar to the 
proof of thm. \ref{thm1} to go to fig. \ref{fig:220413_Fig-4b_TE}. To 
arrive at fig. \ref{fig:220413_Fig-4c_TE}, we use the representation property of $\pi^{L_2}$, thm. \ref{thm1}. To go to the second panel, we have written out the representation $\pi^{L_2}$ and performed simple manipulations 
with $6j$-symbols very similar to the 
proof of thm. \ref{thm1}. Using the representation property of $\pi^{L_2}$, this allows us after the steps shown in fig. \ref{fig:220413_Fig-4c_TE} 
to eliminate the projectors 
$P^{L_i-2}$ in fig. \ref{fig:220413_Fig-10_TE} in terms of identity operators, and the figure now starts to look like structurally similar to the diagram 
for a single $\Psi$-operator, as in fig. \ref{fig:220413_Fig-6_TE}.

With this replacement understood in fig. \ref{fig:220413_Fig-6_TE},
we now zoom in onto one of the parts involving the intertwiners $e_i, f_i, e_i', f_i'$, wherein 
\bena
\label{eifidef1}
e_1' &\in \Hom(\mu_1' \nu', \mu_1'')\\
f_1' &\in \Hom(\rho' \lambda_1', \lambda_1'')\\
e_2' &\in \Hom(\mu_2' \rho', \mu_2'')\\
f_2' &\in \Hom(\nu' \lambda_2', \lambda_2'').
\eena
This is shown in figs. \ref{fig:220413_Fig-11a_TE}, \ref{fig:220413_Fig-11b_TE}. 

\begin{figure}[h!]
\begin{center}
  \includegraphics[width=0.9\textwidth,]{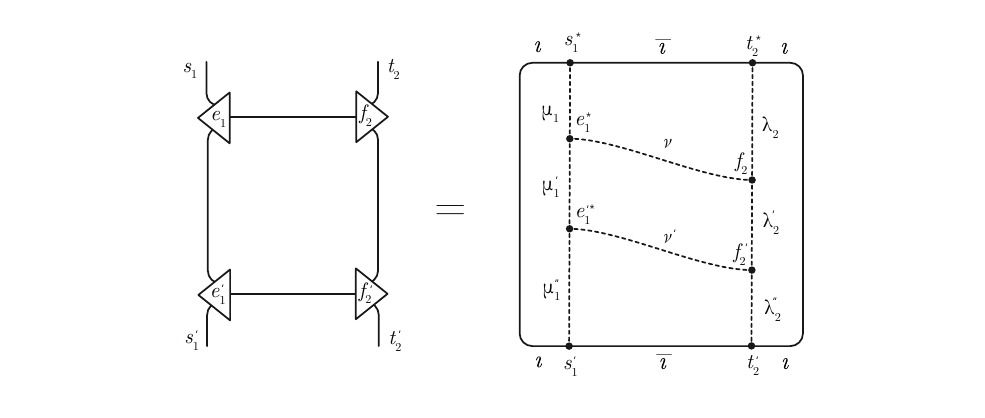}
  \end{center}
  \caption{  \label{fig:220413_Fig-11a_TE}}
\end{figure}

\begin{figure}[h!]
\begin{center}
  \includegraphics[width=0.9\textwidth,]{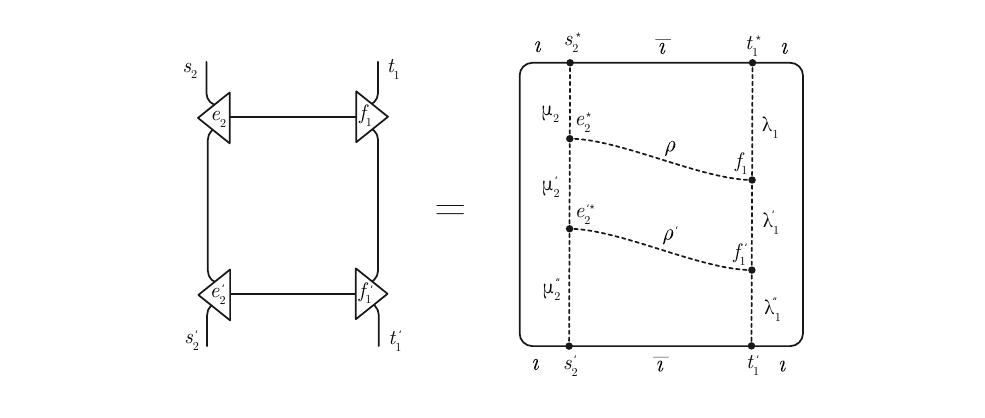}
  \end{center}
  \caption{  \label{fig:220413_Fig-11b_TE}}
\end{figure}

The summations over 
ONBs of appropriate intertwiners implicit in the non-open wires 
yield the right panels using the properties of the $6j$-symbols. 
Next, we zoom in onto the $\zeta$-factors (see 
fig. \ref{fig:220413_Fig-5_TE}) in fig. \ref{fig:220413_Fig-10_TE},
which are displayed as wire-diagrams in fig. \ref{fig:220413_Fig-12_TE}. 

\begin{figure}[h!]
\begin{center}
  \includegraphics[width=0.9\textwidth,]{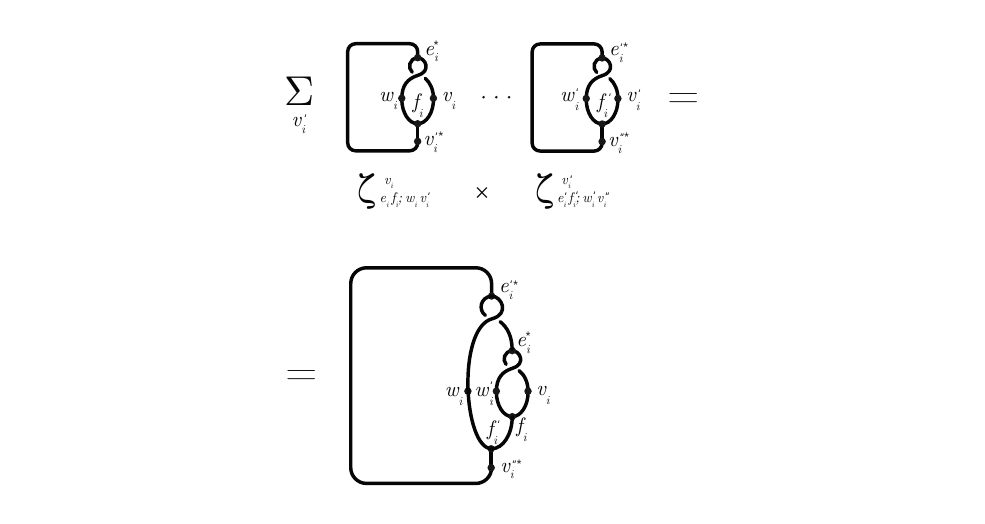}
  \end{center}
  \caption{  \label{fig:220413_Fig-12_TE}}
\end{figure}

We can carry out the summation over $v_i'$ (for a similar argument, 
see the proof of thm. 1.4 in \cite{rehren2000canonical}), leading to the second panel. 
We next observe that both in fig. \ref{fig:220413_Fig-11a_TE} and fig. \ref{fig:220413_Fig-11b_TE}, fig. \ref{fig:220413_Fig-1_TE} as well as in 
fig. \ref{fig:220413_Fig-12_TE}, we have summations over ONBs $(1_{\rho'} \times f_1)f_1'$, $(1_{\nu'} \times f_2)f_2'$, $e_1^{\prime*}(e_1^{*} \times 1_{\nu'})$, $e_2^{\prime *}(e_2^{*} \times 1_{\rho'})$ respectively their adjoints. By unitarity of the $6j$-symbols we can switch to ONBs of the form 
$(f_1 \times 1_{\lambda_1'})f_1'$, $(f_2 \times 1_{\lambda_2'})f_2'$, $e_1^{\prime *}(1_{\mu_1'}\times e_1^{*})$, 
$e_2^{\prime *}(1_{\mu_2'}\times e_2^{*})$. The effect of this change 
on fig. \ref{fig:220413_Fig-12_TE} is shown in fig. \ref{fig:220413_Fig-13_TE}, using also a sequence of braiding-fusion moves.

\begin{figure}[h!]
\begin{center}
  \includegraphics[width=0.9\textwidth,]{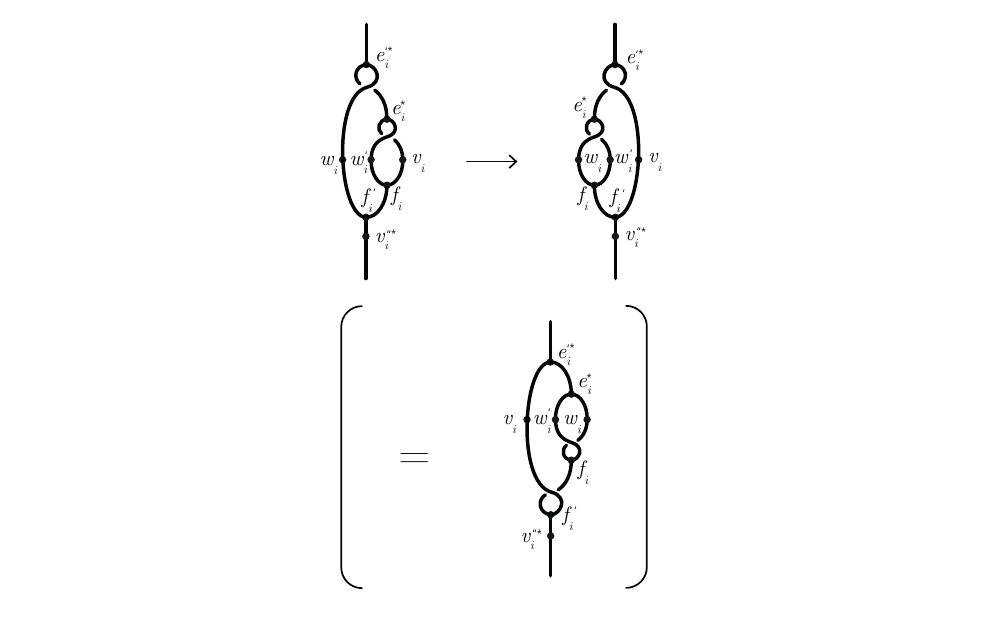}
  \end{center}
  \caption{  \label{fig:220413_Fig-13_TE}}
\end{figure}

Likewise, the effect of 
these changes on fig. \ref{fig:220413_Fig-11a_TE} and fig. \ref{fig:220413_Fig-11b_TE} is shown in the second 
panel of fig. \ref{fig:220413_Fig-14a_TE} which also includes the 
result of fig. \ref{fig:220413_Fig-13_TE} (with a similar relation for $1 \leftrightarrow 2$).

\begin{figure}[h!]
\begin{center}
  \includegraphics[width=0.9\textwidth,]{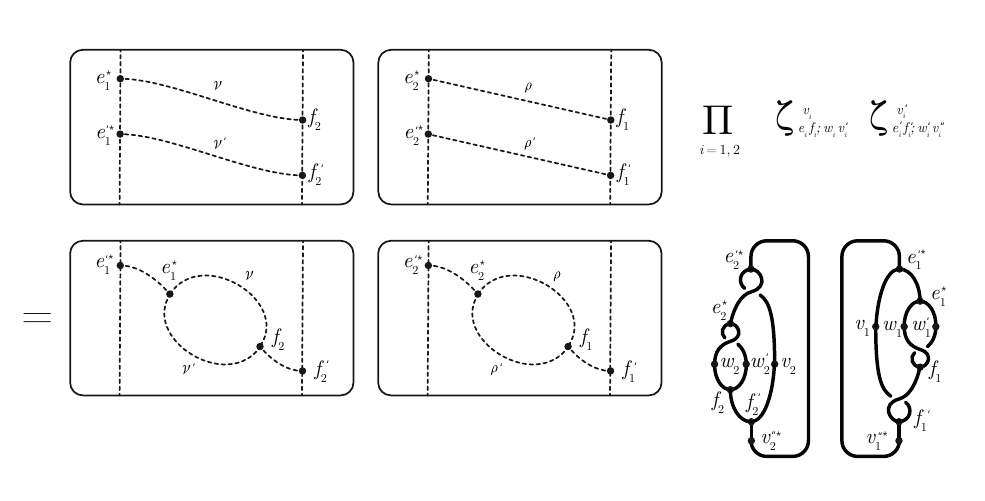}
  \end{center}
  \caption{  \label{fig:220413_Fig-14a_TE}}
\end{figure}

 In the second panel in fig. \ref{fig:220413_Fig-14a_TE}, 
we obtain from the dashed bubbles on the left side Kronecker delta's
$\delta_{e_1,f_2} \delta_{e_2,f_1}$. Using these delta's and taking the 
rightmost wire diagrams apart via a new summation over $w_1'',w_2''$
yields the second panel in fig. \ref{fig:220413_Fig-14b_TE}. 

\begin{figure}[h!]
\begin{center}
  \includegraphics[width=0.9\textwidth,]{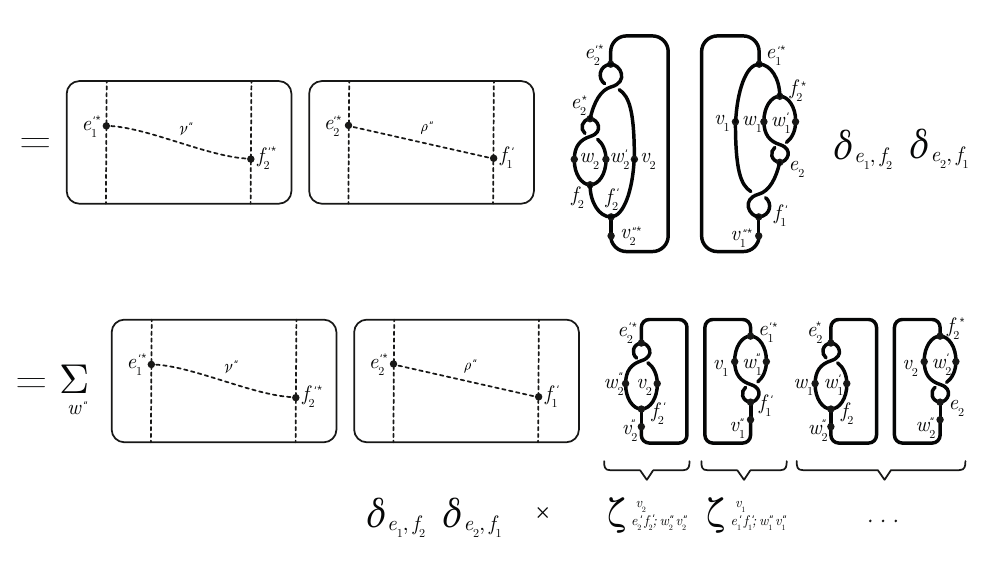}
  \end{center}
  \caption{  \label{fig:220413_Fig-14b_TE}}
\end{figure}

We now reverse the steps fig. \ref{fig:220413_Fig-11a_TE} and
fig. \ref{fig:220413_Fig-11b_TE} on the two leftmost
square diagrams in the lower panel of fig. \ref{fig:220413_Fig-14b_TE}.
Finally, we insert this into fig. \ref{fig:220413_Fig-10_TE}, remembering that we are allowed to eliminate the projectors 
$P^{L_i-2}$ in fig. \ref{fig:220413_Fig-10_TE} in terms of identity operators.
This yields the diagram for the operator sum on the right side of \eqref{OPE} with the structure constants 
\ben
    \label{cdef1}
    c^{\nu'',\rho'',w_1'',w_2''}_{\nu',\rho',w_1',w_2'; \nu,\rho;w_1,w_2}
    := \sum_{e,f} \zeta^{w_1''}_{e,f;w_1, w_1'}
    \zeta^{w_2''}_{f^*,e^*;w_2, w_2'}.
    \een
But the $\zeta$-structure constants \eqref{Rehrenz} differ from the $\eta$-structure constants \eqref{rehren} only by a braiding operator which is unitary and hence gives a unitary 
change of ONB in the respective intertwiner spaces. So we can replace the $\zeta$-structure constants with the $\eta$-structure
constants.

    \medskip 
5) Consider the operator 
\ben
\label{op}
(Q_{\mu_1, \lambda_1}^{L_1} 
\otimes Q_{\mu_2, \lambda_2}^{L_2})
\Psi^{L_1,L_2}_{\nu,\rho; w_1, w_2}
(Q_{\mu_1', \lambda_1'}^{L_1} 
\otimes Q_{\mu_2', \lambda_2'}^{L_2}).
\een
We recall the representation $Q^{L}_{\mu,\nu} = \pi^L(q_{\mu,\nu})$ and the graphical expression for $q_{\mu,\nu}$, in the double 
triangle algebra, fig. \ref{fig:220413_Fig-19_TE}. We insert this into
fig. \ref{fig:220413_Fig-7_TE}, use the representation property of $\pi^{L_i}$, the relations in the double triangle algebra $\lozenge$,
and \cite{bockenhauer1999alpha}, lem. 6.2. Then we see that in the sum over intertwiners $e_i, f_i$ as in \eqref{eifidef} only such terms survive such that the values of 
$\mu_i, \lambda_i, \mu_i', \lambda_i'$ are equal to 
the values prescribed by the projections $Q$ in \eqref{op}. This 
is equivalent to the claimed fusion property in 5).
\end{proof}

\subsection{Classification of defects}

According to thm. \ref{thm:3}, 4) the defect algebra $\cD^{L_1,L_2}$
associated with a bi-partite closed spin chain with subsystems of lengths $L_1, L_2$ is abelian and its structure constants are independent 
of $L_1, L_2$. This means that the defect algebra is a representation of a 
universal object. In fact, it is precisely isomorphic to the algebra classifying transparent boundaries in 1+1 dimensional CFTs 
\cite{bischoff2015tensor,bischoff2016phase}, see sec. \ref{sec:transparent}:

\begin{theorem}
The defect algebra $\cD^{L_1,L_2}$
associated with a bi-partite spin chain with subsystems of lengths $L_1, L_2$ is a representation of $(\cB^+)' \cap \cB^+$, where 
$\cB^+$ is the extension of $\cN \otimes \cN^{\rm opp}$ defined by 
the braided product Q-system $Z_1[X] \times^+ Z_2[X]$ with $Z_i[X]=(\theta_{{\rm R} i}, x_{{\rm R}i}, w_{{\rm R}i}), i=1,2$ two copies of the full center Q-system of $_\cN X_\cN$. 
\end{theorem}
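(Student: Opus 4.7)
The plan is to exhibit a surjective $*$-homomorphism from $(\cB^+)' \cap \cB^+$ onto $\cD^{L_1,L_2}$ by matching generators and structure constants. First I would recall from \cite{bischoff2015tensor,bischoff2016phase} the presentation of the center $\cZ(\cB^+):=(\cB^+)' \cap \cB^+$: as explained in sec. \ref{sec:transparent}, it is an abelian $*$-algebra with a distinguished basis $\{B_{\lambda,\mu;w_1,w_2}\}$ labelled by pairs $(\lambda,\mu) \in \, _\cN X_\cN \times {}_\cN X_\cN$ with $Z_{\lambda,\mu} \neq 0$ together with $w_1 \in \Hom(\alpha^-_\mu,\alpha^+_\lambda)$ and $w_2^* \in \Hom(\alpha^-_\mu,\alpha^+_\lambda)$. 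The unit is $B_{id,id;1,1}$, the $*$-operation is $B_{\lambda,\mu;w_1,w_2}^{*} = B_{\lambda,\mu;w_2^*,w_1^*}$, and the product law has structure constants that are products of Rehren's coefficients $\eta$ (see \eqref{rehren}):
\begin{equation*}
B_{\nu,\rho;w_1,w_2}\, B_{\nu',\rho';w_1',w_2'} = \sum_{\nu'',\rho'',w_1'',w_2''} c^{\nu'',\rho'',w_1'',w_2''}_{\nu',\rho',w_1',w_2';\nu,\rho,w_1,w_2}\, B_{\nu'',\rho'';w_1'',w_2''},
\end{equation*}
with the coefficients $c$ given by \eqref{cdef}. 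This presentation is essentially built into the braided product construction $Z_1[X]\times^+ Z_2[X]$: the center is identified with a Hom-space in $\cN\otimes \cN^{\rm opp}$ whose basis vectors correspond to the quadruples $(\lambda,\mu,w_1,w_2)$, and the product is inherited from the Q-system multiplication $x_{\rm R}$, whose matrix elements are, by definition, the Rehren coefficients \eqref{rehren}.

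Next I would define the linear map
\begin{equation*}
\Pi^{L_1,L_2}:\cZ(\cB^+) \longrightarrow \End(\sV^{L_1,L_2}), \qquad B_{\nu,\rho;w_1,w_2} \longmapsto \Psi^{L_1,L_2}_{\nu,\rho;w_1,w_2},
\end{equation*}
and check that it is a $*$-algebra homomorphism directly from thm. \ref{thm:3}. Item 1 (Identity) matches the units; item 2 (Conjugate) matches the $*$-operations; item 3 (Commutativity) matches the abelian structure; and item 4 (OPE), crucially, produces exactly the structure constants \eqref{cdef} above, so $\Pi^{L_1,L_2}$ respects the product. Since $\cD^{L_1,L_2}$ is by definition the $*$-algebra generated by the operators $\Psi^{L_1,L_2}_{\nu,\rho;w_1,w_2}$, the image of $\Pi^{L_1,L_2}$ is $\cD^{L_1,L_2}$. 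Hence $\cD^{L_1,L_2}$ is a representation (not necessarily faithful) of $\cZ(\cB^+)$ in the sense of the statement.

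The main obstacle is carrying out the identification of the abstract algebra $\cZ(\cB^+)$ in the first paragraph with the generator-and-relation data above so that the structure constants literally agree with \eqref{cdef}. In \cite{bischoff2015tensor,bischoff2016phase} the center is described via intertwiners of the canonical endomorphism $\Theta^+_{\rm R}$ of the braided product extension, which decomposes according to $Z_{\lambda,\mu}$ into blocks indexed by pairs $(\lambda,\mu)$, and the multiplication in $\cZ(\cB^+)$ is computed by composing such intertwiners through $x_{\rm R}^+$, producing precisely the products of $\eta$'s that appear in \eqref{cdef}. The bookkeeping involves tracking the two factors of the Cuntz isometries $T_l$, the two copies of $\alpha^\pm$-induction entering the $+$-braided product, and the Frobenius/adjoint conventions relating $w_2$ to $w_2^*$; I would carry this out once carefully, after which the comparison with \eqref{cdef} is a straightforward inspection. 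Everything downstream from that is a direct invocation of thm.~\ref{thm:3}.
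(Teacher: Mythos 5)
Your proposal is correct and follows essentially the same route as the paper: the paper likewise takes the generators $B_{\lambda,\mu,v_1,v_2} = P^+_{\lambda,\mu;v_1,v_2}V^+$ of the center, computes their product via the Q-system calculus, reduces the resulting structure constants to products of two $\eta$-coefficients using braiding-fusion moves and functoriality of $\alpha$-induction, and matches the result with \eqref{cdef} from thm.~\ref{thm:3}, 4), the map being $B_{\lambda,\mu,v_1,v_2}\mapsto \Psi^{L_1,L_2}_{\lambda,\mu,v_1,v_2^*}$ (note the adjoint on the second label, which is exactly the convention issue you flag). The computation you defer as the ``main obstacle'' is precisely the content of the paper's proof, and the ingredients you list for carrying it out are the right ones.
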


\begin{proof}
This statement is implicit in the proof of \cite{bischoff2016phase}, prop. 4.19. 
First we recall the generators of $(\cB^+)' \cap \cB^+$. By the general relation between Q-systems and extensions, $\cB^+$ is pointwise equal to $NV^+$, with $N \in \cN \otimes \cN^{\rm opp}$ and with $V^\pm$ obeying the relations \eqref{qsystem1} for $X^+$ as in \eqref{qsystem2} for our two isomorphic copies of the full center Q-system. To be precise, 
denoting $l=(\lambda,\mu,v), \bar l = (\bar \lambda, \bar \mu, \bar v)$ and by $T_l \in \cN \otimes \cN^{\rm opp}$ a set of isometries obeying the Cuntz algebra 
relations (with $\mu,\lambda \in{} _{\cN} X_\cN, v \in \Hom(\alpha^+_\lambda, \alpha_\mu^-)$), 
we may set $\theta_{{\rm R}1} = \sum_l T_l (\lambda \otimes \mu ^{\rm opp}) T_l^*$ and $\theta_{{\rm R}2} = \sum_l T_{\bar l} (\bar \lambda \otimes \bar \mu ^{\rm opp}) T_{\bar l}^*$. 
By  \cite{bischoff2016phase}, lem. 4.5, the center $(\cB^+)' \cap \cB^+$ is generated as a vector space by $P^+V^+$, where
$P^+ \in \Hom(id, \theta_{{\rm R}1} \theta_{{\rm R}2}) \cap \cN \otimes \cN^{\rm opp}$. A basis of such $P^+$ is given by 
\ben
P^+_{\lambda, \mu; v_1, v_2} := (\bar r_\lambda^* \otimes (\bar r_\mu^{\rm opp})^*) T_{\mu, \lambda, v_1}^* \theta_1(T_{\bar \mu, \bar \lambda, \bar v_2}^*),
\een
with $\mu,\lambda \in {}_{\cN} X_\cN$ and  $v_i$ an ONB of $\Hom(\alpha^+_\lambda, \alpha_\mu^-)$.
So, a set of generators of the center $(\cB^+)' \cap \cB^+$ is
\ben
B_{\lambda, \mu, v_1, v_2} = P_{\lambda, \mu; v_1,v_2} V^+.
\een
It is now straightforward
to use the Q-system calculus, \eqref{qsystem1}
and \eqref{qsystem2}, and the definition of the full 
center Q-system $Z[X]$. Then it is found that 
\ben
\label{BOPE}
B_{\lambda, \mu, v_1, v_2} B_{\lambda', \mu', v_1', v_2'}
= \sum_{\lambda'',\mu'',v_1'',v_2''} d^{\lambda'',\mu'',v_1'',v_2''}_{\lambda',\mu',v_1',v_2'; \lambda,\mu;v_1,v_2} 
B_{\lambda'', \mu'', v_1'', v_2''}
\een
for the structure constants $d$ given by 
\bena
&d^{\lambda'',\mu'',v_1'',v_2''}_{\lambda',\mu',v_1',v_2'; \lambda,\mu;v_1,v_2} = \sum_{e_1,e_2,f_1,f_2} \eta^{v_1''}_{e_1,f_1;v_1 v_1'} \eta^{\bar v_2''}_{e_2,f_2;\bar v_2 \bar v_2'}
[d(\mu'') d(\lambda'')]^{-1} \\
& [(1_\mu \times \epsilon^+(\mu',\bar\mu) \times 1_{\bar \mu'})
(f_1 \times f_2)r_{\mu''}]^* (1_\lambda \times \epsilon^-(\lambda',\bar\lambda) \times 1_{\bar \lambda'})
(e_1 \times e_2)r_{\lambda''} 
 \non
\eena
 where 
$e_1, e_2, f_1, f_2$ run over ONBs of $\Hom(\lambda\lambda',\lambda''),
\Hom(\bar \lambda \bar \lambda',\bar \lambda''),
\Hom(\mu\mu',\mu''), 
\Hom(\bar \mu \bar \mu', \bar \mu'')$, respectively. The second line is 
$d(\mu'') d(\lambda'')$ if $e_2=\epsilon^-(\bar \lambda,\bar \lambda')\bar e_1$
and $f_2=\epsilon^+(\bar \mu, \bar \mu')\bar f_1$ and zero otherwise, by a set of  braiding-fusion moves and by using the definition of the conjugate intertwiners. 
Thus, the structure constants are (writing out the $\eta$-structure 
constants)
{\footnotesize
\ben
=\sum_{e,f}
\left[ \frac{d(\mu) d(\mu'') d(\lambda) d(\lambda'')}{d(\mu') d(\lambda') d(\theta)^2} \right]^{1/2}\,
E[\bar f^*\epsilon^-(\bar \mu, \bar \mu')^*(\bar v_2 \times \bar v_2') \epsilon^+(\bar \lambda,\bar \lambda') \bar e (\bar v_2'')^*]
E[f^*(v_1 \times v_1') e (v_1'')^*].
\een
}
By writing out the definition of the conjugate endomorphisms, using the functoriality condition 2) of alpha induction, and using the 
conjugacy relations, we arrive at, 
\ben
E[\bar f^*\epsilon^-(\bar \mu, \bar \mu')^*(\bar v_2 \times \bar v_2') \epsilon^+(\bar \lambda,\bar \lambda') \bar e (\bar v_2'')^*]
=
E[e^*((v_2)^* \times (v_2')^*) f v_2''].
\een
A comparison of \eqref{BOPE} with \eqref{OPE} then shows that the map
\ben
\label{Brep}
B_{\lambda,\mu,v_1,v_2} \mapsto \Psi^{L_1,L_2}_{\lambda,\mu, v_1, v_2^*}
\een
is an algebra representation. 
\end{proof}

By \cite{bischoff2015tensor}, thm. 4.44 we have a description of 
$(\cB^+)' \cap \cB^+$ in terms of the objects $A \in \, _\cM X_\cM$;
more precisely for each $A$ one can construct in a canonical way 
a minimal projection $p_A \in (\cB^+)' \cap \cB^+$. The 
element $D_A^{L_1,L_2} \in \cD^{L_1,L_2}$ corresponding to $p_A$ under the above representation \eqref{Brep}
is a projection operator acting on $\sV^{L_1}_{\rm open} \otimes \sV^{L_2}_{\rm open}$ which by definition must be of the form 
\bena
\label{dadef}
D_A^{L_1,L_2} = \sum_{\nu,\rho,w_1,w_2} \xi_{A;\nu,\rho,w_1,w_2}^{}
\Psi_{\nu,\rho;w_1,w_2}^{L_1,L_2}
\eena
for certain complex numbers $\xi_{A;\nu,\rho,w_1,w_2}$. These complex 
numbers are in principle determined by \cite{bischoff2015tensor}, thm. 4.44 and are universal, i.e. 
independent of $L_1,L_2$ or the dynamics on the spin-chain. In general, 
it seems nontrivial to determine them from data that are directly accessible, 
so we will leave this problem to a future work. If we have the 
projections $D_A^{L_1,L_2}$, we can write
\bena
\sV^{L_1}_{\rm open} \otimes \sV^{L_2}_{\rm open} = 
\bigoplus_{A \in \, _\cM X_\cM} \sV^{L_1,L_2}_A, 
\quad 
\sV^{L_1,L_2}_A := D_A^{L_1,L_2} (\sV^{L_1}_{\rm open} \otimes \sV^{L_2}_{\rm open}). 
\eena
The subspaces $\sV^{L_1,L_2}_A$ determine a specific boundary condition $A$ fusing the two chains $L_1, L_2$ together. By thm. \ref{thm:3}, 3) 
these subspaces are invariant under the action of any local operator which does not act on the endpoint of the two chains.

\subsection{Special cases: abelian fusion in $\cM$}
\label{sec:specialcases}

There is an interesting special case, still covering many examples, where 
the determination of the coefficients $\xi_{A;\nu,\rho,w_1,w_2}$ in \eqref{dadef} is possible, 
namely when all $Z_{\nu,\rho} \in \{0,1\}$. Then, since the horizontal center $\cZ_h$ is isomorphic to the 
fusion ring $\, _\cM X_\cM$, and since the direct summands of $\cZ_h$
are matrix algebras of size $Z_{\nu,\rho}$, it follows that 
$\, _\cM X_\cM$ is abelian, i.e. $N_{A,B}^C = N_{B,A}^C$ in particular ($\, _\cN X_\cN$ by assumption is always abelian because we assume that it is braided).
In such a situation, the pairs $(\nu,\rho)$ such that $Z_{\nu, \rho}=1$ are one-to-one correspondence with the simple objects $A \in \, _\cM X_\cM$. 
In fact, since we have $Z_{\nu,\rho} = \sum_B \langle \alpha^+_\nu,B\rangle \langle \alpha^-_\rho,B\rangle$, 
with  $\langle \alpha^\pm_\lambda,B\rangle = \dim \Hom(\alpha^\pm_\lambda,B)$, we may assign 
to $(\nu,\rho)$ that unique $B$ corresponding to the only summand which is $=1$.

In particular, we may diagonalize the 
fusion coefficients $N_{A,B}^C$. Furthermore, since $Z_{\nu,\rho} = \dim \Hom(\alpha^-_\rho, \alpha^+_\nu)  \in \{0,1\}$, 
we have no need for the degeneracy indices $w_1,w_2$ so that we may write
write $\Psi_{\nu,\rho}^{L_1,L_2}$ for the MPOs generating the defect algebra.
By \cite{bischoff2016phase}, sec. 5.4, the coefficients \eqref{dadef} are proportional to the 
inverse of the matrices diagonalizing $N_{A,B}^C$. By lem. \ref{Ylem}, 1) and 3), the latter 
are in turn given by the coefficients $Y_{\lambda, \mu, A}$ given in fig. \ref{fig:Y} . Combining these results gives 
\ben
D_A^{L_1,L_2} = \frac{d_A}{ D_X^{2} d^4} \sum_{\nu,\rho} d_\nu d_\rho \, Y_{\nu,\rho,A}
\Psi_{\nu,\rho}^{L_1,L_2}.
\een
At any rate, this makes the determination of the $D_A^{L_1,L_2}$ a 
feasible problem, because $N_{A,B}^C$ is known from the start. 

\medskip

In the case of a diagonal fusion category ${}_\cN X_\cN$, 
\ben
Z_{\nu,\rho} = \delta_{\nu,\rho}, 
\een
we have $|{}_\cN X_\cN|=|{}_\cM X_\cM|$ and 
$\delta_{\nu,\rho}= \sum_B \langle \alpha^+_\nu,B\rangle \langle \alpha^-_\rho,B\rangle$ meaning that 
$(\langle \alpha^\pm_\lambda,A\rangle)_{\lambda \in {}_\cN X_\cN, A \in {}_\cM X_\cM}$ are permutation matrices 
which are inverses of each other, setting up a bijection between the simple objects in 
${}_\cN X_\cN$ and ${}_\cM X_\cM$. Due to the homomorphism 
property of $\alpha$-induction, either one of these isomorphisms transforms the $N_{\mu,\nu}^\sigma$ fusion 
coefficients to the $N_{A,B}^C$ fusion coefficients. 

The defect operators $\Psi_{\nu,\nu}^{L_1,L_2}$ 
may be labelled by $A \in {}_\cM X_\cM$ under this isomorphism, so 
we may write $\Psi_{A}^{L_1,L_2}$. As a special case of 
\eqref{OPE}, their OPE is (after a suitable renormalization of the generators)
\ben
\Psi_{A}^{L_1,L_2} \Psi_{B}^{L_1,L_2} = \sum_C N_{A,B}^C \Psi_{C}^{L_1,L_2}, 
\een
so is an isomorphic copy of the abelian fusion product in ${}_\cM X_\cM$. Thus, in this special case, 
the defect algebra $\cD^{L_1,L_2}$ is an isomorphic copy of this fusion ring. By contrast to the MPOs
$O_A^L$, which also yield an algebra isomorphic to the fusion ring of ${}_\cM X_\cM$ by \eqref{Ofusion}, thm. \ref{thm1}, the MPOs 
$\Psi_{A}^{L_1,L_2}$ act on a bipartite chain. Therefore, by contrast to the former, we think of them as corresponding to 
{\it vertical}, instead of {\it horizontal}, defects.

Note that for non-diagonal theories, the fusion ring of ${}_\cM X_\cM$ can 
be non-abelian, while $\cD^{L_1,L_2}$ is always abelian. Thus, in general, the horizontal defect algebra (the $O_A^L$s)
is not isomorphic to our vertical defect algebra (the $\Psi_{\nu,\rho;w_1,w_2}^{L_1,L_2}$). 

\section{Conclusions}

In this paper we have explored some connections between subfactor theory, braided unitary fusion categories, CFTs in $1+1$ dimensions, 
defects, and anyonic spin chains. Future work should address the following points:

\begin{itemize}
\item We have not addressed at all the analytic question of scaling limits of anyonic spin chains. To what extent will the 
close analogy between the defect algebra in on the spin-chain and of the CFTs -- observed here at a purely algebraic level
for finite chains -- persist in such a scaling limit?

\item Our defect operators on the spin-chains have, after all, a fairly complicated structure. What simplifications can be obtained in 
simple cases such as the anyonic chain built upon the Ising fusion category?

\item If one already knows that the fusion categories used as an input into the anyonic chain construction came from a CFT, such as e.g. 
arising from the positive energy highest weight representations of a Virasoro algebra in the case the minimal models: Can the vacuum state of the 
CFT be used to obtain an approximation of the ground state for the anyonic chain?
\end{itemize}

\vspace{1cm}

{\bf Acknowledgements:} I thank Henning Rehren and Yasu Kawahigashi for their educational emails related to subfactors and CFTs, and 
Laurens Lootens, Alex Stottmeister and Frank Verstraete for explanations regarding anyonic spin chains. 
I am grateful to the Max-Planck Society for supporting the collaboration between MPI-MiS and Leipzig U., grant Proj.~Bez.\ M.FE.A.MATN0003, and to 
Thomas Endler from MPI-MiS, Leipzig, for help with figures. 

\appendix

 \section{Inclusions associated with finite groups}\label{appB}
 
Let $\cN \subset \cM$ be an irreducible inclusion of factors such that $\gamma(\cM)' \cap \cM = \cN_1' \cap \cM$ is abelian  
with dimension $n=[\cM:\cN]$. \cite{kosaki1989characterization} has shown that 
$\cN = \cM^\alpha$, where $g \mapsto \alpha_g$ is an action of a finite group $G$ of order $n$ 
by outer automorphisms acting on $\cM$. This situation is in principle very well-understood, see e.g. \cite{longo1994duality} who uses
the Cuntz-algebra picture. Here we work out the situation totally explicitily from the Q-system perspective as an illustration of this 
concept for the reader without a background in subfactor theory.
 
A. By the general theory since $\cM=\cN v$ point-wise there must exist a unique $u_g \in \cN$ such that
\ben
\alpha_g(v) = u_g v, \quad u_g \in \cN.
\een
Using $\alpha_g\alpha_h(v)=\alpha_{gh}(v)$ and $\alpha_g(v)^*=\alpha_g(v^*)$ or $\alpha_g(v^2)=[\alpha_g(v)]^2$ respectively, we find
\ben
\quad u_g u_h = u_{gh}, \quad u_g^* r = \theta(u_g) r, \quad xu_g\theta(u_g)= u_gx, 
\een
with $r = xw \in \Hom(\theta^2,id)$.
Using $\alpha_g(n)=n, n \in \cN$, we also find $[u_g, \theta(\cN)]=0$ so $u_g \in \Hom(\theta, \theta) \cap \cN$. 
Also, $\theta = j_\cN j_\cM |_{\cN} =  j_{\cN_1} j_\cN |_{\cN}$, so $\theta(\cN)= \cN_2$ and hence $u_g \in \cN_2' \cap \cN$. 

B. We can rewrite the decomposition $\theta \cong \oplus_i {\rho}_i$ where a given sector may appear multiple times, 
as $\theta \cong \oplus_\pi d_\pi \cdot {\rho}_\pi$, where $d_\pi$ is the multiplicity of the irreducible sector ${\rho}_\pi$. The 
corresponding intertwiners are $w^a_\pi, a=1, \dots, d_\pi$, and 
\ben
\theta(n)=\sum_\pi \sum_{a=1}^{d_\pi} w^a_\pi {\overline{\rho}}_\pi^{}(n) w^{a*}_\pi. 
\een
It follows from the intertwiner calculus that 
\ben
D_\pi(g)_a^b = w^{a *}_\pi u_g w^{b}_\pi \in \cN \cap \Hom(\bar \rho_\pi, \bar \rho_\pi) = \CC 1, 
\een 
can be identified with complex numbers, and the $d_\pi \times d_\pi$ matrices $D_\pi(g)_a^b$ form an irreducible 
representation of $G$ which transforms the fields $\psi^a_\pi := w_\pi^{a*} v$ as in 
\ben
\alpha_g(\psi^b_\pi) = \sum_{a=1}^{d_\pi} D_\pi(g)_a^b \psi^a_\pi. 
\een
One can equip the linear space $\sK_\pi = {\rm span}_\CC\{ \psi^a_\pi : a = 1, \dots, d_\pi\}$ with a scalar product
$(\phi_1, \phi_2)_\pi 1 = \phi_1^* \phi_2^{}$ because the latter operator is an intertwiner in $\Hom(id,id)$ which is at the same time in 
$\cM$ hence in $\cN' \cap \cM = \CC 1$. Then it follows that the numerical matrices $D_\pi(g)_a^b$ give a unitary operator $D_\pi(g)$ on 
$\sK_\pi$ which implements $\alpha_g(\phi)=D_\pi(g)\phi$ (unitarity because $\phi_1^* \phi_2^{} = \alpha_g(\phi_1)^* \alpha_g(\phi_2)$ from 
$\alpha_g(1)=1$.) Thus, $u_g$ itself 
\ben
\label{ugrep}
u_g = \sum_\pi \sum_{a,b=1}^{d_\pi} D_\pi(g)_a^b w^{a}_\pi w^{b*}_\pi
\een
is unitary. Thus $\pi$ corresponds to the unitary irreducible representations of $G$, with characters
\ben
\chi_\pi(g)= \sum_{a=1}^{d_\pi} D_\pi(g)_a^a.
\een 

C. Now let 
\ben
u_g' = j_\cN(u_g) \in \cN'. 
\een
The unitaries $u_g' \in \cN'$ also form
a representation of $G$. 
For the minimal expectation $E': \cN' \to \cM'$ we calculate
\ben
E'(u_g') = \frac{1}{d} w^{\prime *} \gamma'(u_g') w' = \frac{1}{d} j_\cM(v^* u_g v)
\een
using $w'=j_\cM(v)$. Since $u_g \in \Hom(\theta,\theta), v \in \Hom(\theta, id)$, we must have $v^* u_g v \in \cN' \cap \cM=\CC1$. 
To compute this number, we apply again $E$, since $E(v^* u_g v)=v^* u_g v$,
\ben
\begin{split}
E(v^* u_g v) =& \sum_\pi  \sum_{a,b=1}^{d_\pi} D_\pi(g)_a^b E(v^* w^{a}_\pi w^{b*}_\pi v)\\
=&  \sum_\pi  \sum_{a,b=1}^{d_\pi} D_\pi(g)_a^b \frac{\dim({ \bar \rho}_\pi)}{d} \delta^a_b \\
=&  \sum_\pi  \chi_\pi(g)  \frac{d_\pi}{d} = d \cdot \delta_{1,g}.
\end{split}
\een
Thus, $E'(u_g')=\delta_{1,g}$. 

D. Let $e \equiv e_{\cM'} = [\cM' \eta] =d^{-1} \cdot vv^* \in \cM \cap \cN_1'$ be the Jones projection for the extension $\cN_1 \subset \cN$. (Since 
$j_\cN(e_{\cN_1})=e_{\cN_1}$ and $j_\cN(\cN_1)=\cM'$ we have $e_{\cN_1}=e_{\cM'}$.)
Then $E(e) = d^{-2} 1$. On the other hand
\ben
\label{prev}
\begin{split}
1 =& d^2 \cdot E(e) = d^2 \cdot j_\cN(E(e)) =  d \cdot j_\cN(E(vv^*))\\
=&\frac{d}{|G|} j_\cN \left( \sum_g \alpha_g(vv^*) \right) \\
=& \frac{d}{|G|} \sum_g j_\cN(u_g vv^* u_g^*)\\ 
=&  \frac{d^2}{|G|} \sum_g u_g' j_{\cN}(e) u_g^{\prime *}\\
=& \sum_g u_g' e u_g^{\prime *}
\end{split}
\een
since $|G|=d^2$
On the other hand, for any $n' \in \cN$, 
\ben
n' e = 1 n' e =  \sum_g u_g' e u_g^{\prime*} n' e =  \sum_g u_g' E'(u_g^{\prime*} n') e
\een
in view of $E'(n')e=en'e$, and so 
\ben
\label{nrep}
n' = \sum_g u_g' m_g' , \quad m'_g= E'(u_g^{\prime*} n') \in \cM'. 
\een 
By $E'(u_g')=\delta_{1,g}$ this representation $n' = \sum_g  u_g' m_g'$ is unique. 

E. By construction $\alpha_g(e) \in \cM$ and since $\alpha_g(\cN_1)=\cN_1$ pointwise, also $\alpha_g(e) \in \cN_1'$. 
Now $j_\cN$ fixes $\cM \cap \cN_1'$, so $j_\cN(\alpha_g(e)) \in \cM \cap \cN_1$. Since $e=d^{-1} \cdot vv^*$ and 
since we have $\alpha_g(v)=u_gv, u_g'=j_\cN(u_g)$, and $j_\cN(e)=e$, we have $u_g' e u_g^{\prime *} \equiv e_g \in \cM \cap \cN_1'$.  
By the Jones basic construction, $\cN_1' = \cN' \vee \{e\}, \cM = \cN \vee \{e\}$, so in particular $u_g' \cM u_g^{\prime *} = \cM$
and hence also $u_g' \cM' u_g^{\prime *} = \cM'$.
In conclusion, we have $\cN'=\cM' \vee \{ u'_g \}$, and the properties of the $u_g'$ and of $E'$ in D show that this 
is crossed product of $\cM'$ by an action of $G$. We may thus say that
\ben
\cN = \cM^\alpha, \quad \cN' = \cM' \rtimes_{\alpha'} G, 
\een
where $\alpha'_g(m')=u_g' m' u_g^{\prime *}$ is an action of $G$ on $\cM'$.

F. By \eqref{prev} and E, $1=\sum_g e_g$. Recall that $\cM \cap \cN_1'$ is abelian of dimension $|G|$. 
Thus, the $e_g$ are commuting projections. They are either pairwise distinct or $e_g = e$ for all $g$ which is not possible so 
the $e_g$ are the minimal projections for $\cM \cap \cN_1'$. 

G. $\alpha'$ is also an action on $\cM$, and by $u_g'\in \cN'$, we have $\cN \subset \cM^{\alpha'} \subset \cM$. 
However since $\cN' \cap \cM$, this action is outer and hence $[\cN:\cM]=|G|=[\cM^{\alpha'} : \cM]$, thus 
$\cM^{\alpha'}=\cN$. Then, we may repeat the same procedures A--F with potentially new $\tilde u_g' \in \cN'$. 
By D, F $\tilde u_g' e \tilde u_g^{\prime *} = e_g$ and thus $u_g^{\prime *}\tilde u_g' \equiv w_g' \in \{e\}'$ so $w_g' \in \cM'$. 
Thus, the above statements remain true with obvious modifications if we replace $\alpha$ by $\alpha'$

We can summarize the situation as follows:

\begin{proposition}
Let $\cN \subset \cM$ be an irreducible inclusion of factors such that $\gamma(\cM)' \cap \cM$ is abelian  
with dimension $n=[\cM:\cN_1]$. Then there exist a finite group $G$ of order $n$ and $\{u_g':g\in G\} \subset \cN'$
such that
\begin{enumerate}
\item $g \mapsto u_g'$ is a unitary representation of $G$.

\item $\cN'=\cM' \vee \{u_g' : g \in G\}$ and each $n' \in \cN'$ is representable uniquely 
as $n'=\sum_g u_g' m_g'$ with $m_g' \in \cM'$.
\item $m \mapsto u_g' m u_g^{\prime *}$ is an outer action of $G$ transforming $\cM$ to itself. Its fixed points 
are exactly the elements from $\cN$.
\item $\cM = \cN \vee \{\psi^a_\pi : \pi \in \hat G, a=1,\dots,d_\pi\}$ and each $m \in \cM$
is uniquely representable as $m=\sum_{\pi,a} n^a_\pi \psi^a_\pi$ with $n^a_\pi \in \cN$.
$\psi^a_\pi$ are scaled isometries with $(\psi_\pi^a)^* \psi^a_\pi = (d_\pi/d)1$.
\item For any $g \in G$:
\ben
u'_g \psi^b_\pi u_g^{\prime *} = \sum_{a=1}^{d_\pi} D_\pi(g)_a^b \psi^a_\pi. 
\een
where $D_\pi$ is an irreducible representation of dimension $d_\pi$.
\item The canonical expectations $E:\cM \to \cN, E':\cN' \to \cM'$ are given respectively by $E'(u'_{g})=\delta_{1,g}1, E(\psi^a_\pi)=\sqrt{d}\ \delta_{\pi,0}1$.
\end{enumerate}
\end{proposition}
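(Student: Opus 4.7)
The starting point is Kosaki's characterization theorem \cite{kosaki1989characterization}, which under the hypothesis that $\gamma(\cM)' \cap \cM$ is abelian of dimension $n = [\cM:\cN]$ produces an outer action $\alpha: G \to \mathrm{Aut}(\cM)$ of a finite group of order $n$ with $\cN = \cM^\alpha$. Modulo identifying $\alpha_g = \mathrm{Ad}(u_g')$ (which is constructed in the sequel), item 3 follows. The plan is then to manufacture all remaining structure from the Q-system data and the modular conjugation maps $j_\cN, j_\cM$.

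First, since $\cM = \cN v$ pointwise and $\alpha_g$ fixes $\cN$, there is a unique $u_g \in \cN$ with $\alpha_g(v) = u_g v$. Applying $\alpha_g$ to the Q-system relations of Def.~\ref{Qsysdef} and using uniqueness yields $u_g u_h = u_{gh}$, while applying $\alpha_g$ to $\theta(n) v = vn$ places $u_g \in \Hom(\theta, \theta) \cap \cN \cap \cN_2'$. Decomposing $\theta \cong \bigoplus_\pi d_\pi \rho_\pi$ with isometries $w_\pi^a \in \Hom(\rho_\pi, \theta)$, the scalars $D_\pi(g)_a^b := w_\pi^{a*} u_g w_\pi^b$ are well-defined by Schur's lemma (since $u_g$ commutes with $\theta(\cN)$), form a representation of $G$, and the resulting expansion
\ben
u_g = \sum_\pi \sum_{a,b} D_\pi(g)_a^b \, w_\pi^a w_\pi^{b*}
\een
shows $u_g$ is unitary. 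Irreducibility of $D_\pi$ and the fact that every irreducible representation appears with its own multiplicity come from a multiplicity count using $\sum_\pi d_\pi^2 = |G| = d^2 = \dim \theta$. This already gives items 1, 4 and 5, taking $\psi_\pi^a := w_\pi^{a*} v$.

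Next, set $u_g' := j_\cN(u_g) \in \cN'$. The key computation is $E'(u_g') = d^{-1} j_\cM(v^* u_g v)$, which reduces to evaluating the scalar $E(v^* u_g v)$ via the Q-system expression for $E$; inserting the expansion of $u_g$ above and using orthogonality of the $w_\pi^a$ yields $d^{-1} \sum_\pi \chi_\pi(g) d_\pi = d \cdot \delta_{1,g}$ by Schur-orthogonality of characters. Hence $E'(u_g') = \delta_{1,g} \cdot 1$, giving item 6 (the formula for $E(\psi_\pi^a)$ is an immediate consequence of the ONB property of $w_\pi^a$). The crucial identity for item 2 is
\ben
\sum_{g \in G} u_g' \, e \, u_g^{\prime *} = 1, \qquad e := d^{-1} vv^*,
\een
which follows by writing $1 = d^2 E(e)$, averaging $\alpha_g(vv^*)$ over $G$, and applying $j_\cN$. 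Given this partition of unity, for any $n' \in \cN'$ the combination $n' e = \sum_g u_g' E'(u_g^{\prime *} n') e$ and the fact that $e$ separates elements of $\cM'$ give the unique decomposition $n' = \sum_g u_g' m_g'$ with $m_g' = E'(u_g^{\prime *} n')$, uniqueness following from $E'(u_g') = \delta_{1,g}$.

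\emph{Main obstacle.} The most delicate part is the bookkeeping with modular conjugations and dual Q-systems: relating $v, v', w, w'$ across $j_\cN, j_\cM$ and keeping track of the simultaneous roles of $e = d^{-1} vv^*$ as a Jones projection for $\cN_1 \subset \cN$ and as a minimal projection in $\cM \cap \cN_1'$. Once this calculus is aligned, the entire argument is driven by the character identity $\sum_\pi \chi_\pi(g) d_\pi = |G| \delta_{1,g}$ together with $|G| = d^2 = [\cM:\cN]$; outerness of the induced action on $\cM$ (needed to close item 3) reduces to a dimension count, since $\cN \subset \cM^{\alpha'} \subset \cM$ with $[\cM : \cM^{\alpha'}] = |G| = [\cM : \cN]$ forces $\cM^{\alpha'} = \cN$, given irreducibility $\cN' \cap \cM = \CC 1$.
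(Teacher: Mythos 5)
Your proposal is correct and follows essentially the same route as the paper's own argument in Appendix \ref{appB}: invoking Kosaki's characterization to obtain the outer action with $\cN = \cM^\alpha$, constructing the cocycle $u_g \in \Hom(\theta,\theta)\cap\cN$ from $\alpha_g(v)=u_g v$, expanding $u_g$ over the isometries $w_\pi^a$ to get the representations $D_\pi$ and the charged fields $\psi_\pi^a = w_\pi^{a*}v$, computing $E'(u_g')=\delta_{1,g}$ via the character identity $\sum_\pi \chi_\pi(g) d_\pi = |G|\delta_{1,g}$, deriving the partition of unity $\sum_g u_g' e u_g'^{*}=1$ with $e=d^{-1}vv^*$ to obtain the crossed-product decomposition of $\cN'$, and closing outerness by the index count $\cN\subset\cM^{\alpha'}\subset\cM$ with $[\cM:\cM^{\alpha'}]=|G|=[\cM:\cN]$. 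The only (harmless) embellishment is your explicit multiplicity-count justification that each $D_\pi$ is irreducible and exhausts $\hat G$, which the paper asserts without spelling out.
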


\section{PEPs and double triangle algebra}
\label{PEPs}
We represent an MPO $O_A^L$ as an elementary cell as in fig. \ref{fig:MPO2} for some fixed chosen, e.g. $L=8$. 
Considering a number of such 
a elementary building block, we build a network as indicated in fig. \ref{fig:PEP1} by concatenating the outer legs of each cell, where for our choice $L=8$ we have an octagonal lattice.Such a network
can be considered as a tensor with a very large number of indices given by the labels carried by the uncontracted inner lines of each cell.  
For each elementary cell, we have an 
index $A \in {}_\cM X_\cM$. Each index $A$ associated with a cell is now summed, weighted by the coefficient $d_{A}$, thereby effectively 
inserting a projector $P^L = Q_{id,id}^{L}$ as in fig. \ref{fig:MPO1} into each cell. Such a structure, proposed in \cite{bultinck2017anyons} is called a ``PEPS'' (projected entangled pair-) state. 
In principle, we could have taken the pair $(\mu,\nu)$ to be different for each cell, but we will 
see from the next ``pull-through'' lemma that in such a case the PEPS is zero. 
\begin{figure}[h!]
\centering
\begin{tikzpicture}[scale=.6]
%
\draw (-7,1) node[anchor=east]{$\dots$};
\draw (-7,-1) node[anchor=east]{$\dots$};
\draw (9,1) node[anchor=west]{$\dots$};
\draw (9,-1) node[anchor=west]{$\dots$};
\draw (-7,9) node[anchor=east]{$\dots$};
\draw (-7,7) node[anchor=east]{$\dots$};
\draw (9,9) node[anchor=west]{$\dots$};
\draw (9,7) node[anchor=west]{$\dots$};

\draw (-2,-4.5) node[anchor=north]{$\vdots$};
\draw (-4,-4.5) node[anchor=north]{$\vdots$};
\draw (6,-4.5) node[anchor=north]{$\vdots$};
\draw (4,-4.5) node[anchor=north]{$\vdots$};

\draw[thick]  (0.16,0) arc (0:359:3.16);

\draw[thick] (-1,1) -- (1,1);
\filldraw[color=black, fill=white, thick](0,1) circle (.6);
\draw (0,1) node[rotate=90]{$U_A$};

\draw[thick] (-1,-1) -- (1,-1);
\filldraw[color=black, fill=white, thick](0,-1) circle (.6);
\draw (0,-1) node[rotate=90]{$\bar U_A$};

\draw[thick] (-2,-4) -- (-2,-2);
\filldraw[color=black, fill=white, thick](-2,-3) circle (.6);
\draw (-2,-3) node[rotate=0]{$U_A$};

\draw[thick] (-4,-4) -- (-4,-2);
\filldraw[color=black, fill=white, thick](-4,-3) circle (.6);
\draw (-4,-3) node[rotate=0]{$\bar U_A$};

\draw[thick] (-7,-1) -- (-5,-1);
\filldraw[color=black, fill=white, thick](-6,-1) circle (.6);
\draw (-6,-1) node[rotate=270]{$U_A$};

\draw[thick] (-7,1) -- (-5,1);
\filldraw[color=black, fill=white, thick](-6,1) circle (.6);
\draw (-6,1) node[rotate=270]{$\bar U_A$};

\draw[thick] (-4,4) -- (-4,2);
\filldraw[color=black, fill=white, thick](-4,3) circle (.6);
\draw (-4,3) node[rotate=180]{$U_A$};

\draw[thick] (-2,4) -- (-2,2);
\filldraw[color=black, fill=white, thick](-2,3) circle (.6);
\draw (-2,3) node[rotate=180]{$\bar U_A$};
%
\draw[thick, shift={(8,0)}]  (0.16,0) arc (0:359:3.16);

\draw[thick, shift={(8,0)}] (-1,1) -- (1,1);
\filldraw[color=black, fill=white, thick, shift={(8,0)}](0,1) circle (.6);
\draw (8,1) node[rotate=90]{$U_B$};

\draw[thick, shift={(8,0)}] (-1,-1) -- (1,-1);
\filldraw[color=black, fill=white, thick, shift={(8,0)}](0,-1) circle (.6);
\draw (8,-1) node[rotate=90]{$\bar U_B$};

\draw[thick, shift={(8,0)}] (-2,-4) -- (-2,-2);
\filldraw[color=black, fill=white, thick, shift={(8,0)}](-2,-3) circle (.6);
\draw (6,-3) node[rotate=0]{$U_B$};

\draw[thick, shift={(8,0)}] (-4,-4) -- (-4,-2);
\filldraw[color=black, fill=white, thick, shift={(8,0)}](-4,-3) circle (.6);
\draw (4,-3) node[rotate=0]{$\bar U_B$};

\draw[thick, shift={(8,0)}] (-7,-1) -- (-5,-1);
\filldraw[color=black, fill=white, thick, shift={(8,0)}](-6,-1) circle (.6);
\draw (2,-1) node[rotate=270]{$U_B$};

\draw[thick, shift={(8,0)}] (-7,1) -- (-5,1);
\filldraw[color=black, fill=white, thick, shift={(8,0)}](-6,1) circle (.6);
\draw (2,1) node[rotate=270]{$\bar U_B$};

\draw[thick, shift={(8,0)}] (-4,4) -- (-4,2);
\filldraw[color=black, fill=white, thick, shift={(8,0)}](-4,3) circle (.6);
\draw (4,3) node[rotate=180]{$U_B$};

\draw[thick, shift={(8,0)}] (-2,4) -- (-2,2);
\filldraw[color=black, fill=white, thick, shift={(8,0)}](-2,3) circle (.6);
\draw (6,3) node[rotate=180]{$\bar U_B$};

%
\draw[thick, shift={(8,8)}]  (0.16,0) arc (0:359:3.16);

\draw[thick, shift={(8,8)}] (-1,1) -- (1,1);
\filldraw[color=black, fill=white, thick, shift={(8,8)}](0,1) circle (.6);

\draw[thick, shift={(8,8)}] (-1,-1) -- (1,-1);
\filldraw[color=black, fill=white, thick, shift={(8,8)}](0,-1) circle (.6);

\draw[thick, shift={(8,8)}] (-2,-4) -- (-2,-2);
\filldraw[color=black, fill=white, thick, shift={(8,8)}](-2,-3) circle (.6);

\draw[thick, shift={(8,8)}] (-4,-4) -- (-4,-2);
\filldraw[color=black, fill=white, thick, shift={(8,8)}](-4,-3) circle (.6);

\draw[thick, shift={(8,8)}] (-7,-1) -- (-5,-1);
\filldraw[color=black, fill=white, thick, shift={(8,8)}](-6,-1) circle (.6);

\draw[thick, shift={(8,8)}] (-7,1) -- (-5,1);
\filldraw[color=black, fill=white, thick, shift={(8,8)}](-6,1) circle (.6);

\draw[thick, shift={(8,8)}] (-4,4) -- (-4,2);
\filldraw[color=black, fill=white, thick, shift={(8,8)}](-4,3) circle (.6);

\draw[thick, shift={(8,8)}] (-2,4) -- (-2,2);
\filldraw[color=black, fill=white, thick, shift={(8,8)}](-2,3) circle (.6);

%
\draw[thick, shift={(0,8)}]  (0.16,0) arc (0:359:3.16);

\draw[thick, shift={(0,8)}] (-1,1) -- (1,1);
\filldraw[color=black, fill=white, thick, shift={(0,8)}](0,1) circle (.6);

\draw[thick, shift={(0,8)}] (-1,-1) -- (1,-1);
\filldraw[color=black, fill=white, thick, shift={(0,8)}](0,-1) circle (.6);

\draw[thick, shift={(0,8)}] (-2,-4) -- (-2,-2);
\filldraw[color=black, fill=white, thick, shift={(0,8)}](-2,-3) circle (.6);

\draw[thick, shift={(0,8)}] (-4,-4) -- (-4,-2);
\filldraw[color=black, fill=white, thick, shift={(0,8)}](-4,-3) circle (.6);

\draw[thick, shift={(0,8)}] (-7,-1) -- (-5,-1);
\filldraw[color=black, fill=white, thick, shift={(0,8)}](-6,-1) circle (.6);

\draw[thick, shift={(0,8)}] (-7,1) -- (-5,1);
\filldraw[color=black, fill=white, thick, shift={(0,8)}](-6,1) circle (.6);

\draw[thick, shift={(0,8)}] (-4,4) -- (-4,2);
\filldraw[color=black, fill=white, thick, shift={(0,8)}](-4,3) circle (.6);

\draw[thick, shift={(0,8)}] (-2,4) -- (-2,2);
\filldraw[color=black, fill=white, thick, shift={(0,8)}](-2,3) circle (.6);

\end{tikzpicture}
  \caption{\label{fig:PEP1} Schematic diagram for the PEPS.}
\end{figure}
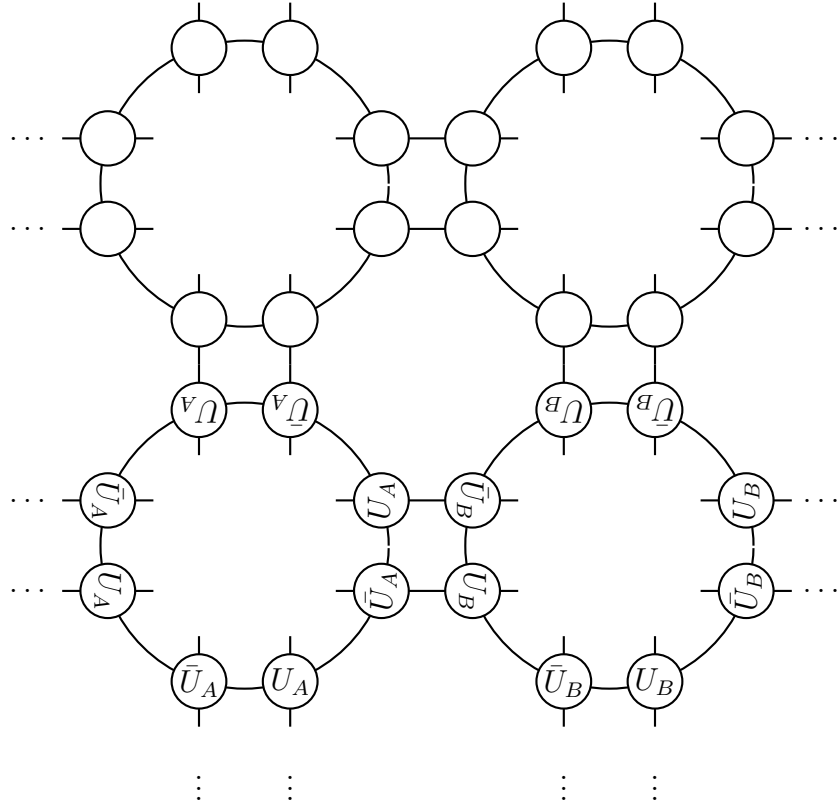
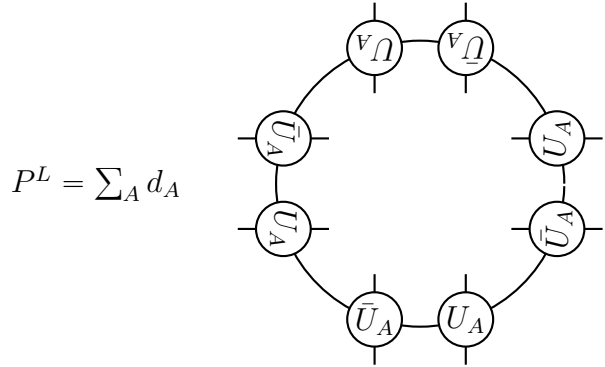
\begin{figure}[h!]
\centering
\begin{tikzpicture}[scale=.6]
\draw (-8,0) node[anchor=east]{$P^{L} = \sum_A d_{A}$};

\draw[thick]  (0.16,0) arc (0:359:3.16);

\draw[thick] (-1,1) -- (1,1);
\filldraw[color=black, fill=white, thick](0,1) circle (.6);
\draw (0,1) node[rotate=90]{$U_A$};

\draw[thick] (-1,-1) -- (1,-1);
\filldraw[color=black, fill=white, thick](0,-1) circle (.6);
\draw (0,-1) node[rotate=90]{$\bar U_A$};

\draw[thick] (-2,-4) -- (-2,-2);
\filldraw[color=black, fill=white, thick](-2,-3) circle (.6);
\draw (-2,-3) node[rotate=0]{$U_A$};

\draw[thick] (-4,-4) -- (-4,-2);
\filldraw[color=black, fill=white, thick](-4,-3) circle (.6);
\draw (-4,-3) node[rotate=0]{$\bar U_A$};

\draw[thick] (-7,-1) -- (-5,-1);
\filldraw[color=black, fill=white, thick](-6,-1) circle (.6);
\draw (-6,-1) node[rotate=270]{$U_A$};

\draw[thick] (-7,1) -- (-5,1);
\filldraw[color=black, fill=white, thick](-6,1) circle (.6);
\draw (-6,1) node[rotate=270]{$\bar U_A$};

\draw[thick] (-4,4) -- (-4,2);
\filldraw[color=black, fill=white, thick](-4,3) circle (.6);
\draw (-4,3) node[rotate=180]{$U_A$};

\draw[thick] (-2,4) -- (-2,2);
\filldraw[color=black, fill=white, thick](-2,3) circle (.6);
\draw (-2,3) node[rotate=180]{$\bar U_A$};
\end{tikzpicture}
  \caption{\label{fig:MPO1} Schematic diagram for the insertion of the MPO $P^{L}$ into a cell. Here $L=8$.}
\end{figure}

We can also insert a string corresponding to a representer of the double triangle algebra such as 
$O_A$ into the PEPS as shown in fig. \ref{fig:10}. The precise location of the 
string is irrelevant, again by the ``pull-through'' property of \cite{bultinck2017anyons}:

\begin{proposition}
We have the pull-through identity in fig. \ref{fig:10}. 
\end{proposition}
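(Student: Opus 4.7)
The plan is to reduce the pull-through identity to two ingredients already established in the main text: the zipper lemma (used inside the proof of thm.~\ref{thm1}) and the algebraic identity
\ben
O_A^L P^L \;=\; d_A \, P^L \;=\; P^L \, O_A^L .
\een
The second identity is immediate from the definition $P^L=\sum_{B}(d_B/D_X)O_B^L$ and the fusion relation $O_A^L O_B^L=\sum_C N_{A,B}^C O_C^L$ of thm.~\ref{thm1}: using Frobenius reciprocity $N_{A,B}^C=N_{\bar A,C}^{B}$ one has $\sum_B N_{A,B}^C d_B = d_{\bar A}d_C = d_A d_C$, and the $O_C^L$ coefficients on both sides match. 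Thus, whenever a full $O_A^L$ loop is placed ``around'' a cell which already contains the projector $P^L$, it only reproduces the cell up to the scalar $d_A$; no topological information of the loop is lost, which is the global reason the pull-through is consistent.

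To turn this into the local move depicted in fig.~\ref{fig:10}, I would proceed vertex by vertex along the boundary of a PEPS cell. Each vertex where the inserted $O_A^L$ string meets the cell is a product of two $6j$-symbols (one from the string, one from the cell), i.e.\ exactly the configuration on the left-hand side of the zipper lemma proved during the proof of thm.~\ref{thm1}. Applying the zipper lemma, the two $6j$-symbols rearrange to a single $U_C$ (or $\bar U_C$) contracted with zipper tensors $Y_{A,B}^C$, $\bar Y_{A,B}^C$. The zipper tensors carry the deformation of the string past the vertex. Iterating this move along the consecutive vertices on the boundary of the cell transports the string across the cell; the orthogonality identities for the zipper tensors (already recorded in the proof of thm.~\ref{thm1}) collapse the intermediate summations, leaving precisely an $O_A^L$ string on the other side of the cell.

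Finally, the global PEPS sum over the cell label $A$ weighted by $d_A$ is needed only to close the argument at the level of the full tensor network: after the local move, the string sits in the adjacent cell, and there the same identity $O_A^L P^L=d_A P^L$ guarantees that no scalar mismatch arises between the two sides of the figure. Collecting the two steps, one reads off the identity of fig.~\ref{fig:10}. The main obstacle I anticipate is purely bookkeeping: keeping track of the quartic-root dimensional prefactors in the definition of the MPOs $O_A^L$, and of the orientations of the wires (i.e.\ the alternation $U_A/\bar U_A$), so that the zipper lemma is applied with the correct variant at each vertex. The conceptual content is the same local pentagon-type rearrangement used in the proof of thm.~\ref{thm1}, now applied cell by cell rather than site by site.
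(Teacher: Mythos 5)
Your overall strategy (zipper lemma plus an orthogonality collapse) points in the right direction, but as written the argument has a genuine gap at its crucial step. The identity $O_A^L P^L = d_A P^L$ that you put at the centre of the argument is a statement about a \emph{closed} $O_A$-loop encircling all $L$ legs of a cell on the $1$d chain; the pull-through of fig.~\ref{fig:10} is a local, two-dimensional move in which an \emph{open} string crosses only part of the cell boundary and must reappear on the other side \emph{without} any factor of $d_A$. The global operator identity therefore does not control the local move, and invoking it as "the global reason the pull-through is consistent" does not replace the missing local computation.

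The actual missing ingredient is the following. After you apply the zipper lemma along the legs where the string overlaps the cell, you are left with zipper tensors $\overline Y_{A,B}^{C}$ sitting at the two points where the string enters and leaves the cell, and these must be eliminated. The relation that does this is
\ben
\sum_B \sum_{t_3,t_0} \frac{d_A}{d_C}\,
\overline{\,\overline Y_{A,B}^C
\begin{pmatrix} & t_3 & \\ t_0; & & t_8 \\ & t_5 & \end{pmatrix}}\;
\overline Y_{A,B}^C
\begin{pmatrix} & t_3 & \\ t_0; & & t_8' \\ & t_5' & \end{pmatrix}
= \delta_{t_5,t_5'}\, \delta_{t_8,t_8'}\, \delta_{a_3,a_3'},
\een
in which the sum runs over the \emph{cell label} $B$ together with $t_3,t_0$, weighted by dimensions. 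This is \emph{not} one of the two orthogonality relations recorded in the proof of thm.~\ref{thm1} (those sum over $C,t_8,t_0$ or over $t_2,t_7$, with no dimension weights), and it has to be established separately, by repeated use of Frobenius duality together with the conjugation property of the $6j$-symbols (lemma~\ref{lem:6j}, item 3, where the quartic-root dimension factors enter). It is precisely here that the projector $P^L=\sum_B (d_B/D_X)\,O_B^L$ inside the cell is consumed: the pull-through is \emph{false} for a cell with a fixed label $B$, so any proof that does not explicitly perform the weighted sum over $B$ at this stage cannot be complete. The paper's route — first reducing fig.~\ref{fig:10} to an equivalent local identity via the zipper lemma, then contracting with a $U_B$ and using unitarity and conjugation of the $6j$-symbols, and finally proving the displayed $B$-summed orthogonality — is what you would need to supply to close your argument.
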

\begin{figure}[h!]
\begin{center}
  \includegraphics[width=1.1\textwidth,]{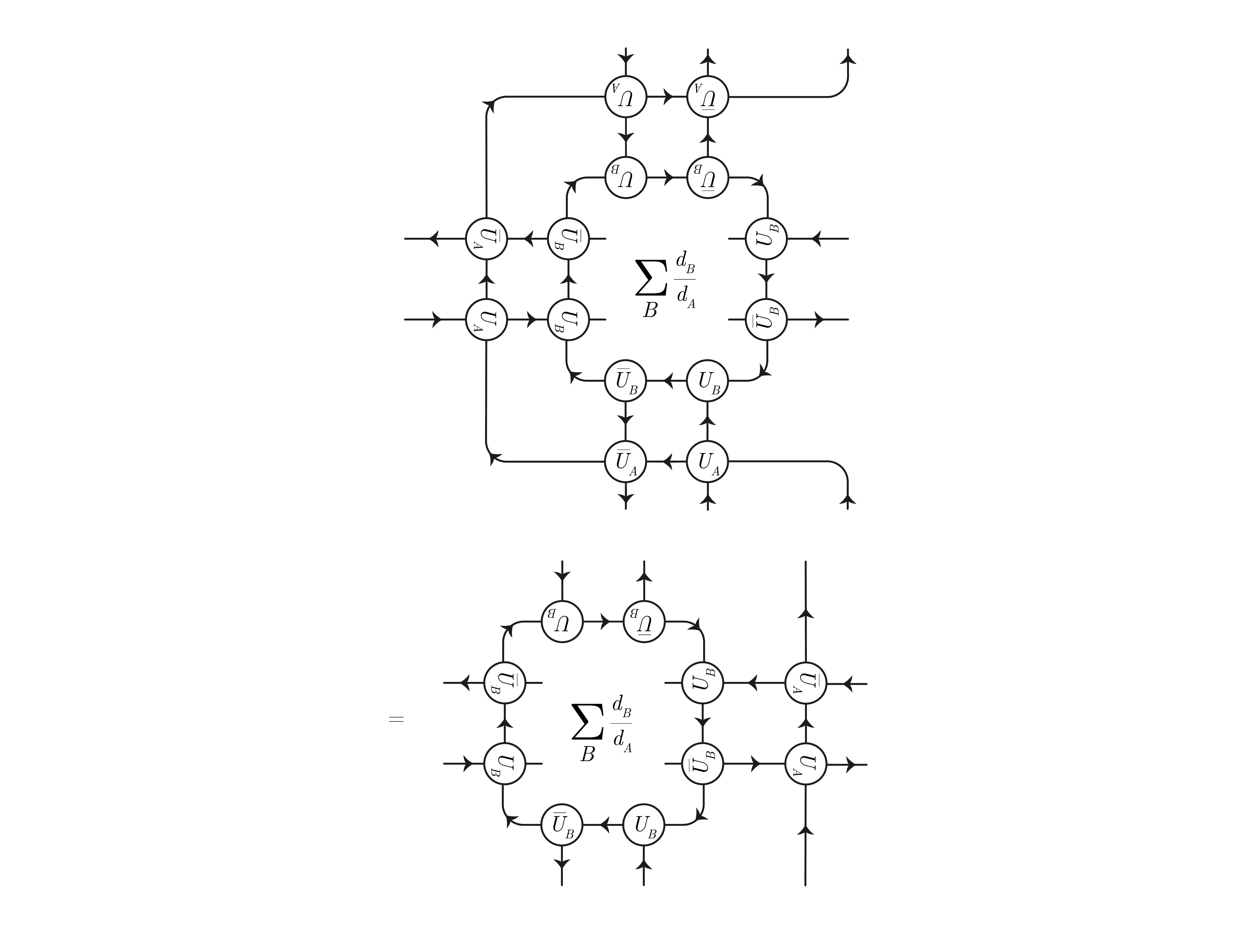}
  \end{center}
  \caption{\label{fig:10} The pull-through property described in \cite{bultinck2017anyons}.}
\end{figure}
\begin{proof}
A graphical proof is given by  \cite{bultinck2017anyons}, but it is not so clear to us how their assumptions precisely correspond to our setting. 
So we argue instead as follows. First, the zipper lemma implies that the 
pull through property is equivalent to the equality shown graphically in fig. \ref{fig:11}. 
\begin{figure}[h!]
\begin{center}
  \includegraphics[width=1.1\textwidth,]{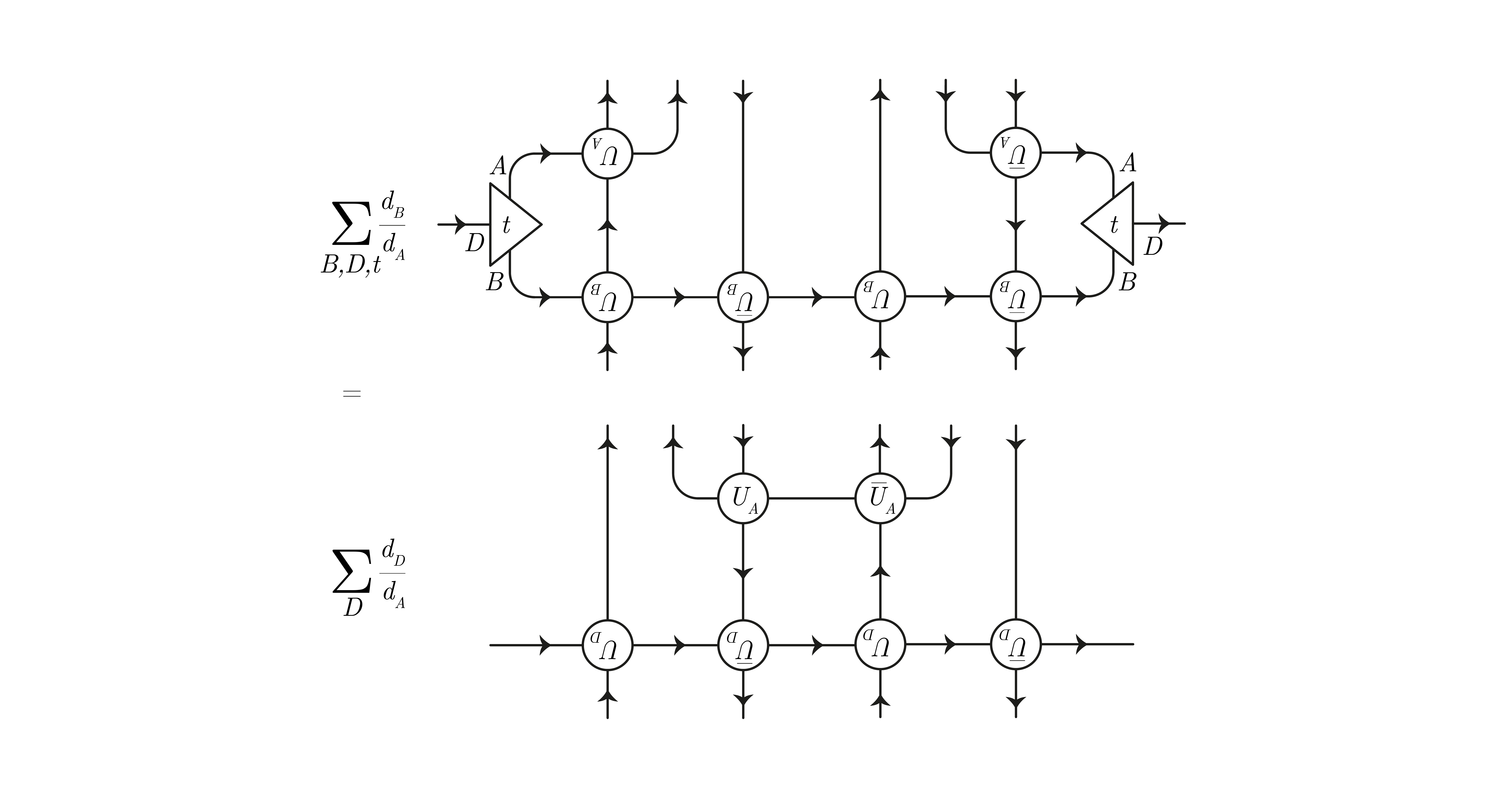}
  \end{center}
  \caption{\label{fig:11} Equivalent form of the pull through property.}
\end{figure}
Here and in the following, we put arrows on the cells to see more clearly that the tensor contractions implied by the 
connected lines are well-defined, i.e. over indices in the correct intertwiner spaces.
Then we contract suitable legs with $U_B$ and use the 
duality and unitarity of the $6j$-symbols as in lemma \ref{lem:6j}. This turns the desired equality into fig. \ref{fig:12}. 
\begin{figure}[h!]
\begin{center}
  \includegraphics[width=1.1\textwidth,]{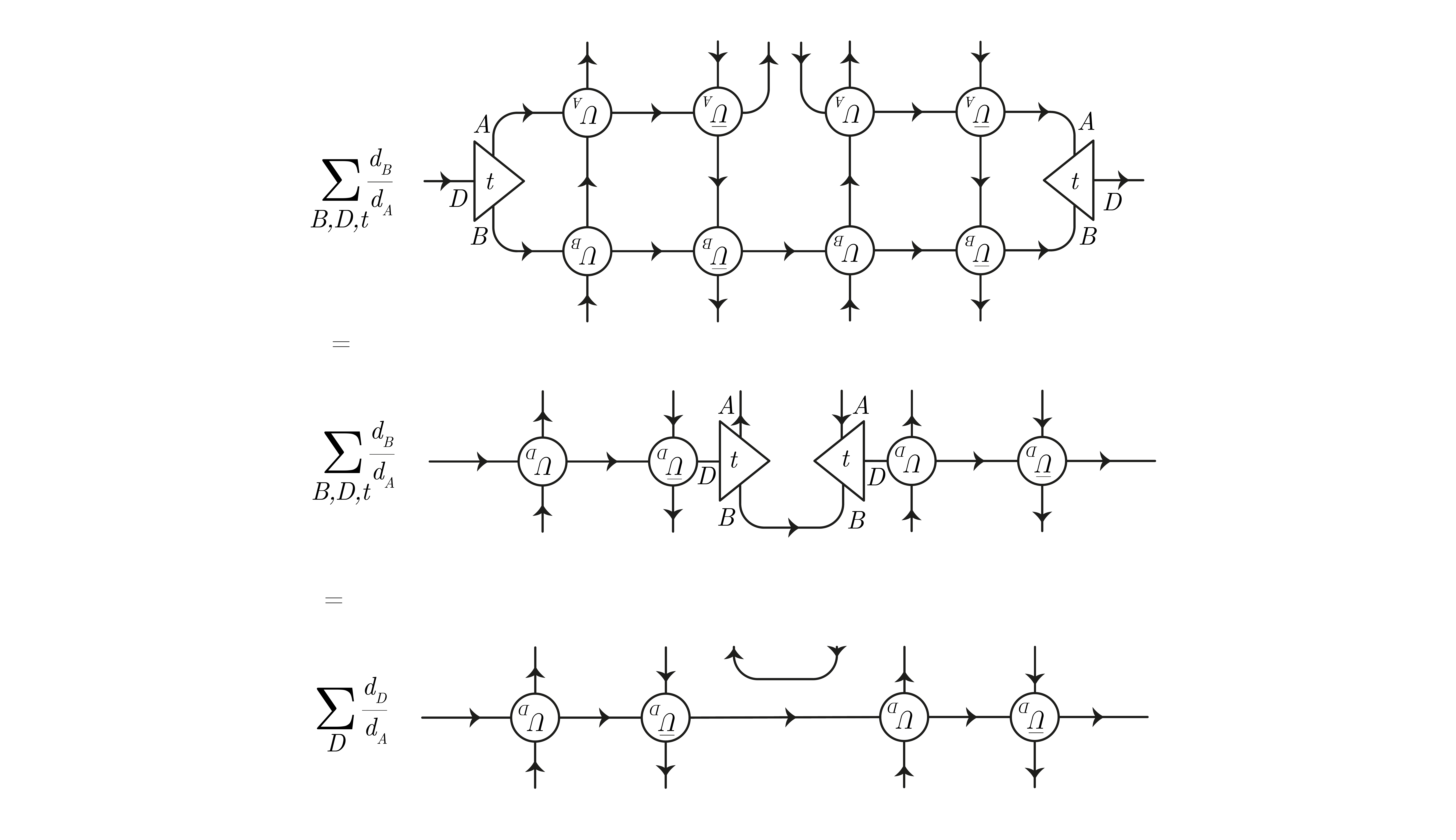}
  \end{center}
  \caption{\label{fig:12} Upper panel: Equivalent form of the upper panel in fig. \ref{fig:11} after contracting into suitable legs of a $6j$-symbol.
  Middle panel: Obtained from upper panel by zipper lemma. Lower panel: Equivalent form of the lower panel in fig. \ref{fig:11} 
  after contracting into suitable legs of a $6j$ symbol, using duality and unitarity.}
\end{figure}
Thus, we need to show the equality between the middle and lower panel in fig. \ref{fig:12}. 
This equality holds true by an orthogonality property of the zipper tensors, with 
ONBs 
{\footnotesize
\ben
\begin{split}
&t_8 \in \Hom_\cM(a_3, C a_1,), \\
&t_8' \in \Hom_\cM(a_3',C a_1), \\
&t_3 \in \Hom_\cM(a_2,B a_1),\\
& t_5' \in \Hom_\cM(a_2,A a_1),\\ 
& t_5 \in \Hom_\cM(a_2, A a_1),\\ 
&t_0 \in \Hom_\cM(C, AB), 
\end{split} 
\een
}
and 
 $a_i, a_i' \in {}_\cM X_\cN$, $A,B,C \in {}_\cM X_\cM$:
\ben
\sum_B \sum_{t_3,t_0} \frac{d_A}{d_C}
\overline{ \overline Y_{A,B}^C
\begin{pmatrix}
& t_3 & \\
t_0; & & t_8 \\
& t_5 &
\end{pmatrix}}
\overline
Y_{A,B}^C
\begin{pmatrix}
& t_3 & \\
t_0; & & t_8' \\
& t_5' &
\end{pmatrix}
= \delta_{t_5,t_5'} \delta_{t_8,t_8'} \delta_{a_3,a_3'},
\een
which follows from repeated application of Frobenius duality. This implies the statement. 
\end{proof}

Instead of inserting $P^L$ into each cell, we could also think of inserting $Q_{\mu,\nu}^L$, which also is a projection (cor. \ref{cor:triangle}). 
Thus, in a sense, we would get a different ``sector'' for every choice of $Q_{\mu,\nu}^L$, and furthermore, 
the standard choice $P^L$ proposed by \cite{bultinck2017anyons} and investigated further by 
\cite{kawahigashi2021projector,kawahigashi2020remark}
is included by putting $\mu=\nu=id$. We think that it is worthwhile investigating such constructions further. 

\bibliography{main-short}
\bibliographystyle{ieeetr}

\end{document}